\documentclass[12pt]{article}
\usepackage{amsmath}
\usepackage{graphicx}
\usepackage{enumerate}
\usepackage{natbib}
\usepackage{url} % not crucial - just used below for the URL 
\usepackage{amsthm}
\usepackage{tikz}
\usepackage{setspace}

\usepackage[margin=0.85in]{geometry}

\usepackage{graphicx}
\usepackage{caption}
\usepackage{subcaption}
\usepackage{float}
\usepackage{booktabs}
\usepackage{makecell}
\usepackage{diagbox}
\usepackage{tabularx,booktabs}
\usepackage{amsfonts,amssymb}
\usepackage{mathtools}
\usepackage{bm}
\usepackage{bbm}         % \mathbbm{1}
\usepackage{mleftright}  % optional

\usepackage{enumitem}
\usepackage{setspace}
\usepackage{titlesec}

\usepackage[algoruled,lined]{algorithm2e}

\usepackage{xr}
\usepackage{hyperref} % load late
\hypersetup{
  colorlinks = true,
  urlcolor   = black,
  linkcolor  = black,
  citecolor  = black
}
\usepackage{pdfpages}
\usepackage{verbatim}
\usepackage{lastpage}
\usepackage{xcolor}
\usepackage{soul}
\usepackage{endnotes}
\usepackage{authblk}

\def \cF{\mathcal{F}}
\def \cP{\mathcal{P}}

\def \cA{\mathcal{A}}

\def \M{\mathcal{M}}

\def \G{\mathcal{G}}

\def \E{\textrm{E}}

\def \1{\mathds{1}}

\def \mR{\mathbb{R}}

\def \X{\mathcal{X}}

\def \MV{\textrm{MV}}

\def \V{\mathcal{V}}

\def \RRMV{\textrm{RRMV}}

\def \Var{\textrm{Var}}

\def \tg{\textrm{tg}}

\def \tr{\textrm{tr}}

\def \Cov {\textrm{Cov}}
\def \diag {\textrm{diag}}

\def\tr{\operatorname{tr}}

\def\Var{\textrm{Var}}

\def\MV{\textrm{MV}}

\def \SR{\textrm{SR}}
\def\E{\mathbb{E}}

\def\mR{\mathbb{R}}

\def\1{\mathds{1}}

\def\cF{\mathcal{F}}

\def\A{\mathbf{A}}

\def\D{\mathbf{D}}

\def\G{\mathbf{G}}

\def\I{\mathbf{I}}

\def\M{\mathbf{M}}
\def\P{\mathbf{P}}
\def\Q{\mathbf{Q}}
\def\R{\mathbf{R}}
\def\S{\mathbf{S}}

\def\V{\mathbf{V}}

\def\X{\mathbf{X}}

\def\Z{\mathbf{Z}}

\def\a{\mathbf{a}}
\def\b{\mathbf{b}}
\def\c{\mathbf{c}}
\def\d{\mathbf{d}}
\def\e{\mathbf{e}}

\def\h{\mathbf{h}}

\def\u{\mathbf{u}}
\def\w{\mathbf{w}}
\def\x{\mathbf{x}}
\def\y{\mathbf{y}}
\def\z{\mathbf{z}}

\def\bs{\boldsymbol}
\def\bmu{{\bs{\mu}}} 
\def\bsigma{{\bs{\Sigma}}}

\def\BFzero{\bs{0}}

\def \bdelta{\bs{\delta}}

\usepackage{natbib}
 \bibpunct[, ]{(}{)}{,}{a}{}{,}%

\newtheorem{theorem}{Theorem}
\newtheorem{lemma}{Lemma} 
\newtheorem{proposition}{Proposition} 

\newtheorem{corollary}{Corollary}

\newtheorem{assumption}{Assumption}

\usepackage{hyperref}

\def\ECEquationsNumberedThrough{%
  \setcounter{equation}{0}%
  \def\theequation{S\arabic{equation}}%
  \def\theHequation{S.\arabic{equation}}%
}

\begin{document}
\title{\bf On Reference-Regulated Multiperiod \\
Mean-Variance Portfolio Optimization \\
in High Dimensions}
    \author[1]{Yutao DENG}
    \author[2]{Jianjun GAO}
    \author[1]{Weichen WANG$^*$}
    \affil[1]{Faculty of Business and Economics, The University of Hong Kong}    
    \affil[2]{School of Information Management and Engineering, Shanghai University of Finance and Economics}
    \date{}

\maketitle

%\bigskip
\begin{abstract}
The multiperiod mean-variance (MV) portfolio optimization serves as a vital expansion of Markowitz's static MV portfolio selection framework. Just like its static counterpart, the multiperiod MV portfolio remains susceptible to estimation errors. We propose a reference-regulated multiperiod mean-variance (RRMV) framework that penalizes deviations from a reference policy. Therefore, this new optimization successfully combines the advantages of dynamic strategies and reference portfolios. A key contribution of this paper is the characterization of the out-of-sample Sharpe ratio under high-dimensional asymptotics with estimation errors in both the mean vector and the covariance matrix. We show how the reference penalty and the investment horizon jointly affect the optimized portfolio performance, and how regularization operates differently from the single-period portfolio optimization. Extensive simulation and real data studies demonstrate that the proposed framework improves the stability and out-of-sample Sharpe ratios of multiperiod policies significantly.
\end{abstract}

\noindent\textbf{Keywords:} Multiperiod mean-variance portfolio; Pre-committed feedback policy; Reference regulation; Out-of-sample Sharpe ratio; High-dimensional asymptotics. 

\onehalfspacing

\section{Introduction}

The single-period mean-variance (MV) portfolio optimization initiated by \cite{markowitz1952harry} laid the foundation of modern portfolio theory. Owing to its parsimonious structure---requiring only the mean vector and covariance matrix of asset returns---the MV model has been widely adopted in both academic research and investment practice \citep{BRANDT2010-book,Kolm:2014}. A natural and important extension of the classical single-period formulation is its multiperiod (dynamic) generalization, where portfolio decisions are updated over time to incorporate evolving information and thereby potentially enhance long-run performance.

Multiperiod MV problems, however, can be formulated in fundamentally different ways. Some formulations such as \cite{GarleanuPedersen:2013JOF, DeMiguel-JFQA-2015} which optimize a sequence of single-period MV tradeoffs, effectively maximizing a discounted sum of period-by-period mean–variance objectives. In contrast, \cite{LiNg:2000MF} directly extend Markowitz’s original framework by focusing on the mean and variance of cumulative wealth (return). This terminal-return formulation preserves the global risk–return tradeoff of the classical MV criterion and yields several theoretical advantages. In particular, the resulting dynamic policy structure allows the multiperiod MV portfolio to attain an improved MV efficient frontier, thereby generating an additional risk premium relative to static strategies \citep{ChiuZhou2011, vigna2014efficiency, Yao02082016}. Moreover, unlike the static MV portfolio---whose terminal wealth is a linear transformation of risky asset returns---the dynamic strategy of \cite{LiNg:2000MF} fundamentally alters the distributional properties of terminal wealth. 
Under normally distributed returns, terminal wealth follows a quasi-lognormal distribution, reflecting nonlinear compounding through rebalancing. 
More recently, \cite{van2021surprising} demonstrate that this class of dynamic portfolio policies exhibits notable robustness to model misspecification in return dynamics, further highlighting its structural advantages.

Despite these appealing theoretical properties, it is well known that the classical static MV portfolio is highly sensitive to estimation error and may perform poorly out of sample (see, e.g., \citealp{demiguel2009optimal, BRANDT2010-book}). This sensitivity becomes increasingly severe in high-dimensional settings. In contemporary investment environments, portfolios often contain hundreds of assets, whereas the available historical data remain limited. As a result, estimation error in expected returns and covariance matrices can substantially distort portfolio weights and undermine performance \citep{Ao2019, MengCaoWang2025}.
Motivated by this inherent fragility of the MV framework, our objective is to retain the structural advantages of the dynamic multiperiod formulation while mitigating its vulnerability to estimation error. 
Building on the hybrid portfolio literature \citep{kanOptimalPortfolioChoice2007, KanWangZhou-allrisky-2022, KanWang-MS-2024, Lassance-MS-2024}, we develop a reference–regulated multiperiod MV model that penalizes deviations of the dynamic portfolio from a benchmark policy over time. This penalty serves as an economically interpretable form of dynamic regularization, stabilizing portfolio decisions and reducing sensitivity to estimation noise.

Specifically, we consider the following optimization objective:
\begin{align*}
\omega\cdot\Var(X_T) - \E[X_T] + \omega\cdot\sum_{t=0}^{T-1} \E\big[\|\u_t - X_t \w_t \|_{\Q_t}^2 \big],
\end{align*}
where $X_t$ and $\u_t$ denote the investor’s wealth and portfolio holdings at time $t$, respectively, evolving according to a self-financing process; $\w_t$ is a pre-specified reference portfolio weight; and $\Q_t$ is a positive-definite matrix governing the strength and direction of regulation. The quadratic penalty controls deviations of the dynamic allocation from the benchmark in a state-dependent manner. The framework is flexible: any benchmark portfolio with desirable stability properties---such as shrinkage-based or diversified static allocations---can serve as the reference. The formulation is closely related to norm-based constraints and regularization techniques that stabilize portfolio weights under estimation error \citep{demiguel2009generalized}. It is also connected to shrinkage approaches, as the matrices $\Q_t$ and reference portfolios $\w_t$ effectively pull the dynamic allocation toward structured targets, analogous to shrinking estimated means and covariances toward more stable benchmarks. 
In this sense, our framework generalizes portfolio combination methods (e.g., \cite{kanOptimalPortfolioChoice2007,KanWangZhou-allrisky-2022}) to a dynamic, path-dependent setting.

Our contributions are multifold. 
First, we provide a closed-form characterization of the reference-regulated multiperiod MV policy.  
By adapting the embedding technique of \cite{LiNg:2000MF}, we resolve the nonseparability of the variance term in dynamic programming and establish that the optimal policy admits a recursive representation governed by a finite set of market-dependent parameters.

Second, we derive high-dimensional asymptotic characterizations of the out-of-sample Sharpe ratio (SR) for these MV portfolios under estimation risk. 
Different from the analysis in \citet{kanOptimalPortfolioChoice2007,KanWangZhou-allrisky-2022, Lassance-MS-2024}, we study the model in a high-dimensional asymptotic regime where the number of assets $p$ grows proportionally with the sample size $n$, so that $p/n \to c \in (0,\infty)$, as in \citet{Ao2019,MengCaoWang2025}. 
This setting captures large portfolios with limited data and differs fundamentally from classical asymptotics where $p$ is fixed and $n \to \infty$. Using tools from random matrix theory, we characterize the limiting behavior of the Sharpe ratio under this high-dimensional regime. Our analysis separates the effects of estimation error in the mean vector and covariance matrix, thereby quantifying how different sources of statistical uncertainty affect portfolio performance. We further show how reference-based regularization modifies the limiting Sharpe ratio by attenuating the impact of estimation noise, providing a theoretical explanation for the stabilizing role of regularization.

Third, we extend high-dimensional Sharpe ratio analysis from static portfolios to multiperiod policies. 
In the static setting, the regulation matrix $\Q_t$ primarily shrinkage the estimation error in the covariance matrix $\hat\bsigma$, while the reference component $\w_t$ is used to mitigate estimation error in the mean vector $\hat\bmu$.
However, estimation errors in the mean and covariance no longer operate independently in dynamic environments. 
Instead, they interact through feedback trading, jointly influencing the out-of-sample Sharpe ratio. We show that reference-based regulation affects not only its primary error channel but also the secondary one through this intertemporal interaction, highlighting a distinct and previously unexplored role for regularization in dynamic portfolio policies. Using this approach, given mild condition, the Sharpe ratio can be further enhanced when the investment period $T>1$. 

Finally, we evaluate the empirical performance of the proposed framework through simulations and real-data studies. Consistent with the theoretical mechanisms identified above, the reference-regulated approach delivers more stable Sharpe ratios and improved out-of-sample performance relative to unregularized multiperiod strategies and leading static regularization methods, particularly in high-dimensional settings. Overall, our findings provide a unified theoretical and empirical perspective on how regularization can be incorporated into multiperiod mean–variance optimization to mitigate estimation risk while preserving the structural benefits of dynamic portfolio selection.

\subsection{Related Literature}\label{sse_literature}

Our work is closely related to a broad literature that seeks to improve the out-of-sample performance of mean–variance portfolios. 
A fundamental difficulty of the classical MV framework is its sensitivity to estimation error, which often leads to unstable portfolio weights and poor out-of-sample results \citep{ChoraZiema1993,demiguel2009optimal}.
These concerns have motivated a wide range of methods designed to reduce the impact of estimation error on portfolio selection.

One major approach to improving out-of-sample portfolio performance focuses on enhancing the estimation of return moments. 
For covariance estimation, shrinkage methods provide well-conditioned alternatives to the sample covariance matrix. 
The linear shrinkage estimator of \citet{ledoit2004well} and its nonlinear extensions \citep{ledoit2012nonlinear} yield better-conditioned and more accurate covariance estimates than the traditional sample covariance matrix. 
Beyond shrinkage, additional structure can be imposed through factor and sparsity-based models \citep{bickel2008regularized,lam2009sparsistency,cai2011adaptive,fan2013large,fan2016overview,cai2020high}.
For expected returns, Bayesian and equilibrium-based approaches incorporate additional structure beyond the sample mean, including the Black–Litterman model \citep{BlackLitterman1991} and related variants \citep{Lai2011,BertsimasGuptaPaschalidis2012}. 

A second approach stabilizes portfolio performance by regularizing the portfolio decision directly. \citet{Jagannathan_jof2003} show that no-short-sale constraints effectively shrink the estimated covariance matrix through the associated Lagrange multipliers, leading to more stable portfolios in practice. 
In a comprehensive empirical study, \citet{demiguel2009optimal} demonstrate that many optimized portfolios fail to consistently outperform the naive ``$1/N$'' benchmark, highlighting the severity of estimation error in portfolio choice. 
Focusing on minimum-variance portfolios, \citet{demiguel2009generalized} propose a generalized framework based on norm constraints that links constrained optimization to covariance shrinkage. 
\citet{fan2012vast} provide theoretical support for gross-exposure constraints that limit estimation error in high-dimensional portfolios. 
Robust optimization approaches similarly account for parameter uncertainty by protecting against worst-case model misspecification \citep{Goldfarb:2003,BlanchetChenZhou:2022}.

Another related strand of the literature studies how estimation error affects portfolio decisions and out-of-sample performance metrics, such as MV utility and the Sharpe ratio, and how these effects can be corrected. \citet{kanOptimalPortfolioChoice2007} characterize the impact of estimation error on out-of-sample MV utility and propose hybrid portfolio rules that combine different estimators; \citet{KanWangZhou-allrisky-2022} extend this framework to portfolios of risky assets only. Subsequent works develop portfolio combination methods with improved out-of-sample stability \citep{Tu2011,KanWang-MS-2024,Lassance-MS-2024}. Because the sample size and portfolio dimension jointly affect out-of-sample performance, \citet{Lassance2025JFQA} derive results on the optimal portfolio size. The effect of estimation error on the out-of-sample performance becomes even more pronounced in high-dimensional settings, where the number of assets is large relative to the available sample size \citep{demiguel2009optimal,Karoui2010}. \citet{Ao2019} propose a sparse regression representation of MV portfolios that improves out-of-sample performance for large portfolios. 
More recently, \citet{MengCaoWang2025} develop a random matrix theory-based method to consistently estimate and correct the out-of-sample Sharpe ratio in high-dimensional portfolio optimization, enabling reliable parameter tuning under general covariance structures.

While existing approaches focus primarily on single-period settings, an important open question is how estimation risk should be managed in multiperiod portfolio problems, where decisions interact over time.
A foundational theoretical formulation of multiperiod mean–variance (MMV) framework is developed by \cite{LiNg:2000MF}. 
Their embedding approach overcomes the non-separability induced by the variance term and thereby resolves the main difficulty in applying dynamic programming. 
This framework is commonly referred to as a \emph{pre-committed} policy. 
Building on this methodology, a variety of extensions have been investigated, including continuous-time counterparts \citep{Zhou2000}, models with no-shorting constraints \citep{CuiGaoLiLi:2014}, and no-bankruptcy constraints \citep{BieleckiJinZhou}. 
A parallel strand studies time-consistent (equilibrium) mean–variance policies that address the dynamic inconsistency of pre-committed solutions; see, for example, \citealp{basak2010dynamic,bjork2014mean}. 
However, \citet{Vigna2020} and \citet{van2021distribution} show that time-consistent policies typically achieve lower theoretical Sharpe ratios than pre-committed solutions, since time consistency effectively imposes additional constraints on the original MMV problem. 
Moreover, \citet{van2021surprising} demonstrate that pre-committed policies may exhibit greater robustness under model misspecification.

Prior studies have provided a rich understanding of dynamic optimal portfolio behavior, but they have largely examined MMV policies under known model parameters and have offered limited guidance on how to control estimation risk when return moments must be estimated from data. 
This omission is particularly important because estimation errors in MMV settings affect not only initial allocations but also subsequent feedback decisions, potentially amplifying performance deterioration over time. We aim to explicitly quantify the effect of estimation error to the asymptotic out-of-sample Sharpe ratio in the MMV setting. 

\subsection{Paper Organization and Notations} \label{ssse:notation}

The remainder of this paper is organized as follows. 
% Section~\ref{sse_literature} reviews the related literature. 
Section~\ref{se_model_and_problem} introduces the reference-regulated MMV portfolio model and derives the associated optimal policy. 
Section~\ref{se:stat} provides statistical justification for the use of regularization and analyzes the asymptotic properties of the out-of-sample Sharpe ratio. 
Section~\ref{se_numerical_evaluation} presents a series of numerical experiments to evaluate the performance of the RRMV model. Finally, we conclude the paper in Section~\ref{sec:Conclusion}.
 
We end this section by introducing some notations used throughout the paper. All vectors and matrices are denoted by boldface letters. 
The notation $\e$ represents the all-ones vector of appropriate dimension, and $\x^{\top}$ denotes the transpose of the vector $\x$. For a symmetric matrix $\bsigma$, the notation $\bsigma \succ 0$ indicates that $\bsigma$ is positive definite. 
For any $\x \in \mathbb{R}^p$, the weighted $\ell_2$-norm is defined as $\| \x \|_{\Q}^2 = \x^{\top} \Q \x$, where $\Q = \textrm{diag}(q_1, q_2, \ldots, q_p) \succ 0$. 
We also denote $\| \x \|$ as the $\ell_2$ norm and $\| \x \|_1$ as the $\ell_1$ norm. 
The inner product between two vectors $\x, \y \in \mathbb{R}^p$ is denoted as $\langle \x, \y \rangle = \x^{\top}\y$. 
Finally, the notation $\bm{x} \sim \mathcal{N}(\bmu, \bsigma)$ indicates that the random vector $\bm{x}$ follows a multivariate normal distribution with mean $\bmu$ and covariance matrix $\bsigma$.

%****************************************** new section *********************************
\section{Market Models and Regulated Multiperiod MV Portfolio}\label{se_model_and_problem}

We consider a capital market comprising $p$ risky assets and a single risk-free asset, which can be traded at discrete time points $t = 0, 1, \ldots, T$, where $T \geq 1$ denotes the finite investment horizon. Let $r > 0$ denote the deterministic risk-free return during the periods of $t = 0, \ldots, T-1$. The random return vector of the $p$ risky assets in period $t$ is denoted by $\R_t \triangleq \big( R_t^{(1)}, \ldots, R_t^{(p)} \big)^{\top}$, for $t = 0, 1, \ldots, T-1$.\endnote{For the $i$-th risky asset, $S_t^{(i)}$ represents its price at time $t$, and we define the return as $R_t^{(i)} \triangleq S_{t+1}^{(i)} / S_t^{(i)}$.} This notation implies that the return vector $\R_t$ becomes observable at time $t+1$. We assume that the sequence of random returns $\{\R_t\}|_{t=0}^{T-1}$ is statistically independent across time.\endnote{The portfolio policy developed in this paper can be extended to the case of temporally correlated returns by employing filtration-based conditional expectations; see \citealp{FollmerSchied:2004}.} For each $t = 0, \ldots, T-1$, let $\E[\R_t]$ and $\Cov[\R_t] \triangleq \E[\R_t \R_t^{\top}] - \E[\R_t]\E[\R_t]^{\top}$ denote, respectively, the expected return vector and covariance matrix of $\R_t$, both of which are assumed to be constant for the $T$ periods. Moreover, $\Cov[\R_t]$ is positive definite for all $t$.

\subsection{Multiperiod MV portfolio optimization}\label{sse_market_model}

An investor dynamically allocates the wealth across available assets in each period under the self-financing policy. Let $X_t$ denote the realized wealth at time $t$ and the initial wealth is $X_0$. The portfolio allocation at the beginning of period $t$ is represented by $\u_t = \big(u_t^{(1)}, \ldots, u_t^{(p)}\big)^{\top}$, where $u_t^{(i)}$ is the amount invested in the $i$-th risky asset. The remaining wealth, $X_t - \e^{\top} \u_t$, is allocated to the risk-free asset. The wealth then evolves as
\begin{align} \label{def_wealth_dyn_rf}
X_{t+1} = r \cdot \big(X_t - \e^{\top} \u_t \big) + \R_t^{\top} \u_t
= r X_t + \P_t^{\top} \u_t\,,
\end{align}
for $t = 0, \ldots, T-1$, where $\P_t = \big(P_t^{(1)}, \ldots, P_t^{(p)}\big)^{\top} \triangleq \R_t - r \e$ denotes the vector of excess returns over the risk-free rate. 
For notational simplicity, we assume that the distribution of $\R_t$ is time-invariant. In particular, the expected value and covariance matrix of $\P_t$ are defined as
\begin{align*}
\bmu \triangleq \E[\P_t] &= \E[\R_t] - r \e \,, \\
\bsigma \triangleq \Cov[\P_t] &= \Cov[\R_t] \,,
\end{align*}
for $t = 0, \ldots, T-1$, respectively. In the subsequent analysis, we will refer to $\bmu$ and $\bsigma$ as the true expected return and covariance matrix, to distinguish them from empirical estimates obtained from historical data.

In this work, we investigate the multiperiod mean–variance (MMV) portfolio optimization problem, formulated as $\cP_{\MV}(\omega,T)$ (see \citealp{LiNg:2000MF}) : 
\begin{equation*}
    \begin{aligned}
    &\min_{ \{\u_t\}|_{t=0}^{T-1} } ~ \omega \Var[X_T] - \E[X_T] \\
    & \text{subject to}~\{X_t, \u_t\} ~\text{satisfy}~ (\ref{def_wealth_dyn_rf}) \,,
    \end{aligned}   
\end{equation*}
where $\omega > 0$ is the risk aversion parameter controlling the return-risk tradeoff. Problem $\cP_{\MV}(\omega,T)$ optimizes the balance between the mean and variance of terminal wealth and can be regarded as a natural extension of the single-period MV model of \citet{markowitz1952harry}. Under this notation, the single-period case corresponds to $\cP_{\MV}(\omega,1)$. Ideally, if all model parameters are correctly estimated, $\cP_{\MV}(\omega,T)$ yields the globally optimal dynamic policy over $T$ periods that maximizes $\E[X_T] - \omega \Var[X_T]$. Such a policy may outperform both the repeated application of a myopic one-period MV strategy and the static MV policy that treats the entire $T$-period horizon as a single decision period. Indeed, as shown by \citet{ChiuZhou2011} and \citet{vigna2014efficiency}, the MMV framework can generate an additional intertemporal diversification premium.
However, this conclusion should be interpreted with caution. It is well known that the out-of-sample performance of static MV portfolios (i.e., the policy associated with $\cP_{\MV}(\omega,1)$) is highly sensitive to estimation errors \citep{demiguel2009optimal}. This sensitivity persists in the multiperiod setting as well, as demonstrated by the examples presented in Section~\ref{se:stat}.

To mitigate this limitation and inspired by the notion of stable portfolio strategies, we propose a hybrid framework that integrates a stable reference policy into the dynamic MV optimization model. We refer to this formulation as the reference-regulated mean–variance (RRMV) portfolio optimization model $\cP_{\RRMV}(\omega, T)$:
\begin{equation*}
    \begin{aligned}
    &\min_{\u_t} ~\omega \Var(X_T) - \E[X_T] +\omega \sum_{t=0}^{T-1}  
    \E\big[\|\u_t - X_t \w_t \|_{\Q_t}^2 \big] \\
    &\textrm{subject to}
    ~ \{X_t, \u_t\}~\textrm{satisfies (\ref{def_wealth_dyn_rf})}\,,
    \end{aligned}   
\end{equation*}
where the input sequence $\{\w_t \in \mathbb{R}^p\}|_{t=0}^{T-1}$ denotes the reference policy, and $\{\Q_t \succ 0\}|_{t=0}^{T-1}$ are the weighting matrices. Note that the sum of $\w_t$ is not necessarily $1$ as $\w_t$ only contains risky assets.
The additional regulation term $\sum_{t=0}^{T-1}\E\big[\|\u_t - X_t \w_t\|_{\Q_t}^2\big]$ penalizes deviations of the decision policy $\u_t$ from the scaled reference portfolio $X_t \w_t$, while the matrices $\Q_t$ determine the strength and direction of this penalization. One practical choice of $\Q_t$ may be diagonal $\Q_t = \diag(q_{1,t},\ldots, q_{p,t})$ with positive diagonal elements.

The problem $\cP_{\RRMV}(\omega,T)$ has the following dual formulation $\tilde{\cP}_{\RRMV}(X_{\tg},T)$:
\begin{equation*}
    \begin{aligned}
    & \min_{\u_t} \Var[X_T] + \sum_{t=0}^{T-1} \E\big[ \big\|\u_t - X_t \w_t \big\|_{\Q_t}^2 \big] \\
    & \text{subject to} \E[X_T] \geq X_{\tg} ~~\text{and}~ \{X_t, \u_t\}~\text{satisfies}~(\ref{def_wealth_dyn_rf}) \,,
    \end{aligned}   
\end{equation*}
where $X_{\tg} \geq X_0 \prod_{t=0}^{T-1} r_t$ is the target expected terminal wealth. Since the optimal solutions to the problems $\cP_{\RRMV}(\omega, T)$ and $\tilde{\cP}_{\RRMV}(X_{\tg},T)$ are internally related, our primary focus will be on $\cP_{\RRMV}(\omega)$.

In $\cP_{\RRMV}(\omega,T)$, the choice of the reference policy is quite flexible. For instance, if we set $\w_k = \BFzero$ and $\Q_k = \rho \cdot \I$ with $\rho>0$ for all $k=0,\ldots, T-1$, the resulting RRMV objective reduces to
\begin{align*}
\min_{\u_t}
\Var[X_T] + \rho \sum_{t=0}^{T-1} \E[\|\u_t\|^2].
\end{align*}
This formulation can be viewed as an extension of the Tikhonov- or Ridge-regularized MV portfolio studied in \cite{demiguel2009optimal,demiguel2009generalized}. The regularization term $\rho \cdot \E[\| \u_t\|^2]$ is a standard device for alleviating the well-known estimation-error problem associated with the covariance matrix, and is closely related to shrinkage-based covariance estimators in \cite{ledoit_improved_2003}.

Beyond the zero reference portfolio, various alternative choices are possible, including the minimum-variance portfolio, the risk-parity portfolio, the naive ``$1/N$'' allocation (in our notation $1/p$ allocation actually), and portfolios obtained from downside-risk formulations \citep{Rockafellar:2002}. One may also use practically relevant benchmark portfolios as reference policies, for example, target holdings specified by a fund manager. Moreover, the regularization term allows the resulting dynamic policy to inherit certain structural patterns. As an illustration, consider $\w_t = \w$ for all $t$, where $\w=(w_1,\ldots,w_p)^\top \geq \BFzero$ is a sparse portfolio with $w_{i}=0$ for any $i \in I \subset \{1,2,\ldots,p\}$, where $I$ represents a subset of assets to be excluded. To enforce this sparsity pattern dynamically, one may set the weighting matrices $\Q_t=\mathrm{diag}(q_{1,t},\ldots,q_{p,t})$ and assign large positive $q_{ti}$ for all $i \in I$.

\subsection{The strategic advantage of feedback policy}\label{sse:comparison_policies}

Before solving the problem $\mathcal{P}_{\RRMV}(\omega,T)$, we first review and compare different frameworks for solving the MMV problem $\cP_{\MV}(\omega,T)$. When $T>1$, MMV problem $\cP_{\MV}(\omega,T)$ is notoriously difficult because the variance term is not separable in the dynamic programming sense. In particular, the variance does not satisfy the smoothing property: $\Var[X_T \mid \cF_t] \neq \Var[\Var[X_T \mid \cF_s] \mid \cF_t]$ for $s > t$, where $\cF_t$ is the information set (filtration) at time $t$.

To address this difficulty, two types of methods have been developed. The first approach, proposed by \cite{LiNg:2000MF}, derives a feedback (FB) policy by introducing an auxiliary parameter that embeds the original problem $\cP_{\RRMV}(\omega, T)$ into a parameterized optimal control problem solvable via dynamic programming (DP). 
By properly selecting this parameter, the optimal policy of $\cP_{\RRMV}(\omega, T)$ can be recovered as\endnote{The detail discussion of time-consistent policy are given in Appendix~\ref{se:tc_policy}.}. 
\begin{align}
\u^{\text{fb}}_t(X_t)
%=\frac{ \bar{X}_{t+1} - r X_t }{1 + \bmu^{\top} \bsigma^{-1} \bmu} \bsigma^{-1} \bmu,\\
= r \cdot \alpha \cdot \big(\bar{X}_{t} -   X_t\big)  \bsigma^{-1} \bmu,
\quad t = 0, \ldots, T-1, \label{def_mv_u}
\end{align}
where $\alpha \triangleq  1/(1+\bmu^{\top}\bsigma^{-1}\bmu)$ and $\bar{X}_{t} \triangleq  r^{t} X_0 + \big( 2\omega r^{T-t}\alpha^T \big)^{-1} $. The policy $\u^{\text{fb}}_t(X_t)$ is called a feedback (FB) policy because it depends adaptively on the realized wealth $X_t$ at time $t$. Since the allocation weight $\bsigma^{-1}\bmu$ is a constant vector, $\bar{X}_{t}$ can be regarded as a risk-adjusted target. The total investment in risky assets is determined by $(\bar{X}_{t} - X_t)$, that is, by the gap between current wealth and this target. Another noteworthy feature of the feedback policy $\u^{\text{fb}}t(X_t)$ is its dependence on the initial wealth $X_0$. In fact, if we re-solve the truncated problem $\cP_{\MV}(\omega,T)\big|_{t=k}^{T}$ at any time $t=k$ given the realized wealth $X_k$, the resulting portfolio policy will not coincide with the original one solved at $t=0$. This inconsistency gives rise to time-inconsistent behavior, and hence the policy $\u^{\text{fb}}_t(X_t)$ is often referred to in the literature as a pre-committed policy (see \cite{basak2010dynamic}).

The second approach to the problem $\cP_{\MV}(\omega,T)$ enforces time consistency by solving the problem backward in time and fixing the optimal policies in later periods (see \cite{basak2010dynamic,bjork2014mean}). Such a policy can also be interpreted as the subgame-perfect Nash equilibrium of a sequential intrapersonal game \citep{basak2010dynamic}. Using our notation, the time-consistent (TC) policy of $\cP_{\MV}(\omega,T)$ is given by
\begin{align}
\u_t^{\text{tc}}= \frac{1}{2 \omega r^{T-t-1} } \bsigma^{-1}\bmu, \quad t = 0, \ldots, T-1. \label{def_mv_u_tc}
\end{align}
Recall that the single-period (static) MV portfolio policy is $\u_t^{\text{static}} = \frac{1}{2\omega} \bsigma^{-1}\bmu$.\endnote{Under our notation, the single-period MV portfolio is obtained from $\cP_{\MV}(\omega,T=1)$.} Clearly, the TC portfolio $\u_t^{\text{tc}}$ shares the same structural form as the static portfolio $\u_t^{\text{static}}$, except that its risk-aversion parameter $\omega$ is effectively discounted over time.

In this work, we aim to derive the pre-committed feedback policy for the problem $\mathcal{P}_{\RRMV}(\omega,T)$, primarily for the following reasons. First, if the investor adheres to the policy computed at time $0$, the feedback policy ~(\ref{def_mv_u}) serves as a globally optimal portfolio policy that achieves the highest Sharpe ratio. Such optimality has been established in several studies (see, e.g., \cite{Vigna2020,van2021distribution}). Second, the feedback policy \eqref{def_mv_u} induces a nonlinear evolution of the wealth process. Specifically, substituting policies \eqref{def_mv_u} and \eqref{def_mv_u_tc} into the wealth dynamics \eqref{def_wealth_dyn_rf} yields
\begin{align*}
\text{FB:}~~& X_{t+1} = r X_t \cdot \big(1-\alpha \P_t^{\top}\bsigma^{-1}\bmu \big) + r \bar{X}_{t} \alpha \P_t^{\top}\bsigma^{-1} \bmu \,,
\\
\text{TC:}~~& X_{t+1} = r X_t + \frac{1}{2\omega r^{T-t-1} } \P_t^{\top} \bsigma^{-1}\bmu \,.
\end{align*}
The feedback policy causes the risky return $\P_t$ to enter the cumulative wealth $X_T$ in a nonlinear manner due to the feedback mechanism, in sharp contrast to the time-consistent (TC) policy, where $\P_t$ contributes linearly to $X_T$. This feedback effect provides a richer structure for shaping the wealth distribution. For example, \cite{van2021distribution} show that the terminal wealth distribution of $X_T$ under the feedback policy is non-Gaussian, even when $\P_t$ follows a Gaussian distribution. Moreover, \cite{van2021surprising} demonstrate that the feedback policy can exhibit more robust performance compared with the time-consistent policy when the portfolio model is misspecified. Motivated by these observations, we focus on developing the feedback policy for the problem $\mathcal{P}_{\RRMV}(\omega,T)$ and analyzing its statistical properties.

\subsection{Optimal policy of RRMV portfolio optimization}

In this section, we present the optimal policy for the problem $\cP_{\RRMV}(\omega, T)$. Since this problem is not separable in the sense of dynamic programming, we adopt the embedding method developed in \cite{LiNg:2000MF} to derive its feedback policy. Specifically, we characterize the solution of the parameterized problem $\cA(\omega, \lambda)$, which is solvable by dynamic programming, and then recover the solution of $\cP_{\RRMV}(\omega, T)$ by identifying a suitable parameter. The key contribution lies in proving the equivalence between the original problem and the parameterized one under the reference regulation term, and in further deriving an explicit analytical feedback policy, extending the classical embedding approach to the regulated MV setting.

We first solve the following auxiliary problem $\cA(\omega,\lambda)$ by introducing parameter $\lambda \in \mR$, 
\begin{equation*}
    \begin{aligned}
    & \min_{\u_t} ~ \E[ \omega X_T^2 - \lambda X_T] + \omega \sum_{t=0}^{T-1} E[\| \u_t - X_t \w_t\|_{\Q_t}^2 ] \\
    & \textrm{subject to} ~ \{X_t, \u_t\} ~\textrm{satisfies \eqref{def_wealth_dyn_rf}} \,.
    \end{aligned}   
\end{equation*}
Clearly, $\cA(\omega, \lambda)$ is a linear-quadratic type of optimization problem, which is solvable by DP. To solve $\cA(\omega, \lambda)$, we introduce the following variables, which are defined recursively for $k=T-1,\dots,1,0$.
\begin{equation} \label{equation:rrmv_rf_para}
    \begin{aligned}
        \D_k &= a_{k+1} \cdot \big( \bsigma + \bmu \bmu^{\top} \big) + \Q_k \,, \\ 
         a_k &= a_{k+1} r^2  + \w_k^\top \Q_k \w_k - \| r a_{k+1} \bmu - \Q_k \w_k \|_{\D_k^{-1}} \,, \\
         b_k &= r b_{k+1}  - b_{k+1} \cdot \big(r a_{k+1} \bmu -  \Q_k \w_k \big)^{\top} \D_k^{-1} \bmu \,, \\
         c_k &= c_{k+1} - b_{k+1}^2 \bmu^\top \D_k^{-1} \bmu \,,
    \end{aligned}
\end{equation}
where $a_T=b_T=1$, $c_T=0$ and $\bmu_k = \E[\P_k]$. Note that these notations work for time-varying $\E[\P_k \P_k^\top]$ and $\E[\P_k]$, so does the solution for $\cP_{\RRMV}(\omega, T)$ below, although for simplicity we only consider constant $\E[\P_k \P_k^\top] = \bsigma + \bmu \bmu^\top$ and $\bmu_k = \bmu$ for theoretical results.
For any choice of positive semi-definite $\Q_k$'s, we have the following lemma making sure that $\D_k$ is always positive definite. 
\begin{lemma}\label{lemma:param_a_nonneg}
The variables $\{a_k\}|_{k=0}^{T}$ in \eqref{equation:rrmv_rf_para} satisfy $a_k > 0$ for all $k= 0,\ldots, T$.
\end{lemma}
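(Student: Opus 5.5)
The plan is a backward induction on $k$, starting from $a_T=1>0$. The key observation is that the recursion for $a_k$ in \eqref{equation:rrmv_rf_para} is the minimum value of a one-step quadratic problem. Here I read $\|\x\|_{\D_k^{-1}}$ as the squared weighted norm $\x^\top\D_k^{-1}\x$, following the convention $\|\x\|_{\Q}^2=\x^\top\Q\x$ used in Section~\ref{ssse:notation}.

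Assume $a_{k+1}>0$. Then $\D_k=a_{k+1}(\bsigma+\bmu\bmu^\top)+\Q_k\succ 0$, since $\bsigma\succ0$ and $\Q_k\succeq 0$. So $\D_k^{-1}$ exists and the recursion is well defined. Let $\P$ have the law of $\P_k$, with mean $\bmu$ and second moment $\bsigma+\bmu\bmu^\top$. For $\bm v\in\mR^p$, define
\begin{align*}
f_k(\bm v) \triangleq a_{k+1}\,\E\big[(r+\P^\top \bm v)^2\big] + (\bm v-\w_k)^\top \Q_k(\bm v-\w_k).
\end{align*}
Expanding gives
\begin{align*}
f_k(\bm v)=\bm v^\top \D_k \bm v + 2\,(r a_{k+1}\bmu-\Q_k\w_k)^\top \bm v + a_{k+1}r^2+\w_k^\top\Q_k\w_k .
\end{align*}
This is a strictly convex quadratic because $\D_k\succ0$. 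Its minimizer is $\bm v^*=-\D_k^{-1}(r a_{k+1}\bmu-\Q_k\w_k)$. Completing the square shows $\min_{\bm v} f_k(\bm v)=a_k$, exactly the expression in \eqref{equation:rrmv_rf_para}. This identification is the main step; everything else is routine.

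It remains to show $f_k(\bm v^*)>0$. The penalty term is nonnegative, so it suffices to show $\E[(r+\P^\top\bm v)^2]>0$ for every $\bm v$. By the bias--variance decomposition,
\begin{align*}
\E\big[(r+\P^\top \bm v)^2\big]=\bm v^\top\bsigma\bm v+(r+\bmu^\top\bm v)^2 .
\end{align*}
If $\bm v\neq\BFzero$, the first term is positive because $\bsigma\succ0$. If $\bm v=\BFzero$, the second term equals $r^2>0$. Since $a_{k+1}>0$, we get $a_k=f_k(\bm v^*)>0$, which closes the induction for $k=T-1,\ldots,0$.

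The only delicate point is strictness of the bound. The penalty term can vanish, for instance when $\bm v^*=\w_k$. Positivity therefore has to come from the first term, and it relies on the standing assumptions $\bsigma\succ0$ and $r>0$. As a by-product, the induction also shows that every $\D_k$ is positive definite, which is what the paper needs afterwards.
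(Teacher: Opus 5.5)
Your argument is correct, and your reading of $\|\cdot\|_{\D_k^{-1}}$ as a squared norm is the right one; the appendix writes the recursion out in expanded form, and it matches yours. Your route differs from the paper's, however. The paper never characterizes $a_k$ as a minimum. It applies Sherman--Morrison to $(\bsigma+\bmu\bmu^\top+a_{k+1}^{-1}\Q_k)^{-1}$ and rewrites the recursion as
\begin{align*}
a_k=\frac{a_{k+1}^{-1}\big(a_{k+1}r+\w_k^\top\Q_k(\bsigma+a_{k+1}^{-1}\Q_k)^{-1}\bmu\big)^2}{1+\bmu^\top(\bsigma+a_{k+1}^{-1}\Q_k)^{-1}\bmu}+\w_k^\top\Q_k\big(\Q_k^{-1}-(a_{k+1}\bsigma+\Q_k)^{-1}\big)\Q_k\w_k ,
\end{align*}
then uses Woodbury to get $\Q_k^{-1}-(a_{k+1}\bsigma+\Q_k)^{-1}\succ 0$. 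What the paper's route buys is an explicit closed form, and that form is exactly what is reused in Lemma~\ref{lemma:param_mu} to obtain the limits $a_k^\ast$ when $\hat\bmu$ replaces $\bmu$. Your characterization gives positivity but not this formula. What your route buys is the following. You need no matrix identities. You never invert $\Q_k$, so you genuinely cover the positive semi-definite $\Q_k$ announced just before the lemma, whereas the paper's proof, as written, uses $\Q_k^{-1}$. Strictness is also immediate for you. The paper only shows both summands are nonnegative and leaves implicit the case split needed for $a_k>0$: if $\Q_k\w_k\neq\BFzero$ the second summand is positive, and if $\Q_k\w_k=\BFzero$ the first reduces to $a_{k+1}r^2/(1+\bmu^\top(\bsigma+a_{k+1}^{-1}\Q_k)^{-1}\bmu)>0$. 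Your $f_k$ is also natural: the $\lambda=0$ Bellman objective in the proof of Theorem~\ref{theorem:opt_aux_prob}, evaluated at $\u_k=X_k\bm v$, equals $\omega X_k^2 f_k(\bm v)$. This explains why the recursion has this form, and it isolates exactly which hypotheses matter. For example, the paper later invokes this lemma for $\hat a_k$ with $\hat\bsigma$ possibly singular (when $p\ge n$). With $\Q_k\succ 0$, your argument still gives $\hat a_k>0$ unless both $\hat\bsigma\w_k=\BFzero$ and $\hat\bmu^\top\w_k=-r$.
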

Next, we present the optimal portfolio policy for the auxiliary problem $\cA(\omega,\lambda)$.
\begin{theorem}\label{theorem:opt_aux_prob}
The optimal solution of problem $\cA(\omega,\lambda)$ is, for $t=0, \ldots, T-1$, 
\begin{equation} \label{def_A_opt_u}
    \begin{aligned}
    \u_k^{\cA}(X_k,\lambda) = & \Big(\frac{\lambda b_{k+1}}{2\omega} - r a_{k+1} X_k \Big)\cdot\D_k^{-1}\bmu \\
    & + X_k \D_k^{-1} \Q_k \w_k \,,  
    \end{aligned}
\end{equation}
where $\{a_k\}|_{k=0}^{T-1}$ and $\{b_k\}|_{k=0}^{T-1}$ are defined in \eqref{equation:rrmv_rf_para} and $\E[X_T] = b_0 X_0 - \frac{\lambda}{2 \omega} c_0$.
\end{theorem}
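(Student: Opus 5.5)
Let the value function be the quadratic ansatz
\[
V_k(X_k)=\omega a_k X_k^2-\lambda b_k X_k-\tfrac{\lambda^2}{4\omega}(c_k-c_T)+\text{const},
\]
more precisely $V_k(x)=\omega a_k x^2-\lambda b_k x+\gamma_k$ with $\gamma_k$ depending only on $\lambda,\omega$. We then run backward induction (dynamic programming) for the linear-quadratic problem $\cA(\omega,\lambda)$. At $k=T$ the terminal cost is $\omega X_T^2-\lambda X_T$, so $a_T=b_T=1$ and $\gamma_T=0$, consistent with \eqref{equation:rrmv_rf_para}.

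For the inductive step, we assume $V_{k+1}$ has this form. Substituting $X_{k+1}=rX_k+\P_k^\top\u_k$ and using independence of $\P_k$ from $\cF_k$ gives $\E[\P_k]=\bmu$ and $\E[\P_k\P_k^\top]=\bsigma+\bmu\bmu^\top$. The stage-$k$ objective becomes
\[
\omega\Big[a_{k+1}\big(r^2x^2+2rx\,\bmu^\top\u+\u^\top(\bsigma+\bmu\bmu^\top)\u\big)+(\u-x\w_k)^\top\Q_k(\u-x\w_k)\Big]-\lambda b_{k+1}(rx+\bmu^\top\u)+\gamma_{k+1}.
\]
This is a quadratic in $\u$ with Hessian $2\omega\D_k$. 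By Lemma~\ref{lemma:param_a_nonneg}, $a_{k+1}>0$, so $\D_k\succ0$ (together with $\Q_k\succ0$) and the problem is strictly convex. The first-order condition
\[
2\omega\D_k\u=\lambda b_{k+1}\bmu-2\omega r a_{k+1}x\,\bmu+2\omega x\,\Q_k\w_k
\]
yields exactly \eqref{def_A_opt_u}. Plugging the minimizer back, the value is $\text{const}-\omega\,\h^\top\D_k^{-1}\h$, where $\h=(ra_{k+1}x-\tfrac{\lambda b_{k+1}}{2\omega})\bmu-x\Q_k\w_k$. Collecting powers of $x$ recovers the recursions:
\begin{itemize}
\item the $x^2$ coefficient gives $a_k$, where the norm in \eqref{equation:rrmv_rf_para} should be read as the squared norm $\|\cdot\|^2_{\D_k^{-1}}$;
\item the $x$ coefficient gives $b_k=rb_{k+1}-b_{k+1}(ra_{k+1}\bmu-\Q_k\w_k)^\top\D_k^{-1}\bmu$;
\item the constant term gives $\gamma_k=\gamma_{k+1}-\tfrac{\lambda^2}{4\omega}b_{k+1}^2\bmu^\top\D_k^{-1}\bmu$.
\end{itemize}
Since the problem is unconstrained in $\u$ and the cost is convex at each stage, the verification argument shows that this feedback policy is optimal.

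For the expected terminal wealth we avoid a separate forward recursion. Instead, we track the auxiliary linear function $m_k(x)=\E[X_T\mid X_k=x]$ under the optimal policy. Its affine form is $m_k(x)=\beta_k x+\delta_k$. Under the policy, $\E[X_{k+1}\mid X_k=x]=rx+\bmu^\top\u_k^{\cA}$, which equals
\[
x\big(r-(ra_{k+1}\bmu-\Q_k\w_k)^\top\D_k^{-1}\bmu\big)+\tfrac{\lambda b_{k+1}}{2\omega}\bmu^\top\D_k^{-1}\bmu .
\]
Because $m_k$ is affine, the tower property gives $m_k(x)=\beta_{k+1}\E[X_{k+1}\mid x]+\delta_{k+1}$. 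Hence $\beta_k=\beta_{k+1}\big(r-(ra_{k+1}\bmu-\Q_k\w_k)^\top\D_k^{-1}\bmu\big)$, which matches the $b$-recursion once we show $\beta_{k+1}=b_{k+1}$ inductively (with $\beta_T=1$). Then
\[
\delta_k=\delta_{k+1}+\tfrac{\lambda}{2\omega}b_{k+1}^2\bmu^\top\D_k^{-1}\bmu,
\]
so $\delta_0=-\tfrac{\lambda}{2\omega}c_0$ by the $c$-recursion with $c_T=0$. This gives $\E[X_T]=b_0X_0-\tfrac{\lambda}{2\omega}c_0$.

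The main obstacle is the bookkeeping in matching coefficients. In particular, we must verify that the multiplier of $x$ in the mean recursion coincides with the $b$-recursion factor. This holds because both arise from the same cross term $(ra_{k+1}\bmu-\Q_k\w_k)^\top\D_k^{-1}\bmu$, and we would check it carefully by expanding $\h^\top\D_k^{-1}\h$.
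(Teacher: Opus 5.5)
Your proposal is correct and follows essentially the same route as the paper. Both arguments use a quadratic value-function ansatz verified by backward induction with the first-order condition, recovering the $a_k,b_k,c_k$ recursions, where the norm in \eqref{equation:rrmv_rf_para} is indeed meant as a squared norm, and then an induction showing $\E_k[X_T]=b_kX_k-\frac{\lambda}{2\omega}c_k$ via the tower property. The only slip is cosmetic: your opening heuristic display has the constant term with the wrong sign, since your own recursion gives $\gamma_k=+\frac{\lambda^2}{4\omega}c_k$ (as in the paper), and this affects neither the policy nor the formula for $\E[X_T]$.
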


The optimal policy of the original RRMV problem $\cP_{\RRMV}(\omega,T)$ and the auxiliary problem $\cA(\omega, \lambda)$ are connected by the parameter $\lambda$, which enables us to retrieve the policy of the problem  $\cP_{\RRMV}(\omega,T)$ and $\tilde{\cP}_{\RRMV}(X_{\tg},T)$ as follows.   
\begin{theorem} \label{theorem:2.2}
When $\lambda^* = (2\omega b_0 X_0 +1)/(1+ c_0)$ where $b_0,c_0$ is from the recursion in \eqref{equation:rrmv_rf_para}, the policy $\u^{\cA}(X_k, \lambda^*)$ defined in  \eqref{def_A_opt_u} solves $\cP_{\RRMV}(\omega, T)$, yielding for $k=0,\ldots, T-1$, 
\begin{equation} \label{RRMVF_omega_uk}
\begin{aligned}
\u^*_k(X_k)
= & \Bigg( \frac{(1 + 2 \omega b_0 X_0) b_{k+1} }{2\omega (1 + c_0)} - r a_{k+1} X_k \Bigg)
\D_k^{-1} \bmu \\ 
& + X_k \D_k^{-1} \Q_k \w_k \,.
\end{aligned}
\end{equation}

Furthermore, when the weighting parameter $\omega$ is chosen as
\begin{align}
\omega^* = \frac{ c_0}{2\big( b_0 X_0 - X_{\tg} \cdot(1+ c_0) \big)} \,,
\label{def_RRMV_rf_omega}
\end{align}
the policy in \eqref{RRMVF_omega_uk} also solves $\tilde{\cP}_{\RRMV}(X_{\tg}, T)$.
\end{theorem}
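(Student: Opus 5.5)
The plan is to follow the Li–Ng embedding argument, adapted to the extra regulation term. Write
\[
J(\u) = \omega\Var[X_T]-\E[X_T]+\omega\sum_t\E\|\u_t-X_t\w_t\|_{\Q_t}^2 .
\]
The key identity is that for any feasible policy
\[
\omega\Var[X_T]-\E[X_T]=\omega\E[X_T^2]-\omega(\E[X_T])^2-\E[X_T] .
\]
The function $U(x,y)=\omega x-\omega y^2-y$ is concave in $y=\E[X_T]$, so we linearize it:
\[
-\omega y^2-y\le -\omega y_0^2-y_0-(2\omega y_0+1)(y-y_0)
\]
for every $y_0$, with equality at $y=y_0$. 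Hence $J(\u)\le \E[\omega X_T^2-\lambda X_T]+\omega\sum\E\|\cdot\|^2+\text{const}(y_0)$, where $\lambda=1+2\omega y_0$. This is the Li–Ng inequality. It shows that if $\u^*$ is optimal for $\cP_{\RRMV}(\omega,T)$, then $\u^*$ also solves $\cA(\omega,\lambda^*)$ with $\lambda^*=1+2\omega\E[X_T^*]$. Otherwise some $\tilde\u$ would make the auxiliary objective strictly smaller, and therefore $J(\tilde\u)<J(\u^*)$. Since $\cA(\omega,\lambda)$ is a strictly convex LQ problem ($\D_k\succ0$ by Lemma~\ref{lemma:param_a_nonneg}), Theorem~\ref{theorem:opt_aux_prob} gives its unique optimizer. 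Consequently $\u^*=\u^{\cA}(\cdot,\lambda^*)$ for some $\lambda^*$ satisfying the fixed-point condition $\lambda^*=1+2\omega\E[X_T](\lambda^*)$.

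Next we solve the fixed point using Theorem~\ref{theorem:opt_aux_prob}. It gives $\E[X_T]=b_0X_0+\frac{\lambda}{2\omega}c_0$, and the sign convention of $c_0$ has to be checked against the recursion. Substituting, $\lambda=1+2\omega b_0X_0+\lambda c_0$ in the sign convention that yields the stated formula. Solving gives $\lambda^*=(1+2\omega b_0X_0)/(1+c_0)$; I would verify that $1+c_0\neq0$ and reconcile the sign with the displayed expression for $\E[X_T]$. An alternative to the "necessity" argument is to argue sufficiency directly. Existence of a minimizer of $J$ follows from coercivity over affine feedback policies. Alternatively, one can show that $\min_\lambda\min_\u$ of the auxiliary objective plus the constant equals $\min J$ (a minimax/concavity argument in $y_0$), and then note that the unique fixed point identifies the optimizer. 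Plugging $\lambda^*$ into \eqref{def_A_opt_u} gives \eqref{RRMVF_omega_uk}.

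For the second part, $\tilde{\cP}_{\RRMV}(X_{\tg},T)$ is a convex program in the joint moment space, with the constraint $\E[X_T]\ge X_{\tg}$ active. Its Lagrangian with multiplier $\mu\ge0$ is $\Var+\text{penalty}-\mu\E[X_T]$, which equals $\mu\big(\omega\Var-\E[X_T]+\omega\cdot\text{penalty}\big)$ with $\omega=1/\mu$. Thus its solution is the $\cP_{\RRMV}(\omega,T)$ solution for the $\omega$ at which $\E[X_T^*]=X_{\tg}$. Using $\E[X_T^*]=b_0X_0+\frac{\lambda^*}{2\omega}c_0$ with $\lambda^*$ above, we set this equal to $X_{\tg}$ and solve the resulting linear equation in $1/\omega$, which should yield \eqref{def_RRMV_rf_omega}. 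Positivity of $\omega^*$ then follows from $X_{\tg}\ge X_0r^T$ together with sign properties of $b_0,c_0$.

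The main obstacle is the rigorous equivalence step. Li–Ng's lemma gives only the necessary direction, so we must ensure that an optimizer of $\cP_{\RRMV}$ exists and that the fixed point is unique, and we must keep the signs of $c_0$ consistent. Since $c_k$ decreases by $b_{k+1}^2\bmu^\top\D_k^{-1}\bmu>0$, $c_0<0$. I would therefore rewrite $\E[X_T]$ in terms of $-c_0\in(0,\infty)$ and double-check that $1+c_0$ in the stated $\lambda^*$ matches the algebra; if it does not, the discrepancy is only one of notation.
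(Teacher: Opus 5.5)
Your route is the paper's route: the Li--Ng embedding, identifying the optimizer with $\u^{\cA}(\cdot,\lambda)$ for $\lambda=1+2\omega\E[X_T]$, solving a linear fixed point in $\lambda$, and then solving $\E[X_T^*]=X_{\tg}$ for $\omega$. Your concave-linearization derivation of $\lambda^*=1+2\omega\E[X_T^*]$ is more direct than the paper's first-order condition in $\lambda$. The fixed-point step as you wrote it is wrong, though. Theorem~\ref{theorem:opt_aux_prob} states $\E[X_T]=b_0X_0-\frac{\lambda}{2\omega}c_0$, with a minus sign; since $c_0<0$, this is increasing in $\lambda$, as it must be. The equation is therefore $\lambda=1+2\omega b_0X_0-\lambda c_0$, i.e.\ $\lambda^*(1+c_0)=1+2\omega b_0X_0$, which is exactly the stated formula. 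Your equation $\lambda=1+2\omega b_0X_0+\lambda c_0$ would put $1-c_0$ in the denominator, so there is no notational discrepancy to reconcile, only a misquote. The check $1+c_0\neq0$ is easy. Put $X_0=0$ and $\lambda=2\omega$ in the value function $\omega a_0X_0^2-\lambda b_0X_0+\frac{\lambda^2}{4\omega}c_0$ from the proof of Theorem~\ref{theorem:opt_aux_prob}. This gives $1+c_0=\min_{\u}\{\E[(X_T-1)^2]+\sum_t\E\|\u_t-X_t\w_t\|_{\Q_t}^2\}$. The minimum is strictly positive, because from zero wealth with $\bsigma\succ0$ the terminal wealth cannot equal $1$ almost surely.

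For sufficiency, drop the ``coercivity over affine feedback policies'' idea, since it does not give existence over all adapted policies. Your second alternative works, and it needs no minimax theorem, only an interchange of two infima. Write $A_\lambda(\u)$ for the objective of $\cA(\omega,\lambda)$ and $V(\lambda)$ for its optimal value. Your inequality together with its equality case gives $J(\u)=\min_{y_0}\{A_{1+2\omega y_0}(\u)+\omega y_0^2\}$, hence $\inf_{\u}J=\inf_{y_0}\{V(1+2\omega y_0)+\omega y_0^2\}$. The right-hand side is a quadratic in $y_0$ with leading coefficient $\omega(1+c_0)>0$, its minimizer corresponds exactly to $\lambda^*$, and $\u^{\cA}(\cdot,\lambda^*)$ attains it. This is where $1+c_0>0$ is really used, and it settles the existence question that the paper's proof does not address explicitly.

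Finally, your claim that $\omega^*>0$ follows from $X_{\tg}\ge X_0r^T$ is false once $\w_k\neq\BFzero$. Since $c_0<0$, $\omega^*>0$ if and only if $X_{\tg}>b_0X_0/(1+c_0)$. The quantity $b_0X_0/(1+c_0)$ is the mean terminal wealth of the minimizer of variance plus penalty. It equals $X_0r^T$ when $\w_k=\BFzero$, but for $T=1$, for example, it equals $X_0\big(r+\bmu^\top(\bsigma+\Q_0)^{-1}\Q_0\w_0\big)$. For $X_{\tg}\le b_0X_0/(1+c_0)$ that minimizer is already feasible, so the multiplier vanishes and $\omega^*$ is nonpositive or undefined. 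Your ``constraint active'' and Lagrangian steps therefore need the hypothesis $X_{\tg}>b_0X_0/(1+c_0)$, which the paper also leaves implicit. Under that hypothesis, the Lagrangian sufficiency argument needs only $1/\omega^*>0$ and $\E[X_T^*]=X_{\tg}$, not convexity.
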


For the special case when $T=1$ in the problem $\cP_{\RRMV}(\omega,T)$, the optimal policy is as follows.
\begin{proposition}
For the problem $\cP_{\RRMV}(\omega,1)$, the RRMV optimal portfolio policy is
\begin{align}
\u^*_0 = \big(  \bsigma + \Q_0)^{-1} ( \frac{\bmu}{2\omega}  + X_0 \Q_0 \w_0). \label{RRMVF_omega_uk_T1}
\end{align}
\end{proposition}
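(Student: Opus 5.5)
The plan is to solve $\cP_{\RRMV}(\omega,1)$ directly. With $T=1$ there is only one decision $\u_0$, taken at time $0$ with $X_0$ deterministic, so the problem is a static quadratic program and no dynamic programming or embedding is needed. I would then use Theorem~\ref{theorem:2.2} only as a consistency check.

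\textbf{Step 1: write the objective in closed form.} By \eqref{def_wealth_dyn_rf}, $X_1 = rX_0 + \P_0^\top \u_0$. Since $\u_0$ is deterministic, $\E[X_1] = rX_0 + \bmu^\top \u_0$ and $\Var(X_1) = \u_0^\top \bsigma \u_0$. The penalty term has no randomness, so it equals $(\u_0 - X_0\w_0)^\top \Q_0 (\u_0 - X_0\w_0)$. Up to the constant $-rX_0$, the objective is
\begin{equation*}
J(\u_0) = \omega\, \u_0^\top \bsigma \u_0 - \bmu^\top \u_0 + \omega\, (\u_0 - X_0\w_0)^\top \Q_0 (\u_0 - X_0\w_0).
\end{equation*}

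\textbf{Step 2: convexity and the first-order condition.} The Hessian of $J$ is $2\omega(\bsigma + \Q_0) \succ 0$, because $\omega>0$, $\bsigma \succ 0$ and $\Q_0 \succeq 0$. Hence $J$ is strictly convex and its unique minimizer is the stationary point. The gradient condition reads
\begin{equation*}
2\omega \bsigma \u_0 - \bmu + 2\omega \Q_0(\u_0 - X_0 \w_0) = \BFzero.
\end{equation*}
Solving it gives $(\bsigma+\Q_0)\u_0 = \bmu/(2\omega) + X_0\Q_0\w_0$, which is \eqref{RRMVF_omega_uk_T1}.

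\textbf{Step 3: consistency with Theorem~\ref{theorem:2.2}.} With $a_1=b_1=1$ and $c_1=0$, the recursion \eqref{equation:rrmv_rf_para} gives $\D_0 = \bsigma + \bmu\bmu^\top + \Q_0$ and $c_0 = -\bmu^\top\D_0^{-1}\bmu$. Substituting into \eqref{RRMVF_omega_uk} and applying the Sherman--Morrison formula to $\D_0^{-1} = (\bsigma+\Q_0+\bmu\bmu^\top)^{-1}$ should collapse the expression to the same formula. The useful identity is $\D_0^{-1}\bmu = (\bsigma+\Q_0)^{-1}\bmu/(1+\bmu^\top(\bsigma+\Q_0)^{-1}\bmu)$, together with $1+c_0 = 1/(1+\bmu^\top(\bsigma+\Q_0)^{-1}\bmu)$. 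This bookkeeping is the only mildly tedious part, and the direct argument in Steps 1--2 already suffices, so I do not expect a genuine obstacle.
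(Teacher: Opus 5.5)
Your proof is correct, and it takes a different route from the paper. The paper gives no separate proof of this proposition. It presents the result as the $T=1$ specialization of Theorem~\ref{theorem:2.2}, so it comes out of the Li--Ng embedding: solve the auxiliary problem $\cA(\omega,\lambda)$ by dynamic programming, fix $\lambda^*=(2\omega b_0X_0+1)/(1+c_0)$, and then simplify $\D_0^{-1}=(\bsigma+\Q_0+\bmu\bmu^\top)^{-1}$. That last step is exactly your Step~3.

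Your Steps~1--2 avoid all of this. With a single $\cF_0$-measurable (hence deterministic) decision, $\Var(X_1)=\u_0^\top\bsigma\u_0$ is already a deterministic quadratic. The non-separability that motivates the embedding therefore never arises, and strict convexity of $J$ gives existence and uniqueness immediately. Your route gives a self-contained argument that does not depend on the inclusion $\Pi_{\RRMV}(\omega,T)\subseteq\bigcup_\lambda\Pi_{\cA}(\omega,\lambda)$, which the paper only sketches. It also makes the reading of $\Q_0$ as a shrinkage of $\bsigma$ transparent. The paper's route instead confirms that the general recursion \eqref{equation:rrmv_rf_para} reduces correctly to the static case.

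Your consistency check does go through. Write $m=\bmu^\top(\bsigma+\Q_0)^{-1}\bmu$ and $q=\bmu^\top(\bsigma+\Q_0)^{-1}\Q_0\w_0$. Then $b_0=(r+q)/(1+m)$ and $\lambda^*/(2\omega)-rX_0=(1+m)/(2\omega)+X_0q$. The resulting term $\frac{X_0q}{1+m}(\bsigma+\Q_0)^{-1}\bmu$ cancels exactly against the Sherman--Morrison correction in $X_0\D_0^{-1}\Q_0\w_0$.
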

Note that this solution adds regularization $\Q_0$ to the covariance matrix $\bsigma$ used in the optimization. Motivated by \cite{demiguel2009generalized}, who considered constraining different types of portfolio norms, here we use the weighted $\ell_2$-norm $\u_0^\top \Q_0 \u_0$. In addition, $\Q_0 \w_0$ serves as a regularization to the mean $\bmu$.

In the following proposition, we derive the ideal or maximum Sharpe ratio of the MMV portfolio if there is no estimation error. 
We compare this with holding a static MV portfolio for $T$ periods, 
and the corresponding terminal wealth will be $X_T = X_0 \cdot \prod_{t=0}^{T-1} (r + \frac{1}{2 \omega} \P_t^\top \bsigma^{-1} \bmu)$. 
For any $T>1$, the MMV Sharpe ratio strictly exceeds that of the static MV strategy, showing the advantage of a feedback policy. 
\begin{proposition}\label{prop:sr_compare}
Suppose $\bmu$ and $\bsigma$ are known. 
Let $\SR_{\max}$ and $\SR_{\mathrm{hold}}$ denote the one-period Sharpe ratio achieved by the MMV policy and holding the static MV portfolio over $T$ periods, where 
\begin{equation} \label{SR_max}
    \begin{aligned}
    \SR_{\max}
    &=
    \frac{1}{\sqrt{T}}\sqrt{(1+\bmu^\top\bsigma^{-1}\bmu)^T-1} \,,
    \\
    \SR_{\mathrm{hold}}
    &= 
    \frac{1}{\sqrt{T}}
    \frac{1-(1-M)^T}
    {\sqrt{\left(1+\frac{1}{2\omega r}M(1-M)\right)^T-1}} \,,
    \end{aligned}
\end{equation}
where $M=\frac{\bmu^\top\bsigma^{-1}\bmu}{\,2\omega r +  \bmu^\top\bsigma^{-1}\bmu\,}$. 
Then, for any $T>1$, we have $\SR_{\max}>\SR_{\mathrm{hold}}$.
\end{proposition}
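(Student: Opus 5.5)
The plan is to avoid comparing the two closed-form expressions in \eqref{SR_max} directly. Instead I would show that $\SR_{\max}$ is an upper bound on the Sharpe ratio of \emph{every} self-financing policy, and then show that the buy-and-hold static MV policy does not attain this bound when $T>1$. Throughout, write $\theta=\bmu^\top\bsigma^{-1}\bmu$ and $\alpha=1/(1+\theta)$. The one-period Sharpe ratio of a terminal wealth $X_T$ is taken as $(\E[X_T]-r^TX_0)/\big(\sqrt{T}\sqrt{\Var[X_T]}\big)$, after the normalization implicit in \eqref{SR_max}.

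\textbf{Step 1: a pricing kernel.} I would introduce the per-period variables $z_t = 1-\alpha\,\P_t^\top\bsigma^{-1}\bmu$. A direct computation gives
\[
\E[z_t\P_t]=\bmu-\alpha(\bsigma+\bmu\bmu^\top)\bsigma^{-1}\bmu=\bmu\big(1-\alpha(1+\theta)\big)=\BFzero .
\]
It also gives $\E[z_t]=\alpha$ and $\E[z_t^2]=1-2\alpha\theta+\alpha^2\theta(1+\theta)=\alpha$. Set $Z=\alpha^{-T}\prod_{t=0}^{T-1}z_t$. By independence across periods, $\E[Z]=1$ and $\Var[Z]=\alpha^{-T}-1=(1+\theta)^T-1$. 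Next, use the wealth dynamics \eqref{def_wealth_dyn_rf}, where $\u_t$ is $\cF_t$-measurable and $\P_t$ is independent of $\cF_t$, and condition backwards. This yields $\E[z_tX_{t+1}\mid\cF_t]=\alpha r X_t$, and hence $\E[ZX_T]=r^TX_0$ for any admissible policy. Therefore
\[
\Cov(Z,X_T)=r^TX_0-\E[X_T].
\]
Cauchy--Schwarz then gives $\E[X_T]-r^TX_0\le\sqrt{\Var[X_T]}\sqrt{(1+\theta)^T-1}$. Consequently every policy has Sharpe ratio at most $\frac{1}{\sqrt T}\sqrt{(1+\theta)^T-1}$.

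\textbf{Step 2: the bound is attained.} Substituting the FB policy \eqref{def_mv_u} into the dynamics shows that $X_T-\bar X_T$ is a negative multiple of $\prod_t z_t$, i.e.\ $X_T$ is affine in $Z$ with negative slope. So equality holds in Cauchy--Schwarz, and the first formula in \eqref{SR_max} is exactly $\SR_{\max}$. Alternatively, this follows from the known Li--Ng frontier.

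\textbf{Step 3: the hold policy is strictly suboptimal.} The hold policy gives $X_T=X_0\prod_t y_t$ with $y_t=r+\frac{1}{2\omega}\P_t^\top\bsigma^{-1}\bmu$. Its mean and variance come from independence: $\E[y_t]=r+\theta/(2\omega)$ and $\E[y_t^2]$ is explicit. Rewriting these in terms of $M$ reproduces the displayed $\SR_{\mathrm{hold}}$, which is positive. Each factor is affine in $z_t$ with nonzero slope: $y_t=\beta_0+\beta_1 z_t$ with $\beta_1=-1/(2\omega\alpha)\neq0$ and $\beta_0=r+1/(2\omega\alpha)$. Expanding the product gives a multilinear polynomial in $(z_0,\dots,z_{T-1})$. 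For $T\ge2$ this polynomial contains, for instance, the monomial $z_0$ with coefficient $\beta_1\beta_0^{T-1}\neq0$. Suppose $X_T=\gamma_0+\gamma_1\prod_t z_t$ almost surely. Since the $z_t$ are independent and each is nondegenerate ($\Var[z_t]=\alpha-\alpha^2>0$), I would condition on $z_1,\dots,z_{T-1}$ and compare the affine functions of $z_0$ on both sides. Matching the $z_0$-free parts forces $\beta_0\prod_{t\ge1}y_t=\gamma_0$ to be constant in $z_1,\dots,z_{T-1}$, which is false because $\beta_0\neq0$ and $\beta_1\neq0$. Hence equality in Cauchy--Schwarz fails, and $\SR_{\mathrm{hold}}<\SR_{\max}$.

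\textbf{Main obstacle.} I expect Step 3 to be the hardest part. It needs a careful argument that affine dependence on $\prod z_t$ is impossible. The subtlety is that the $z_t$ need not be Gaussian, so the argument must rely only on independence and nondegeneracy, via this conditioning step. It must also confirm that $\beta_0\neq0$, which holds because $r>0$ and $\omega>0$. A secondary check is that the algebra of Step 3 matches the displayed $\SR_{\mathrm{hold}}$ expression, but the inequality itself does not depend on that form.
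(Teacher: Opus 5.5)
Your proof is correct, but it follows a genuinely different route from the paper's.

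\textbf{The paper's route.} The paper introduces no pricing kernel. It takes the two closed forms and compares them directly with Bernoulli's inequality. The bounds $1-(1-M)^T<TM$ and $\bigl(1+\tfrac{M(1-M)}{2\omega r}\bigr)^T-1>\tfrac{TM(1-M)}{2\omega r}$ give $\SR_{\mathrm{hold}}<\sqrt{2\omega rM/(1-M)}=\sqrt{\bmu^\top\bsigma^{-1}\bmu}$. Then $(1+\bmu^\top\bsigma^{-1}\bmu)^T-1>T\,\bmu^\top\bsigma^{-1}\bmu$ gives $\sqrt{\bmu^\top\bsigma^{-1}\bmu}<\SR_{\max}$. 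This takes a few lines and adds a sandwich: holding the static portfolio does worse than the one-period optimum, and the feedback policy does better. It relies entirely on both closed forms having been derived correctly.

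\textbf{What your route buys.} Your Cauchy--Schwarz bound with $Z=\alpha^{-T}\prod_t z_t$ is a Hansen--Jagannathan-type argument. It shows the first formula bounds the Sharpe ratio of every adapted, square-integrable self-financing policy, including the time-consistent and buy-and-hold ones. It characterizes equality as $X_T$ being affine in $Z$. It recovers $\SR_{\max}$ for the feedback policy without the Li--Ng frontier, since there $X_T=\bar X_T-Z/(2\omega)$. The cost is the non-affineness argument in Step 3.

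\textbf{Two points to tighten.}
\begin{enumerate}
\item In Step 3 you claim $X_0\beta_0\prod_{t\ge1}y_t$ cannot be constant. This needs the product to be nondegenerate, which follows from independence via $\E[\prod_{t\ge1}y_t^2]=\prod_{t\ge1}\E[y_t^2]>\prod_{t\ge1}(\E[y_t])^2$.
\item The proposition compares the displayed expressions. So identifying the displayed $\SR_{\mathrm{hold}}$ with the hold policy's actual Sharpe ratio is a required step, not a side check. It does go through: with $m=r+\bmu^\top\bsigma^{-1}\bmu/(2\omega)$ one has $r/m=1-M$ and $\bmu^\top\bsigma^{-1}\bmu/(4\omega^2m^2)=M(1-M)/(2\omega r)$.
\end{enumerate}
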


\section{Statistical Justification of Out-of-Sample Performance}\label{se:stat}
In this section, we study the out-of-sample performance of the RRMV portfolio obtained from solving $\cP_{\RRMV}(\omega, T)$ with estimation error.
We begin with the single-period case in Section~\ref{se: out-of-sample-single-period}, which serves as a benchmark and builds intuition for how regularization affects out-of-sample Sharpe ratio.
We then extend the analyses to the multiperiod setting in Section~\ref{sse_multiperiod_RRMV}, where portfolio decisions are made dynamically over time.
This extension shows how the main insights from the single-period case carry over, while also clarifying how the role of regularization can be different under the feedback policies when $T>1$.
To better understand the underlying economic mechanisms, we analyze estimation errors in $\hat\bmu$ and $\hat\bsigma$ separately in both settings.

\subsection{Out-of-Sample Analysis} \label{sec3.1}
In practical applications, the true expected return $\bmu$ and covariance matrix $\bsigma$ are unknown and need to be estimated. Typically, the portfolio evaluation proceeds in two stages. In the portfolio construction stage, unknown parameters are estimated from historical return samples and portfolio weights are computed using the estimated parameters. In the out-of-sample evaluation stage, the portfolio is held for a new unseen sample of returns and assessed based on the realized mean return, volatility and Sharpe ratio etc.

Let $\big\{ \P_{-1}, \P_{-2}, \ldots, \P_{-n} \big\}$ denote the set of historical excess return vectors prior to current time point at zero, where $n$ is the sample size. The sample mean and sample covariance are given by 
$\hat{\bmu} = \frac{1}{n} \sum_{t=1}^n \P_{-t}$ and $\hat{\bsigma} = \frac{1}{n} \sum_{t=1}^n \big(\P_{-t} - \hat\bmu\big)\big(\P_{-t} - \hat\bmu\big)^\top$. 
In what follows, for any parameter $A$, we use $\hat{A}$ to denote its estimated counterpart. Accordingly, when the true expected return $\bmu$ and covariance matrix $\bsigma$ are replaced by their estimators $\hat{\bmu}$ and $\hat{\bsigma}$ in the portfolio optimization, the parameters $\{\hat{a}_k, 
\hat{b}_k, \hat{c}_k, \hat{\D}_k \}|_{k=0}^{T-1}$ serve as the estimated versions of $\{a_k, b_k, c_k, \D_k\}|_{k=0}^{T-1}$, computed by applying the recursions in \eqref{equation:rrmv_rf_para}. Similarly, the RRMV portfolio policy given in \eqref{RRMVF_omega_uk} based on the estimated parameters $\{\hat{a}_k, \hat{b}_k, \hat{c}_k, \hat{\D}_k\}|_{k=0}^{T-1}$ is denoted by $\hat{\u}^*_k$, $t=0,\ldots, T-1$. 
For notational simplicity, we assume $X_0 = 1$. 
In the out-of-sample stage, the estimated optimal portfolio $\{\hat{\u}^*_k\}_{t=0}^{T-1}$ is implemented with future out-of-sample returns $\{\P_0,\P_1, \ldots, \P_{T-1}\}$.

Different from \cite{kanOptimalPortfolioChoice2007}, we focus on the out-of-sample performance of high-dimensional large portfolios where the portfolio size $p$ diverges. Following \cite{Ao2019}, we assume that the ratio $p/n$ converges to some constant $c>0$. Under this regime, we study the asymptotic behavior of the related statistics as $n \rightarrow \infty$ and $p/n \rightarrow c$. To simplify the notation, we use the expression $Y(n) \xrightarrow{a.s} Y^*$ to indicate that $Y(n)$ converges almost surely to $Y^*$ as $n \rightarrow \infty$ and $p/n \rightarrow c$.
In this section, we use the solution of $\tilde{\cP}_{\RRMV}(X_{\tg}, T)$ to illustrate the effect of estimation errors. Similar results can be obtained with $\cP_{\RRMV}(\omega, T)$.

\begin{assumption}\label{asmp_iid}
All excess returns $\P_t$, $t=-q, \ldots, -1, 0, 1, \ldots, T-1$, are independent and identically distributed (i.i.d.) according to the Gaussian distribution $\mathcal{N}(\bmu,\bsigma)$.
\end{assumption}
Then, it  yields the wealth process,
\begin{align}
    \hat{X}_{t+1} = r \hat{X}_{t} + \P_{t}^{\top} \hat{\u}^*_k(\hat{X}_k),~~k =0 ,\ldots, T-1\,, \label{def_cum_esti_wealth}
\end{align}
where all the parameters in the portfolio policy $\{\hat{\u}^*_k\}|_{k=0}^{T-1}$ are fixed numbers conditioning on historical information. Thus, the performance of the cumulative wealth in \eqref{def_cum_esti_wealth} depends only on the newly arrived returns $\{\P_k \}|_{k=0}^{T-1}$. We may compute the expected value and the variance of the cumulated wealth $\hat{X}_t$ with respect to $\{\P_k \}|_{k=0}^{T-1}$ as follows.

\begin{proposition} \label{prop:eff_frontier}
Implementing the optimal RRMV portfolio  $\{\hat{\u}^*_k\}_{t=0}^{T-1}$, the expected value and variance of the cumulated wealth \eqref{def_cum_esti_wealth} are     
\begin{align*}
\E[\hat{X}_{k+1}]   &= \prod_{i=0}^{k} \hat{\eta}_i + \sum_{i=0}^k(\hat{\lambda}_{i}\prod_{j=i+1}^{k}\hat{\eta}_{j}) \,,
\\
\Var[\hat{X}_{k+1}] &= \sum_{i=0}^{k} \hat\nu_i \prod_{j=i+1}^{k}\hat{{g}}_j \,,
\end{align*}
where $\hat{\lambda}_k = \frac{\hat{b}_0 - X_\tg}{\hat{c}_0} \hat{b}_{k+1} \hat{\bmu}^\top \hat{\D}_k^{-1} \hat\bmu$, $\hat\eta_k = r- r \hat{a}_{k+1} \bmu^\top \hat{\D}_k^{-1}  \hat\bmu + \bmu^\top \hat{\D}_k^{-1} \Q_k\w_k$, $\hat\nu_k = \| \E[\hat X_k]\big( r \hat{a}_{k+1} \hat{\bmu} -\Q_k \w_k \big) - \frac{\hat{b}_0 - X_{\tg} }{ 2\hat{c}_0 } \hat{b}_k \hat{\bmu} \|^2_{ \hat\D_k^{-1} \bsigma \hat\D_k^{-1} }$ and $\hat{g}_k = \big\|  ra_{k+1} \hat{\bmu} - \Q_k \w_k  \big\|^2_{ \hat\D_k^{-1} \bsigma \hat\D_k^{-1} } + \hat{\eta}_k^2$, 
for $k=0,\ldots, T-1$. 
\end{proposition}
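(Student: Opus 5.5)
The plan is to exploit the fact that, conditional on the historical sample, the estimated policy $\hat\u^*_k(\hat X_k)$ is affine in the current wealth $\hat X_k$ with deterministic coefficients. The wealth recursion \eqref{def_cum_esti_wealth} is therefore a scalar linear recursion with random multiplicative and additive coefficients driven by the fresh return $\P_k$.

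Concretely, I would use Theorem~\ref{theorem:2.2} together with \eqref{def_RRMV_rf_omega} to substitute $\omega^*$ into \eqref{RRMVF_omega_uk}, with hatted quantities. After simplification the intercept coefficient becomes $\frac{\hat b_0 - X_\tg}{\hat c_0}\hat b_{k+1}$ times $\hat\D_k^{-1}\hat\bmu$, up to the constant factor that appears in the statement. This gives
\begin{align*}
\hat X_{k+1} = \hat X_k\Big(r - \P_k^\top \hat\D_k^{-1}\big(r\hat a_{k+1}\hat\bmu - \Q_k\w_k\big)\Big) + \kappa_k\, \P_k^\top\hat\D_k^{-1}\hat\bmu,
\end{align*}
where $\kappa_k$ is the deterministic intercept coefficient just described. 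The first step is simply to verify this algebra carefully and reconcile the constants $\frac{\hat b_0-X_\tg}{\hat c_0}$ versus $\frac{\hat b_0-X_\tg}{2\hat c_0}$ appearing in $\hat\lambda_k$ and $\hat\nu_k$.

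For the mean, I use Assumption~\ref{asmp_iid}: $\P_k$ is independent of $\hat X_k$, which depends only on $\P_0,\dots,\P_{k-1}$ and the history. Taking conditional expectations gives $\E[\hat X_{k+1}] = \hat\eta_k \E[\hat X_k] + \hat\lambda_k$, with $\hat\eta_k$ and $\hat\lambda_k$ read off by replacing $\P_k$ by $\bmu$ in the multiplicative and additive parts respectively. Unrolling from $\hat X_0 = X_0 = 1$ yields the product–sum formula.

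For the variance, I condition on $\hat X_k$ and use the law of total variance. Write $\hat X_{k+1} = \hat X_k(\hat\eta_k - \mathbf{v}_k^\top\Z_k) + \hat\lambda_k' + \mathbf{s}_k^\top \Z_k$, where $\Z_k = \P_k - \bmu \sim \mathcal N(\BFzero,\bsigma)$ is independent of $\hat X_k$, $\mathbf{v}_k = \hat\D_k^{-1}(r\hat a_{k+1}\hat\bmu - \Q_k\w_k)$, and $\mathbf{s}_k = \kappa_k\hat\D_k^{-1}\hat\bmu$. Then
\begin{align*}
\Var[\hat X_{k+1}] = \hat\eta_k^2\Var[\hat X_k] + \E\big[(\hat X_k \mathbf{v}_k - \mathbf{s}_k)^\top\bsigma(\hat X_k\mathbf{v}_k - \mathbf{s}_k)\big].
\end{align*}
Expanding the expectation as $\Var[\hat X_k]\,\mathbf{v}_k^\top\bsigma\mathbf{v}_k + \|\E[\hat X_k]\mathbf{v}_k - \mathbf{s}_k\|_{\bsigma}^2$ gives the recursion $\Var[\hat X_{k+1}] = \hat g_k \Var[\hat X_k] + \hat\nu_k$. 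Here $\hat g_k = \|r\hat a_{k+1}\hat\bmu - \Q_k\w_k\|^2_{\hat\D_k^{-1}\bsigma\hat\D_k^{-1}} + \hat\eta_k^2$, and $\hat\nu_k$ is exactly the stated weighted norm. Since $\Var[\hat X_0] = 0$, unrolling gives the sum–product formula.

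The main obstacle is bookkeeping rather than conceptual. The work lies in matching constants and indices to the displayed $\hat\lambda_k$, $\hat\nu_k$ (for instance $\hat b_k$ versus $\hat b_{k+1}$, and the factor $2$), and in carefully reducing the $\omega^*$-substituted intercept to $\frac{\hat b_0-X_\tg}{\hat c_0}\hat b_{k+1}$. Where the statement's indices differ from the derivation, I would follow the derivation and note the convention used.
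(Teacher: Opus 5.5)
Your proposal is correct and follows essentially the same route as the paper: substitute $\omega^*$ into the feedback policy to obtain the affine wealth recursion with intercept $\frac{\hat b_0 - X_{\tg}}{\hat c_0}\hat b_{k+1}\,\P_k^\top\hat\D_k^{-1}\hat\bmu$. Then take conditional expectations for the mean, and apply the law of total variance conditioning on $\hat X_k$ to get $\Var[\hat X_{k+1}] = \hat g_k \Var[\hat X_k] + \hat\nu_k$, unrolled from $\Var[\hat X_0]=0$. Your derivation also settles the statement's typos the same way the paper's appendix does: $\bmu^\top$ rather than $\hat\bmu^\top$ in $\hat\lambda_k$, $\hat a_{k+1}$ in $\hat g_k$, and coefficient $\frac{\hat b_0 - X_{\tg}}{\hat c_0}\hat b_{k+1}$ (no factor $2$, index $k+1$) in $\hat\nu_k$.
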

Note the difference between $\bmu$ and $\hat\bmu$ in $\hat\eta_k$ and the difference between $\hat\D_k$ and $\bsigma$ in $\|\x\|_{\hat\D_k^{-1} \bsigma \hat\D_k^{-1}}^2 = \x^\top \hat\D_k^{-1} \bsigma \hat\D_k^{-1} \x$.
A direct consequence of Proposition~\ref{prop:eff_frontier} is the one-period Sharpe ratio of $\hat{X}_T$:
\begin{equation} \label{def_sharpe_XT}
    \begin{aligned}
    &\SR^2(\hat{X}_T) 
    = \frac{(\E[\hat{X}_{T}] - X_0 r^T)^2}{T \cdot \Var[\hat{X}_{T}]} 
    \\
    &= \frac{(\prod_{i=0}^{T-1} \hat\eta_i + \sum_{i=0}^{T-1}(\hat\lambda_{i}\prod_{j=i+1}^{T-1}\eta_{j})-r^T)^2}{T \cdot \sum_{i=0}^{T-1} \hat\nu_i \prod_{j=i+1}^{T-1}\hat{g}_j} \,. 
    \end{aligned}
\end{equation}

Consider $\Q_k \propto \bar\Q$ for some fixed $\bar\Q$ with proper scaling and zero reference $\w_k = 0$. Then the out-of-sample Sharpe ratio can be significantly simplified.
\begin{corollary}\label{coro:os_SR_w=0}
If we set $\Q_k =  \rho \hat a_{k+1} \bar\Q$ and $\w_k = \BFzero$ for $k=0,\ldots, T-1$, the out-of-sample Sharpe ratio \eqref{def_sharpe_XT} becomes
\begin{align*}
    \SR^2(\hat{X}_T) = \frac{(\hat \eta^T -r^T)^2}{T ((\hat\Gamma + \hat\eta^2)^T - \hat\eta^{2T})} \,,
\end{align*}
where 
$\hat\eta = r\Big(1-\frac{\bmu^\top(\hat\bsigma + \rho \bar\Q)^{-1}\hat\bmu}{1+\hat\bmu^\top (\hat\bsigma + \rho \bar\Q)^{-1} \hat\bmu} \Big)$ is the same $\hat\eta_k$ in Proposition~\ref{prop:eff_frontier} and $\hat{\Gamma} = r^2 \frac{\hat\bmu^\top (\hat{\bsigma} + \rho \bar\Q)^{-1} \bsigma (\hat{\bsigma} + \rho \bar\Q)^{-1} \hat\bmu}{(1+\hat\bmu^\top (\hat{\bsigma} + \rho \bar\Q)^{-1} \hat\bmu)^2}$. When $\rho=0, \hat\bsigma=\bsigma,\hat\bmu=\bmu$, this SR reduces to the maximum possible Sharpe ratio in \eqref{SR_max}.
\end{corollary}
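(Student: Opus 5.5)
The plan is to bypass the general expressions for $\hat\lambda_k,\hat\nu_k$ in Proposition~\ref{prop:eff_frontier}. I will first simplify the estimated recursion \eqref{equation:rrmv_rf_para} under the choice $\Q_k=\rho\hat a_{k+1}\bar\Q$, $\w_k=\BFzero$. Then I will show that the estimated policy \eqref{RRMVF_omega_uk} makes the distance between wealth and a deterministic target evolve multiplicatively with i.i.d.\ factors, from which both moments of $\hat X_T$ follow at once.

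\textbf{Step 1 (closed form of the recursion).} Write $\hat\S=\hat\bsigma+\rho\bar\Q$ and $\hat h=\hat\bmu^\top\hat\S^{-1}\hat\bmu$. The scaling $\Q_k=\rho\hat a_{k+1}\bar\Q$ is legitimate in the backward recursion, because $\hat a_{k+1}$ is already computed when $\hat\D_k$ is formed, and $\hat a_{k+1}>0$ by Lemma~\ref{lemma:param_a_nonneg}. With this choice, $\hat\D_k=\hat a_{k+1}(\hat\S+\hat\bmu\hat\bmu^\top)$. Sherman--Morrison then gives
\begin{equation*}
\hat\D_k^{-1}\hat\bmu=\frac{\hat\S^{-1}\hat\bmu}{\hat a_{k+1}(1+\hat h)},\qquad \hat\bmu^\top\hat\D_k^{-1}\hat\bmu=\frac{\hat h}{\hat a_{k+1}(1+\hat h)}.
\end{equation*}
Substituting into \eqref{equation:rrmv_rf_para} with $\w_k=\BFzero$ yields
\begin{equation*}
\hat a_k=\frac{r^2\hat a_{k+1}}{1+\hat h},\qquad \hat b_k=\frac{r\hat b_{k+1}}{1+\hat h}.
\end{equation*}
Hence $\hat a_k=(r^2/(1+\hat h))^{T-k}$ and $\hat b_k=(r/(1+\hat h))^{T-k}$, so that $\hat b_{k+1}/\hat a_{k+1}=r^{-(T-k-1)}$. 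The same substitution shows that $\hat\eta_k=r\bigl(1-\bmu^\top\hat\S^{-1}\hat\bmu/(1+\hat h)\bigr)=\hat\eta$ does not depend on $k$, and that $\hat g_k=\hat\Gamma+\hat\eta^2$ (reading $a_{k+1}$ in $\hat g_k$ as $\hat a_{k+1}$).

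\textbf{Step 2 (multiplicative structure).} Put $\hat\bs v=\hat\S^{-1}\hat\bmu/(1+\hat h)$. By Step 1, the policy \eqref{RRMVF_omega_uk} with estimated parameters and $\lambda=\lambda^*$ equals
\begin{equation*}
\hat\u^*_k=\bigl(K r^{k+1}-rX_k\bigr)\hat\bs v
\end{equation*}
for a constant $K$ that depends only on historical data, $\omega$ (or $X_\tg$), and $X_0$. Define $Z_k=Kr^k-\hat X_k$. Then \eqref{def_cum_esti_wealth} becomes
\begin{equation*}
Z_{k+1}=Z_k\,r\bigl(1-\P_k^\top\hat\bs v\bigr).
\end{equation*}
The factors $r(1-\P_k^\top\hat\bs v)$ are i.i.d.\ by Assumption~\ref{asmp_iid}, with mean $\hat\eta$ and second moment $\hat\eta^2+r^2\hat\bs v^\top\bsigma\hat\bs v=\hat\eta^2+\hat\Gamma$. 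Hence
\begin{equation*}
\E[Z_T]=Z_0\hat\eta^T,\qquad \E[Z_T^2]=Z_0^2(\hat\Gamma+\hat\eta^2)^T.
\end{equation*}
With $X_0=1$ we have $K=1+Z_0$, so $\E[\hat X_T]-r^T=Z_0(r^T-\hat\eta^T)$ and $\Var[\hat X_T]=Z_0^2\bigl((\hat\Gamma+\hat\eta^2)^T-\hat\eta^{2T}\bigr)$. The factor $Z_0$ cancels in \eqref{def_sharpe_XT}, which gives the stated formula. This is consistent with Proposition~\ref{prop:eff_frontier} specialized to constant $\hat\eta_k$ and $\hat g_k$.

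\textbf{Step 3 (reduction).} With $\rho=0$, $\hat\bsigma=\bsigma$, $\hat\bmu=\bmu$, write $\theta=\bmu^\top\bsigma^{-1}\bmu$. Then $\hat\eta=r/(1+\theta)$, $\hat\Gamma=r^2\theta/(1+\theta)^2$, and $\hat\Gamma+\hat\eta^2=r^2/(1+\theta)$. The ratio becomes
\begin{equation*}
\frac{\bigl(1-(1+\theta)^{-T}\bigr)^2}{T\bigl((1+\theta)^{-T}-(1+\theta)^{-2T}\bigr)}=\frac{(1+\theta)^T-1}{T},
\end{equation*}
which is $\SR_{\max}^2$ in \eqref{SR_max}.

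\textbf{Main obstacle.} The delicate point is Step 2. One must verify that $\lambda^*$, combined with the ratio $\hat b_{k+1}/\hat a_{k+1}=r^{-(T-k-1)}$, really produces a target of the form $Kr^{k}$. That is exactly what makes $Z_k$ evolve multiplicatively. One must also check that $\hat\eta$ involves the true $\bmu$ while $\hat\bs v$ involves $\hat\bmu$ and $\hat\S$, so the out-of-sample moments are taken only over the new returns $\P_k$.
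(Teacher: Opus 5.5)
Your proof is correct, but its core step follows a different route from the paper. Step 1 is the same closed-form recursion the paper derives: its $\hat C=r/(1+\hat h)$, with $\hat a_k=(r\hat C)^{T-k}$ and $\hat b_k=\hat C^{T-k}$. Step 3 matches the algebra in the paper's proof of Proposition~\ref{prop:sr_compare}.

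Step 2 is where you depart. The paper never looks at the wealth path itself; it specializes the general moment recursions of Proposition~\ref{prop:eff_frontier}:
\begin{itemize}
\item it computes $\hat c_0$ and $\hat\lambda_k$;
\item it sums one geometric series to get $\E[\hat X_k]=\hat\eta^k\frac{X_\tg-r^T}{\hat C^T-r^T}+r^k\frac{\hat C^T-X_\tg}{\hat C^T-r^T}$;
\item it substitutes this into $\hat\nu_k$ to get $\hat\Gamma\bigl(\frac{X_\tg-r^T}{\hat C^T-r^T}\bigr)^2\hat\eta^{2k}$;
\item it sums a second geometric series $\sum_i\hat\nu_i(\hat\Gamma+\hat\eta^2)^{T-1-i}$.
\end{itemize}
Your gap process explains this bookkeeping at once. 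The paper's mean is exactly $Kr^k-Z_0\hat\eta^k$ with $K=\frac{\hat C^T-X_\tg}{\hat C^T-r^T}$, and $\hat\nu_k=\hat\Gamma\,(\E[Z_k])^2$.

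Your route buys several things:
\begin{itemize}
\item it never needs $\hat c_0$ or $\hat\lambda_k$;
\item it shows transparently why the Sharpe ratio does not depend on $X_\tg$ or $\omega$, since the scale $Z_0$ cancels (both proofs tacitly need $Z_0\neq 0$, i.e.\ $X_\tg\neq r^T$);
\item it yields the whole law of $\hat X_T$, not just two moments.
\end{itemize}

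It is also more general than you suggest, and the step you flag as the main obstacle is automatic. For $\w_k=\BFzero$ the recursion reads $\hat a_k=r^2\hat a_{k+1}(1-\hat a_{k+1}\hat\bmu^\top\hat\D_k^{-1}\hat\bmu)$ and $\hat b_k=r\hat b_{k+1}(1-\hat a_{k+1}\hat\bmu^\top\hat\D_k^{-1}\hat\bmu)$. Hence $\hat b_k/\hat a_k=r^{-(T-k)}$ for \emph{any} $\Q_k$. The scaling $\Q_k=\rho\hat a_{k+1}\bar\Q$ only makes the independent factors identically distributed. Without it, the same argument gives $\SR^2(\hat X_T)=(\prod_k\hat\eta_k-r^T)^2/\bigl(T(\prod_k\hat g_k-\prod_k\hat\eta_k^2)\bigr)$.

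What the paper's route buys is uniformity. Proposition~\ref{prop:eff_frontier} is needed anyway for nonzero references $\w_k$. There the extra term $\hat X_k\hat\D_k^{-1}\Q_k\w_k$ is generally not aligned with $\hat\D_k^{-1}\hat\bmu$, and the one-dimensional multiplicative gap structure is lost.
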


\subsection{Performance of single-period RRMV portfolio}\label{se: out-of-sample-single-period}

Before analyzing the general multiperiod RRMV portfolio, we first consider the single-period case $T=1$. For notational simplicity, we omit the time index and treat $t=0$ implicitly. 
By substituting the estimated parameters $\hat{\bmu}$ and $\hat{\bsigma}$ into (\ref{RRMVF_omega_uk_T1}), the resulting portfolio is $\u^\ast = (\hat\bsigma + \Q)^{-1} (\frac{\hat\bmu}{2\omega}  + \Q \w)$
with the out-of-sample Sharpe Ratio given by $\SR = \frac{\bmu^\top \u^*}{\sqrt{(\u^*)^\top \bsigma \u^*}}$. 

To disentangle the effects of the estimation errors of $\hat\bsigma$ and $\hat\bmu$, we consider two separate scenarios:  
(i) the true covariance matrix $\bsigma$ is known while the expected return $\bmu$ is estimated by $\hat{\bmu}$;  
(ii) the expected return $\bmu$ is known while the covariance matrix is estimated by $\hat{\bsigma}$.

\subsubsection{Scenario (i): known \texorpdfstring{$\bsigma$}~ and estimated \texorpdfstring{$\bmu$}.} \label{sec3.1.1}
We first provide theoretical understanding on the effect of estimating the expected return $\bmu$ by the historical sample mean, leading to the necessity of adding the reference regularization. 
The estimation error of $\hat\bmu$ follows the following distribution, $\varepsilon = \hat\bmu - \bmu \sim N(0, \bsigma/n)$, where $n$ is the sample size of the $p$-dimension return vector.

For any given matrix $\Q \succ 0$, we set the reference portfolio as $\w = \alpha \bar\w$ with some fixed $\bar\w$. 
Then the regularization can be adjusted by choosing $\alpha$, when $\Q$ is fixed. 
The optimal portfolio for this scenario can be written as 
$\u^\ast = (\bsigma + \Q)^{-1} (\frac{\hat\bmu}{2\omega}  + \alpha \Q \bar\w)$
where we can show that $X_\tg - r = \hat\bmu^\top (\bsigma + \Q)^{-1} (\frac{\hat\bmu}{2\omega}  + \alpha \Q \bar\w)$.
The mapping between $\omega$ and target return $X_\tg$ depends on our estimation $\hat\bmu$. Therefore, the optimization problem $\tilde{\cP}_{\RRMV}(X_{\tg},T)$, given any fixed value of $X_\tg$, the corresponding $\omega$ is affected by the estimation error. 
As $n\to \infty$, it is not hard to show 
\begin{align*}
    \frac{1}{2\omega} \xrightarrow{a.s} \frac{1}{2\omega^\ast} =  \frac{X_\tg - r - \alpha \bmu^\top (\bsigma + \Q)^{-1} \Q \bar\w}{\bmu^\top (\bsigma + \Q)^{-1} \bmu + \tr((\bsigma + \Q)^{-1}  \bsigma / n)} \,.
\end{align*}
So, the optimal policy almost surely converges, that is, $\u^\ast \xrightarrow{a.s} (\bsigma + \Q)^{-1} (\frac{\hat\bmu }{2\omega^\ast} + \alpha \Q \bar\w)$, and we can compute the out-of-sample Sharpe ratio of the asymptotic allocation. 
\begin{theorem}
    \label{theorem:single_mu_SR}
    Under the assumptions of known $\bsigma$ and sample mean $\hat\bmu$, when $T=1$, the Sharpe ratio of terminal wealth almost surely converges. 
    And its limiting value depends on $\alpha$ and $\bar\w$ as follows. 
    \begin{align*}
    \SR_\infty(\Q,\alpha\bar\w)
    = 
    \frac{ \bmu^\top (\bsigma + \Q)^{-1} \d_\alpha }{\sqrt{ \| (\bsigma + \Q)^{-1} \d_\alpha \|_{\bsigma} + \frac{e_{\bmu}} {(2\omega^\ast)^2} }} \,,
    \end{align*}
    where $\d_\alpha = \frac{\bmu}{2\omega^\ast}  + \alpha \Q \bar\w$, $e_{\bmu} = \tr( (\bsigma + \Q)^{-1}  \bsigma (\bsigma + \Q)^{-1} \bsigma / n)$ is the bias caused by estimation error and $\SR_\infty$ is the asymptotic Sharpe ratio when $n\to\infty$.
\end{theorem}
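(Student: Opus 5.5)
We work with the single-period portfolio $\u^\ast = (\bsigma+\Q)^{-1}\big(\frac{\hat\bmu}{2\omega}+\alpha\Q\bar\w\big)$ and write $\hat\bmu = \bmu + \varepsilon$ with $\varepsilon = n^{-1/2}\bsigma^{1/2}\z$, where $\z\sim N(0,\I_p)$. Set $\A = (\bsigma+\Q)^{-1}$. The out-of-sample Sharpe ratio is
\begin{align*}
\SR = \frac{\bmu^\top\u^\ast}{\sqrt{(\u^\ast)^\top\bsigma\u^\ast}} .
\end{align*}
The plan is to reduce numerator and denominator to deterministic quantities plus linear and quadratic forms in $\z$. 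We then obtain almost-sure limits of these forms and substitute the almost-sure limit $\frac{1}{2\omega}\to\frac{1}{2\omega^\ast}$ stated just before the theorem. Continuity of $(x,y)\mapsto x/\sqrt{y}$ on $y>0$ then gives the limit.

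The first step fixes $\omega$ by the target constraint. The identity $X_\tg - r = \hat\bmu^\top\A(\frac{\hat\bmu}{2\omega}+\alpha\Q\bar\w)$ gives
\begin{align*}
\frac{1}{2\omega} = \frac{X_\tg - r - \alpha\hat\bmu^\top\A\Q\bar\w}{\hat\bmu^\top\A\hat\bmu}.
\end{align*}
Expand $\hat\bmu^\top\A\hat\bmu = \bmu^\top\A\bmu + 2\bmu^\top\A\varepsilon + \varepsilon^\top\A\varepsilon$.
\begin{itemize}
\item The cross term $\bmu^\top\A\varepsilon$ is Gaussian with variance $n^{-1}\bmu^\top\A\bsigma\A\bmu$. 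Assuming $\bmu^\top\bsigma^{-1}\bmu$ is bounded and the spectra of $\bsigma$ and $\Q$ are bounded above and away from zero, this variance is $O(1/n)$. Gaussian tails and Borel--Cantelli then show the term vanishes almost surely.
\item The quadratic term $\varepsilon^\top\A\varepsilon = n^{-1}\z^\top\bsigma^{1/2}\A\bsigma^{1/2}\z$ satisfies $\varepsilon^\top\A\varepsilon - \tr(\A\bsigma)/n \to 0$ almost surely. This follows from the Hanson--Wright inequality, or equivalently a fourth-moment bound with Borel--Cantelli, since the matrix has operator norm $O(1)$ and $p/n\to c$.
\end{itemize}
This justifies the displayed formula for $1/(2\omega^\ast)$. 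The same argument handles $\alpha\varepsilon^\top\A\Q\bar\w\to0$, assuming $\|\Q\bar\w\|$ is bounded.

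For the numerator, write
\begin{align*}
\bmu^\top\u^\ast = \bmu^\top\A\Big(\frac{\bmu}{2\omega}+\alpha\Q\bar\w\Big) + \frac{1}{2\omega}\bmu^\top\A\varepsilon .
\end{align*}
The last term is again a vanishing Gaussian linear form, so the numerator converges to $\bmu^\top\A\d_\alpha$.

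For the denominator, put $\u^\ast = \A\d_\alpha + \frac{1}{2\omega^\ast}\A\varepsilon + o(1)$ terms. Then
\begin{align*}
(\u^\ast)^\top\bsigma\u^\ast = \|\A\d_\alpha\|_{\bsigma} + \frac{1}{\omega^\ast}\d_\alpha^\top\A\bsigma\A\varepsilon + \frac{1}{(2\omega^\ast)^2}\varepsilon^\top\A\bsigma\A\varepsilon .
\end{align*}
Here $\|\cdot\|_{\bsigma}$ denotes the squared weighted norm, following the paper's convention $\|\x\|_{\Q}^2 = \x^\top\Q\x$.
\begin{itemize}
\item The cross term vanishes almost surely by the linear-form argument.
\item The quadratic term concentrates at $n^{-1}\tr(\A\bsigma\A\bsigma) = e_{\bmu}$, again by Hanson--Wright.
\end{itemize}
Combining the two limits gives the claimed $\SR_\infty(\Q,\alpha\bar\w)$.

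The main obstacle is careful bookkeeping of the random $\omega$. We need $\omega\to\omega^\ast$ with $\omega^\ast$ bounded away from $0$ and $\infty$, so that replacing $\frac{1}{2\omega}$ by $\frac{1}{2\omega^\ast}$ inside the quadratic forms only introduces errors of order $|\omega^{-1}-\omega^{\ast-1}|\cdot O(1)$. This requires uniform boundedness of $\|\bmu\|$, $\|\Q\bar\w\|$ and $\tr(\A\bsigma)/n$, together with a nondegeneracy condition $\bmu^\top\A\d_\alpha$ bounded away from zero. If these quantities only have limits along the sequence, I would state the result as a difference between $\SR$ and its deterministic equivalent tending to zero almost surely, which is how I would interpret $\SR_\infty$ when the model parameters vary with $p$.
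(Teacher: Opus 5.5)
Your proposal is correct and follows essentially the same route as the paper: Gaussian tail bounds with Borel--Cantelli kill the linear forms in $\varepsilon$, Hanson--Wright with Borel--Cantelli concentrates $\varepsilon^\top(\bsigma+\Q)^{-1}\varepsilon$ and $\varepsilon^\top(\bsigma+\Q)^{-1}\bsigma(\bsigma+\Q)^{-1}\varepsilon$ at their traces, and continuous mapping finishes the argument, with your explicit expansion of $(\u^\ast)^\top\bsigma\u^\ast$ supplying the step the paper only asserts. One small correction to your closing caveat: your bookkeeping only needs $1/\omega$ and $1/\omega^\ast$ bounded (automatic here, since $\tr((\bsigma+\Q)^{-1}\bsigma)/n$ is bounded below), not $\omega^\ast$ bounded above---indeed Corollary~\ref{coro:3.2} relies on the case $1/(2\omega^\ast)=0$, where what you need instead is that the limiting denominator $\|(\bsigma+\Q)^{-1}\d_\alpha\|_{\bsigma}$ stays bounded away from zero.
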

Theorem~\ref{theorem:single_mu_SR} states that we can achieve the maximum Sharpe ratio $\sqrt{\bmu^\top \bsigma^{-1} \bmu}$ by letting $\Q \bar\w = (\bsigma + \Q) \bsigma^{-1} \bmu$ and $\alpha = \frac{X_\tg - r}{\bmu^\top \bsigma^{-1} \bmu}$ such that $\frac{1}{2 \omega^\ast}=0$. 
This provides us theoretical guarantee for improved Sharpe by adding the proper regulation term for $\hat\bmu$. For instance, consider the ideal situation of $\Q = \rho \bsigma$ as we know true $\bsigma$.

\begin{corollary}\label{coro:3.2}
    Using the assumptions in Proposition~\ref{theorem:single_mu_SR}, when $\Q = \rho \bsigma$, we can achieve the maximum asymptotic Sharpe ratio by setting $\alpha^\ast = \frac{X_\tg - r}{\bmu^\top \bsigma^{-1} \bmu}$ and $\bar\w^\ast = (1+\rho) \rho^{-1} \bsigma^{-1} \bmu$ for $\rho > 0$. 
    \begin{align*}
    \SR_\infty(\Q, \alpha^\ast \bar\w^\ast)
    = 
    \sqrt{\bmu^\top \bsigma^{-1} \bmu} \,,
    \end{align*} 
    while the MV portfolio without regularization can only achieve $\SR_\infty(\Q, \BFzero)
    = 
    \frac{\bmu^\top \bsigma^{-1} \bmu}{ \sqrt{\bmu^\top \bsigma^{-1} \bmu + p/n} }$. 
\end{corollary}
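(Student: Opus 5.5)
Both claims follow by specializing Theorem~\ref{theorem:single_mu_SR} to $\Q=\rho\bsigma$ and evaluating $\SR_\infty$ in closed form. Two matrix identities do all the work: $(\bsigma+\Q)^{-1}=(1+\rho)^{-1}\bsigma^{-1}$ and $(\bsigma+\Q)^{-1}\bsigma=(1+\rho)^{-1}\I$. Consequently the bias term is
\begin{align*}
e_{\bmu}=\tr\big((1+\rho)^{-2}\I\big)/n=\frac{p}{(1+\rho)^2 n}.
\end{align*}

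\textbf{Regularized case.} With $\bar\w^\ast=(1+\rho)\rho^{-1}\bsigma^{-1}\bmu$ we get $\Q\bar\w^\ast=(1+\rho)\bmu=(\bsigma+\Q)\bsigma^{-1}\bmu$. This is exactly the condition noted after Theorem~\ref{theorem:single_mu_SR}. The term $\alpha\bmu^\top(\bsigma+\Q)^{-1}\Q\bar\w$ in the formula for $1/(2\omega^\ast)$ then equals $\alpha\,\bmu^\top\bsigma^{-1}\bmu$. Choosing $\alpha^\ast=(X_\tg-r)/(\bmu^\top\bsigma^{-1}\bmu)$ makes the numerator of that formula vanish, so $1/(2\omega^\ast)=0$. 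This needs the denominator $\bmu^\top(\bsigma+\Q)^{-1}\bmu+\tr(\cdot)/n$ to stay bounded away from zero, which holds since it is positive.

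Hence $\d_\alpha=\alpha^\ast(1+\rho)\bmu$ and $(\bsigma+\Q)^{-1}\d_\alpha=\alpha^\ast\bsigma^{-1}\bmu$, while the $e_{\bmu}/(2\omega^\ast)^2$ term drops out. I read $\|\x\|_{\bsigma}$ in the theorem as the squared norm $\x^\top\bsigma\x$, consistent with the notation section. The numerator is then $\alpha^\ast\bmu^\top\bsigma^{-1}\bmu$ and the denominator is $|\alpha^\ast|\sqrt{\bmu^\top\bsigma^{-1}\bmu}$. So $\SR_\infty=\sqrt{\bmu^\top\bsigma^{-1}\bmu}$ whenever $X_\tg>r$, i.e.\ $\alpha^\ast>0$. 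I would state this sign condition explicitly.

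\textbf{Unregularized case.} With $\w=\BFzero$, $\d_\alpha=\bmu/(2\omega^\ast)$, and the positive factor $1/(2\omega^\ast)$ cancels between numerator and denominator. The numerator is $(1+\rho)^{-1}\bmu^\top\bsigma^{-1}\bmu$. The squared denominator is
\begin{align*}
(1+\rho)^{-2}\bmu^\top\bsigma^{-1}\bmu+(1+\rho)^{-2}p/n.
\end{align*}
The $(1+\rho)$ factors cancel, leaving $\bmu^\top\bsigma^{-1}\bmu/\sqrt{\bmu^\top\bsigma^{-1}\bmu+p/n}$.

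For the plain MV portfolio with $\Q=0$ I would verify the same value directly. With $\Q=0$, $e_{\bmu}=p/n$ and the identical computation applies. This shows the value does not depend on $\rho$, which justifies calling it ``the MV portfolio without regularization''. Finally, $p/n\to c$ gives the limit with $c$ in place of $p/n$.

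\textbf{Main obstacle.} The algebra is routine. The only delicate points are the interpretation issues: the norm convention in Theorem~\ref{theorem:single_mu_SR}, the sign of $\alpha^\ast$, and the legitimacy of $1/(2\omega^\ast)=0$ as a limiting (risk-neutral) choice. I expect the last to be the main obstacle. I would handle it by continuity of $\SR_\infty$ in $1/(2\omega^\ast)$, taking the limit $1/(2\omega^\ast)\to 0$ rather than setting it to zero outright.
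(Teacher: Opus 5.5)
Your proof is correct, but it runs in the opposite direction from the paper's.

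The paper's route is a derivation. It specializes the limits of Theorem~\ref{theorem:single_mu_SR} to $\Q=\rho\bsigma$. It then argues from the form of the derivatives in the reference weight that the optimal direction must satisfy $\Q\bar\w^\ast\propto\bmu$, and normalizes to $\Q\bar\w^\ast=(1+\rho)\bmu$. Next it writes $\SR_\infty^2$ as a function of the scale $\alpha$ and obtains $\alpha^\ast=(X_\tg-r)/(\bmu^\top\bsigma^{-1}\bmu)$ from the first-order condition before substituting back. The unregularized value is read off the same expression with $\alpha=0$.

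Your route is a verification. You take $\alpha^\ast$ and $\bar\w^\ast$ as given and substitute directly, using $(\bsigma+\Q)^{-1}=(1+\rho)^{-1}\bsigma^{-1}$ and the observation that $1/(2\omega^\ast)=0$ removes the $e_{\bmu}$ term.

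What each buys:
\begin{itemize}
\item The paper's route explains where the parameter choices come from. Its optimality claim, however, rests on an informal direction argument and a first-order condition, which by itself only certifies a stationary point.
\item Your route is shorter, and it becomes a complete global-optimality proof once you add one line you left implicit. By Cauchy--Schwarz, $\bmu^\top\x\le\sqrt{\bmu^\top\bsigma^{-1}\bmu}\,\sqrt{\x^\top\bsigma\x}$ with $\x=(\bsigma+\Q)^{-1}\d_\alpha$. Since $e_{\bmu}/(2\omega^\ast)^2\ge 0$ only enlarges the denominator, $\SR_\infty\le\sqrt{\bmu^\top\bsigma^{-1}\bmu}$ for every $(\Q,\alpha\bar\w)$. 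This bound is what justifies the word ``maximum.''
\end{itemize}

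Your sign condition $X_\tg>r$ and your reading of $\|\cdot\|_{\bsigma}$ in the theorem as a squared norm are both correct refinements that the paper leaves implicit.

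The step you flag as the main obstacle is not one. $1/(2\omega^\ast)$ is not a tuning choice; it is the almost-sure limit of the data-determined $1/(2\omega)$. Theorem~\ref{theorem:single_mu_SR} therefore applies directly at the value $0$, because the limiting denominator $|\alpha^\ast|\sqrt{\bmu^\top\bsigma^{-1}\bmu}$ is strictly positive. No separate continuity argument in $1/(2\omega^\ast)$ is needed.
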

In high-dimensional settings, the unregulated mean–variance portfolio ($\alpha=0$) exhibits the well-known attenuation of the out-of-sample Sharpe ratio when $p/n \to c > 0$, which falls strictly below the maximum Sharpe ratio. 
In contrast, under the optimal regulation, the implied risk aversion satisfies $1/\omega^\ast = 0$, so the optimal portfolio
effectively uses the reference allocation $\alpha^\ast \bar\w^\ast$.
This result shows that the reference portfolio acts as an implicit shrinkage device on the estimated mean $\hat\bmu$: the regularization adjusts the effective risk aversion in a way that counterbalances the noise in $\hat\bmu$, pulling the portfolio towards the reference direction.

\subsubsection{Scenario (ii): known \texorpdfstring{$\bmu$}~ and estimated \texorpdfstring{$\bsigma$}.} \label{sec3.1.2}
In this section, we investigate the effect of estimation error in the sample covariance $\hat\bsigma$ and assume full knowledge of the expected return, i.e. $\hat\bmu = \bmu$. 
We set the regularization matrix as $\Q = \rho \bar\Q$ and the portfolio is
$\u^\ast = (\hat\bsigma + \rho \bar\Q)^{-1} (\frac{1}{2\omega} \bmu + \rho \bar\Q \w)$.
The question is then by choosing $\rho > 0$ and a proper $\w$, are we able to achieve a better out-of-sample performance for $\u^\ast$? 

Firstly, let us consider the geometric intuition about the out-of-sample Sharpe ratio.
\begin{align*}
    \SR(\rho\bar\Q, \w) 
    = \frac{\langle \bsigma^{-1}\bmu, \u^* \rangle_{\bsigma}}{\|\u^*\|_{\bsigma}} \,.
\end{align*} 
So the numerator of SR is the inner product between our portfolio $\u^*$ and the mean-variance portfolio with true $\bsigma$, where the inner product is defined as $\langle \a, \b\rangle_{\bsigma} = \a^\top \bsigma \b$. Therefore, SR is a scaled version of the $\cos \theta$ angle between $\u^*$ and $\bsigma^{-1}\bmu$, that is $\text{SR} = \cos \theta \cdot \|\bsigma^{-1}\bmu\|_{\bsigma} = \cos \theta \cdot \sqrt{\bmu^\top\bsigma^{-1}\bmu}$, which is optimized at $\sqrt{\bmu^\top\bsigma^{-1}\bmu}$ when those two directions are in perfect align. Is a positive $\rho$ helpful in dragging $\hat\bsigma^{-1} \bmu$ towards $\bsigma^{-1} \bmu$? Intuitively, there are two effects: firstly if $\w$ itself bears certain information on $\bsigma^{-1} \bmu$, we benefit from forcing the portfolio to be similar to $\w$; secondly, based on the random matrix theory (RMT), the spectrum of $\hat\bsigma$ follows the Marchenko–Pastur distribution whose support is wider than that of $\bsigma$ \citep{marvcenko1967distribution,bai2009enhancement}, hence a certain level of shrinkage is useful to correct this bias in recovering eigenvalues of $\bsigma$. More specifically, we present a useful result from RMT in the appendix (Lemma~\ref{lemma:keylemma}), and obtain the following conclusion applying it.

\begin{theorem}
\label{theorem:SR_converge_sig_single}
Under the assumptions of known $\bmu$ and sample covariance $\hat\bsigma$, when $T=1$, we have the almost sure convergence of the Sharpe ratio $\SR(\rho\bar\Q, \w)$ if $\|\bsigma\|$ is bounded. 
Its limit can be expressed as
\begin{align*}
    \SR_\infty(\rho\bar\Q, \w) = \kappa_\rho^{-1/2} \cdot \frac{\bmu^\top \A_\rho^{-1} \d_\rho}{\sqrt{\d_\rho^\top \A_\rho^{-1} \bsigma \A_\rho^{-1} \d_\rho}}
    \,,
\end{align*}
where $\kappa_\rho = 1 - \frac{\tilde s(\rho)}{(1+s(\rho))^2}$, $\A_\rho = \frac{\bsigma}{1+s(\rho)} + \rho \bar\Q$, $\d_\rho = \frac{\bmu}{2\omega^\ast}  + \rho \bar\Q \w$, 
$s(\rho)$ is the solution of $s(\rho) = \frac{c}{p} \tr(\bsigma \A_\rho^{-1})$, 
$\tilde s(\rho)$ is the solution of $\tilde s(\rho) = - \kappa_\rho \frac{c}{p} \tr(\bsigma^2 \A_\rho^{-2})$ and $\frac{1}{2\omega^\ast} = \frac{X_\tg - r - \rho \bmu^\top \A_\rho)^{-1} \bar\Q \w}{\bmu^\top \A_\rho^{-1} \bmu}$. 
\end{theorem}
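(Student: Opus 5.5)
The plan is to reduce the Sharpe ratio to three quadratic forms in the resolvent of $\hat\bsigma$ and to replace each by its deterministic equivalent from the random-matrix result in the appendix (Lemma~\ref{lemma:keylemma}). Writing $\u^* = (\hat\bsigma + \rho\bar\Q)^{-1}\hat\d$ with $\hat\d = \frac{1}{2\omega}\bmu + \rho\bar\Q\w$, the Sharpe ratio is
\begin{align*}
\SR(\rho\bar\Q,\w) = \frac{\bmu^\top (\hat\bsigma+\rho\bar\Q)^{-1}\hat\d}{\sqrt{\hat\d^\top (\hat\bsigma+\rho\bar\Q)^{-1}\bsigma(\hat\bsigma+\rho\bar\Q)^{-1}\hat\d}} \,.
\end{align*}
It is invariant to the overall scale of $\hat\d$, so only the ratio of $\frac{1}{2\omega}$ to $\rho\bar\Q\w$ matters. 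As in Scenario (i), I first pin down $\omega$ through the target constraint $X_\tg - r = \bmu^\top(\hat\bsigma+\rho\bar\Q)^{-1}\hat\d$. This is linear in $1/(2\omega)$ with coefficients $\bmu^\top(\hat\bsigma+\rho\bar\Q)^{-1}\bmu$ and $\rho\bmu^\top(\hat\bsigma+\rho\bar\Q)^{-1}\bar\Q\w$.

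The first step is the first-order deterministic equivalent. For deterministic vectors $\a,\b$ of bounded norm, I would show $\a^\top(\hat\bsigma+\rho\bar\Q)^{-1}\b - \a^\top\A_\rho^{-1}\b \xrightarrow{a.s} 0$, where $\A_\rho = \bsigma/(1+s(\rho)) + \rho\bar\Q$ and $s(\rho)$ solves the stated fixed-point equation. This is the generalized Marchenko--Pastur resolvent with the shift $\rho\bar\Q$, and I take it to be the content of Lemma~\ref{lemma:keylemma}. The Gaussian assumption lets me write $\hat\bsigma = \bsigma^{1/2}\Z\Z^\top\bsigma^{1/2}/n$ with $n-1$ effective degrees of freedom, since mean-centering does not change the limit. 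Applying this with $(\a,\b)$ equal to $(\bmu,\bmu)$ and $(\bmu,\bar\Q\w)$ gives $\frac{1}{2\omega}\xrightarrow{a.s}\frac{1}{2\omega^\ast}$, and the numerator then converges to $\bmu^\top\A_\rho^{-1}\d_\rho$.

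The second and main step is the denominator, which involves the two-resolvent form $\d^\top \R\bsigma\R\d$ with $\R = (\hat\bsigma+\rho\bar\Q)^{-1}$. I would obtain it by differentiating a resolvent family. Consider $\R(z) = (\hat\bsigma + \rho\bar\Q - z\bsigma)^{-1}$; then $\partial_z \R(z)|_{z=0} = \R\bsigma\R$. The first-order equivalent with $\bar\Q$-shift $\rho\bar\Q - z\bsigma$ holds uniformly for $z$ in a small neighborhood of $0$. The limits are analytic, so Vitali/Montel lets me differentiate the a.s. limit at $z=0$. Differentiating $\A(z)^{-1}$ with $\A(z) = \bsigma/(1+s(z)) + \rho\bar\Q - z\bsigma$ yields $\A_\rho^{-1}\bsigma\A_\rho^{-1}\bigl(1 + s'(0)/(1+s)^2\bigr)$.

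Next I differentiate the fixed-point equation for $s$. This gives $s' = -\frac{c}{p}\tr(\bsigma^2\A_\rho^{-2})\bigl(1 + s'/(1+s)^2\bigr)$, which is a linear equation for $s'$. Calling $\tilde s = s'(0)$ and $\kappa_\rho = 1-\tilde s/(1+s)^2$, this rearranges to the stated relation $\tilde s = -\kappa_\rho\frac{c}{p}\tr(\bsigma^2\A_\rho^{-2})$. The factor in front of the quadratic form is then $2-\kappa_\rho$, or $\kappa_\rho^{-1}$ depending on sign bookkeeping. I expect this bookkeeping to be the main obstacle: matching the stated normalization $\kappa_\rho^{-1/2}$ requires carefully solving the linear equation, namely checking that $1+\tilde s/(1+s)^2 = 1/\kappa_\rho$ holds under the paper's convention.

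Uniform boundedness of $\|\bsigma\|$ and $\rho>0$ keep all resolvents bounded, which ensures the error terms vanish and the limit is bounded away from zero. The continuous mapping theorem then combines the numerator and denominator limits into $\SR_\infty$.
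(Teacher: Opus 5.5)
Your route is the same as the paper's. After whitening by $\bsigma^{-1/2}$, $\hat\bsigma+\rho\bar\Q$ becomes a sample-covariance resolvent with a deterministic shift. The first-order equivalent (Lemma~\ref{lemma:keylemma}) gives $1/(2\omega^\ast)$ and the numerator. The two-resolvent denominator comes from differentiating in a shift $z\bsigma$, which is exactly the paper's shift $z\I$ in whitened coordinates, with the limit differentiated via analyticity.

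\textbf{The gap.} The one genuinely new constant, $\kappa_\rho$, is never actually obtained. You leave the factor as ``$2-\kappa_\rho$ or $\kappa_\rho^{-1}$'', and measured against the theorem's $\kappa_\rho$, neither is right. Two things go wrong.
\begin{enumerate}
\item \emph{A sign slip.} With $\A(z)=\bsigma/(1+s(z))+\rho\bar\Q-z\bsigma$ you correctly get $\partial_z\A(z)^{-1}=\A^{-1}\bsigma\A^{-1}\bigl(1+s'/(1+s)^2\bigr)$. Differentiating $s(z)=\frac{c}{p}\tr(\bsigma\A(z)^{-1})$ then gives $s'(0)=+\tau\bigl(1+s'(0)/(1+s)^2\bigr)$, with $\tau=\frac{c}{p}\tr(\bsigma\A_\rho^{-1}\bsigma\A_\rho^{-1})$, not $-\tau(\cdots)$. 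Even your own signed equation does not rearrange to $\tilde s=-\kappa_\rho\tau$; it gives $\tilde s=-(2-\kappa_\rho)\tau$.
\item \emph{The target points the wrong way.} If the denominator converges to $F\,\d_\rho^\top\A_\rho^{-1}\bsigma\A_\rho^{-1}\d_\rho$, the Sharpe ratio acquires the factor $F^{-1/2}$. Matching the stated $\kappa_\rho^{-1/2}$ therefore requires $F=\kappa_\rho$, not $F=\kappa_\rho^{-1}$. So the identity $1+\tilde s/(1+s)^2=1/\kappa_\rho$ that you plan to check is the wrong one, and it is false in either sign convention.
\end{enumerate}

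\textbf{The fix.} Your $z$ is the negative of the paper's, so the theorem's $\tilde s$ equals $-s'(0)$ in your parametrization. Your $s'(0)$ is positive; the theorem's $\tilde s$ is negative.
\begin{itemize}
\item With that identification, the factor on the quadratic form is $1+s'(0)/(1+s)^2=1-\tilde s/(1+s)^2=\kappa_\rho$.
\item The correctly signed fixed-point derivative reads $-\tilde s=\tau\kappa_\rho$, which is exactly the stated relation.
\item Solving gives $\kappa_\rho=\bigl(1-\tau/(1+s)^2\bigr)^{-1}>1$. This is well defined: $\bsigma^{1/2}\A_\rho^{-1}\bsigma^{1/2}\preceq(1+s)\I$ yields $\tau\le(1+s)s<(1+s)^2$.
\item The trace that actually appears is $\tr(\bsigma\A_\rho^{-1}\bsigma\A_\rho^{-1})$. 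It agrees with the written $\tr(\bsigma^2\A_\rho^{-2})$ when $\bar\Q$ commutes with $\bsigma$.
\end{itemize}

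A one-line check would have settled the bookkeeping. Take $\bsigma=\I$, $\rho=0$, $c<1$. Then $\bmu^\top\hat\bsigma^{-2}\bmu\to\|\bmu\|^2/(1-c)^3=\kappa_0\,\bmu^\top\A_0^{-2}\bmu$ with $\kappa_0=1/(1-c)$, so $\SR_\infty=\sqrt{1-c}\,\|\bmu\|$. Your candidate $\kappa_0^{-1}$ would push the Sharpe ratio above the oracle value $\|\bmu\|$. Your other candidate, $2-\kappa_0=(1-2c)/(1-c)$, makes the limiting variance negative for $c>1/2$.
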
 

We can treat the asymptotic Sharpe ratio as a product of a scalar factor $\kappa_\rho^{-1/2}$ and the pseudo SR, which is the rest of the term. 
Let us now investigate the effects of $\Q$ and $\w$ on the Sharpe ratio. 
First, we check how $\w$ will affect the Sharpe ratio if we set $\Q \to \BFzero$\endnote{Since $\Q = \BFzero$ entirely cancels the regularization term, it makes more sense to consider $\Q \to \BFzero$ while $\Q\w \to \tilde\w$, e.g. $\Q = \I/n$ and $\w = n\tilde\w$.}. 
To maintain the regularization effect of $\w$ when $\Q \to \BFzero$, we assume $\Q\w \to \tilde\w$, which can change independent of $\Q$. 
\begin{equation*}
    \SR_\infty(\BFzero, \tilde\w)
    = 
    \kappa_0^{-1/2} \cdot 
    \frac{ \bmu^\top \bsigma^{-1} (\frac{\bmu}{2\omega^\ast}  + \tilde\w) }{ \sqrt{ (\frac{\bmu}{2\omega^\ast}  + \tilde\w)^\top \bsigma^{-1} (\frac{\bmu}{2\omega^\ast}  + \tilde\w) } } \,.
\end{equation*}
We can show that a necessary condition of maximizing the above Sharpe ratio over $\tilde\w$ is that $\tilde\w^\ast \propto \bmu$. 
By plugging in this optimal $\tilde\w^\ast$, the Sharpe ratio becomes  
\begin{equation*}
    \SR_\infty(\BFzero, \tilde\w^\ast)
    = \kappa_0^{-1/2} \cdot \sqrt{\bmu^\top \bsigma^{-1} \bmu} \,.
\end{equation*}
This shows that the pseudo Sharpe ratio has been adjusted to align with the maximum Sharpe ratio. 
However, the presence of the additional scalar $\kappa_0$ still diminishes the final asymptotic Sharpe ratio (see below for the proof that $\tilde s(0) < 0$ so that $\kappa_0 > 1$).  
This implies that if we do not apply a proper regularization for $\hat\bsigma$, i.e. naively using $\Q \to \BFzero$, even with the freedom of tuning $\tilde\w$, we cannot reach the maximum Sharpe ratio.

Next, we show that using a nonzero $\Q$ is indeed necessary to regularize the sample covariance $\hat\bsigma$ in high dimensional settings. 
We would like to argue that the asymptotic SR is typically maximized at $\rho >0$. 
Consider the simpler case of $\bsigma = \sigma^2 \I$, $\Q = \rho\I$ and $\w = \BFzero$ as an example. We have
\begin{equation*}
\SR_\infty(\rho\I, \BFzero) = \kappa_\rho^{-1/2} \cdot \frac{\|\bmu\|_2}{\sigma} \,.
\end{equation*}
The scalar $\kappa_\rho^{-1/2}$ plays an essential role, where $s$ solves $s = c (1/(1+s) + \rho / \sigma^2)^{-1}$ and $\tilde s$ solves $\tilde s = c(\tilde s / (1+s)^2 - 1) (1/(1+s) + \rho / \sigma^2)^{-2}$. When $\rho = 0$, it is easy to calculate $s = c/(1-c)$, $\tilde s = -c / (1-c)^3$ and the SR scalar $\kappa_\rho^{-1/2} = \sqrt{1-c}$, which reduces SR when $c = p/n\in (0,1)$\endnote{Note that when $c \ge 1$, $\hat\bsigma$ is not invertible and we cannot choose $\rho=0$.}.
In this case, SR is maximized when $\rho = \infty$, which gives $s=0, \tilde s = 0$ and the scalar $\kappa_\rho = 1$. This is very intuitive since when $\bsigma \propto \I$, the penalty $\rho \I$ actually applies the correct covariance, so the best way to construct portfolio is to drag $\hat\bsigma$ to $\I$ as much as possible. When $\bsigma$ is not proportional to identity, $\text{SR}_{\infty} / \kappa_\rho$ is actually maximized exactly at $\rho=0$. We call this term pseudo SR, computed as if we use the scaled true covariance $\bsigma / (1+s)$ to build the portfolio.
Therefore, there exists a tradeoff between the scalar $\kappa_\rho$ and the pseudo SR: $\kappa_\rho$ is maximized at $\rho =\infty$ and the pseudo SR is maximized at $\rho = 0$. In general, this tradeoff leads to the selection of an optimal positive $\rho$. 

To illustrate this numerically, we conduct a toy simulation: set $p=200$ and $c \in \{ 0.5, 0.7 \}$; $\bmu$ is generated with $80\%$ zeros and $20\%$ from uniform $-0.1$ to $0.1$; $\bsigma$ is a diagonal matrix with diagonal elements generated from Gamma distribution with shape $1$ and scale $1$; $\bar\Q$ is set as an identity matrix and recall $\Q = \rho\bar\Q$. 
For a series of $\rho$ values, we solve $s(\rho), \tilde s(\rho)$ and calculate $\kappa_\rho$, pseudo SR and $\text{SR}_\infty$ which is the product of the two. The curves versus $\rho$ are shown in Figure~\ref{fig_SR}. 
It is very clear $\kappa_\rho$ increases and pseudo SR decreases as $\rho$ grows. 
To balance these two curves, the optimal $\rho^\ast$ to maximize $\text{SR}_\infty$ is achieved at a positive value. 

\begin{figure}[htbp]
    \centering
    \begin{subfigure}[t]{0.45\textwidth}
        \centering
        \includegraphics[width=\textwidth]{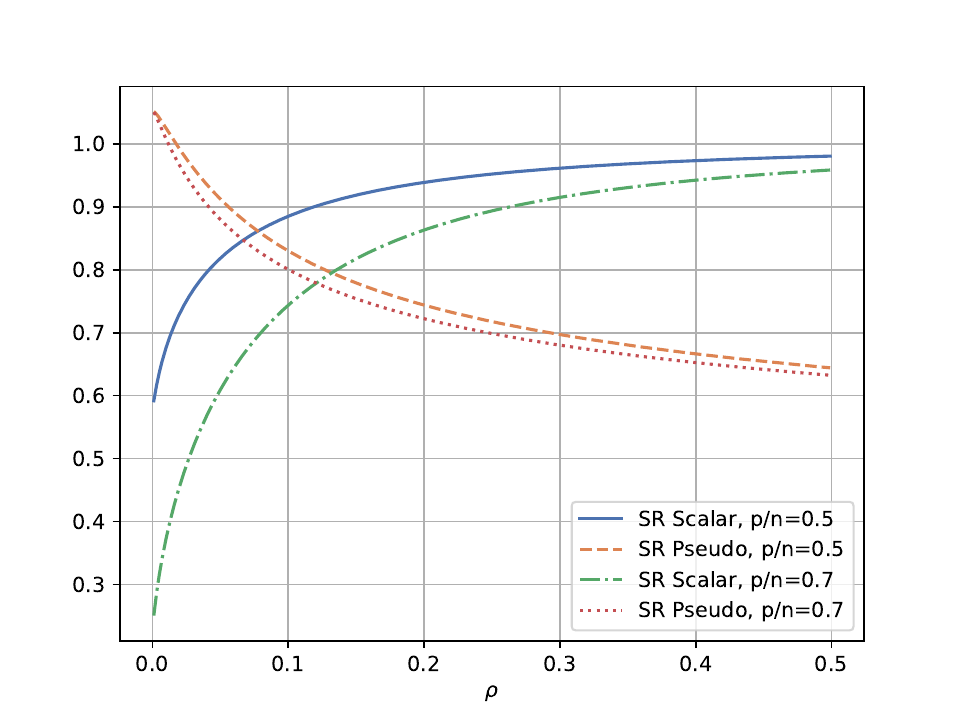}
        \caption{SR scalar and pseudo SR}
        \label{fig_SR_a}
    \end{subfigure}
    \hfill
    \begin{subfigure}[t]{0.45\textwidth}
        \centering
        \includegraphics[width=\textwidth]{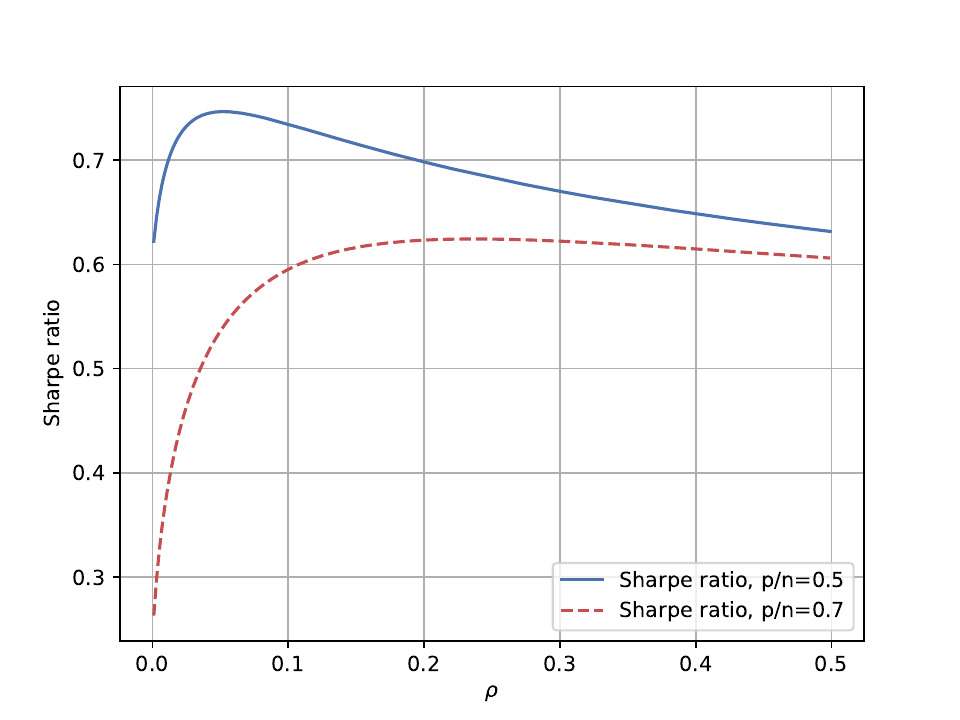}
        \caption{Sharpe ratio $\SR_{\infty}$}
        \label{fig_SR_b}
    \end{subfigure}
    \caption{$\SR_{\infty}$ versus regularization $\rho$ with estimation error in $\hat\bsigma$ when $T=1$. The SR scalar $\kappa_\rho$ increases and the pseudo SR decreases as $\rho$ grows. Therefore, $\SR_{\infty} = \kappa_\rho \cdot \text{pseudo SR}$ is maximized at $\rho^\ast > 0$.}
    \label{fig_SR}
\end{figure}

\subsection{Performance of Multiperiod RRMV portfolio}\label{sse_multiperiod_RRMV}

In this section, we analyze the general multiperiod RRMV portfolio. 
The out-of-sample Sharpe ratio has been derived in Proposition~\ref{prop:eff_frontier}. 
Following the single-period RRMV discussions, we account for the two primary sources of estimation risk, arising from errors in estimating the mean and the covariance matrix, in the two subsections below. 
Our analysis is organized around two questions. 
The first concerns how the regularization term affects the out-of-sample Sharpe ratio, similar to the single-period setting.
The second investigates how this effect propagates in a multiperiod environment. We will reveal new phenomena, which cannot be observed under the single-period RRMV framework and is unraveled for the first time in the multiperiod portfolio optimization literature.

\subsubsection{Scenario (i): known \texorpdfstring{$\bsigma$}~ and estimated \texorpdfstring{$\bmu$}.} \label{sec3.2.1}

We assume the covariance matrix $\hat\bsigma = \bsigma$ is known and focus on estimation error in the sample mean $\hat\bmu$. 
For the general multiperiod RRMV portfolio with $\Q_k$ and $\w_k$ in each period, we first show that the system of parameters defined in \eqref{equation:rrmv_rf_para} will almost surely converge as $n \to \infty$.  
\begin{lemma}
\label{lemma:param_mu}
    Under the assumptions of known $\bsigma$ and sample mean $\hat\bmu_k=\hat\bmu$, the parameters in RRMV policy  almost surely converge as follows. 
    Let $\bdelta_k^{\ast} = \bsigma + \Q_k /a_{k+1}^{\ast} $, 
    \begin{align*}
        \hat{a}_k \xrightarrow{a.s} 
        & a_k^\ast
        = \frac{  (a_{k+1}^{\ast})^{-1}   (a_{k+1}^\ast r + \w_k^\top \Q_k ( \bdelta_k^{\ast} )^{-1} \bmu )^2 }{1+\bmu^\top ( \bdelta_k^{\ast} )^{-1} \bmu + \tr(( \bdelta_k^{\ast} )^{-1} \bsigma/n)} \\
        & + ( a_{k+1}^{\ast} )^{-1} \w_k^\top \Q_k \Big(a_{k+1}^{\ast} \Q_k^{-1} - ( \bdelta_k^{\ast} )^{-1} \Big) \Q_k \w_k \,,
        \\
        \hat{b}_k \xrightarrow{a.s} 
        & b_k^\ast 
        = {{b}}_{k+1}^\ast \frac{r + (a_{k+1}^{\ast})^{-1} \bmu^\top ( \bdelta_k^{\ast} )^{-1} \Q_k\w_k}{1 + \bmu^\top ( \bdelta_k^{\ast} )^{-1} \bmu + \tr( (\bdelta_k^{\ast}  )^{-1}  \bsigma/n)} \,,
        \\
        \hat{c}_k \xrightarrow{a.s} 
        & c_k^\ast 
        = c_{k+1}^\ast + (b_{k+1}^{\ast})^2 (a_{k+1}^{\ast})^{-1} 
        \\
        & \cdot \frac{\bmu^\top ( \bdelta_k^{\ast} )^{-1} \bmu + \tr( (\bdelta_k^{\ast}  )^{-1}  \bsigma/n)}{1 + \bmu^\top ( \bdelta_k^{\ast} )^{-1} \bmu + \tr( (\bdelta_k^{\ast}  )^{-1}  \bsigma/n)} \,.
    \end{align*}
\end{lemma}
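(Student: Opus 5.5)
The plan is a backward induction on $k=T-1,\dots,0$. The induction hypothesis is that $(\hat a_{k+1},\hat b_{k+1},\hat c_{k+1})\xrightarrow{a.s}(a_{k+1}^\ast,b_{k+1}^\ast,c_{k+1}^\ast)$, with $a_{k+1}^\ast>0$. The base case $k+1=T$ is trivial, since $\hat a_T=a_T=1$, $\hat b_T=1$, $\hat c_T=0$ are deterministic. Positivity of the limits follows from Lemma~\ref{lemma:param_a_nonneg} together with the explicit form of $a_k^\ast$.

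For the inductive step, I first rewrite the estimated matrix. Since $\hat\bsigma=\bsigma$, we have
\begin{align*}
\hat\D_k=\hat a_{k+1}\big(\hat\bdelta_k+\hat\bmu\hat\bmu^\top\big),\qquad \hat\bdelta_k:=\bsigma+\Q_k/\hat a_{k+1}.
\end{align*}
Sherman--Morrison then gives
\begin{align*}
\hat\D_k^{-1}=\hat a_{k+1}^{-1}\Big(\hat\bdelta_k^{-1}-\frac{\hat\bdelta_k^{-1}\hat\bmu\hat\bmu^\top\hat\bdelta_k^{-1}}{1+\hat\bmu^\top\hat\bdelta_k^{-1}\hat\bmu}\Big).
\end{align*}
Substituting this into the recursion \eqref{equation:rrmv_rf_para} expresses $\hat a_k,\hat b_k,\hat c_k$ as explicit rational functions of five ingredients: $\hat a_{k+1}$, $\hat b_{k+1}$, $\hat c_{k+1}$, and the scalars
\begin{align*}
\hat\bmu^\top\hat\bdelta_k^{-1}\hat\bmu,\quad \hat\bmu^\top\hat\bdelta_k^{-1}\Q_k\w_k,\quad \w_k^\top\Q_k\hat\bdelta_k^{-1}\Q_k\w_k.
\end{align*}
In the $a_k$ recursion I read $\|\cdot\|_{\D_k^{-1}}$ as the squared norm, consistent with the paper's notation. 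For example, one checks that $\hat\bmu^\top\hat\D_k^{-1}\hat\bmu=\hat a_{k+1}^{-1}\,\hat m/(1+\hat m)$ with $\hat m=\hat\bmu^\top\hat\bdelta_k^{-1}\hat\bmu$, and similarly for the cross terms. Here I will also track the sign in the $c_k$ update carefully against the stated limit.

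The probabilistic input consists of three limits:
\begin{align*}
\hat\bmu^\top\bdelta^{-1}\hat\bmu-\big(\bmu^\top\bdelta^{-1}\bmu+\tr(\bdelta^{-1}\bsigma/n)\big)&\xrightarrow{a.s}0,\\
\hat\bmu^\top\bdelta^{-1}\mathbf v-\bmu^\top\bdelta^{-1}\mathbf v&\xrightarrow{a.s}0,\\
\w_k^\top\Q_k\hat\bdelta_k^{-1}\Q_k\w_k-\w_k^\top\Q_k(\bdelta_k^\ast)^{-1}\Q_k\w_k&\xrightarrow{a.s}0.
\end{align*}
The first two are to hold for a deterministic matrix $\bdelta\succ0$ of bounded operator norm and a deterministic vector $\mathbf v$ of bounded norm. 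To prove them, I write $\hat\bmu=\bmu+\bsigma^{1/2}\z/\sqrt n$ with $\z\sim\mathcal N(\BFzero,\I_p)$. Then
\begin{align*}
\hat\bmu^\top\bdelta^{-1}\hat\bmu=\bmu^\top\bdelta^{-1}\bmu+2\bmu^\top\bdelta^{-1}\bsigma^{1/2}\z/\sqrt n+\z^\top\bsigma^{1/2}\bdelta^{-1}\bsigma^{1/2}\z/n.
\end{align*}
The linear term is Gaussian with variance $O(1/n)$. The quadratic term concentrates around its trace by the Hanson--Wright inequality, whose fourth-moment bound is $O(p/n^2)$. Summability via Borel--Cantelli then yields almost sure convergence in the regime $p/n\to c$. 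The cross term and the third limit are handled the same way. Throughout I need that $\|\bsigma\|$, $\|\bsigma^{-1}\|$, $\|\bmu\|$, $\|\Q_k\|$ and $\|\Q_k\w_k\|$ are bounded; I will state these as standing conditions.

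The step I expect to be the main obstacle is that $\hat\bdelta_k$ is random through $\hat a_{k+1}$, which depends on $\hat\bmu$ itself, so the quadratic-form lemma cannot be applied directly. I will handle this by a resolvent perturbation argument:
\begin{align*}
\hat\bdelta_k^{-1}-(\bdelta_k^\ast)^{-1}=(\bdelta_k^{\ast})^{-1}\big(\Q_k/a_{k+1}^\ast-\Q_k/\hat a_{k+1}\big)\hat\bdelta_k^{-1}.
\end{align*}
Both inverses have operator norm at most $\|\bsigma^{-1}\|$ because $\Q_k\succeq 0$, and $|1/\hat a_{k+1}-1/a_{k+1}^\ast|\to0$ a.s. by the induction hypothesis. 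Hence $\|\hat\bdelta_k^{-1}-(\bdelta_k^\ast)^{-1}\|\to0$ a.s. Since $\|\hat\bmu\|^2=O(1+p/n)$ a.s., every quadratic form in $\hat\bdelta_k^{-1}$ differs from the corresponding form in $(\bdelta_k^\ast)^{-1}$ by $o(1)$ a.s. The deterministic-matrix limits then apply with $\bdelta=\bdelta_k^\ast$.

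Finally, the continuous mapping theorem applied to the rational expressions yields $a_k^\ast$, $b_k^\ast$, $c_k^\ast$ in the stated form. The denominators are bounded below by $1$, and $a_{k+1}^\ast>0$. For $a_k^\ast$, I will combine $\hat a_{k+1}r^2+\w_k^\top\Q_k\w_k$ with the Sherman--Morrison pieces and complete the square in $\big(a_{k+1}^\ast r+\w_k^\top\Q_k(\bdelta_k^\ast)^{-1}\bmu\big)$. This collapses the expression into the displayed formula and closes the induction.
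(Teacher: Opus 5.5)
Your proposal follows the paper's proof essentially step for step: Sherman--Morrison on $\hat\D_k=\hat a_{k+1}(\hat\bdelta_k+\hat\bmu\hat\bmu^\top)$, Gaussian-tail and Hanson--Wright bounds with Borel--Cantelli, then backward induction and continuous mapping, and your resolvent identity for $\hat\bdelta_k^{-1}-(\bdelta_k^\ast)^{-1}$ handles the dependence of $\hat a_{k+1}$ on $\hat\bmu$ more rigorously than the paper's ``for any given value of $\hat a_{k+1}$'' remark, provided you carry through the induction a lower bound on $a_{k+1}^\ast$ that is uniform in $n$ (needed for $1/\hat a_{k+1}-1/a_{k+1}^\ast\to0$). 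On the sign you flagged, the computation gives $c_k^\ast=c_{k+1}^\ast-(b_{k+1}^\ast)^2(a_{k+1}^\ast)^{-1}\frac{m_k^\ast}{1+m_k^\ast}$ with $m_k^\ast=\bmu^\top(\bdelta_k^\ast)^{-1}\bmu+\tr((\bdelta_k^\ast)^{-1}\bsigma/n)$, which matches the paper's own intermediate line $c_{k+1}^\ast+\frac{(b_{k+1}^\ast)^2}{a_{k+1}^\ast}\big(\frac{1}{1+m_k^\ast}-1\big)$ and its base case $c_{T-1}^\ast=\frac{1}{1+m_{T-1}^\ast}-1$; the plus sign in the displayed statement is a typo, so you should prove the minus version.
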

As a direct consequence, the portfolio policy and the out-of-sample Sharpe ratio will also converge almost surely. 
The corresponding asymptotic Sharpe ratio if we set $T=1$ is identical to that of the single-period RRMV portfolio given in Section~\ref{sec3.1.1}. 

\begin{theorem}
\label{theorem:multi_SR_mu}
    Under the assumptions of known $\bsigma$ and sample mean $\hat\bmu_k=\hat\bmu$, for general $T$, the asymptotic Sharpe ratio of the $\RRMV$ portfolio, given $\Q_k$ and $\w_k$, converges almost surely. Its limit is 
    \begin{align*}
        \SR_\infty(T,\Q_k,\w_k) 
        = \frac{\E[\hat X_T]_\infty - r^T}{\sqrt{T \cdot \Var(\hat X_T)_\infty}} \,, 
    \end{align*}
    where $\E[\hat X_T]_\infty = \prod_{i=0}^{T-1} \eta^\ast_i + \sum_{i=0}^{T-1} (\lambda^\ast_{i} \prod_{j=i+1}^{T-1} \eta^\ast_{j})$, $\Var(\hat X_T)_\infty = \sum_{i=0}^{T-1} \nu_i^\ast \prod_{j=i+1}^{T-1}{g}_j^\ast$. $\lambda_i^\ast$, $\eta_i^\ast$, $\nu_i^\ast$ and $g_i^\ast$ are the limiting values of their corresponding estimators, which are constant depending on $a_k^\ast$, $b_k^\ast$, $c_k^\ast$, $\Q_k$ and $\w_k$.
\end{theorem}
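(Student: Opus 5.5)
The plan is to combine the closed-form expressions of Proposition~\ref{prop:eff_frontier} with the parameter limits of Lemma~\ref{lemma:param_mu}, and then apply the continuous mapping theorem.

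\textbf{Step 1: finitely many building blocks.} Conditional on the historical sample, $\E[\hat X_T]$ and $\Var(\hat X_T)$ are finite sums and products of the scalars $\hat\lambda_k,\hat\eta_k,\hat\nu_k,\hat g_k$ for $k=0,\ldots,T-1$. Since $T$ is fixed, it suffices to show that each of these scalars converges almost surely to a deterministic limit. With $\hat\bsigma=\bsigma$, every one of them is a rational function of three ingredients:
\begin{itemize}
\item the scalars $\hat a_{k+1},\hat b_{k+1},\hat b_0,\hat c_0$;
\item quadratic and bilinear forms $\x^\top\hat\D_k^{-1}\y$ and $\x^\top\hat\D_k^{-1}\bsigma\hat\D_k^{-1}\y$, where $\x,\y\in\{\hat\bmu,\bmu,\Q_k\w_k\}$;
\item the target $X_\tg$.
\end{itemize}

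\textbf{Step 2: reduce the inverse with Sherman--Morrison.} Write $\hat\D_k=\hat a_{k+1}\hat\bdelta_k+\hat a_{k+1}\hat\bmu\hat\bmu^\top$ with $\hat\bdelta_k=\bsigma+\Q_k/\hat a_{k+1}$. The Sherman--Morrison formula then expresses $\hat\D_k^{-1}$ through $\hat\bdelta_k^{-1}$ and rank-one corrections. As a result, every form in Step 1 reduces to forms $\x^\top\hat\bdelta_k^{-1}\y$ and $\x^\top\hat\bdelta_k^{-1}\bsigma\hat\bdelta_k^{-1}\y$, with $\x,\y$ drawn from $\{\hat\bmu,\bmu,\Q_k\w_k\}$.

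\textbf{Step 3: limits of the reduced forms.} I would handle these forms exactly as in the proof of Lemma~\ref{lemma:param_mu}. Write $\hat\bmu=\bmu+\varepsilon$ with $\varepsilon\sim N(\bzero,\bsigma/n)$. For deterministic, uniformly bounded matrices $\M$, standard concentration of Gaussian quadratic forms gives
\[
\varepsilon^\top\M\varepsilon-\tr(\M\bsigma)/n\xrightarrow{a.s}0,\qquad \varepsilon^\top\M\bmu\xrightarrow{a.s}0 .
\]
Apply this with $\M=(\bdelta_k^\ast)^{-1}$ and $\M=(\bdelta_k^\ast)^{-1}\bsigma(\bdelta_k^\ast)^{-1}$. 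Because $\hat a_{k+1}\to a_{k+1}^\ast>0$ (positivity from Lemma~\ref{lemma:param_a_nonneg}), replacing $\hat\bdelta_k$ by $\bdelta_k^\ast$ costs only an $o(1)$ error: the resolvent identity and $\|\hat\bdelta_k^{-1}\|\le\|\bsigma^{-1}\|$ control the difference, provided $\|\hat\bmu\|^2$ stays bounded. This yields limits of the form
\[
\hat\bmu^\top\hat\bdelta_k^{-1}\bsigma\hat\bdelta_k^{-1}\hat\bmu\to\bmu^\top(\bdelta_k^\ast)^{-1}\bsigma(\bdelta_k^\ast)^{-1}\bmu+e_k ,
\]
where $e_k=\tr\bigl((\bdelta_k^\ast)^{-1}\bsigma(\bdelta_k^\ast)^{-1}\bsigma\bigr)/n$ is the estimation-error bias, analogous to $e_{\bmu}$ in Theorem~\ref{theorem:single_mu_SR}.

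\textbf{Step 4: assemble the limits.} Plug the limits from Step 3, together with $a_k^\ast,b_k^\ast,c_k^\ast$ from Lemma~\ref{lemma:param_mu}, into the formulas for $\hat\lambda_k,\hat\eta_k,\hat g_k$. The term $\hat\nu_k$ also involves $\E[\hat X_k]$, which is itself a finite expression in earlier $\hat\lambda_i,\hat\eta_i$. So I proceed by forward induction on $k$: first obtain $\E[\hat X_k]\to\E[\hat X_k]_\infty$, then deduce the convergence of $\hat\nu_k$.

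\textbf{Step 5: continuity of the Sharpe ratio.} Finally, apply the continuous mapping to $\SR$ in \eqref{def_sharpe_XT}. This requires the denominator limit $\Var(\hat X_T)_\infty$ to be strictly positive. I would argue this from $\nu_0^\ast>0$, since $\nu_0^\ast$ contains the positive bias term $e_0$ (or $\bsigma\succ0$ together with a nonzero limiting allocation), and from $g_j^\ast\ge(\eta_j^\ast)^2\ge0$.

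\textbf{Main obstacle.} The delicate part is Step 3: justifying the substitution of the data-dependent $\hat\bdelta_k$, whose scaling $\hat a_{k+1}$ depends on $\hat\bmu$, by $\bdelta_k^\ast$ inside forms that involve $\hat\bmu$ twice. To handle it, I would first prove $|\hat a_{k+1}-a_{k+1}^\ast|\to0$ by backward induction, which is already done in Lemma~\ref{lemma:param_mu}. I would then use the bound
\[
\|\hat\bdelta_k^{-1}-(\bdelta_k^\ast)^{-1}\|\le C\,|1/\hat a_{k+1}-1/a_{k+1}^\ast|\,\|\Q_k\| ,
\]
combined with the almost sure boundedness of $\|\hat\bmu\|^2=O(\bmu^\top\bmu+p/n)$. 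The latter needs $\|\bmu\|$ bounded and $\|\bsigma\|$ bounded, assumptions I would state explicitly.
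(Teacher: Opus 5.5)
Your proposal follows the paper's own proof: Sherman--Morrison reduction of $\hat\D_k^{-1}$, the Gaussian quadratic-form concentration from the proof of Theorem~\ref{theorem:single_mu_SR} combined with $\hat a_{k+1}\to a_{k+1}^\ast$ from Lemma~\ref{lemma:param_mu}, and substitution into Proposition~\ref{prop:eff_frontier}. Your resolvent bound and your forward induction on $\E[\hat X_k]$ make rigorous what the paper leaves to an informal continuous-mapping appeal, and your bias term $\tr\bigl((\bdelta_k^\ast)^{-1}\bsigma(\bdelta_k^\ast)^{-1}\bsigma\bigr)/n$ is the correct form. The one slip is in Step~5: $\nu_0^\ast>0$ together with only $g_j^\ast\ge 0$ does not force $\Var(\hat X_T)_\infty>0$, so you must also check $g_j^\ast>0$; this holds because the $\bsigma$-norm part of $\hat g_j$ carries a strictly positive bias term unless its $\hat\bmu$-coefficient vanishes, and in that case $\Q_j\w_j\neq\BFzero$ keeps the norm positive.
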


Theorem~\ref{theorem:multi_SR_mu} is too general and abstract to provide useful understanding. To examine separately the contributions from the regularization $\Q$ and $\w$, we first fix $\w_k = \BFzero$ and then fix $\Q_k \to \BFzero$. 
The following corollary shows the corresponding asymptotic Sharpe ratio when $\Q_k$ aligns with the true $\bsigma$ and $\w_k = \BFzero$.  

\begin{corollary}
    \label{coro:multi_SR_mu_w_0}
    Under the same assumptions in Proposition~\ref{theorem:multi_SR_mu}, if $\w_k = \BFzero$, $\Q_k = a_{k+1} \rho \bsigma$, the out-of-sample Sharpe ratio of the RRMV portfolio almost surely converges to
    \begin{align*}
        \SR_\infty(T, \rho, \BFzero) = \frac{T^{-1/2} ((\bmu^\top \bsigma^{-1} \bmu + e_{\rho,\bmu})^T - e_{\rho,\bmu}^T)}{\sqrt{ ( \bmu^\top \bsigma^{-1} \bmu + p/n + e_{\rho,\bmu}^2)^T - e_{\rho,\bmu}^{2T} }} \,,
    \end{align*} 
    where $e_{\rho,\bmu} = e_\bmu + \rho + 1$ and following Theorem~\ref{theorem:single_mu_SR}, $e_\bmu = p/n$ in this case. 
    When $p/n=0$, we have $\frac{\partial \SR_\infty(T, \rho, \BFzero)}{\partial \rho}\big|_{\rho=0} = 0$, that is, the optimal choice of $\rho^* = 0$. 
    When $p/n > 0$ and $T > 1$, the derivative $\frac{\partial \SR_\infty(T, \rho, \BFzero)}{\partial \rho}\big|_{\rho=0}$ is positive when $\bmu^\top \bsigma^{-1} \bmu < 1$, and non-positive otherwise. That is, when the maximum SR is in a modest range and when the dimensionality is high, we can improve SR by choosing $\rho^* > 0$.
\end{corollary}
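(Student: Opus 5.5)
The plan is to specialize the general limit of Theorem~\ref{theorem:multi_SR_mu} using the closed form of Corollary~\ref{coro:os_SR_w=0}, and then to reduce the sign of the $\rho$-derivative to a monotonicity property of a single scalar function.

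\emph{Step 1 (specialization).} Write $\theta=\bmu^\top\bsigma^{-1}\bmu$. With $\w_k=\BFzero$ and $\Q_k=\rho\,\hat a_{k+1}\bsigma$ (asymptotically the same as $\rho a_{k+1}\bsigma$, by Lemma~\ref{lemma:param_mu} and continuous mapping), we are in the setting of Corollary~\ref{coro:os_SR_w=0} with $\hat\bsigma=\bsigma$ and $\bar\Q=\bsigma$. Then $(\hat\bsigma+\rho\bar\Q)^{-1}=\bsigma^{-1}/(1+\rho)$, and hence
\[
\hat\eta=r\Big(1-\frac{\bmu^\top\bsigma^{-1}\hat\bmu}{1+\rho+\hat\bmu^\top\bsigma^{-1}\hat\bmu}\Big),\qquad
\hat\Gamma=r^2\frac{\hat\bmu^\top\bsigma^{-1}\hat\bmu}{(1+\rho+\hat\bmu^\top\bsigma^{-1}\hat\bmu)^2}.
\]
Since $\hat\bmu-\bmu\sim N(0,\bsigma/n)$, the strong law for quadratic forms of Gaussians gives $\bmu^\top\bsigma^{-1}\hat\bmu\to\theta$ and $\hat\bmu^\top\bsigma^{-1}\hat\bmu\to\theta+p/n$ almost surely (with $p/n\to c$). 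Consequently $\hat\eta\to r e/(\theta+e)$ and $\hat\Gamma\to r^2(\theta+p/n)/(\theta+e)^2$, where $e=e_{\rho,\bmu}=1+\rho+p/n$.

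Substituting into the formula of Corollary~\ref{coro:os_SR_w=0} and multiplying numerator and denominator by $(\theta+e)^{2T}/r^{2T}$, the factors of $r$ cancel. This yields $\SR_\infty^2=\big((\theta+e)^T-e^T\big)^2/\big(T\big[(\theta+p/n+e^2)^T-e^{2T}\big]\big)$, which is the claimed expression. Continuity of the map then gives almost sure convergence of the Sharpe ratio.

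\emph{Step 2 (reduction of the derivative).} Since $\partial e/\partial\rho=1$, it suffices to study $\partial_e\log\SR_\infty$. Set $A=\theta+e$, $B=\theta+p/n+e^2$ and $h(z)=(z^{T-1}-1)/(z^T-1)$ for $z>1$. A direct computation, using homogeneity to factor out powers of $e$, gives
\[
\partial_e\log\SR_\infty=T\,\frac{A^{T-1}-e^{T-1}}{A^T-e^T}-T\,\frac{e\,(B^{T-1}-e^{2T-2})}{B^T-e^{2T}}=\frac{T}{e}\Big(h(A/e)-h(B/e^2)\Big).
\]
For $T=1$ we have $h\equiv0$, so the derivative vanishes. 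For $T>1$, write $h(z)=\sum_{j=0}^{T-2}z^j\big/\sum_{j=0}^{T-1}z^j$, which is strictly decreasing in $z>1$. Therefore the sign of the derivative equals the sign of $B/e^2-A/e$, that is, the sign of $B-eA=\theta(1-e)+p/n$.

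\emph{Step 3 (evaluation at $\rho=0$).} At $\rho=0$ we have $e=1+p/n$, so $B-eA=(p/n)(1-\theta)$. If $p/n=0$, this is zero: indeed $A/e=B/e^2=1+\theta$, so the derivative is exactly $0$ and $\rho^*=0$. If $p/n>0$ and $T>1$, the derivative is positive iff $\theta<1$ and non-positive otherwise, which is the statement.

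\emph{Main obstacle.} The main difficulty is Step 2: spotting the homogeneity that collapses both terms to the single function $h$, and verifying that $h$ is strictly decreasing. For the latter I would cross-multiply and compare coefficients, or argue via $h(z)=1/(z+1/S(z))$ with $S(z)=\sum_{j=0}^{T-2}z^j$ increasing. The limits in Step 1 are routine.
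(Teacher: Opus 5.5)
This is correct and is essentially the paper's argument: the same specialization of Corollary~\ref{coro:os_SR_w=0} to $\hat\bsigma=\bar\Q=\bsigma$, and the same reduction of the sign at $\rho=0$ to comparing $(z^T-1)/(z^{T-1}-1)$ at $A/e$ and $B/e^2$. You go further in two ways: you prove the monotonicity that the paper only asserts, and your sign rule $\theta(1-e)+p/n$ holds for all $\rho\ge 0$, which gives unimodality with $\rho^*=(p/n)(1-\theta)/\theta$ when $\theta<1$. One small repair: your alternative argument via $h(z)=1/(z+1/S(z))$ does not follow from $S$ being increasing, since that makes $1/S$ decrease; rely instead on the cross-multiplication, or on $1/h(z)=1+\bigl(\sum_{k=1}^{T-1}z^{-k}\bigr)^{-1}$, which is strictly increasing.
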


Note that the effect of $\Q$ is very different from the single-period case. The conventional understanding for the effect for $\Q$ and $\w$ is that $\Q$ is used to regularize $\hat\bsigma$ while $\w$ is used to regularize $\hat\bmu$. However when $T > 1$ and $p/n > 0$, even with $\w_k = \BFzero$ and known true covariance $\bsigma$, $\Q$ can be used to regularize $\hat\bmu$. 

To be specific, When $T=1$, the asymptotic SR formula in Corollary~\ref{coro:multi_SR_mu_w_0} reduces to the single-period case, where the asymptotic Sharpe ratio equals to $\frac{\bmu^\top \bsigma^{-1} \bmu}{\sqrt{\bmu^\top \bsigma^{-1} \bmu + p/n}}$, independent of $\rho$ or $\Q$. 
In contrast to the single-period result, the estimation error in $\hat\bmu$ for multiperiod, high-dimensional portfolios accumulates as the system evolves over time. 
In the multiperiod setting, shrinkage of the covariance matrix additionally serves to adjust the estimation error in $\hat\bmu$, but we can only improve the out-of-sample Sharpe ratio when $\bmu^\top \bsigma^{-1} \bmu < 1$, that is, when the maximum SR in (\ref{SR_max}) is in a modest range. 

\begin{figure}[htbp]
    \centering
    \begin{subfigure}[t]{0.45\textwidth}
        \centering
        \includegraphics[width=\textwidth]{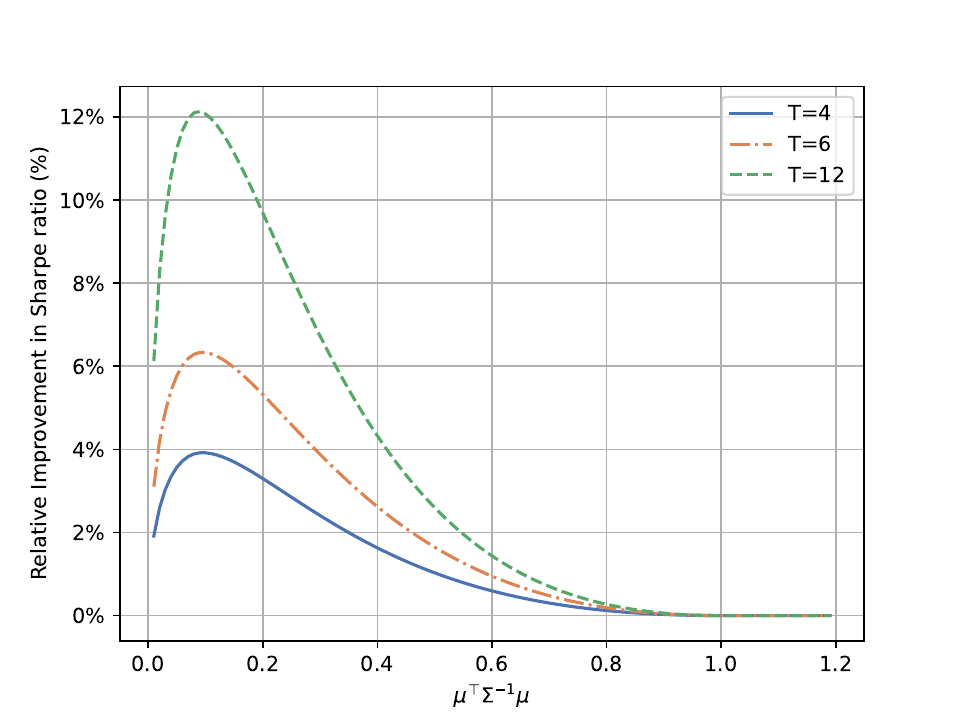}
        \caption{Relative improvement versus $\bmu^\top \bsigma^{-1} \bmu$}
        \label{fig:asym_SR_mu_w=0_SNR}
    \end{subfigure}
    \hfill
    \begin{subfigure}[t]{0.45\textwidth}
        \centering
        \includegraphics[width=\textwidth]{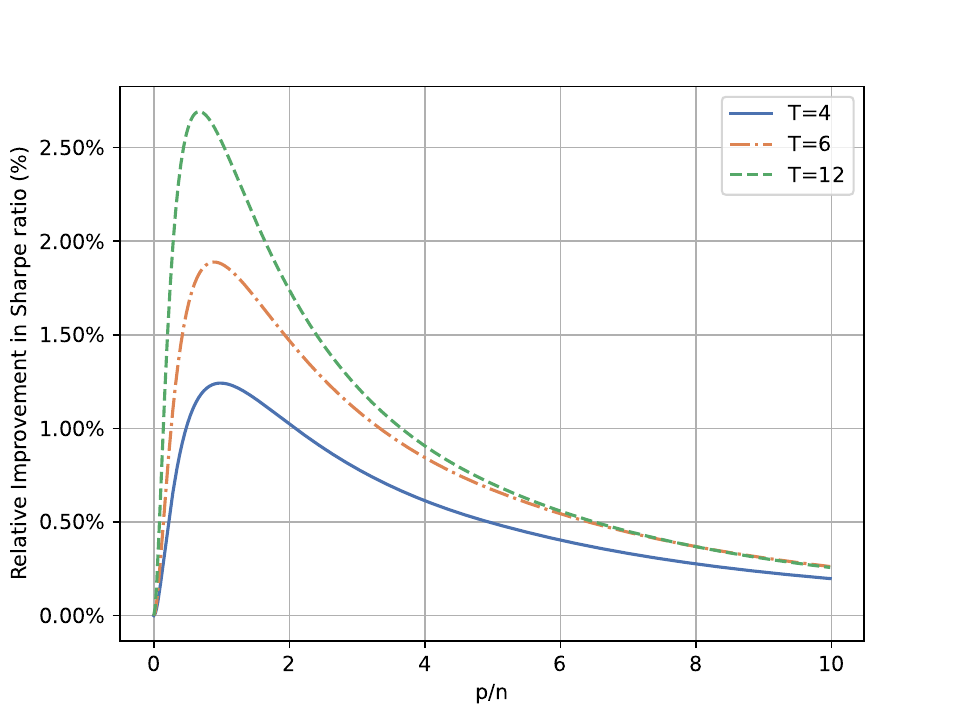}
        \caption{Relative improvement versus $p/n$}
        \label{fig:asym_SR_mu_w=0_HD}
    \end{subfigure}

    \caption{Relative improvement of $\SR_\infty(T,\rho^\ast,\BFzero)$ over
    $\SR_\infty(T,0,\BFzero)$ with estimation error in $\hat{\bmu}$ when $T > 1$.}
    \label{fig_multi_SR_mu_w_0}

    \caption*{\footnotesize
    Notes. The relative improvement is defined as $\frac{\SR_\infty(T,\rho^\ast,\BFzero)-\SR_\infty(T,0,\BFzero)} {\SR_{\max}}\,,$
    where $\rho^\ast$ is computed numerically. In Figure~\ref{fig:asym_SR_mu_w=0_SNR},
    $p/n=0.5$. In Figure~\ref{fig:asym_SR_mu_w=0_HD},
    \(\bmu^\top \bsigma^{-1}\bmu=0.5\). }
\end{figure} 

We further illustrate the effect of $\rho$ or $\Q$ numerically using the same simulation as in Section~\ref{sec3.1.2}. 
Figure~\ref{fig_multi_SR_mu_w_0} shows the relative improvement in the asymptotic out-of-sample Sharpe ratio $\SR_\infty(T, \rho^\ast, 0)$ compared to the baseline case of $\rho=0$. 
Figure~\ref{fig:asym_SR_mu_w=0_SNR} shows that the relative improvement is substantial when $\bmu^\top \bsigma^{-1} \bmu < 1$, emphasizing the beneficial role of regularization in mitigating estimation error in low signal environments. 
However, as $\bmu^\top \bsigma^{-1} \bmu \geq 1$, the improvement diminishes entirely, consistent with the theoretical result that regularization offers no advantage when the maximum SR is sufficiently high. 
Figure~\ref{fig:asym_SR_mu_w=0_HD} examines the behavior of $\SR_\infty(T, \rho^\ast, 0)$ as a function of the high-dimensional ratio $p/n$ when $\bmu^\top \bsigma^{-1} \bmu = 0.5$. 
When $p/n$ is small, the improvement is limited because estimation error plays a negligible role in relatively low-dimensional settings. 
As $p/n$ increases, the relative improvement in $\SR_\infty$ attains its maximum at moderate dimensionality, particularly when $p/n$ is close to one. 
With further increases in dimensionality, the impact of estimation error in $\hat\bmu$ is significantly amplified, which reduces the magnitude of the possible improvement.  
However, the improvement remains positive as our theory predicts. 
Across both panels, the relative improvement is larger for longer investment horizons, reflecting the compounding effect of regularization over time. 
Overall, these patterns are all consistent with Corollary~\ref{coro:multi_SR_mu_w_0} and highlight the practical value of regularization for improving Sharpe ratios in high-dimensional, multiperiod settings.

Next, let us evaluate the effect of $\w$ when we set $\Q_k \to \BFzero$. In this case, we assume $\Q_k\w_k \to \tilde\w$ to separate the effect of $\w$.  
By examining the first-order condition with respect to $\tilde\w$, we find that the optimal direction satisfies $\tilde\w^\ast \propto \bmu$. 
Accordingly, we set $\tilde\w = \alpha \bmu$ and study how the choice of $\alpha$ influences the Sharpe ratio. 
Here we directly present numerical results to illustrate this effect, employing the same simulation setup as in Section~\ref{sec3.1.2}. 
From Figure~\ref{fig_multi_SR_mu_Q_0}, we observe that in both settings of $T=1$ and $T=4$, an appropriate choice of $\alpha^\ast>0$ leads to an improvement in the Sharpe ratio. 
In the single-period case, selecting the optimal $\alpha$ achieves the maximal Sharpe ratio as we have already proved so in Corollary~\ref{coro:3.2}\endnote{In Corollary~\ref{coro:3.2}, we use $\Q = \rho\bsigma$ so that we do not need to refined $\Q\w$ as $\w$, but it has a similar effect as choosing $\Q \to \BFzero$ here.}. 
In contrast, in the multiperiod setting, a gap persists between the maximal Sharpe ratio and the value attained under the optimal choice $\alpha^\ast$.

\begin{figure}[htbp]
    \centering
    \begin{subfigure}[t]{0.45\textwidth}
        \centering
        \includegraphics[width=\textwidth]{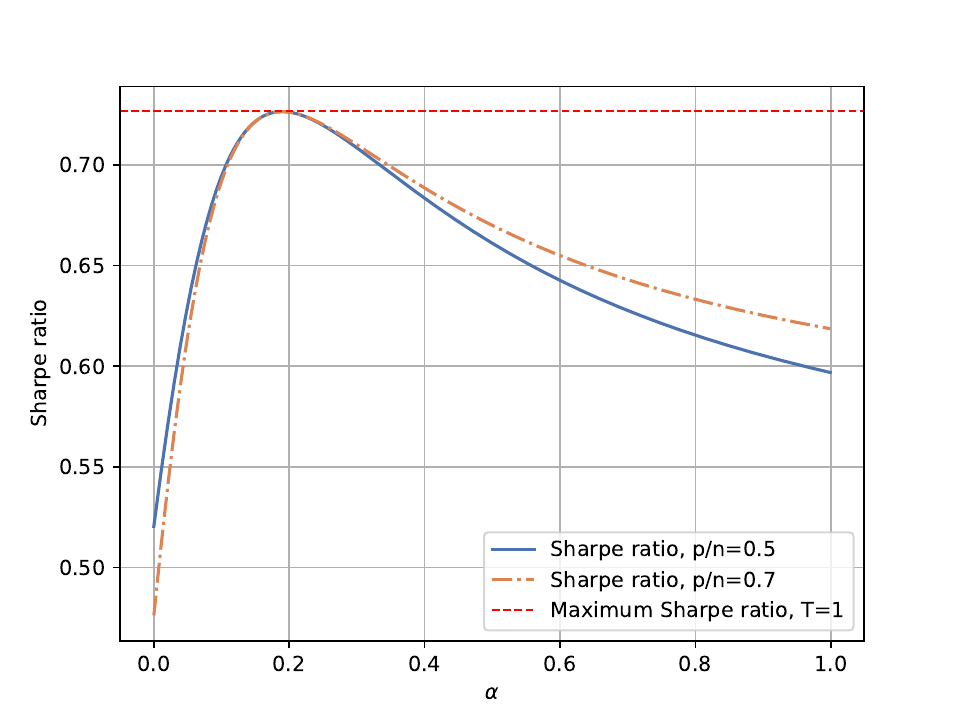}
        \caption{$T=1$}
        \label{fig:asym_SR_mu_T_1_Q_0_w}
    \end{subfigure}
    \hfill
    \begin{subfigure}[t]{0.45\textwidth}
        \centering
        \includegraphics[width=\textwidth]{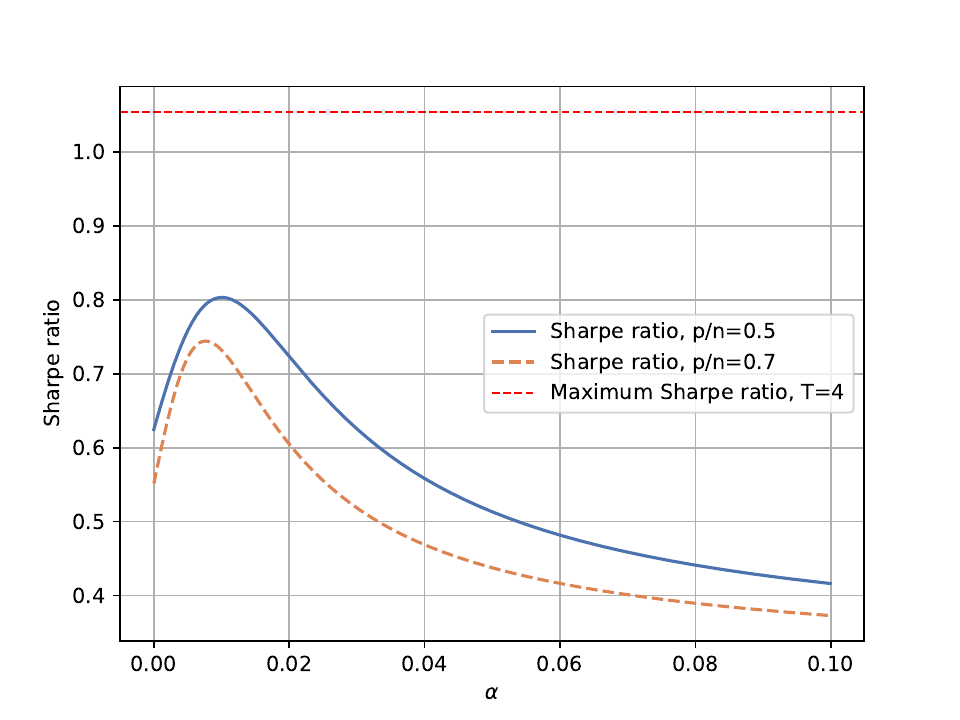}
        \caption{$T=4$}
        \label{fig:asym_SR_mu_T_4_Q_0_w}
    \end{subfigure}

    \caption{$\SR_{\infty}$ versus reference scale $\alpha$ with estimation error in $\hat{\bmu}$.}
    \label{fig_multi_SR_mu_Q_0}

    \caption*{\footnotesize
    Notes. The asymptotic Sharpe ratio $\SR_{\infty}$ is maximized at
    $\alpha^\ast > 0$ in both the single-period and multiperiod cases.}
\end{figure}

\subsubsection{Scenario (ii): known \texorpdfstring{$\bmu$}~ and estimated \texorpdfstring{$\bsigma$}.} \label{sec3.2.2}
Similarly, we can show the convergence of the parameter system when $\bmu$ is known and $\bsigma$ is estimated by sample covariance. 
To this end, we introduce the following lemma and notations for the limiting value of $\x^\top \hat\D_k^{-1} \y$ and $\x^\top \hat\D_k^{-1} \bsigma \hat\D_k^{-1} \y$ for any fixed $\x,\y$.  

\begin{lemma}
    \label{lemma:sig_converge}
    Under the assumptions of known $\bmu$ and sample covariance $\hat\bsigma_k=\hat\bsigma$, for any deterministic $\x, \y \in \mathbb{R}^p$ with bounded norm, we have $\x^\top \hat\D_k^{-1} \y \xrightarrow{a.s} ~ C_{\x,\y,k}^\ast$ and $\x^\top \hat\D_k^{-1} \bsigma \hat\D_k^{-1} \y \xrightarrow{a.s} B_{\x,\y,k}^\ast$
    , where
    \begin{align*}
        & C_{\x,\y,k}^\ast
        \triangleq  (a_{k+1}^{\ast} )^{-1}  \x^\top \Big( \A_{\Q_k,\w_k} + \bmu\bmu^\top \Big)^{-1} \y \,,  
        \\
        & B_{\x,\y,k}^\ast
        \triangleq (a_{k+1}^{\ast})^{-2}  \Big(1-\frac{\tilde s_k}{(1+s_k)^2}\Big) 
        \\
        & \cdot \x^\top \Big( \A_{\Q_k,\w_k} + \bmu\bmu^\top \Big)^{-1} \bsigma \Big( \A_{\Q_k,\w_k} + \bmu\bmu^\top \Big)^{-1} \y \,,
    \end{align*}
    where $\A_{\Q_k,\w_k} = \frac{\bsigma}{1+s_k} + \frac{\Q_k}{a_{k+1}^\ast}$, $s_k$ is the solution of $s_k = \frac{c}{p} \tr( \bsigma (\A_{\Q_k,\w_k} + \bmu \bmu^\top)^{-1})$ 
    and 
    $\tilde s_k$ is the solution to $\tilde s_k= (\frac{\tilde s_k}{(1+s_k)^2} - 1) \frac{c}{p} \tr( \bsigma^2 (\A_{\Q_k,\w_k} + \bmu \bmu^\top)^{-2})$. 
\end{lemma}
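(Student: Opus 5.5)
The plan is to write $\hat\D_k = \hat a_{k+1}\big(\hat\bsigma + \bmu\bmu^\top + \Q_k/\hat a_{k+1}\big)$, with $\hat\bmu=\bmu$. The estimated recursion \eqref{equation:rrmv_rf_para} then involves the data only through $\hat\bsigma$, and the parameters $\hat a_{k+1}$ are built from quadratic forms in $\hat\D_{j}^{-1}$ with $j>k$. I would proceed by backward induction on $k$, starting from $a_T=1$ deterministic. The induction hypothesis is that $\hat a_{k+1}\xrightarrow{a.s} a_{k+1}^\ast>0$, where $a_{k+1}^\ast$ is the limit produced by the later steps. Lemma~\ref{lemma:param_a_nonneg} keeps these quantities positive, and a lower bound $\Q_k\succ 0$ keeps $\hat\D_k$ uniformly invertible.

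Given this, $\hat\D_k^{-1} = \hat a_{k+1}^{-1}(\hat\bsigma + \M_k)^{-1}$ with $\M_k := \bmu\bmu^\top + \Q_k/\hat a_{k+1}$. A resolvent perturbation bound shows that replacing $\hat a_{k+1}$ by $a^\ast_{k+1}$ inside $\M_k$ costs only $o(1)$ in operator norm, uniformly. The bound is $\|(\hat\bsigma+\M)^{-1}-(\hat\bsigma+\M')^{-1}\|\le \|\M-\M'\|/\lambda_{\min}(\Q_k)^2\cdot O(1)$. So it suffices to treat the deterministic shift $\M^\ast_k = \bmu\bmu^\top + \Q_k/a_{k+1}^\ast \succ 0$.

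For $C^\ast$, I would invoke the random-matrix result Lemma~\ref{lemma:keylemma} from the appendix, the deterministic-equivalent result already used for Theorem~\ref{theorem:SR_converge_sig_single}. For Gaussian data, bounded $\|\bsigma\|$ and a positive definite deterministic shift $\M$, it states $\x^\top(\hat\bsigma+\M)^{-1}\y - \x^\top(\bsigma/(1+s)+\M)^{-1}\y \xrightarrow{a.s} 0$, with $s = \frac{c}{p}\tr(\bsigma(\bsigma/(1+s)+\M)^{-1})$. Taking $\M=\M_k^\ast$ gives $\bsigma/(1+s_k)+\M_k^\ast = \A_{\Q_k,\w_k}+\bmu\bmu^\top$, and multiplying by $\hat a_{k+1}^{-1}\to (a_{k+1}^\ast)^{-1}$ yields $C^\ast_{\x,\y,k}$.

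Centering in $\hat\bsigma$ only contributes a rank-one perturbation. By Sherman–Morrison and the isotropic law it changes neither the limit nor $s_k$. This is consistent with the single-period formulas, so I will not belabor it.

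For $B^\ast$, I would use the derivative trick. Define $F(z)=\x^\top(\hat\bsigma+\M_k^\ast - z\bsigma)^{-1}\y$. Then $\partial_z F|_{z=0} = \x^\top(\hat\bsigma+\M)^{-1}\bsigma(\hat\bsigma+\M)^{-1}\y$. For small $|z|$, the shifted matrix $\M - z\bsigma$ is still positive definite, so the deterministic equivalent holds uniformly in $z$ on a small complex neighbourhood. The functions are analytic and uniformly bounded there, so Vitali's theorem lets us pass the a.s. limit through the derivative.

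Differentiating $(\bsigma/(1+s(z)) + \M - z\bsigma)^{-1}$ at $z=0$ gives the factor $(1 - s'(0)/(1+s)^2)$ times the sandwich $(\A+\bmu\bmu^\top)^{-1}\bsigma(\A+\bmu\bmu^\top)^{-1}$. Implicit differentiation of the fixed-point equation identifies $\tilde s_k := s'(0)$ as solving exactly the stated equation $\tilde s_k = (\tilde s_k/(1+s_k)^2-1)\frac{c}{p}\tr(\bsigma^2(\A+\bmu\bmu^\top)^{-2})$. Multiplying by $\hat a_{k+1}^{-2}$ gives $B^\ast$.

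To close the induction, $\hat a_k$ is a continuous function of $\hat a_{k+1}$ and of the quadratic forms $\bmu^\top\hat\D_k^{-1}\bmu$, $\bmu^\top\hat\D_k^{-1}\Q_k\w_k$ and $\w_k^\top\Q_k\hat\D_k^{-1}\Q_k\w_k$. These are of the $C^\ast$ type with bounded deterministic vectors, so $\hat a_k$ converges a.s.

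The \textbf{main obstacle} is the uniformity needed in the derivative step together with the random shift $\Q_k/\hat a_{k+1}$. I would handle it by first proving convergence along the deterministic shift, then using the Lipschitz resolvent bound above. The same bound, applied with $\hat\D_k^{-1}\bsigma\hat\D_k^{-1}$, transfers both $C^\ast$ and $B^\ast$ to the random shift.
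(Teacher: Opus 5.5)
Your proposal is correct and follows essentially the paper's proof: backward induction on $\hat a_{k+1}\to a_{k+1}^\ast$, a resolvent identity to replace the random shift $\Q_k/\hat a_{k+1}$ by $\Q_k/a_{k+1}^\ast$, Lemma~\ref{lemma:keylemma} for $C^\ast_{\x,\y,k}$, and differentiation of the deterministic equivalent in the shift parameter for $B^\ast_{\x,\y,k}$ (your $-z\bsigma$ is, up to sign, the paper's $z\I$ after whitening by $\bsigma^{-1/2}$). One bookkeeping slip: with your $-z\bsigma$ convention the limiting factor is $1+s'(0)/(1+s_k)^2$, so it is $\tilde s_k=-s'(0)$, not $s'(0)$, that satisfies the stated fixed-point equation; also, implicit differentiation actually produces $\frac{c}{p}\tr(\bsigma\mathbf{K}\bsigma\mathbf{K})$ with $\mathbf{K}=(\A_{\Q_k,\w_k}+\bmu\bmu^\top)^{-1}$, which equals the stated $\frac{c}{p}\tr(\bsigma^2\mathbf{K}^2)$ only when $\bsigma$ and $\mathbf{K}$ commute, an imprecision already present in the lemma as stated.
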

Using the notations defined in Lemma~\ref{lemma:sig_converge}, we have the following theorem. 
\begin{theorem}
    \label{theorem:pram_converge_sig}
    Under the assumptions of known $\bmu$ and sample covariance $\hat\bsigma_k=\hat\bsigma$, for general $T$, the parameters in RRMV policy almost surely convergence as follows. 
    \begin{align*}
    \hat{a}_k \xrightarrow{a.s} a_k^\ast 
    & = a_{k+1}^{\ast} r^2 - (a_{k+1}^{\ast})^2 r^2 C_{\bmu,\bmu,k}^\ast 
    \\
    & \quad + 2 r a_{k+1}^\ast C_{\bmu,\Q_k\w_k,k}^\ast + \w_k^\top \Q_k \w_k 
    \\
    & \quad -  C_{\Q_k\w_k,\Q_k\w_k,k}^\ast \,,
    \\
    \hat{b}_k \xrightarrow{a.s} b_k^\ast 
    & = b_{k+1}^\ast r - a_{k+1}^\ast b_{k+1}^\ast r C_{\bmu,\bmu,k}^\ast 
    \\
    & \quad + b_{k+1}^\ast C_{\bmu,\Q_k\w_k,k}^\ast \,,
    \\
    \hat{c}_k \xrightarrow{a.s} {c}_k^\ast 
    & = {c}_{k+1}^\ast - (b_{k+1}^{\ast})^2 C_{\bmu,\bmu,k}^\ast \,.
    \end{align*}
    Let $\phi_k^\ast = r a_{k+1}^\ast E[\hat X_k]_\infty + \frac{b_0^\ast - X_\tg}{{c}_0^\ast} b_{k+1}^\ast$ where $\E[\hat{X}_k]_\infty
        = \prod_{i=0}^{k-1} \eta_i^\ast + \sum_{i=0}^{k-1}( \lambda_i^\ast \prod_{j=i+1}^{k-1} \eta_j^\ast)$, the asymptotic Sharpe ratio of the $\RRMV$ portfolio, given $\Q_k$ and $\w_k$, will almost surely converge to
    \begin{align*}
        \SR_\infty(T, \Q_k, \w_k)
        = 
        \frac{X_\tg-r^T}{\sqrt{T \cdot \sum_{i=0}^{T-1} \nu_i^\ast \prod_{j=i+1}^{T-1}{g}_j^\ast} } \,,
    \end{align*}
    where following Theorem~\ref{theorem:multi_SR_mu}, the  detailed formulas for $\nu_k^\ast$ and $g_k^\ast$ can be derived: $\nu_k^\ast = ( \phi_k^\ast B_{\bmu,\bmu,k}^\ast )^2 + \E[\hat X_{k}]_\infty^2 B_{\Q_k\w_k,\Q_k\w_k,k}^\ast - 2 \phi_k^\ast E[\hat X_k]_\infty B_{\bmu,\Q_k\w_k,k}^\ast$ and $g_k^\ast = r^2 (a_{k+1}^{\ast})^2 B_{\bmu,\bmu,k}^\ast + B_{\Q_k\w_k,\Q_k\w_k,k}^\ast - 2 r a_{k+1}^\ast B_{\bmu,\Q_k\w_k,k}^\ast + (\eta_k^{\ast})^2$.
\end{theorem}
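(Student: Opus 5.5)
The proof goes by backward induction on $k=T-1,\dots,0$, feeding Lemma~\ref{lemma:sig_converge} into the recursion \eqref{equation:rrmv_rf_para}. The recursion is applied with $\hat\bsigma$ in place of $\bsigma$ and with $\bmu$ known.

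\textbf{Step 1: the parameters.} The induction hypothesis is that $\hat a_{j}\xrightarrow{a.s} a_j^\ast$, $\hat b_j\xrightarrow{a.s} b_j^\ast$ and $\hat c_j\xrightarrow{a.s} c_j^\ast$ for $j>k$; the base case is $a_T=b_T=1$, $c_T=0$. Expanding the weighted norm $\|r\hat a_{k+1}\bmu-\Q_k\w_k\|_{\hat\D_k^{-1}}$ as a quadratic form gives
\[
\hat a_k = \hat a_{k+1}r^2+\w_k^\top\Q_k\w_k-r^2\hat a_{k+1}^2\,\bmu^\top\hat\D_k^{-1}\bmu+2r\hat a_{k+1}\,\bmu^\top\hat\D_k^{-1}\Q_k\w_k-(\Q_k\w_k)^\top\hat\D_k^{-1}\Q_k\w_k .
\]
The scalars $\hat b_k$ and $\hat c_k$ reduce in the same way to the bilinear forms $\bmu^\top\hat\D_k^{-1}\bmu$ and $\bmu^\top\hat\D_k^{-1}\Q_k\w_k$. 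Each such form is evaluated at the deterministic, norm-bounded vectors $\bmu$ and $\Q_k\w_k$. By Lemma~\ref{lemma:sig_converge} they converge to $C^\ast_{\bmu,\bmu,k}$, $C^\ast_{\bmu,\Q_k\w_k,k}$ and $C^\ast_{\Q_k\w_k,\Q_k\w_k,k}$. The continuous mapping theorem then yields the stated limits $a_k^\ast$, $b_k^\ast$, $c_k^\ast$.

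\textbf{Main obstacle.} The matrix $\hat\D_k=\hat a_{k+1}(\hat\bsigma+\bmu\bmu^\top)+\Q_k$ involves the random scalar $\hat a_{k+1}$, whereas Lemma~\ref{lemma:sig_converge} is naturally stated for the deterministic matrix $a^\ast_{k+1}(\hat\bsigma+\bmu\bmu^\top)+\Q_k$. I would close this gap with the resolvent identity $\hat\D_k^{-1}-\D_k^{\ast-1}=\hat\D_k^{-1}(a^\ast_{k+1}-\hat a_{k+1})(\hat\bsigma+\bmu\bmu^\top)\D_k^{\ast-1}$. This is $o(1)$ in operator norm provided $\|\hat\bsigma\|$ is a.s.\ bounded (Bai--Yin, using bounded $\|\bsigma\|$) and $\hat\D_k\succeq\Q_k\succ0$, the latter holding because $a^\ast_{k+1}>0$ by Lemma~\ref{lemma:param_a_nonneg} applied in the limit. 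The same argument covers the $B^\ast$ forms. The rank-one term $\bmu\bmu^\top$ is already built into the lemma, so no separate Sherman--Morrison step is needed.

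\textbf{Step 2: the Sharpe ratio.} I substitute into Proposition~\ref{prop:eff_frontier}. Since $\hat\bmu=\bmu$, the quantities $\hat\eta_k$ and $\hat\lambda_k$ are finite combinations of $C^\ast$-type limits and $a^\ast,b^\ast,c^\ast$, so by forward induction $\E[\hat X_k]\to\E[\hat X_k]_\infty$. For the numerator, the target constraint gives $\E[\hat X_T]=\hat b_0-\frac{\hat\lambda}{2\omega}\hat c_0=X_\tg$ exactly for every sample, because $\omega$ is calibrated through \eqref{def_RRMV_rf_omega} with the estimated parameters. The numerator is therefore exactly $X_\tg-r^T$. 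This is also consistent with the limits, which satisfy the same algebraic identity.

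For the denominator, I expand $\hat\nu_k$ and $\hat g_k$ as quadratic forms in $\hat\D_k^{-1}\bsigma\hat\D_k^{-1}$ evaluated at combinations of $\bmu$ and $\Q_k\w_k$. The scalar coefficients are $\phi_k$-type terms built from $\E[\hat X_k]$, $\hat a_{k+1}$, $\hat b_{k+1}$ and $(\hat b_0-X_\tg)/\hat c_0$. Lemma~\ref{lemma:sig_converge} gives convergence of these forms to the $B^\ast$ limits, which produces the stated $\nu_k^\ast$ and $g_k^\ast$. Finally, continuity of the ratio holds provided the limiting variance is positive, which follows from $B^\ast_{\x,\x,k}>0$ for $\x\neq0$ since $\kappa>0$.
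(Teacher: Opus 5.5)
Your proposal follows the paper's proof essentially step for step. It runs a backward induction on $(\hat a_k,\hat b_k,\hat c_k)$ through the bilinear-form limits of Lemma~\ref{lemma:sig_converge}, handles the random scalar $\hat a_{k+1}$ inside $\hat\D_k$ with the same resolvent-difference bound the paper uses in that lemma's proof, and then substitutes into Proposition~\ref{prop:eff_frontier}, where the numerator is exactly $X_\tg-r^T$ because $\hat\bmu=\bmu$. One caveat on the last claim of your Step~2: the coefficient of $\hat\D_k^{-1}\bmu$ in the policy is $\frac{\hat b_0-X_\tg}{\hat c_0}\hat b_{k+1}-r\hat a_{k+1}\hat X_k$, so your expansion of $\hat\nu_k$ converges to $(\phi_k^\ast)^2B^\ast_{\bmu,\bmu,k}+\E[\hat X_k]_\infty^2B^\ast_{\Q_k\w_k,\Q_k\w_k,k}-2\phi_k^\ast\E[\hat X_k]_\infty B^\ast_{\bmu,\Q_k\w_k,k}$ with $\phi_k^\ast=ra_{k+1}^\ast\E[\hat X_k]_\infty-\frac{b_0^\ast-X_\tg}{c_0^\ast}b_{k+1}^\ast$, not to the printed $(\phi_k^\ast B^\ast_{\bmu,\bmu,k})^2+\cdots$ with a plus sign in $\phi_k^\ast$; you should not say it ``produces the stated $\nu_k^\ast$'', but these are typos in the statement (the paper's own proof has the same slip), not a flaw in your argument.
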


After obtaining the expression for the asymptotic Sharpe ratio, we address how regularization $\Q$ and $\w$ function in this scenario. Similar to Section~\ref{sec3.2.1}, we consider $\Q_k \to \BFzero$ and $\w_k = \BFzero$ separately. 
To isolate the effect of $\w$, we set $\Q_k \to 0$ and assume $\Q_k \w_k \to \tilde\w_k$. 
As in the single-period case, a necessary condition for the optimal reference weight is $\tilde\w_k \propto \bmu$ since true $\bmu$ is known. 
We therefore parameterize the reference weight as $\tilde\w_k = \tilde\w = \alpha \bmu$ and use this ideal case to examine how the choice of $\alpha$ influences the limiting Sharpe ratio. 
Numerical results are illustrated using the same simulation design as before. 
From Figure~\ref{fig_multi_SR_sig_Q_0}, we find that the role of the reference weight $\w$ differs substantially in the multiperiod RRMV portfolio. 
As shown in Theorem~\ref{theorem:SR_converge_sig_single}, in the single-period case, the scale of the reference weight when $\tilde\w \propto \bmu$ does not affect the Sharpe ratio at all. 
Consistent with this observation, Figure~\ref{fig_multi_SR_sig_Q_0_T_1} shows that the limiting Sharpe ratio deteriorates as dimensionality increases and cannot be improved by choosing a positive $\alpha$. 
In contrast, the multiperiod setting exhibits a fundamentally different behavior. 
Figure~\ref{fig_multi_SR_sig_Q_0_T_4} demonstrates that the asymptotic Sharpe ratio can be maximized at some $\alpha^\ast > 0$. 
This finding indicates that even without applying shrinkage $\Q$ to the covariance estimator, an appropriately chosen reference weight can still mitigate the estimation error in $\hat\bsigma$ and improve the out-of-sample Sharpe ratio. This again expands our conventional understanding that $\Q$ is used to regularize $\hat\bsigma$ while $\w$ is used to regularize $\hat\bmu$. Actually when $T > 1$ and $p/n > 0$, even with $\Q_k \to \BFzero$ and known true expected return $\bmu$, $\w$ can be used to regularize $\hat\bsigma$. 

\begin{figure}[htbp]
    \centering
    \begin{subfigure}[t]{0.45\textwidth}
        \centering
        \includegraphics[width=\textwidth]{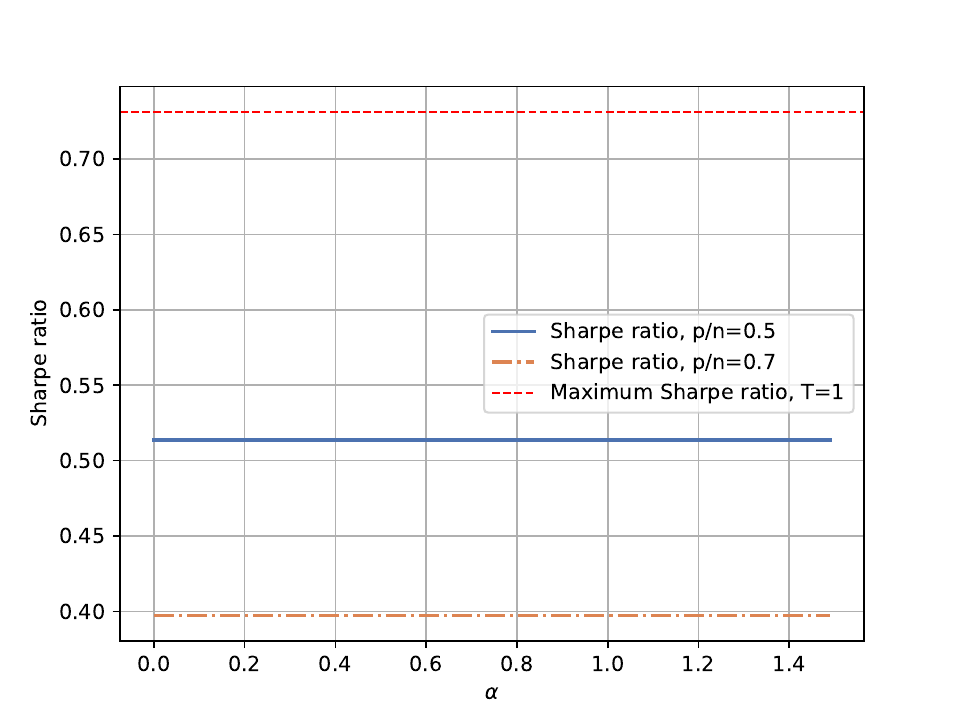}
        \caption{$T=1$}
        \label{fig_multi_SR_sig_Q_0_T_1}
    \end{subfigure}
    \hfill
    \begin{subfigure}[t]{0.45\textwidth}
        \centering
        \includegraphics[width=\textwidth]{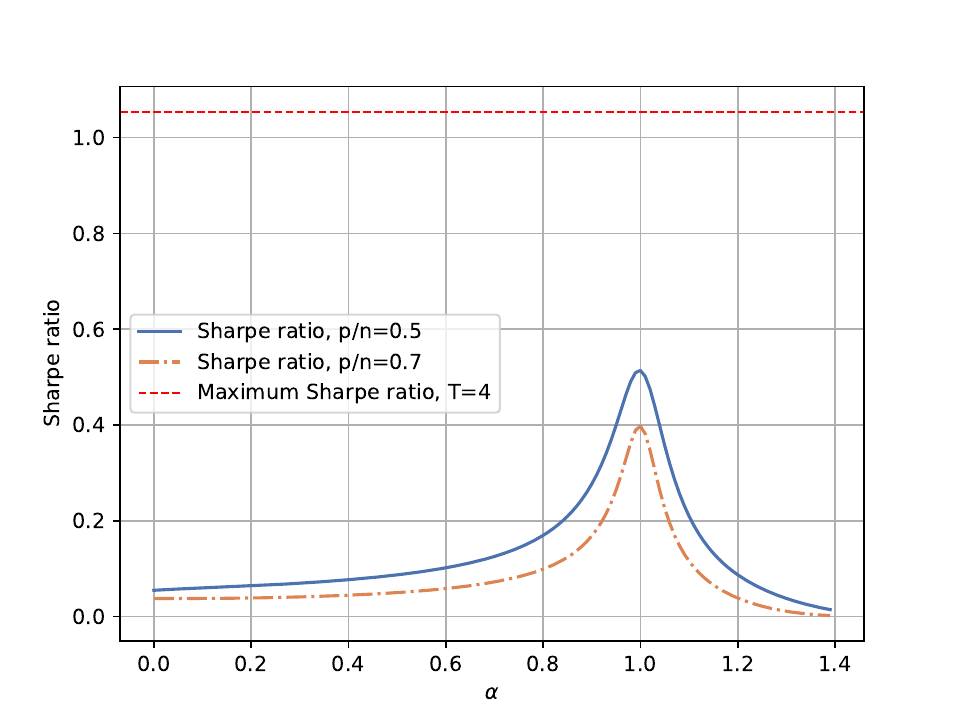}
        \caption{$T=4$}
        \label{fig_multi_SR_sig_Q_0_T_4}
    \end{subfigure}

    \caption{$\SR_{\infty}$ versus reference scale $\alpha$ with estimation error in $\hat{\bsigma}$.}
    \label{fig_multi_SR_sig_Q_0}

    \caption*{\footnotesize
    Notes. The asymptotic Sharpe ratio $\SR_{\infty}$ does not change with
    $\alpha$ in the single-period case and is maximized at $\alpha^\ast > 0$
    in the multiperiod case.}
\end{figure}

Next, we turn to the effect of $\Q$ by setting $\w_k = \BFzero$. 
In Corollary~\ref{corr:multi_SR_sig}, we derive the asymptotic Sharpe ratio with $\w_k = \BFzero$. 
Following the single-period analysis, we decompose this expression by defining the SR scalar 
$\kappa_\rho^{-T/2}$ 
and interpreting the remaining component as the pseudo Sharpe ratio. 

\begin{corollary}
    \label{corr:multi_SR_sig}
    Under the same assumptions in Theorem~\ref{theorem:pram_converge_sig}, if $\w = \BFzero$, $\Q_k = a_{k+1} \rho \bar\Q$, the Sharpe ratio will almost surely converge to $\SR_\infty(T, \rho, \BFzero)$,
    \begin{align*}
       \SR_\infty(T, \rho, \BFzero) 
       = \kappa_\rho^{-T/2} \cdot 
       \frac{ T^{-1/2}  ((\bmu^\top \A_\rho^{-1} \bmu + 1)^T - 1) }{\sqrt{(\bmu^\top \A_\rho^{-1} \bsigma \A_\rho^{-1}\bmu + 1)^T - 1}}\,,
    \end{align*}
    where $\kappa_\rho = 1 - \frac{\tilde s(\rho)}{(1+s(\rho))^2}$, $\A_\rho = \frac{\bsigma}{1+s(\rho)} + \rho \bar\Q$, 
    $s(\rho)$ is the solution of $s(\rho) = \frac{c}{p} \tr \bsigma ((\A_\rho + \bmu \bmu^\top)^{-1})$, 
    and $\tilde s(\rho)$ is given as the solution of $\tilde s(\rho) = -\kappa_\rho \frac{c}{p} \tr(\bsigma^2 (\A_\rho + \bmu \bmu^\top)^{-2})$. 
\end{corollary}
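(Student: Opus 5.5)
The plan is to specialize Theorem~\ref{theorem:pram_converge_sig} to $\w_k=\BFzero$ and $\Q_k=a_{k+1}\rho\bar\Q$. With $\w_k=\BFzero$, every term involving $\Q_k\w_k$ disappears: all $C^\ast_{\cdot,\Q_k\w_k,k}$ and $B^\ast_{\cdot,\Q_k\w_k,k}$ vanish. We interpret the scaling as $\Q_k=a^\ast_{k+1}\rho\bar\Q$, that is, with respect to the limiting (or estimated) $a_{k+1}$, as in Corollary~\ref{coro:os_SR_w=0}. Then $\A_{\Q_k,\w_k}=\bsigma/(1+s_k)+\rho\bar\Q$ no longer depends on $k$. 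Consequently $s_k\equiv s(\rho)$ and $\tilde s_k\equiv\tilde s(\rho)$ solve the fixed-point equations stated in the corollary. Here I would rewrite the $\tilde s_k$ equation in Lemma~\ref{lemma:sig_converge} as $\tilde s=-\kappa_\rho\frac{c}{p}\tr(\cdot)$, using $\kappa_\rho=1-\tilde s/(1+s)^2$.

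Set $\A:=\A_\rho$ and $\kappa:=\kappa_\rho$. The next step is to express the limits through Sherman--Morrison applied to $(\A+\bmu\bmu^\top)^{-1}$. Write $m=\bmu^\top\A^{-1}\bmu$ and $v=\bmu^\top\A^{-1}\bsigma\A^{-1}\bmu$. Then
\begin{align*}
a^\ast_{k+1}C^\ast_{\bmu,\bmu,k}=\frac{m}{1+m},\qquad (a^\ast_{k+1})^2B^\ast_{\bmu,\bmu,k}=\kappa\,\frac{v}{(1+m)^2}.
\end{align*}
Feeding these into the recursions of Theorem~\ref{theorem:pram_converge_sig} gives closed forms:
\begin{align*}
a_k^\ast=\frac{r^2a^\ast_{k+1}}{1+m},\qquad b_k^\ast=\frac{rb^\ast_{k+1}}{1+m},\qquad c_k^\ast=c^\ast_{k+1}-\frac{(b^\ast_{k+1})^2}{a^\ast_{k+1}}\frac{m}{1+m}.
\end{align*}
Solving these geometrically yields $a_k^\ast=r^{2(T-k)}(1+m)^{-(T-k)}$, $b_k^\ast=r^{T-k}(1+m)^{-(T-k)}$, and a telescoping sum for $c_0^\ast$, namely $-c_0^\ast=1-(1+m)^{-T}$.

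For the terms entering the Sharpe ratio, the limiting $\eta^\ast_k$ is $r(1-m/(1+m))=r/(1+m)$, since $\bmu$ is known. The limiting $g^\ast_k$ equals $r^2\big(\kappa v/(1+m)^2+1/(1+m)^2\big)=r^2(1+\kappa v)/(1+m)^2$. The $\nu^\ast_k$ term is the square of the affine wealth-dependent coefficient times $B^\ast$. I would sum the resulting recursions exactly as in Corollary~\ref{coro:os_SR_w=0} (which handles the same algebra with $\hat\Gamma,\hat\eta$). There $\hat\eta$ is replaced by $r/(1+m)$ and $\hat\Gamma$ by $r^2\kappa v/(1+m)^2$. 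That corollary gives
\begin{align*}
\SR^2=\frac{(\eta^T-r^T)^2}{T\big((\Gamma+\eta^2)^T-\eta^{2T}\big)}.
\end{align*}
Multiplying numerator and denominator by $(1+m)^{2T}/r^{2T}$ turns this into
\begin{align*}
\frac{\big((1+m)^T-1\big)^2}{T\big((1+\kappa v)^T-1\big)}.
\end{align*}

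The final step is to match the stated form $\kappa^{-T}\big((1+m)^T-1\big)^2/\big(T\big((1+v)^T-1\big)\big)$. This is the main obstacle, because $(1+\kappa v)^T-1$ and $\kappa^{T}\big((1+v)^T-1\big)$ do not agree in general. I would therefore recheck how $\kappa$ enters. The likely source is that the variance recursion $\Var[X_{k+1}]=\E[\Var(\cdot\mid X_k)]+\Var(\E[\cdot\mid X_k])$ attaches $\kappa$ to the whole second moment $\E[X_{k+1}^2]$-type factor $(1+v)$ in each period, rather than only to $v$. If so, a factor of $\kappa$ enters per period, giving $\kappa^T$ overall after normalizing by the $T$-period mean. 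To settle this I would carefully redo the second-moment recursion $\E[\hat X_{k+1}^2]$ with $B^\ast$ in place of the quadratic form and verify which of the two forms results.
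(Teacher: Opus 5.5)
Your computation is correct, and up to the last step it follows the paper's argument. The paper substitutes the two deterministic equivalents $\bmu^\top(\hat\bsigma+\rho\bar\Q)^{-1}\bmu\to m$ and $\bmu^\top(\hat\bsigma+\rho\bar\Q)^{-1}\bsigma(\hat\bsigma+\rho\bar\Q)^{-1}\bmu\to\kappa_\rho v$ directly into the closed form of Corollary~\ref{coro:os_SR_w=0}. You obtain the same $\eta^\ast=r/(1+m)$ and $\Gamma^\ast=r^2\kappa_\rho v/(1+m)^2$ from the recursions of Theorem~\ref{theorem:pram_converge_sig} via Sherman--Morrison. Either reading of the scaling $\Q_k=a_{k+1}\rho\bar\Q$, with $\hat a_{k+1}$ or with $a^\ast_{k+1}$, gives the same limit. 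Both routes yield
\[
\SR^2(\hat X_T)\xrightarrow{a.s}\frac{\big((1+m)^T-1\big)^2}{T\big((1+\kappa_\rho v)^T-1\big)} .
\]

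The mismatch you ran into is not a gap in your argument: the displayed formula in the statement is wrong for $T\ge 2$. The paper's proof reaches it only by pulling $\kappa_\rho$ out of $(\kappa_\rho v+1)^T-1$. That step is legitimate only when $T=1$, where it recovers Theorem~\ref{theorem:SR_converge_sig_single} with $\w=\BFzero$, or when $\kappa_\rho=1$. For $c>0$ one has $\tilde s(\rho)<0$, hence $\kappa_\rho>1$, and for $T\ge2$ and $v>0$,
\[
(1+\kappa_\rho v)^T-1=\sum_{j=1}^{T}\binom{T}{j}\kappa_\rho^{j}v^{j}<\kappa_\rho^{T}\big((1+v)^T-1\big).
\]
So the stated expression strictly underestimates the limit. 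For instance, take $\bsigma=\bar\Q=\I$, $c=1/2$, $\|\bmu\|^2=1/4$ and $\rho\to0$, which give $m=1/2$, $v=1$ and $\kappa_\rho=2$. For $T=2$ the true limit is $5/16$, while the corollary gives $5/(8\sqrt6)\approx0.255$.

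Your proposed repair would also fail. In
\[
\hat g_k=r^2\hat a_{k+1}^2\,\bmu^\top\hat\D_k^{-1}\bsigma\hat\D_k^{-1}\bmu+\hat\eta_k^2,
\]
only the first term has $\bsigma$ sandwiched between two resolvents. That term comes from $\E[\Var(\hat X_{k+1}\mid\hat X_k)]$, and only it produces the derivative-type limit that brings in $\kappa_\rho$. The term $\hat\eta_k^2$ comes from $\Var(\E[\hat X_{k+1}\mid\hat X_k])$, involves only $\bmu^\top\hat\D_k^{-1}\bmu$, and carries no $\kappa_\rho$. Likewise $\hat\nu_k=\hat\Gamma\,\hat\eta^{2k}\big((X_{\tg}-r^T)/(\hat\eta^T-r^T)\big)^2$ carries $\kappa_\rho$ exactly once. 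Redoing the second-moment recursion therefore reproduces $(1+\kappa_\rho v)^T-1$, not a per-period factor $\kappa_\rho(1+v)$.

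The correct conclusion is to keep your formula,
\[
\SR_\infty(T,\rho,\BFzero)=\frac{(1+m)^T-1}{\sqrt{T\big((1+\kappa_\rho v)^T-1\big)}},
\]
and to note that the corollary as stated holds only for $T=1$. The same applies to the ``SR scalar $\kappa_\rho^{-T/2}$ times pseudo SR'' factorization the paper builds on it.
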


We now compare this result with that in Corollary~\ref{coro:multi_SR_mu_w_0}. 
To simplify the the comparison, we choose $\bar\Q = \bsigma$. 
In the low-dimensional setting where $p/n \to 0$, we obtain $s(\rho) = 0$ and $\tilde s(\rho) = 0$. 
In this case, we can verify that regardless of whether the estimation error arises from estimating the expected return or the covariance matrix, the asymptotic Sharpe ratio is identical, namely
\begin{equation*}
    \SR_\infty^{low}(T, \rho, \BFzero) 
    = \frac{T^{-1/2}((\bmu^\top \bsigma^{-1} \bmu + e_{\rho,\bmu})^T - e_{\rho,\bmu}^T)}{\sqrt{ ( \bmu^\top \bsigma^{-1} \bmu + e_{\rho,\bmu}^2)^T - e_{\rho,\bmu}^{2T} }}\,,
\end{equation*}
where $e_{\rho,\bmu} = \rho + 1$ in the low-dimensional setting. 
In Corollary~\ref{coro:multi_SR_mu_w_0}, we showed the optimal $\rho$ to maximize the asymptotic Sharpe ratio is 0 and the optimal value is indeed the maximum Sharpe in (\ref{SR_max}). 
However, under the high-dimensional regime where $p/n \to c > 0$, the asymptotic Sharpe ratio is substantially diminished by the estimation error in the covariance matrix. 
To be exact, the diminishing effect arises from the SR scalar 
$\kappa_\rho^{-T/2}$
, which is also our main motivation to apply a positive regularization. 

\begin{figure}[htbp]
    \centering
    \begin{subfigure}[t]{0.45\textwidth}
        \centering
        \includegraphics[width=\textwidth]{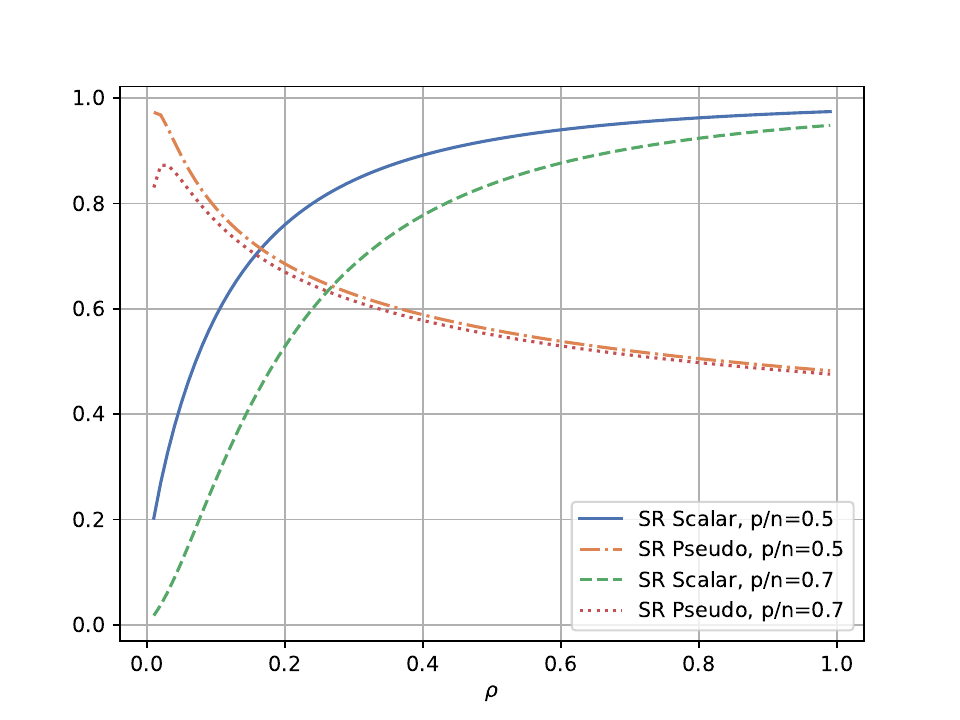}
        \caption{SR scalar and pseudo SR}
        \label{fig:multi_SR_sig_scalar_pseudo}
    \end{subfigure}
    \hfill
    \begin{subfigure}[t]{0.45\textwidth}
        \centering
        \includegraphics[width=\textwidth]{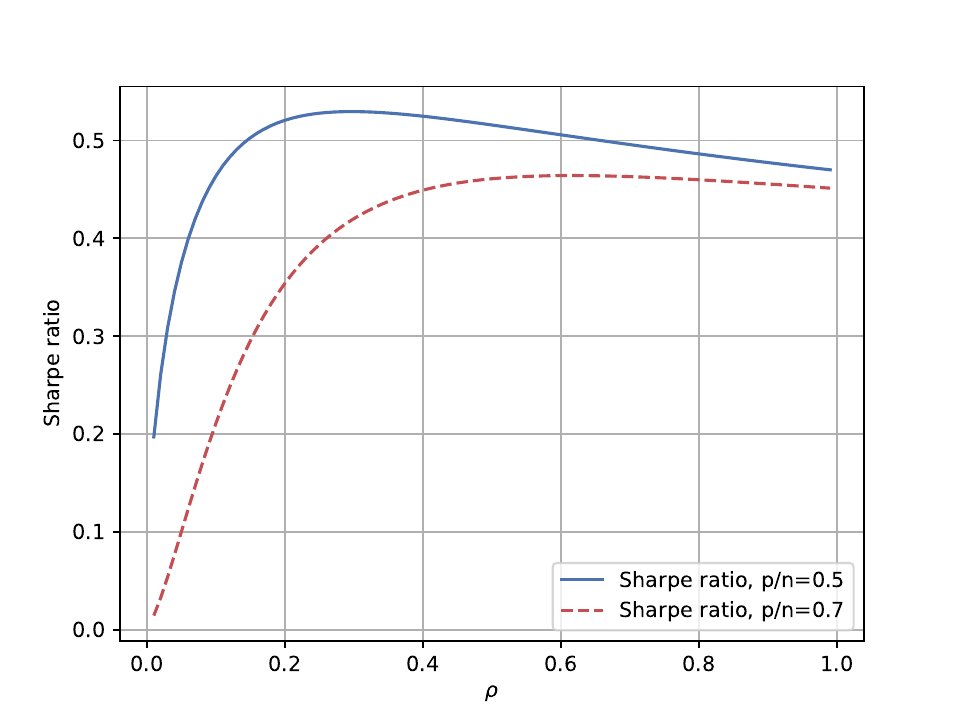}
        \caption{Sharpe ratio $\SR_\infty(T,\rho,\BFzero)$}
        \label{fig:multi_SR_sig_product}
    \end{subfigure}

    \caption{$\SR_{\infty}$ versus regularization $\rho$ with estimation error in $\hat{\bsigma}$ when $T>1$.}
    \label{fig:multi_SR_sig}

    \caption*{\footnotesize
    Notes. We set $T=4$. The SR scalar $\kappa_\rho$ increases and the pseudo SR decreases as $\rho$ grows. Hence,
    $\SR_\infty=\kappa_\rho \cdot \text{pseudo SR}$ is maximized at $\rho^\ast>0$.}
\end{figure}

\begin{figure}[htbp]
    \centering
    \begin{subfigure}[t]{0.45\textwidth}
        \centering
        \includegraphics[width=\textwidth]{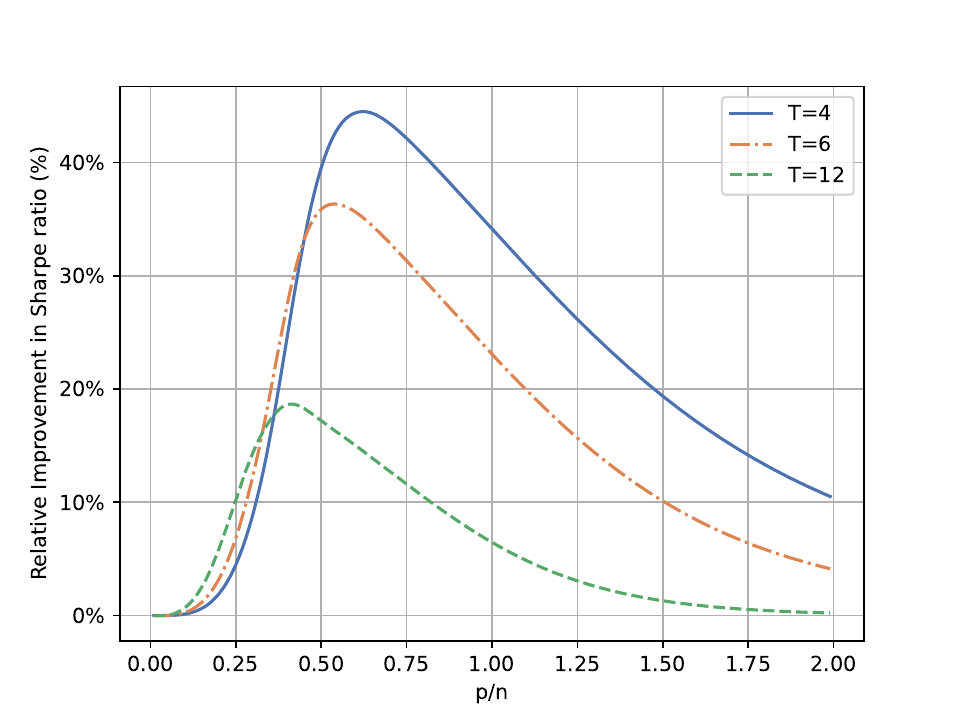}
        \caption{Relative improvement versus $p/n$}
        \label{fig:multi_SR_sig_T}
    \end{subfigure}
    \hfill
    \begin{subfigure}[t]{0.45\textwidth}
        \centering
        \includegraphics[width=\textwidth]{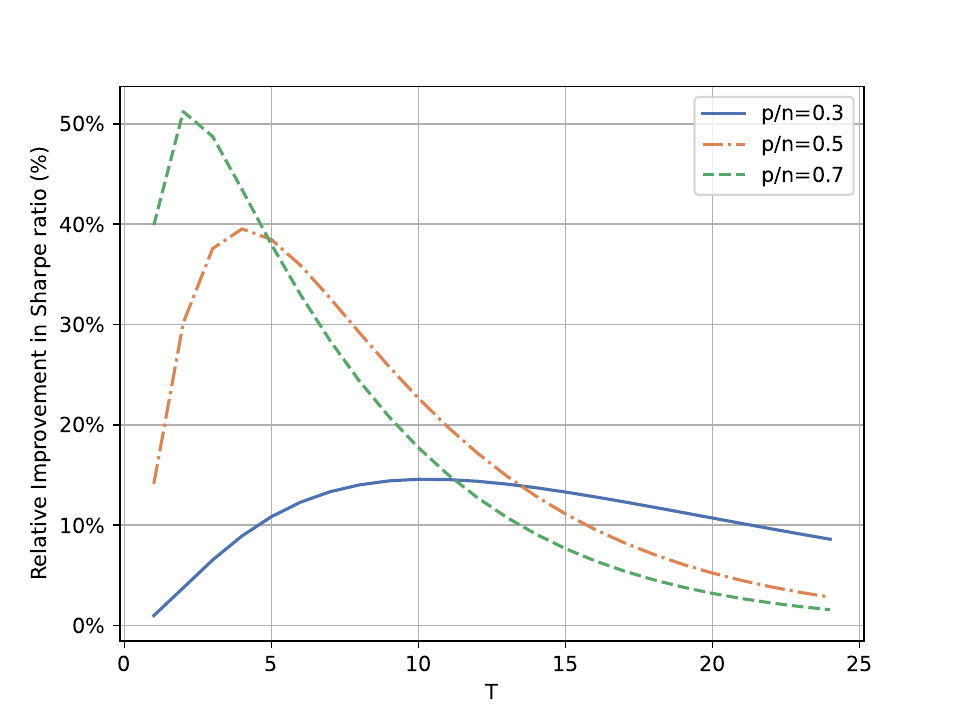}
        \caption{Relative improvement versus $T$}
        \label{fig:multi_SR_sig_rel_imp}
    \end{subfigure}

    \caption{The effect of $T$ and $p/n$ on the asymptotic Sharpe ratio with estimation error in $\hat{\bsigma}$.}
    \label{fig_multi_SR_sig_T_effect}

    \caption*{\footnotesize
    Notes. The relative improvement is defined as $\frac{\SR_\infty(T,\rho^\ast,\BFzero)-\SR_\infty(T,0,\BFzero)} {\SR_{\max}}$, 
    where $\rho^\ast$ is computed numerically.}
\end{figure}

We illustrate this numerically by conducting the same toy simulation as previous subsections. 
The curves of asymptotic Sharpe ratio versus $\rho$ are shown in Figure~\ref{fig:multi_SR_sig}, where we set $T=4$ and $c \in \{0.5, 0.7\}$. 
This figure clearly demonstrates that the optimal $\rho^\ast$ could be a proper positive value, in order to maximize the asymptotic SR value. 
Similar to the single-period case, a non-zero $\rho$ helps balance the SR scalar and the psuedo SR,  
so that $\SR_\infty(T, \rho, \BFzero)$ is always optimized at some $\rho^\ast >0$, which depends on both $p/n$ and $T$.

Finally, we examine how the investment horizon $T$ and $p/n$ influence the Sharpe ratio. 
We consider the relative improvement of the RRMV portfolio, defined as $(\SR_\infty(T, \rho^\ast, 0) - \SR_\infty(T, 0, 0))/ \SR_{\max}$, where $\SR_{\max}$ is the maximum Sharpe ratio defined in \eqref{SR_max}. 
Figure~\ref{fig_multi_SR_sig_T_effect} shows that the relative improvement in Sharpe ratio is bounded, and achieves its maximum at a finite $T>0$. As $T$ increases it converges to zero, because the maximum Sharpe ratio as the denominator grows exponentially in $T$. 
This pattern suggests that selecting a moderate investment horizon yields the greatest relative improvement. %while limiting the exposure to uncertainty from time-varying return distributions. 
The relative improvement is also predominant and robust across different choices of $p/n$. Comparing to the small relative improvement we observed in Figure~\ref{fig_multi_SR_mu_w_0} when we have estimation error in $\hat\bmu$, the regularization $\Q$ is generally much more effective in protecting against estimation error in $\hat\bsigma$.

\section{Empirical Evaluation} \label{se_numerical_evaluation} 

In this section, we evaluate and compare the out-of-sample performance of the proposed RRMV portfolio against a range of benchmark portfolio policies, using both  simulation experiments and empirical tests on real market data. 
We consider two industry portfolio datasets---the \textit{ten industry portfolios} (``10-Ind'') and the \textit{forty-eight industry portfolios} (``48-Ind'')---which represent the U.S.\ equity market at different levels of aggregation. 
Both datasets are obtained from Ken French's data library.\endnote{The industry portfolio datasets are available at \url{https://mba.tuck.dartmouth.edu/pages/faculty/ken.french/data_library.html}.} 
In addition, we study a stock-level dataset comprising constituents of the S\&P~500 index, obtained from the Center for Research in Security Prices (CRSP). 
After excluding stocks with missing observations, the resulting investment universe contains $p=280$ stocks. 
The sample period spans January~2000 to December~2024. Throughout our analysis, the risk-free return is proxied by the one-month U.S.\ Treasury bill yield, also sourced from CRSP.

\subsection{Benchmark Portfolios and Implementation} \label{sse:implement}
We compare the proposed portfolios with a comprehensive set of benchmark portfolios that are widely used in the literature. 
Table~\ref{table:portfolio} lists all benchmark portfolios included in our analysis. 
These benchmarks can be grouped into three categories. The first category consists of \textit{static portfolio policies}, including the equally weighted portfolio (EW), as well as the classical mean-variance (MV) and global minimum-variance (GMV) portfolios. 
For the MV and GMV policies, portfolios are constructed using either the sample covariance matrix or the linear shrinkage estimator of \citet{ledoit2004well} to mitigate the impact of estimation error. 
We denote shrinkage-based policies by appending ``-SH'' to the policy name, where ``SH'' indicates the use of the shrinkage estimator. 
In addition, we consider the three-fund hybrid portfolio policy proposed by \citet{kanOptimalPortfolioChoice2007}, where its risk averse is set as 3. 

The second category consists of conventional dynamic MMV policies without reference regulation, including the multiperiod mean-variance (MMV) policy of \citet{LiNg:2000MF}. This class serves as the primary dynamic benchmark for evaluating the effect of reference regulation.

The third category comprises our proposed reference-regulated multiperiod mean-variance (RRMV) policies. 
The regulation term in the RRMV framework is specified through two sets of hyperparameters, $\Q_k$ and $\w_k$, for $k = 0,\ldots,T-1$. 
In our numerical experiments, we set $\Q_k = \Q = \rho \I$, where $\rho > 0$ is a tuning parameter, and take $\w_k \equiv \w_0$ as a predetermined reference portfolio constructed from in-sample returns. 
We allow the reference portfolio $\w_0$ to correspond to several static benchmark portfolios. 
Specifically, we consider the EW portfolio, the GMV-SH portfolio, and the zero portfolio (i.e., $\w_0=\BFzero$), as these references remain well defined and stable in high-dimensional settings. 
When $\w_0=\BFzero$, the regulation term reduces to a pure $\ell_2$ penalty, and we denote this specification by $\text{RRMV-}\ell_2$.

\begin{table}[htbp]
    \centering
    \caption{List of portfolio policies.}
    \label{table:portfolio}
    \small
    \setlength{\tabcolsep}{8pt}
    \renewcommand{\arraystretch}{1.1}
    \begin{tabularx}{\textwidth}{>{\raggedright\arraybackslash}X l}
        \hline
        Portfolio & Abbreviation \\
        \hline

        \multicolumn{2}{l}{\textit{Static portfolios}} \\
        Equal weighted portfolio ($1/N$) & EW \\
        Mean variance portfolio with sample covariance & MV \\
        Mean variance portfolio with shrinkage covariance & MV-SH \\
        Global minimum variance portfolio with sample covariance & GMV \\
        Global minimum portfolio with shrinkage covariance & GMV-SH \\
        Three-fund portfolio by \cite{kanOptimalPortfolioChoice2007} & KZ \\

        \multicolumn{2}{l}{\textit{MMV portfolios}} \\
        Using sample covariance & MMV \\
        Using shrinkage covariance & MMV-SH \\

        \multicolumn{2}{l}{\textit{RRMV portfolios $(\Q_k=\rho\I$ and $\Q_k\w_k=\w_0)$}} \\
        Using EW as reference weight $\w_0$ & $\RRMV$-EW \\
        Using GMV-SH as reference weight $\w_0$ & $\RRMV$-GMV-SH \\
        Using a zero-vector portfolio as reference weight $\w_0$ & $\RRMV$-$\ell_2$ \\
        \hline
    \end{tabularx}
\end{table}

For our empirical analysis, we adopt a rolling estimation scheme with a fixed in-sample size of $n = 120$ months. 
At the beginning of each month $t$, portfolio policies are estimated using return data from the most recent $n$ months. 
The sample size $n$ is an important aspect for the construction of a high-dimensional portfolio. 
The resulting portfolio is then implemented and evaluated out-of-sample at month $t+T$, where $T$ is the investment horizon.
When $T > 1$, we apply single-period portfolio policies in a buy-and-hold manner: the portfolio constructed at month $t$ is held fixed over the next $T$ months, and performance is assessed based on the terminal wealth at the end of the $T$-month holding period. 

The out-of-sample performances of different portfolios are evaluated based on three commonly-used criteria: the out-of-sample Risk (standard deviation), the out-of-sample Sharpe ratio and the average turnover of portfolio weights. 
In practice, portfolio turnover and the associated trading costs constitute an important component of overall portfolio performance \citep{novy2016taxonomy}. 
All metrics are reported at the monthly frequency to facilitate comparison across portfolio policies. 

Suppose the portfolios with investment horizon $T$ are applied to $p$ assets for $M$ times.
Let $X_T^{(m)}$ be the terminal wealth and $\w_t^{(m)}$ is the normalized portfolio weight on risky assets in the $m$-th experiment. 
The expected return is $\hat{\mu} = \frac{1}{T} \frac{1}{M} \sum_{m=1}^{M} (X_T^{(m)} - r_f^{T})$, and these criteria above are defined as follows: 
\begin{equation} \label{eq:os_metrics}
    \begin{aligned} 
        \text{Risk}&:=\hat{\sigma} = \Big(\frac{1}{T} \frac{1}{M-1} \sum_{m=1}^{M} (X_T^{(m)} -\hat{\mu})^2 \Big)^{1/2}\,,\\
        \text{Sharpe}&: ={\hat{\mu}}/{\hat{\sigma}}\,, \\
        \text{Turnover}&:= \frac{1}{M(T-1)} \sum_{m=1}^M \sum_{t=0}^{T-2} (\|\w_{t+1}^{(m)} - \w_t^{(m)} \|_1) \,.
    \end{aligned}
\end{equation} 

\subsection{Simulation Comparisons}
We evaluate the performance of our RRMV portfolios relative to several benchmarks using simulated data calibrated from the 48-industry portfolio returns. 
The simulations test both normal and heavy-tailed (Student-$t$) distributions for asset returns. 
%with 1,000 Monte Carlo replications conducted for each scenario. 
The estimation window $n$ varies between 60 and 120 months, and investment horizons $T$ range from 1 to 6 years. 
The complete results can be found in  Appendix~\ref{sec:simulation}. 

Simulation results show that RRMV portfolios outperform unregularized strategies in high dimensions and exhibit steadily increasing Sharpe ratios as $T$ grows. 
With different estimation windows, RRMV portfolios deliver the highest Sharpe ratios while maintaining significantly lower turnover than multiperiod MV strategies. 
We also plot the Sharpe ratio curves as a function of $\rho$. The advantage of applying a positive $\rho$ is more pronounced in the high-dimensional regime. 
Overall, all the observations match with our theoretical understanding very well and reveal clearly the benefits of RRMV portfolios in the multiperiod settings.

\subsection{Real Data Analysis} 
In this section, we present out-of-sample results based on real market data. 
We select the regularization parameter using a rolling validation procedure based solely on historical data. 
The validation stage consists of $\tau$ rolling runs. Specifically, the portfolio decision at time $t$ uses data from $t-n-\tau-T+1$ to $t-1$.
In each run, $\hat\bmu$ and $\hat\bsigma$ are estimated using a window of $n$ past observations, and the resulting portfolio rule is applied over the subsequent $T$ periods to compute a realized multiperiod return. 
The estimation and evaluation windows then move forward by one period, and this process is repeated $\tau$ times, ending immediately before the test date $t$. 
The parameter with the highest empirical validation Sharpe ratio is selected and applied over the subsequent out-of-sample horizon of length $T$. 
Particularly, our validation procedure is used to select the regularization parameter $\rho$ for $\Q = \rho \I$. 

\subsubsection{Investment on industry portfolios}

Tables~\ref{table:real_10_ind} and \ref{table:real_48_ind} report the Risk, Sharpe, and Turnover for each benchmark portfolio under different investment horizons ($T=1$ and $T=6$). 
When we have $T=1$, MMV (MMV-SH) portfolios are equivalent to the MV  (MV-SH) portfolios. 
%o, we only keep the results of MV portfolio. 
From these tables, we observe that no static portfolio consistently dominates others. 
Specifically, the equally weighted (EW) strategy outperforms other static benchmarks for the ``10-Ind'' data, whereas the GMV-SH portfolio performs better for the ``48-Ind'' data. 
Notably, the RRMV policy achieves the highest Sharpe across all benchmarks.
Additionally, the RRMV policies with EW reference or zero reference exhibit relatively low turnover.
When we set $T=6$, we observe that the RRMV portfolios achieve even higher Sharpe ratios. 
In comparison, the MMV portfolios perform worse than the MV portfolios 
% with $T=1$
, which can be explained by the accumulation of estimation errors. 
As discussed in Section~\ref{sse:comparison_policies}, the RRMV policy can take advantage of the multiple investment horizon.

\begin{table}[htbp]
    \centering
    \caption{Out-of-sample performance on 10-industry portfolios.}
    \label{table:real_10_ind}
    \small
    \renewcommand{\arraystretch}{1.1}
    \setlength{\tabcolsep}{4pt}
    \begin{tabularx}{\textwidth}{l *{6}{>{\centering\arraybackslash}X}}
        \hline
        & \multicolumn{3}{c}{Investment period $T=1$}
        & \multicolumn{3}{c}{Investment period $T=6$} \\
        \cline{2-4} \cline{5-7}
        Portfolio & Risk & Sharpe & Turnover & Risk & Sharpe & Turnover \\
        \hline
        KZ              & 0.154 & 0.119 & 1.608 & 0.149 & 0.106 & 1.629 \\
        EW              & 0.043 & 0.230 & 0.000 & 0.034 & 0.305 & 0.000 \\
        MV              & \textbf{0.029} & 0.146 & 0.322 & \textbf{0.027} & 0.131 & 0.325 \\
        MV-SH           & \textbf{0.028} & 0.171 & 0.257 & \textbf{0.026} & 0.162 & 0.256 \\
        GMV             & 0.035 & 0.204 & 0.134 & 0.029 & 0.237 & 0.135 \\
        GMV-SH          & 0.035 & 0.218 & 0.086 & 0.028 & 0.260 & 0.086 \\
        MMV             & -- & -- & -- & 0.057 & 0.084 & 2.610 \\
        MMV-SH          & -- & -- & -- & 0.053 & 0.104 & 1.837 \\
        RRMV-EW         & \textbf{0.034} & \textbf{0.278} & 0.164 & \textbf{0.026} & \textbf{0.362} & 0.340 \\
        RRMV-GMV-SH     & 0.035 & 0.235 & 0.181 & 0.028 & 0.283 & 0.360 \\
        RRMV-$\ell_2$   & \textbf{0.032} & \textbf{0.266} & 0.147 & \textbf{0.027} & \textbf{0.322} & 0.350 \\
        \hline
    \end{tabularx}

    \caption*{\footnotesize
    Notes. Estimation uses a rolling window of length $n=120$. For RRMV portfolios, the regularization matrix is $Q=\rho I$, where $\rho$ is chosen by validation. MMV and MMV-SH are identical to MV and MV-SH, respectively, when $T=1$. The two best Sharpe values and four best Risk values are highlighted.}
\end{table}

\begin{table}[htbp]
    \centering
    \caption{Out-of-sample performance on 48-industry portfolios.}
    \label{table:real_48_ind}
    \small
    \renewcommand{\arraystretch}{1.1}
    \setlength{\tabcolsep}{4pt}
    \begin{tabularx}{\textwidth}{l *{6}{>{\centering\arraybackslash}X}}
        \hline
        & \multicolumn{3}{c}{Investment period $T=1$}
        & \multicolumn{3}{c}{Investment period $T=6$} \\
        \cline{2-4} \cline{5-7}
        Portfolio & Risk & Sharpe ratio & Turnover
        & Risk & Sharpe ratio & Turnover \\
        \hline
        KZ              & 0.197 & 0.143 & 5.210 & 0.210 & 0.111 & 5.308 \\
        EW              & 0.047 & 0.200 & 0.000 & 0.040 & 0.254 & 0.000 \\
        MV              & \textbf{0.022} & 0.148 & 0.600 & \textbf{0.020} & 0.128 & 0.611 \\
        MV-SH           & \textbf{0.019} & 0.171 & 0.360 & \textbf{0.018} & 0.163 & 0.365 \\
        GMV             & 0.040 & 0.219 & 0.737 & 0.036 & 0.236 & 0.742 \\
        GMV-SH          & 0.035 & 0.261 & 0.321 & 0.029 & 0.295 & 0.322 \\
        MMV             & -- & -- & -- & 0.167 & 0.019 & 18.696 \\
        MMV-SH          & -- & -- & -- & 0.072 & 0.075 & 10.074 \\
        RRMV-EW         & \textbf{0.030} & \textbf{0.275} & 0.268 & \textbf{0.022} & \textbf{0.312} & 0.400 \\
        RRMV-GMV-SH     & 0.033 & \textbf{0.267} & 0.404 & 0.026 & \textbf{0.329} & 0.717 \\
        RRMV-$\ell_2$   & \textbf{0.027} & 0.235 & 0.226 & \textbf{0.022} & 0.273 & 0.406 \\
        \hline
    \end{tabularx}

    \captionsetup{justification=raggedright,singlelinecheck=false}
    \caption*{\footnotesize
    Notes. Same as Table~\ref{table:real_10_ind}.}
\end{table}

\subsubsection{Investment on S\&P 500 constituents}

We next report the out-of-sample performance of all benchmark portfolios using individual stocks from the S\&P 500 index. 
The cross-sectional dimension is $p=280$, which exceeds the estimation sample size $n=120$. 
To account for this, we consider two empirical settings. 
In the first setting, we randomly select subsets of 100 stocks from the full universe of 280 constituents. 
This design allows us to evaluate portfolio performance in a moderately high-dimensional environment where the dimensionality ratio $p/n$ is close to, but still less than, one. 
The second setting uses all 280 stocks, representing a more challenging high-dimensional regime with only limited samples.

Table~\ref{table:real_sp500_sub} reports the results for the first setting, where $p/n \approx 1$. 
The results show that benchmark policies based on the sample covariance matrix (MV, KZ, GMV, and MMV) all perform poorly out-of-sample, as evidenced by low Sharpe ratios and high turnover rates. 
This underperformance becomes more pronounced when the investment horizon is extended to $T=6$. 
In contrast, the equally weighted policy and benchmark portfolios constructed using shrinkage covariance estimators exhibit greater robustness in this setting. 
Consistent with the findings in Appendix~\ref{sec:simulation}, the RRMV-EW policy 
achieves the highest Sharpe ratio among all competing portfolio strategies. 
Extending the investment horizon for RRMV portfolios leads to higher Sharpe ratios, highlighting the advantage of feedback-based multiperiod portfolio policies.

Table~\ref{table:real_sp500_all} presents the out-of-sample performance using all 280 S\&P 500 constituent stocks. 
In this case, the curse of high dimensionality becomes more severe, with $p/n = 280/120 > 1$. 
As a result, the sample covariance matrix is singular and cannot be inverted, rendering it unsuitable for portfolio rules to rely on sample covariance. 
Consequently, the benchmark set is restricted to portfolios based on shrinkage covariance estimators and the equally weighted (EW) policy. 
As shown in Table~\ref{table:real_sp500_all}, when $T=6$, the GMV portfolio constructed using the shrinkage covariance estimator (GMV-SH) performs well, achieving a Sharpe ratio of 0.323. 
Nevertheless, our RRMV policy  with the EW reference portfolio continues to outperform all benchmark strategies while also exhibiting a lower turnover rate than GMV-SH. 
Although our approach also utilizes the sample covariance matrix, the regularization term in the RRMV formulation effectively induces shrinkage in both the covariance matrix and the expected return vector. 
This joint regularization substantially improves robustness to estimation error and results in superior out-of-sample Sharpe ratios. 

\begin{table}[htbp]
    \centering
    \caption{Out-of-sample performance on 100 stocks selected from the S\&P 500 constituents.}
    \label{table:real_sp500_sub}
    \small
    \renewcommand{\arraystretch}{1.1}
    \setlength{\tabcolsep}{4pt}
    \begin{tabularx}{\textwidth}{l *{6}{>{\centering\arraybackslash}X}}
        \hline
        & \multicolumn{3}{c}{Investment period $T=1$}
        & \multicolumn{3}{c}{Investment period $T=6$} \\
        \cline{2-4} \cline{5-7}
        Portfolio & Risk & Sharpe ratio & Turnover
        & Risk & Sharpe ratio & Turnover \\
        \hline
        KZ              & 0.112 & 0.034  & 4.850 & 0.104  & -0.039 & 4.896 \\
        EW              & 0.044 & 0.250  & 0.000 & 0.037  & \textbf{0.331} & 0.000 \\
        MV              & \textbf{0.030} & -0.008 & 1.402 & 0.032  & -0.062 & 1.421 \\
        MV-SH           & \textbf{0.016} & 0.112  & 0.249 & \textbf{0.017} & 0.104 & 0.251 \\
        GMV             & 0.063 & 0.104  & 2.694 & 0.064  & 0.118 & 2.667 \\
        GMV-SH          & 0.035 & 0.241  & 0.317 & 0.034  & 0.237 & 0.311 \\
        MMV             & -- & -- & -- & 15.161 & 0.019 & 490.964 \\
        MMV-SH          & -- & -- & -- & 0.087  & 0.061 & 8.647 \\
        RRMV-EW         & \textbf{0.023} & \textbf{0.312} & 0.214 & \textbf{0.019} & \textbf{0.349} & 0.520 \\
        RRMV-GMV-SH     & 0.033 & 0.247  & 0.398 & \textbf{0.030} & 0.274 & 0.939 \\
        RRMV-$\ell_2$   & \textbf{0.019} & \textbf{0.296} & 0.184 & \textbf{0.019} & 0.278 & 0.481 \\
        \hline
    \end{tabularx}

    \captionsetup{justification=raggedright,singlelinecheck=false}
    \caption*{\footnotesize
    Notes. Estimation uses a rolling window of length $n=120$ with cross-sectional dimension $p=100$. For RRMV portfolios, the regularization matrix is $Q=\rho I$, where $\rho$ is chosen by validation. The two best Sharpe values and four best Risk values are highlighted.}
\end{table}

\begin{table}[htbp]
    \centering
    \caption{Out-of-sample performance on all S\&P 500 constituents.}
    \label{table:real_sp500_all}
    \small
    \renewcommand{\arraystretch}{1.1}
    \setlength{\tabcolsep}{4pt}
    \begin{tabularx}{\textwidth}{l *{6}{>{\centering\arraybackslash}X}}
        \hline
        & \multicolumn{3}{c}{Investment period $T=1$}
        & \multicolumn{3}{c}{Investment period $T=6$} \\
        \cline{2-4} \cline{5-7}
        Portfolio & Risk & Sharpe ratio & Turnover
        & Risk & Sharpe ratio & Turnover \\
        \hline
        EW              & 0.047 & 0.241 & 0.000 & 0.038 & 0.309 & 0.000 \\
        MV-SH           & \textbf{0.013} & 0.176 & 0.263 & \textbf{0.012} & 0.192 & 0.271 \\
        GMV-SH          & 0.033 & \textbf{0.323} & 0.494 & 0.033 & 0.323 & 0.490 \\
        MMV-SH          & -- & -- & -- & 0.095 & 0.127 & 17.258 \\
        RRMV-EW         & \textbf{0.023} & 0.263 & 0.167 & \textbf{0.015} & \textbf{0.467} & 0.398 \\
        RRMV-GMV-SH     & \textbf{0.031} & \textbf{0.327} & 0.525 & \textbf{0.029} & \textbf{0.413} & 0.882 \\
        RRMV-$\ell_2$   & \textbf{0.019} & 0.247 & 0.157 & \textbf{0.015} & 0.364 & 0.350 \\
        \hline
    \end{tabularx}

    \captionsetup{justification=raggedright,singlelinecheck=false}
    \caption*{\footnotesize
    Notes. Same as Table~\ref{table:real_sp500_sub}, except that the cross-sectional dimension is $p=280$.}
\end{table}

\subsubsection{Index enhancement portfolios}\label{subsec_index_enhance}

Finally, we present an application of the RRMV model that exploits its flexibility in specifying the reference policy. 
Since the reference portfolio in the RRMV framework can be chosen arbitrarily, we adopt an index-tracking policy (see \citet{jansen2002optimal, benidis2017sparse, strub2018optimal}) as the reference in order to incorporate information embedded in a benchmark index within the mean-variance framework. The corresponding index-tracking policy is denoted by IT. 
We refer to this approach as the \emph{index enhancement} policy and denote it by RRMV-IT.

For practical implementation, rather than directly using the index weights, we construct an index-tracking portfolio using the same set of assets as in the investment universe. 
This ensures that both the RRMV optimal policy and the reference portfolio are defined over an identical asset set. 
Specifically, we obtain the index-tracking policy by minimizing the empirical tracking error $\frac{1}{n}\sum_{t=1}^{n} \big| \R_t^\top \w - r_t \big|$, 
where $\R_t \in \mR^p$ denotes the return vector of the constituent stocks in the investment universe at time $t$, $r_t \in \mR$ is the return of the index, and $n$ is the length of the historical sample. 
We further impose the constraints $\w \ge 0$ and $\e^{\top} \w = 1$. 
We use the monthly returns of the S\&P 500 as the index returns and consider the 10-Ind and 48-Ind datasets as the asset pools. 
The optimal solution of the above minimization is then used as the reference allocation in the RRMV framework, and all remaining experimental settings follow the previous sections.

Table~\ref{table:real_index_enhance} reports detailed out-of-sample performance metrics for the evaluated portfolio policies. 
We employ the same validation procedure to select the regularization parameter $\rho$ for $\Q = \rho \I$. 
As shown in Table~\ref{table:real_index_enhance}, the index enhancement policy (RRMV-IT) outperform all the benchmarks, including RRMV using other references, when $T=1$ and $T=6$.  
Consistent with earlier results, we find that RRMV-IT  portfolios exhibit the highest Sharpe ratio and lower turnover rates, leading to lower transaction costs in practice.  
This section provides another meaningful candidate as the reference of the RRMV portfolio. 

\begin{table}[htbp]
    \centering
    \caption{Comparison of RRMV-IT with multiperiod portfolio performance on industry portfolios.}
    \label{table:real_index_enhance}
    \small
    \renewcommand{\arraystretch}{1.1}
    \setlength{\tabcolsep}{4pt}
    \begin{tabularx}{\textwidth}{l *{6}{>{\centering\arraybackslash}X}}
        \hline
        & \multicolumn{3}{c}{10-Ind}
        & \multicolumn{3}{c}{48-Ind} \\
        \cline{2-4} \cline{5-7}
        Portfolio & Risk & Sharpe ratio & Turnover
        & Risk & Sharpe ratio & Turnover \\
        \hline

        \multicolumn{7}{l}{\textbf{Investment period $T=1$}} \\
        IT       & 0.041 & 0.243 & 0.021 & 0.041 & 0.254 & 0.057 \\
        EW       & 0.043 & 0.230 & 0.000 & 0.047 & 0.200 & 0.000 \\
        MV-SH    & \textbf{0.028} & 0.171 & 0.257 & \textbf{0.019} & 0.171 & 0.360 \\
        RRMV-EW  & 0.034 & 0.278 & 0.164 & 0.030 & 0.275 & 0.268 \\
        RRMV-IT  & 0.034 & 0.288 & 0.167 & 0.028 & \textbf{0.336} & 0.261 \\

        \multicolumn{7}{l}{\textbf{Investment period $T=6$}} \\
        IT       & 0.033 & 0.326 & 0.021 & 0.033 & \textbf{0.336} & 0.057 \\
        EW       & 0.034 & 0.305 & 0.000 & 0.040 & 0.254 & 0.000 \\
        MV-SH    & \textbf{0.026} & 0.162 & 0.256 & \textbf{0.018} & 0.163 & 0.365 \\
        MMV-SH   & 0.053 & 0.104 & 1.837 & 0.072 & 0.075 & 10.074 \\
        RRMV-EW  & \textbf{0.026} & \textbf{0.362} & 0.340 & \textbf{0.022} & 0.312 & 0.400 \\
        RRMV-IT  & \textbf{0.026} & \textbf{0.383} & 0.339 & \textbf{0.021} & \textbf{0.399} & 0.422 \\
        \hline
    \end{tabularx}

    \captionsetup{justification=raggedright,singlelinecheck=false}
    \caption*{\footnotesize
    Notes. The estimation window has length $n=120$. The regularization matrix is $Q=\rho I$, where $\rho$ is selected using the same validation procedure. The results for MV-SH, EW, MMV-SH, and RRMV-EW are copied from the previous tables for comparison. The two best Sharpe values and four best Risk values are highlighted.}
\end{table}

\section{Conclusion}\label{sec:Conclusion}
In this paper, we investigate the multiperiod mean–variance portfolio optimization. 
As shown in the previous literature and our derivations, incorporating feedback policies can achieve Sharpe ratios higher than static allocation rules. 
Nevertheless, estimation error remains a central practical difficulty, as its effect is harder to analyze given that a sequence of future decisions is influenced rather than a single allocation. 
The issue is even more serious in large portfolios where the number of assets is not small relative to the available data. 
In order to protect portfolio performance against estimation error, we propose the reference-regulated mean–variance (RRMV) portfolio optimization model. The corresponding algorithm based on dynamic programming techniques for solving this problem is carefully presented.

To understand the role of reference regulation in high-dimensional portfolio allocation in multiperiod settings, we provide a general framework to analyze the asymptotic out-of-sample Sharpe ratio in high-dimensional environments. 
Our theoretical results show that regularization and the use of a reference portfolio can improve out-of-sample Sharpe ratios by mitigating the adversarial impact of estimation errors in the mean vector and the covariance matrix. 
Specifically, when $T>1$ and $p/n > 0$, we unravel a new phenomenon of cross regularization: $\w$ can be used to regularize $\hat\bsigma$ and $\Q$ can be used to regularize $\hat\bmu$, which expands our conventional understanding for the single-period portfolio allocation where $\Q$ regularizes $\hat\bsigma$ and $\w$ regularizes $\hat\bmu$. We also explicitly present the effect of dimensionality $p/n$, optimization horizon $T$ and regularization terms to the asymptotic Sharpe ratio. This could lead to new insights for portfolio managers about how to impose a more proper regularization and how to select a meaningful reference policy. In addition, to our best knowledge, we are the first to present a rigorous result on how the number of periods affects the Sharpe ratio with the existence of  estimation errors.
 
Extensive simulation and empirical evidence further demonstrate that the proposed framework can effectively tackle the estimation error and lead to more reliable portfolio performance. 
The focus of the present paper is the proposed multiperiod RRMV portfolios, through which we reveal the role of regularization and reference portfolios in improving out-of-sample performance. 
Extending the framework to allow for time-varying return moments, alternative estimators, and other types of regularization or optimization settings is left for our future research. 

% \THEEndNotes
\begingroup \parindent 0pt \parskip 0.0ex \def\enotesize{\normalsize} \theendnotes \endgroup

% References here (outcomment the appropriate case)
\bibliographystyle{agsm}
\bibliography{dynamic+mv}

@article{Lassance2025JFQA,
	author = {Lassance, Nathan and Vanderveken, Rodolphe and Vrins, Fr{\'e}d{\'e}ric},
	edition = {2025/12/04},
	isbn = {0022-1090},
	journal = {Journal of Financial and Quantitative Analysis},
	pages = {1-41},
	publisher = {Cambridge University Press},
	title = {Optimal Portfolio Size Under Parameter Uncertainty},
	year = {2025},
    }

@article{DeMiguel-JFQA-2015,
	author = {DeMiguel, Victor and Mart{\'\i}n-Utrera, Alberto and Nogales, Francisco J.},
	edition = {2016/01/20},
	isbn = {0022-1090},
	journal = {Journal of Financial and Quantitative Analysis},
	number = {6},
	pages = {1443-1471},
	publisher = {Cambridge University Press},
	title = {Parameter Uncertainty in Multiperiod Portfolio Optimization with Transaction Costs},
	volume = {50},
	year = {2015},}

@article{Ao2019,
	author = {Ao, Mengmeng and Yingying, Li and Zheng, Xinghua},
	journal = {The Review of Financial Studies},
	month = jul,
	number = {7},
	pages = {2890--2919},
	title = {Approaching mean-variance efficiency for large portfolios},
	urldate = {2025-02-14},
	volume = {32},
	year = 2019}

@article{bai2009enhancement,
	author = {Bai, Zhidong and Liu, Huixia and Wong, Wing-Keung},
	journal = {Mathematical Finance: An International Journal of Mathematics, Statistics and Financial Economics},
	number = {4},
	pages = {639--667},
	publisher = {Wiley Online Library},
	title = {Enhancement of the applicability of {{Markowitz}}'s portfolio optimization by utilizing random matrix theory},
	volume = {19},
	year = 2009}

@article{basak2010dynamic,
	author = {Basak, Suleyman and Chabakauri, Georgy},
	journal = {The Review of Financial Studies},
	number = {8},
	pages = {2970--3016},
	publisher = {Oxford University Press},
	title = {Dynamic mean-variance asset allocation},
	volume = {23},
	year = 2010}

@article{benidis2017sparse,
	author = {Benidis, Konstantinos and Feng, Yiyong and Palomar, Daniel P},
	journal = {IEEE Transactions on signal processing},
	number = {1},
	pages = {155--170},
	publisher = {IEEE},
	title = {Sparse portfolios for high-dimensional financial index tracking},
	volume = {66},
	year = 2017}

@article{BertsimasGuptaPaschalidis2012,
	author = {Bertsimas, Dimitris and Gupta, Vishal and Paschalidis, Ioannis Ch.},
	journal = {Operations Research},
	number = {6},
	pages = {1389--1403},
	title = {Inverse optimization: a new perspective on the black-litterman model},
	volume = {60},
	year = 2012}

@article{bickel2008regularized,
	author = {Bickel, Peter J and Levina, Elizaveta},
	journal = {The Annals of Statistics},
	pages = {199--227},
	publisher = {JSTOR},
	title = {Regularized estimation of large covariance matrices},
	year = 2008}

@article{BieleckiJinZhou,
	author = {Bielecki, Tomasz R. and Jin, Hanqing and Pliska, Stanley R. and Zhou, Xun Yu},
	journal = {Mathematical Finance},
	number = {2},
	pages = {213--244},
	title = {Continuous-time mean-variance portfolio selection with bankruptcy prohibition},
	volume = {15},
	year = 2005}

@article{bjork2014mean,
	author = {Bj{\"o}rk, Tomas and Murgoci, Agatha and Zhou, Xun Yu},
	journal = {Mathematical Finance: An International Journal of Mathematics, Statistics and Financial Economics},
	number = {1},
	pages = {1--24},
	publisher = {Wiley Online Library},
	title = {Mean--variance portfolio optimization with state-dependent risk aversion},
	volume = {24},
	year = 2014}

@article{BlackLitterman1991,
	author = {Black, F. and Litterman, R. B.},
	journal = {Journal of Fixed Income},
	pages = {7--18},
	title = {Asset allocation: {{Combining}} investor views with market equilibrium},
	volume = {1},
	year = 1991}

@article{BlanchetChenZhou:2022,
	author = {Blanchet, J. and Chen, L. and Zhou, X. Y.},
	journal = {Management Science},
	number = {9},
	pages = {6382--6410},
	title = {Distributionally robust mean-variance portfolio selection with {{Wasserstein}} distances},
	volume = {68},
	year = 2022}

@incollection{BRANDT2010-book,
	address = {San Diego},
	author = {Brandt, Michael W.},
	booktitle = {Handbook of financial econometrics: {{Tools}} and techniques},
	editor = {{A{\"I}T-SAHALIA}, {\relax YACINE} and HANSEN, LARS PETER},
	pages = {269--336},
	publisher = {North-Holland},
	series = {Handbooks in finance},
	title = {{{CHAPTER}} 5 - portfolio choice problems},
	volume = {1},
	year = 2010}

@article{cai2011adaptive,
	author = {Cai, Tony and Liu, Weidong},
	journal = {Journal of the American Statistical Association},
	number = {494},
	pages = {672--684},
	publisher = {Taylor \& Francis},
	title = {Adaptive thresholding for sparse covariance matrix estimation},
	volume = {106},
	year = 2011}

@article{cai2020high,
	author = {Cai, T Tony and Hu, Jianchang and Li, Yingying and Zheng, Xinghua},
	journal = {Journal of Econometrics},
	number = {2},
	pages = {482--494},
	publisher = {Elsevier},
	title = {High-dimensional minimum variance portfolio estimation based on high-frequency data},
	volume = {214},
	year = 2020}

@article{ChiuZhou2011,
	author = {Chiu, C. H. and Zhou, X. Y.},
	journal = {Quantitative Finance},
	pages = {115--123},
	title = {The prmium of dynamic trading},
	volume = {11},
	year = 2011}

@article{ChoraZiema1993,
	author = {Chopra, V. K. and Ziemba, W. T.},
	journal = {Journal of Portfolio Management},
	pages = {6--11},
	title = {The effect of errors in means, variances, and covariances on optimal portfolio choice},
	volume = {19},
	year = 1993}

@article{CuiGaoLiLi:2014,
	author = {Cui, X. Y. and Gao, J. J. and Li, X. and Li, D.},
	number = {2},
	pages = {459--468},
	title = {Optimal multi-period mean-variance policy under no-shorting constraint},
	volume = {234},
	year = 2014}

@article{demiguel2009generalized,
	author = {DeMiguel, Victor and Garlappi, Lorenzo and Nogales, Francisco J and Uppal, Raman},
	journal = {Management science},
	number = {5},
	pages = {798--812},
	publisher = {INFORMS},
	title = {A generalized approach to portfolio optimization: {{Improving}} performance by constraining portfolio norms},
	volume = {55},
	year = 2009}

@article{demiguel2009optimal,
	author = {DeMiguel, Victor and Garlappi, Lorenzo and Uppal, Raman},
	journal = {The Review of Financial Studies},
	number = {5},
	pages = {1915--1953},
	publisher = {Oxford University Press},
	title = {Optimal versus naive diversification: {{How}} inefficient is the 1/{{N}} portfolio strategy?},
	volume = {22},
	year = 2009}

@article{fan2012vast,
	author = {Fan, Jianqing and Zhang, Jingjin and Yu, Ke},
	journal = {Journal of the American Statistical Association},
	number = {498},
	pages = {592--606},
	publisher = {Taylor \& Francis},
	title = {Vast portfolio selection with gross-exposure constraints},
	volume = {107},
	year = 2012}

@article{fan2013large,
	author = {Fan, Jianqing and Liao, Yuan and Mincheva, Martina},
	journal = {Journal of the Royal Statistical Society Series B: Statistical Methodology},
	number = {4},
	pages = {603--680},
	publisher = {Oxford University Press},
	title = {Large covariance estimation by thresholding principal orthogonal complements},
	volume = {75},
	year = 2013}

@article{fan2016overview,
	author = {Fan, Jianqing and Liao, Yuan and Liu, Han},
	journal = {The Econometrics Journal},
	number = {1},
	pages = {C1--C32},
	publisher = {Oxford University Press Oxford, UK},
	title = {An overview of the estimation of large covariance and precision matrices},
	volume = {19},
	year = 2016}

@book{FollmerSchied:2004,
	author = {F{\"o}llmer, H. and Schied, A.},
	publisher = {Walter De Gruyter:Berlin},
	series = {De gruyter studies in mathematics},
	title = {Stochastic finance: {{An}} introduction in discrete time},
	year = 2004}

@article{GarleanuPedersen:2013JOF,
	author = {G{\^a}rleanu, N. and Pedersen, L. H.},
	journal = {Journal of Finance},
	number = {6},
	pages = {2309--2340},
	title = {Dynamic trading with predictable returns and transaction cost},
	volume = {68},
	year = 2013}

@article{Goldfarb:2003,
	author = {Goldfarb, D. and Iyengar, G.},
	journal = {Mathematics of Operations Research},
	number = {1},
	pages = {1--38},
	title = {Robust portfolio selection problems},
	volume = {28},
	year = 2003}

@article{Jagannathan_jof2003,
	author = {Jagannathan, Ravi and Ma, Tongshu},
	journal = {The journal of finance},
	number = {4},
	pages = {1651--1683},
	publisher = {Wiley Online Library},
	title = {Risk reduction in large portfolios: {{Why}} imposing the wrong constraints helps},
	volume = {58},
	year = 2003}

@article{jansen2002optimal,
	author = {Jansen, Roel and Van Dijk, Ronald},
	journal = {Journal of Portfolio Management},
	number = {2},
	pages = {33},
	publisher = {Pageant Media},
	title = {Optimal benchmark tracking with small portfolios},
	volume = {28},
	year = 2002}

@article{kanOptimalPortfolioChoice2007,
	author = {Kan, Raymond and Zhou, Guofu},
	journal = {Journal of Financial and Quantitative Analysis},
	month = sep,
	number = {3},
	pages = {621--656},
	title = {Optimal portfolio choice with parameter uncertainty},
	urldate = {2022-10-03},
	volume = {42},
	year = 2007}

@article{KanWang-MS-2024,
	author = {Kan, Raymond and Wang, Xiaolu},
	journal = {Management Science},
	number = {9},
	pages = {6117--6138},
	title = {Optimal portfolio choice with unknown benchmark efficiency},
	volume = {70},
	year = 2024}

@article{KanWangZhou-allrisky-2022,
	author = {Kan, Raymond and Wang, Xiaolu and Zhou, Guofu},
	journal = {Management Science},
	number = {3},
	pages = {2047--2068},
	title = {Optimal portfolio choice with estimation risk: {{No}} risk-free asset case},
	volume = {68},
	year = 2022}

@article{Karoui2010,
	author = {Karoui, Noureddine El},
	journal = {The Annals of Statistics},
	number = {6},
	pages = {3487--3566},
	publisher = {Institute of Mathematical Statistics},
	title = {High-dimensionality effects in the markowitz probelm and other quadratic programs with linear constraints: {{Risk}} underestimation},
	urldate = {2024-05-28},
	volume = {38},
	year = 2010}

@article{Kolm:2014,
	author = {{Kolm.P.N.} and T{\"u}t{\"u}nc{\"u}, R. and Fabozzi, F.J.},
	journal = {European Journal of Operational Research},
	pages = {356--371},
	publisher = {null},
	title = {60 years of portfolio optimization: {{Practical}} challenges and current trends},
	volume = {234(2)},
	year = 2014}

@article{Lai2011,
	author = {Lai, Tze Leung and Xing, Haipeng and Chen, Zehao},
	journal = {The Annals of Applied Statistics},
	number = {2A},
	pages = {798--823},
	publisher = {Institute of Mathematical Statistics},
	title = {Mean-variance portfolio optimization when means and covariances are unknown},
	urldate = {2026-01-30},
	volume = {5},
	year = 2011}

@article{lam2009sparsistency,
	author = {Lam, Clifford and Fan, Jianqing},
	journal = {Annals of statistics},
	number = {6B},
	pages = {4254},
	publisher = {NIH Public Access},
	title = {Sparsistency and rates of convergence in large covariance matrix estimation},
	volume = {37},
	year = 2009}

@article{Lassance-MS-2024,
	author = {Lassance, Nathan and {Mart{\'\i}n-Utrera}, Alberto and Simaan, Majeed},
	journal = {Management Science},
	number = {11},
	pages = {7644--7663},
	title = {The risk of expected utility under parameter uncertainty},
	volume = {70},
	year = 2024}

@article{ledoit_improved_2003,
	author = {Ledoit, Olivier and Wolf, Michael},
	journal = {Journal of Empirical Finance},
	number = {5},
	pages = {603--621},
	title = {Improved estimation of the covariance matrix of stock returns with an application to portfolio selection},
	urldate = {2022-01-13},
	volume = {10},
	year = 2003}

@article{ledoit2004well,
	author = {Ledoit, Olivier and Wolf, Michael},
	journal = {Journal of multivariate analysis},
	number = {2},
	pages = {365--411},
	publisher = {Elsevier},
	title = {A well-conditioned estimator for large-dimensional covariance matrices},
	volume = {88},
	year = 2004}

@article{ledoit2012nonlinear,
	author = {Ledoit, Olivier and Wolf, Michael},
	journal = {The Annals of Statistics},
	number = {2},
	pages = {1024--1060},
	title = {Nonlinear shrinkage estimation of large-dimensional covariance matrices},
	volume = {40},
	year = 2012}

@article{LiNg:2000MF,
	author = {Li, Duan and Ng, Wan-Lung},
	journal = {Mathematical Finance},
	number = {3},
	pages = {387--406},
	publisher = {Wiley Online Library},
	title = {Optimal dynamic portfolio selection: {{Multiperiod}} mean-variance formulation},
	volume = {10},
	year = 2000}

@article{markowitz1952harry,
	author = {Markowitz, Harry},
	journal = {Journal of Finance},
	number = {1},
	pages = {77--91},
	title = {Portfolio selection},
	volume = {7},
	year = 1952}

@article{marvcenko1967distribution,
	author = {Mar{\v c}enko, Vladimir A and Pastur, Leonid Andreevich},
	journal = {Mathematics of the USSR-Sbornik},
	number = {4},
	pages = {457},
	publisher = {IOP Publishing},
	title = {Distribution of eigenvalues for some sets of random matrices},
	volume = {1},
	year = 1967}

@article{MengCaoWang2025,
	author = {Meng, Xuran and Cao, Yuan and Wang, Weichen},
	journal = {Journal of the American Statistical Association},
	number = {0},
	pages = {1--13},
	publisher = {Taylor \& Francis},
	title = {Estimation of out-of-sample sharpe ratio for high dimensional portfolio optimization},
	volume = {0},
	year = 2025}

@article{novy2016taxonomy,
	author = {{Novy-Marx}, Robert and Velikov, Mihail},
	journal = {The Review of Financial Studies},
	number = {1},
	pages = {104--147},
	publisher = {Oxford University Press},
	title = {A taxonomy of anomalies and their trading costs},
	volume = {29},
	year = 2016}

@article{Rockafellar:2002,
	author = {Rockafellar, R.T. and Uryasev, S.},
	journal = {Journal of Banking \& Finance},
	pages = {1443--1471},
	publisher = {null},
	title = {Conditional value-at-risk for general loss distributions},
	volume = {26},
	year = 2002}

@article{rubio2011spectral,
	author = {Rubio, Francisco and Mestre, Xavier},
	journal = {Statistics \& probability letters},
	number = {5},
	pages = {592--602},
	publisher = {Elsevier},
	title = {Spectral convergence for a general class of random matrices},
	volume = {81},
	year = 2011}

@article{strub2018optimal,
	author = {Strub, Oliver and Baumann, Philipp},
	journal = {European journal of operational research},
	number = {1},
	pages = {370--387},
	publisher = {Elsevier},
	title = {Optimal construction and rebalancing of index-tracking portfolios},
	volume = {264},
	year = 2018}

@article{Tu2011,
	author = {Tu, Jun and Zhou, Guofu},
	journal = {Journal of Financial Economics},
	month = jan,
	number = {1},
	pages = {204--215},
	shorttitle = {Markowitz meets talmud},
	title = {Markowitz meets talmud: a combination of sophisticated and naive diversification strategies},
	urldate = {2025-08-07},
	volume = {99},
	year = 2011}

@article{van2021distribution,
	author = {{van Staden}, Pieter M and Dang, Duy-Minh and Forsyth, Peter A},
	journal = {SIAM Journal on Financial Mathematics},
	number = {2},
	pages = {566--603},
	publisher = {SIAM},
	title = {On the distribution of terminal wealth under dynamic mean-variance optimal investment strategies},
	volume = {12},
	year = 2021}

@article{van2021surprising,
	author = {{van Staden}, Pieter M and Dang, Duy-Minh and Forsyth, Peter A},
	journal = {European Journal of Operational Research},
	number = {2},
	pages = {774--792},
	publisher = {Elsevier},
	title = {The surprising robustness of dynamic mean-variance portfolio optimization to model misspecification errors},
	volume = {289},
	year = 2021}

@article{vigna2014efficiency,
	author = {Vigna, Elena},
	journal = {Quantitative finance},
	number = {2},
	pages = {237--258},
	publisher = {Taylor \& Francis},
	title = {On efficiency of mean--variance based portfolio selection in defined contribution pension schemes},
	volume = {14},
	year = 2014}

@article{Vigna2020,
	author = {Vigna, Elena},
	journal = {International Journal of Theoretical and Applied Finance},
	number = {06},
	pages = {2050042},
	publisher = {World Scientific},
	title = {On time consistency for mean-variance portfolio selection},
	volume = {23},
	year = 2020}

@article{widder1938stieltjes,
	author = {Widder, {\relax DV}},
	journal = {Transactions of the American Mathematical Society},
	number = {1},
	pages = {7--60},
	publisher = {JSTOR},
	title = {The stieltjes transform},
	volume = {43},
	year = 1938}

@article{Yao02082016,
	author = {Yao, Haixiang and Li, ZhongFei and Li, Xingyi},
	journal = {Quantitative Finance},
	number = {8},
	pages = {1237--1257},
	publisher = {Routledge},
	title = {The premium of dynamic trading in a discrete-time setting},
	volume = {16},
	year = 2016}

@article{Zhou2000,
	author = {Zhou, X. Y. and Li, D.},
	journal = {Applied Mathematics and Optimization},
	number = {1},
	pages = {19--33},
	title = {Continuous-time mean-variance portfolio selection: a stochastic {{LQ}} framework},
	volume = {42},
	year = 2000}
\clearpage

\appendix 

\bigskip
\begin{center}
{\Large\bf SUPPLEMENTARY MATERIAL for\\ ``On Reference-Regulated Multiperiod Mean-Variance \\ Portfolio Optimization in High Dimensions''}
\end{center} 
 
% =========================
% Appendix numbering
% =========================

% Equations
\ECEquationsNumberedThrough
% If you want equations numbered as (S1), (S2), ...
\setcounter{equation}{0}
\renewcommand{\theequation}{S\arabic{equation}}
\renewcommand{\theHequation}{ECeq.\arabic{equation}}

% Figures
\setcounter{figure}{0}
\renewcommand{\thefigure}{S\arabic{figure}}
\renewcommand{\theHfigure}{ECfig.\arabic{figure}}

% Tables
\setcounter{table}{0}
\renewcommand{\thetable}{S\arabic{table}}
\renewcommand{\theHtable}{ECtab.\arabic{table}}

% Sections
\setcounter{section}{0}
\renewcommand{\thesection}{S\arabic{section}}
\renewcommand{\theHsection}{ECsec.\arabic{section}}

% Subsections
\setcounter{subsection}{0}
\renewcommand{\thesubsection}{\thesection.\arabic{subsection}}
\renewcommand{\theHsubsection}{ECsec.\arabic{section}.\arabic{subsection}}

% Subsubsections, if needed
\setcounter{subsubsection}{0}
\renewcommand{\thesubsubsection}{\thesubsection.\arabic{subsubsection}}
\renewcommand{\theHsubsubsection}{ECsec.\arabic{section}.\arabic{subsection}.\arabic{subsubsection}}

% Theorems, if theorem/lemma/proposition/corollary share the theorem counter
\setcounter{theorem}{0}
\renewcommand{\thetheorem}{S\arabic{theorem}}
\renewcommand{\theHtheorem}{ECthm.\arabic{theorem}}

% Assumptions, if assumption has an independent counter
\setcounter{assumption}{0}
\renewcommand{\theassumption}{S\arabic{assumption}}
\renewcommand{\theHassumption}{ECasm.\arabic{assumption}}
 
\medskip 
\onehalfspacing
\fontsize{11pt}{16.5pt}\selectfont

\section{Simulation Comparisons} \label{sec:simulation}
In this section, we evaluate the performance of the RRMV portfolio relative to several benchmark strategies using simulated data calibrated from real market returns.
The simulation design is based on monthly returns of the 48-industry portfolios over the period from January 2000 to December 2024.
Let $\{\P_{-t}\}_{t=1}^{N}$ denote the $p$-dimensional vector of excess returns in the calibration sample, where $N=300$ in our data.
We set the population mean vector and covariance matrix in the simulation equal to their corresponding sample estimates, 
\[
\bmu := \frac{1}{N}\sum_{t=1}^{N} \P_{-t} ~ \text{and} ~ \bsigma := \frac{1}{N-1}\sum_{t=1}^{N} (\P_{-t}-\bmu)(\P_{-t}-\bmu)^\top \,.
\]
The risk-free rate $r_f$ is fixed at a constant equal to the in-sample average return of the one-month U.S. Treasury bill over the same period.
For each Monte Carlo replication, we generate a synthetic out-of-sample return path whose distribution is calibrated to $(\bmu,\bsigma)$.

In the baseline data-generating process (DGP), returns are assumed to be independent and identically distributed according to a multivariate normal distribution, $\P_t \sim \mathcal{N}(\bmu,\bsigma)$, for $t=1,2,\ldots$. 
To assess robustness with respect to tail behavior, we further consider a multivariate Student-$t$ model with $\nu=6$ degrees of freedom, where the DGP is given by $\P_t \sim t_\nu(\bmu,\S)$, where $\nu=6$. 
For a multivariate $t_\nu(\bmu,\S)$ distribution with $\nu>2$, the covariance matrix satisfies
$\operatorname{Cov}(\P_t)=\frac{\nu}{\nu-2}\bm{S}$.
To ensure that the heavy-tailed specification shares the same population covariance $\bsigma$ as the Gaussian benchmark, we set the scale matrix to $\bm{S} = \frac{\nu-2}{\nu}\,\bsigma$.

We conduct $1,000$ Monte Carlo replications to evaluate portfolio performance.
Within each replication, we follow the same rolling implementation scheme as in the empirical analysis.
At each rebalancing date, the most recent $n$ months of simulated returns are used to estimate the required inputs, based on which portfolio weights are computed. Performance is then evaluated over the subsequent out-of-sample period.
We consider estimation window lengths $n\in\{60,120\}$ and investment horizons $T\in\{1,3,6\}$.
Across replications, we report the average values of Risk, Sharpe ratio and Turnover defined in \eqref{eq:os_metrics}. 
Moreover, we report the $p$-values from tests of Sharpe ratio differences relative to the RRMV-EW benchmark with $\rho = 10^{-3}$.
Results under the Gaussian specification are reported in Tables~\ref{table:sim_48ind_60} and~\ref{table:sim_48ind_120}.

First, We consider the case of a short estimation window ($n=60$) in Table~\ref{table:sim_48ind_60}, where the ratio $p/n \approx 1$ corresponds to a high-dimensional setting with substantial estimation error. Under this regime, standard MV and MMV portfolios suffer from severe estimation noise and consequently deliver relatively low out-of-sample Sharpe ratios. By contrast, shrinkage-based strategies achieve substantially higher Sharpe ratios than their unregularized counterparts. RRMV portfolios further stabilize portfolio weights and yield meaningful improvements in out-of-sample Sharpe ratios.

Second, the advantage of RRMV becomes more pronounced as the investment horizon $T$ increases in Table~\ref{table:sim_48ind_60}.
For static portfolios such as EW, GMV, and their shrinkage variants, Sharpe ratios remain largely unchanged across different values of $T$.
However, RRMV portfolios could benefit from the multiperiod structure and exhibit steadily increasing Sharpe ratios as $T$ grows.
This pattern is consistent with the theoretical advantage of feedback policies discussed in Section~\ref{sse:comparison_policies}.

When the estimation window increases to $n=120$ in Table~\ref{table:sim_48ind_120}, the performance gap between regularized and unregularized portfolios becomes smaller, reflecting the reduced severity of high-dimensional estimation error as $p/n$ declines. 
With more observations available, MV and MMV strategies achieve improved Sharpe ratios relative to the high-dimensional case. 
However, RRMV continues to deliver the highest Sharpe ratios, particularly at longer horizons. 
Moreover, the improvement in unregularized multiperiod portfolios is accompanied by a substantial increase in turnover. 
For example, at $T=6$, turnover for MMV rises sharply, whereas RRMV maintains significantly lower trading intensity. 
Consequently, once transaction costs are considered, the apparent convergence in Sharpe ratios would likely favor RRMV more strongly.

Figures~\ref{fig_simulate_48IND_n60_1_3_6} and~\ref{fig_simulate_48IND_n120_1_3_6} plot the out-of-sample Sharpe ratio as a function of the regularization parameter $\rho$ for different investment horizons $T$ and estimation window lengths $n$. When $n=60$ (Figure~\ref{fig_simulate_48IND_n60_1_3_6}), estimation error is substantial, and all RRMV variants outperform the unregularized MMV benchmark for certain positive values of $\rho$. In contrast, when $n=120$ (Figure~\ref{fig_simulate_48IND_n120_1_3_6}), estimation noise is milder and the Sharpe ratio at $\rho=0$ is already relatively high, so the incremental benefit of regularization is smaller.
% This evidence suggests that the effectiveness of regularization depends jointly on the level of estimation noise and the choice of reference portfolio.

Overall, the simulation results indicate that the value of regularization depends on both the estimation noise level and the investment horizon. RRMV delivers the largest improvements in Sharpe ratios in high-noise environments and at longer horizons, while maintaining substantially lower turnover than competing multiperiod strategies.
These findings suggest that RRMV strikes a favorable balance between risk-adjusted performance and trading stability, particularly when data are limited.

\begin{table}[htbp]
    \centering
    \caption{Summary of out-of-sample performance of the portfolios under comparison $(n=60)$.}
    \label{table:sim_48ind_60}
    \scriptsize
    \renewcommand{\arraystretch}{1.15}
    \setlength{\tabcolsep}{2pt}
    \begin{tabularx}{\textwidth}{l *{12}{>{\centering\arraybackslash}X}}
        \hline
        & \multicolumn{4}{c}{$T=1$}
        & \multicolumn{4}{c}{$T=3$}
        & \multicolumn{4}{c}{$T=6$} \\
        \cline{2-5} \cline{6-9} \cline{10-13}
        Portfolio
        & Risk & Sharpe & Turnover & $p$-value
        & Risk & Sharpe & Turnover & $p$-value
        & Risk & Sharpe & Turnover & $p$-value \\
        \hline

        \multicolumn{13}{l}{\textbf{Benchmarks}} \\
        KZ      & 0.077 & 0.048 & 5.429 & 0.000 & 0.139 & 0.044 & 5.429 & 0.000 & 0.210 & 0.044 & 5.426 & 0.000 \\
        EW      & 0.049 & 0.157 & 0.000 & 0.000 & 0.085 & 0.156 & 0.000 & 0.000 & 0.123 & 0.156 & 0.000 & 0.000 \\
        MV      & 0.021 & 0.102 & 1.329 & 0.000 & 0.036 & 0.103 & 1.329 & 0.000 & 0.051 & 0.105 & 1.329 & 0.000 \\
        MV-SH   & \textbf{0.012} & 0.192 & 0.313 & 0.000 & \textbf{0.020} & 0.194 & 0.313 & 0.000 & \textbf{0.029} & 0.196 & 0.313 & 0.000 \\
        GMV     & 0.062 & 0.086 & 3.583 & 0.000 & 0.108 & 0.085 & 3.583 & 0.000 & 0.155 & 0.085 & 3.583 & 0.000 \\
        GMV-SH  & 0.034 & 0.173 & 0.532 & 0.000 & 0.059 & 0.174 & 0.532 & 0.000 & 0.084 & 0.174 & 0.532 & 0.000 \\
        MMV     & -- & -- & -- & -- & 0.289 & 0.032 & 31.275 & 0.000 & 8.554 & 0.004 & 556.045 & 0.000 \\
        MMV-SH  & -- & -- & -- & -- & 0.038 & 0.177 & 2.037 & 0.000 & 0.108 & 0.182 & 7.203 & 0.000 \\

        \hline
        \multicolumn{13}{l}{$\rho=10^{-2}$} \\
        RRMV-EW        & 0.015 & 0.211 & 0.162 & 0.157 & 0.030 & \textbf{0.216} & 0.351 & 0.216 & \textbf{0.048} & \textbf{0.228} & 0.418 & 0.000 \\
        RRMV-GMV-SH    & 0.030 & 0.174 & 0.569 & 0.000 & 0.054 & 0.182 & 0.845 & 0.000 & 0.077 & 0.196 & 0.897 & 0.000 \\
        RRMV-$\ell_2$  & \textbf{0.014} & 0.189 & 0.176 & 0.000 & \textbf{0.027} & 0.195 & 0.337 & 0.000 & \textbf{0.044} & 0.206 & 0.401 & 0.000 \\

        \hline
        \multicolumn{13}{l}{$\rho=10^{-3}$} \\
        RRMV-EW        & \textbf{0.011} & \textbf{0.213} & 0.244 & -- & \textbf{0.027} & \textbf{0.218} & 0.830 & -- & \textbf{0.050} & \textbf{0.238} & 0.985 & -- \\
        RRMV-GMV-SH    & 0.023 & 0.174 & 0.573 & 0.000 & 0.047 & 0.183 & 1.606 & 0.000 & 0.076 & 0.204 & 1.739 & 0.000 \\
        RRMV-$\ell_2$  & \textbf{0.011} & \textbf{0.203} & 0.248 & 0.000 & \textbf{0.027} & 0.208 & 0.820 & 0.000 & \textbf{0.050} & 0.225 & 0.974 & 0.000 \\
        \hline
    \end{tabularx}

    \captionsetup{justification=raggedright,singlelinecheck=false}
    \caption*{\footnotesize
    Notes. The data-generating process is calibrated using the sample mean vector and covariance matrix estimated from the 48-industry portfolio returns. Reported $p$-values are based on paired tests of Sharpe-ratio differences relative to the RRMV-EW portfolio with $\rho=10^{-3}$; the reference row therefore has no corresponding $p$-value. Static portfolios follow a buy-and-hold strategy. MMV and MMV-SH are identical to MV and MV-SH, respectively, when $T=1$. The two best Sharpe values and four best Risk values are highlighted.}
\end{table}

\begin{table}[htbp]
    \centering
    \caption{Summary of out-of-sample performance of the portfolios under comparison $(n=120)$.}
    \label{table:sim_48ind_120}
    \scriptsize
    \renewcommand{\arraystretch}{1.1}
    \setlength{\tabcolsep}{2pt}
    \begin{tabularx}{\textwidth}{l *{12}{>{\centering\arraybackslash}X}}
        \hline
        & \multicolumn{4}{c}{$T=1$}
        & \multicolumn{4}{c}{$T=3$}
        & \multicolumn{4}{c}{$T=6$} \\
        \cline{2-5} \cline{6-9} \cline{10-13}
        Portfolio
        & Risk & Sharpe & Turnover & $p$-value
        & Risk & Sharpe & Turnover & $p$-value
        & Risk & Sharpe & Turnover & $p$-value \\
        \hline

        \multicolumn{13}{l}{\textbf{Benchmarks}} \\
        KZ      & 0.094 & 0.203 & 2.508 & 0.000 & 0.172 & 0.196 & 2.508 & 0.000 & 0.264 & 0.189 & 2.507 & 0.000 \\
        EW      & 0.049 & 0.158 & 0.000 & 0.000 & 0.086 & 0.157 & 0.000 & 0.000 & 0.123 & 0.157 & 0.000 & 0.000 \\
        MV      & \textbf{0.015} & 0.236 & 0.370 & 0.000 & \textbf{0.026} & 0.236 & 0.370 & 0.000 & \textbf{0.037} & 0.237 & 0.370 & 0.000 \\
        MV-SH   & \textbf{0.013} & 0.263 & 0.258 & 0.000 & \textbf{0.023} & 0.263 & 0.258 & 0.000 & \textbf{0.033} & 0.264 & 0.258 & 0.000 \\
        GMV     & 0.034 & 0.159 & 0.593 & 0.000 & 0.060 & 0.159 & 0.593 & 0.000 & 0.086 & 0.158 & 0.593 & 0.000 \\
        GMV-SH  & 0.031 & 0.183 & 0.329 & 0.000 & 0.054 & 0.183 & 0.329 & 0.000 & 0.078 & 0.182 & 0.329 & 0.000 \\
        MMV     & -- & -- & -- & -- & 0.045 & 0.203 & 3.435 & 0.000 & 0.126 & 0.195 & 15.935 & 0.000 \\
        MMV-SH  & -- & -- & -- & -- & 0.033 & 0.260 & 1.835 & 0.000 & 0.068 & 0.301 & 3.842 & 0.000 \\

        \hline
        \multicolumn{13}{l}{$\rho=10^{-2}$} \\
        RRMV-EW        & 0.019 & 0.244 & 0.125 & 0.000 & 0.034 & 0.250 & 0.329 & 0.000 & 0.051 & 0.266 & 0.390 & 0.000 \\
        RRMV-GMV-SH    & 0.029 & 0.204 & 0.356 & 0.000 & 0.051 & 0.213 & 0.610 & 0.000 & 0.072 & 0.228 & 0.664 & 0.000 \\
        RRMV-$\ell_2$  & 0.017 & 0.236 & 0.136 & 0.000 & 0.031 & 0.243 & 0.317 & 0.000 & 0.048 & 0.258 & 0.374 & 0.000 \\

        \hline
        \multicolumn{13}{l}{$\rho=10^{-3}$} \\
        RRMV-EW        & \textbf{0.013} & \textbf{0.277} & 0.186 & -- & \textbf{0.027} & \textbf{0.291} & 0.758 & -- & \textbf{0.044} & \textbf{0.328} & 0.898 & -- \\
        RRMV-GMV-SH    & 0.020 & 0.226 & 0.340 & 0.000 & 0.039 & 0.241 & 1.163 & 0.000 & 0.059 & 0.272 & 1.329 & 0.000 \\
        RRMV-$\ell_2$  & \textbf{0.013} & \textbf{0.273} & 0.188 & 0.000 & \textbf{0.027} & \textbf{0.286} & 0.752 & 0.000 & \textbf{0.044} & \textbf{0.321} & 0.889 & 0.000 \\
        \hline
    \end{tabularx}

    \captionsetup{justification=raggedright,singlelinecheck=false}
    \caption*{\footnotesize
    Notes. The data-generating process is calibrated using the sample mean vector and covariance matrix estimated from the 48-industry portfolio returns. Reported $p$-values are based on paired tests of Sharpe-ratio differences relative to the RRMV-EW portfolio with $\rho=10^{-3}$; the reference row therefore has no corresponding $p$-value. Static portfolios follow a buy-and-hold strategy. MMV and MMV-SH are identical to MV and MV-SH, respectively, when $T=1$. The two best Sharpe values and four best Risk values are highlighted.}
\end{table}

\begin{figure}[htbp]
    \centering
    \begin{subfigure}[t]{0.32\textwidth}
        \centering
        \includegraphics[width=\textwidth]{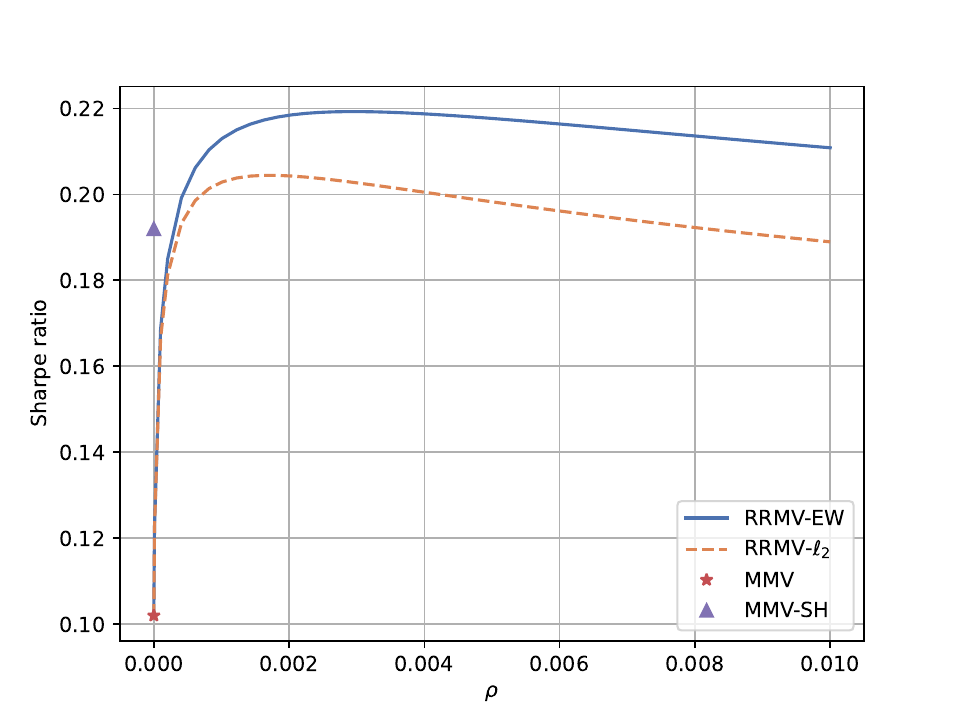}
        \caption{$T=1$}
        \label{fig:SR_multi_48Ind_1000_60_1}
    \end{subfigure}
    \hfill
    \begin{subfigure}[t]{0.32\textwidth}
        \centering
        \includegraphics[width=\textwidth]{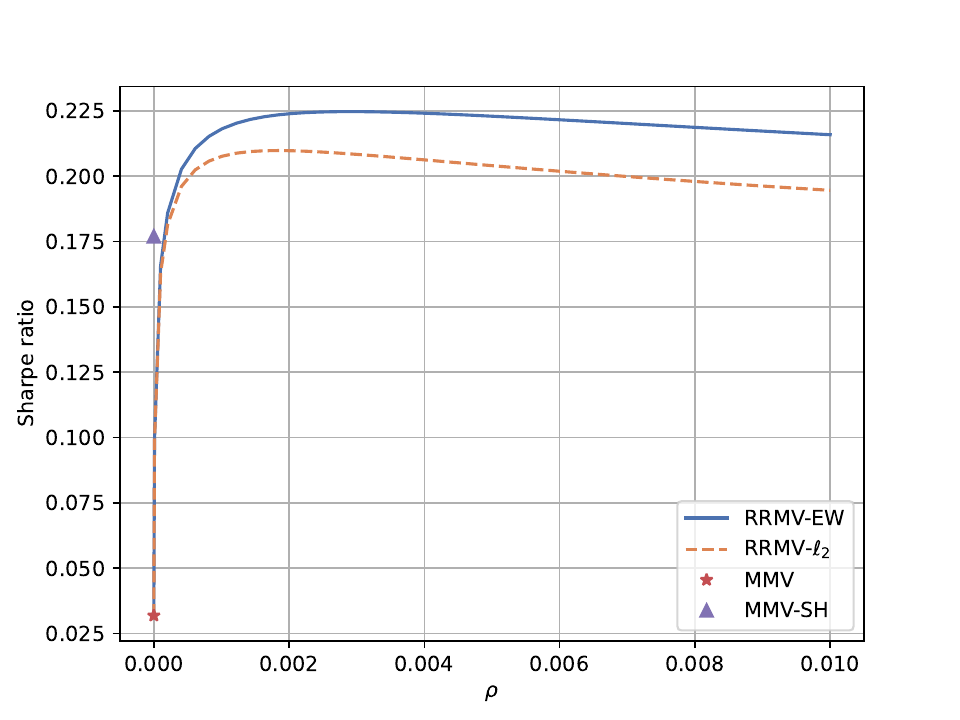}
        \caption{$T=3$}
        \label{fig:SR_multi_48Ind_1000_60_3}
    \end{subfigure}
    \hfill
    \begin{subfigure}[t]{0.32\textwidth}
        \centering
        \includegraphics[width=\textwidth]{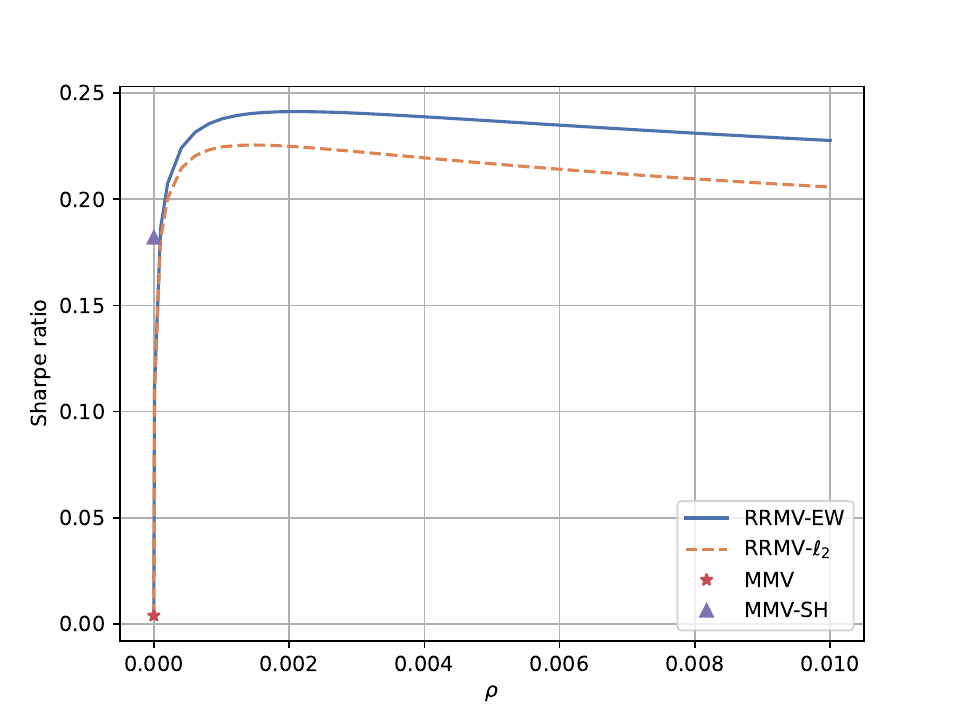}
        \caption{$T=6$}
        \label{fig:SR_multi_48Ind_1000_60_6}
    \end{subfigure}

    \caption{Out-of-sample Sharpe ratio versus the regularization parameter $\rho$ $(n=60)$.}
    \label{fig_simulate_48IND_n60_1_3_6}

    \captionsetup{justification=raggedright,singlelinecheck=false}
    \caption*{\footnotesize
    Notes. Panels (a)--(c) correspond to investment horizons $T=1$, $T=3$, and $T=6$, respectively.}
\end{figure}

\begin{figure}[htbp]
    \centering
    \begin{subfigure}[t]{0.32\textwidth}
        \centering
        \includegraphics[width=\textwidth]{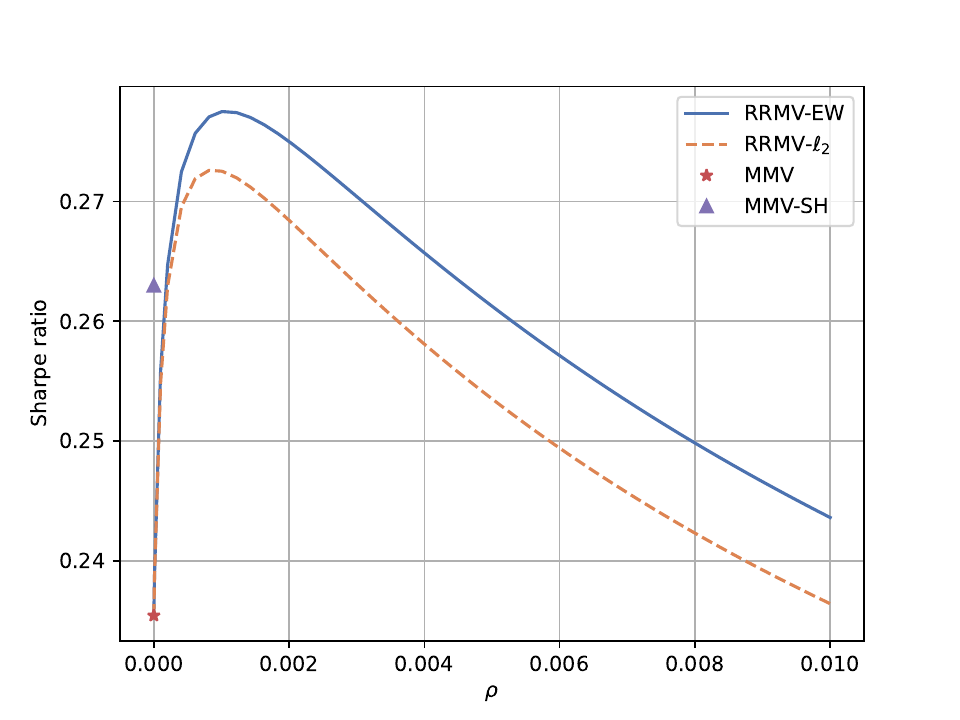}
        \caption{$T=1$}
        \label{fig:SR_multi_48Ind_1000_120_1}
    \end{subfigure}
    \hfill
    \begin{subfigure}[t]{0.32\textwidth}
        \centering
        \includegraphics[width=\textwidth]{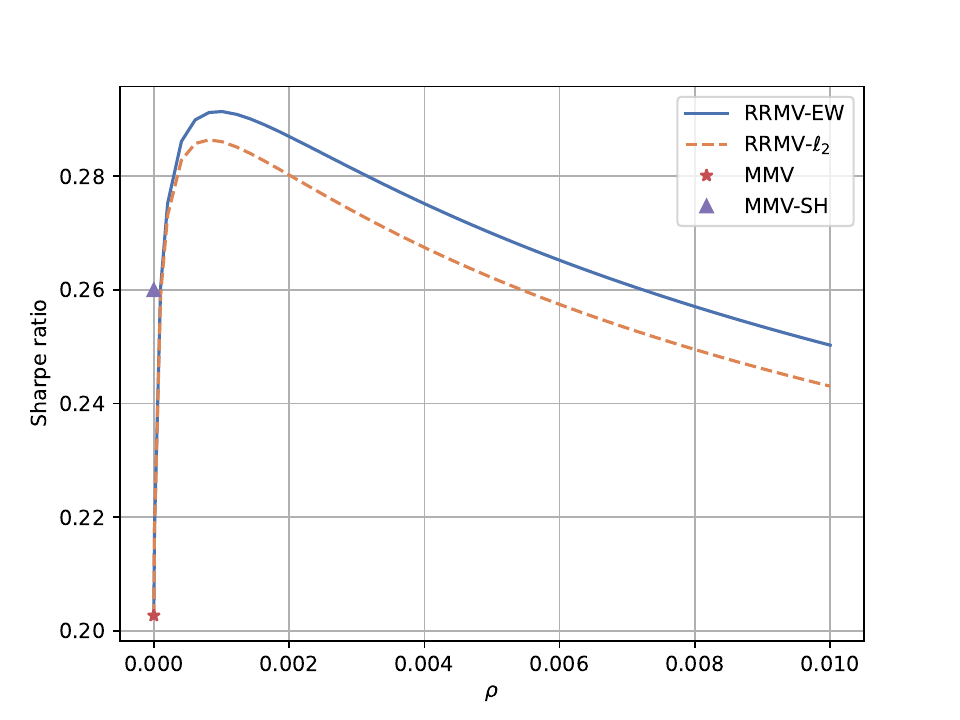}
        \caption{$T=3$}
        \label{fig:SR_multi_48Ind_1000_120_3}
    \end{subfigure}
    \hfill
    \begin{subfigure}[t]{0.32\textwidth}
        \centering
        \includegraphics[width=\textwidth]{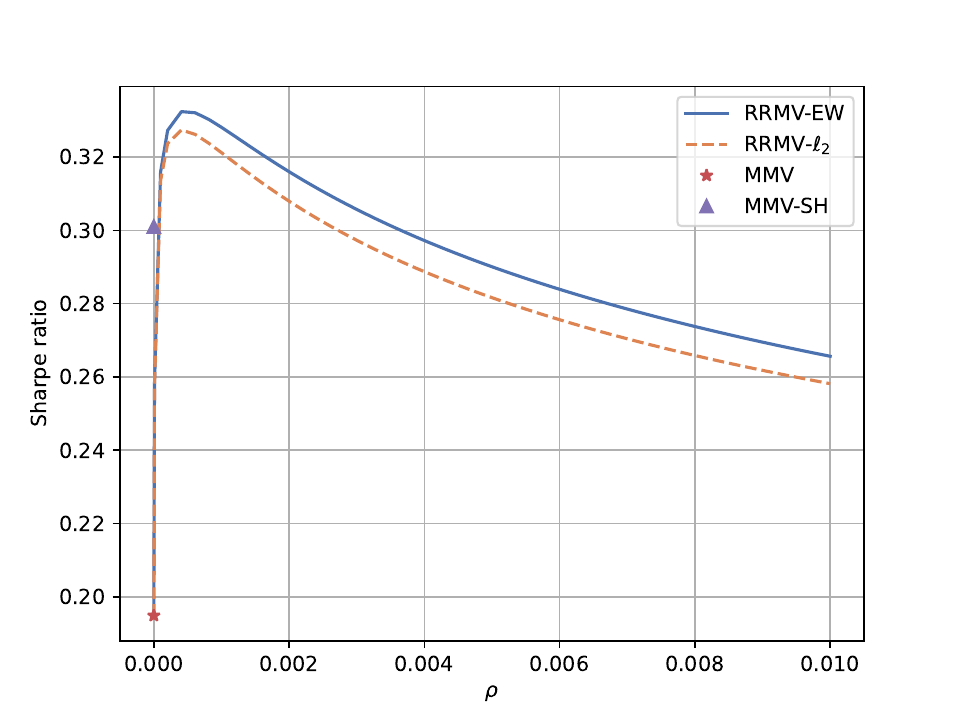}
        \caption{$T=6$}
        \label{fig:SR_multi_48Ind_1000_120_6}
    \end{subfigure}

    \caption{Out-of-sample Sharpe ratio versus the regularization parameter $\rho$ $(n=120)$.}
    \label{fig_simulate_48IND_n120_1_3_6}

    \captionsetup{justification=raggedright,singlelinecheck=false}
    \caption*{\footnotesize
    Notes. Panels (a)--(c) correspond to investment horizons $T=1$, $T=3$, and $T=6$, respectively.}
\end{figure}
 
\newpage 

Finally, we give the simulation analysis results for data with heavy-tails. 
In particular, we use multivariate t distribution in the simulation with 6 degree of freedom. 
Tables \ref{table:sim_48ind_t_60} and Table \ref{table:sim_48ind_t_120} report results under multivariate $t$ returns to examine robustness to heavy-tailed distributions. 
The overall ranking of portfolio rules remains consistent with the case assuming normal distribution. 
RRMV portfolios continue to outperform unregularized  strategies across different investment horizons.

\begin{table}[htbp]
    \centering
    \caption{Simulation performance on 48-industry portfolios under multivariate $t$ returns when $n=60$.}
    \label{table:sim_48ind_t_60}
    \scriptsize
    \renewcommand{\arraystretch}{1.1}
    \setlength{\tabcolsep}{2pt}
    \begin{tabularx}{\textwidth}{l *{12}{>{\centering\arraybackslash}X}}
        \hline
        & \multicolumn{4}{c}{$T=1$}
        & \multicolumn{4}{c}{$T=3$}
        & \multicolumn{4}{c}{$T=6$} \\
        \cline{2-5} \cline{6-9} \cline{10-13}
        Portfolio
        & Risk & Sharpe & Turnover & $p$-value
        & Risk & Sharpe & Turnover & $p$-value
        & Risk & Sharpe & Turnover & $p$-value \\
        \hline

        \multicolumn{13}{l}{\textbf{Benchmarks}} \\
        KZ      & 0.094 & 0.203 & 2.508 & 0.000 & 0.172 & 0.196 & 2.508 & 0.000 & 0.264 & 0.189 & 2.507 & 0.000 \\
        EW      & 0.049 & 0.158 & 0.000 & 0.000 & 0.086 & 0.157 & 0.000 & 0.000 & 0.123 & 0.157 & 0.000 & 0.000 \\
        MV      & \textbf{0.015} & 0.236 & 0.370 & 0.000 & \textbf{0.026} & 0.236 & 0.370 & 0.000 & \textbf{0.037} & 0.237 & 0.370 & 0.000 \\
        MV-SH   & \textbf{0.013} & 0.263 & 0.258 & 0.000 & \textbf{0.023} & 0.263 & 0.258 & 0.000 & \textbf{0.033} & 0.264 & 0.258 & 0.000 \\
        GMV     & 0.034 & 0.159 & 0.593 & 0.000 & 0.060 & 0.159 & 0.593 & 0.000 & 0.086 & 0.158 & 0.593 & 0.000 \\
        GMV-SH  & 0.031 & 0.183 & 0.329 & 0.000 & 0.054 & 0.183 & 0.329 & 0.000 & 0.078 & 0.182 & 0.329 & 0.000 \\
        MMV     & -- & -- & -- & -- & 0.045 & 0.203 & 3.435 & 0.000 & 0.126 & 0.195 & 15.935 & 0.000 \\
        MMV-SH  & -- & -- & -- & -- & 0.033 & 0.260 & 1.835 & 0.000 & 0.068 & 0.301 & 3.842 & 0.000 \\

        \hline
        \multicolumn{13}{l}{$\rho=10^{-2}$} \\
        RRMV-EW        & 0.019 & 0.244 & 0.125 & 0.000 & 0.034 & 0.250 & 0.329 & 0.000 & 0.051 & 0.266 & 0.390 & 0.000 \\
        RRMV-GMV-SH    & 0.029 & 0.204 & 0.356 & 0.000 & 0.051 & 0.213 & 0.610 & 0.000 & 0.072 & 0.228 & 0.664 & 0.000 \\
        RRMV-$\ell_2$  & 0.017 & 0.236 & 0.136 & 0.000 & 0.031 & 0.243 & 0.317 & 0.000 & 0.048 & 0.258 & 0.374 & 0.000 \\

        \hline
        \multicolumn{13}{l}{$\rho=10^{-3}$} \\
        RRMV-EW        & \textbf{0.013} & \textbf{0.277} & 0.186 & -- & \textbf{0.027} & \textbf{0.291} & 0.758 & -- & \textbf{0.044} & \textbf{0.328} & 0.898 & -- \\
        RRMV-GMV-SH    & 0.020 & 0.226 & 0.340 & 0.000 & 0.039 & 0.241 & 1.163 & 0.000 & 0.059 & 0.272 & 1.329 & 0.000 \\
        RRMV-$\ell_2$  & \textbf{0.013} & \textbf{0.273} & 0.188 & 0.000 & \textbf{0.027} & \textbf{0.286} & 0.752 & 0.000 & \textbf{0.044} & \textbf{0.321} & 0.889 & 0.000 \\
        \hline
    \end{tabularx}

    \captionsetup{justification=raggedright,singlelinecheck=false}
    \caption*{\footnotesize
    Notes. The data-generating process follows a multivariate $t$ distribution using the sample mean vector and covariance matrix estimated from the 48-industry portfolio returns. Reported $p$-values are based on paired tests of Sharpe-ratio differences relative to the RRMV-EW portfolio with $\rho=10^{-3}$; the reference row therefore has no corresponding $p$-value. Static portfolios follow a buy-and-hold strategy. MMV and MMV-SH are identical to MV and MV-SH, respectively, when $T=1$. The two best Sharpe values and four best Risk values are highlighted.}
\end{table}

\begin{table}[htbp]
    \centering
    \caption{Simulation performance on 48-industry portfolios under multivariate $t$ returns when $n=120$.}
    \label{table:sim_48ind_t_120}
    \scriptsize
    \renewcommand{\arraystretch}{1.1}
    \setlength{\tabcolsep}{2pt}
    \begin{tabularx}{\textwidth}{l *{12}{>{\centering\arraybackslash}X}}
        \hline
        & \multicolumn{4}{c}{$T=1$}
        & \multicolumn{4}{c}{$T=3$}
        & \multicolumn{4}{c}{$T=6$} \\
        \cline{2-5} \cline{6-9} \cline{10-13}
        Portfolio
        & Risk & Sharpe & Turnover & $p$-value
        & Risk & Sharpe & Turnover & $p$-value
        & Risk & Sharpe & Turnover & $p$-value \\
        \hline

        \multicolumn{13}{l}{\textbf{Benchmarks}} \\
        KZ      & 0.120 & 0.215 & 3.163 & 0.000 & 0.226 & 0.205 & 3.163 & 0.000 & 0.359 & 0.195 & 3.162 & 0.000 \\
        EW      & 0.049 & 0.151 & 0.000 & 0.000 & 0.085 & 0.151 & 0.000 & 0.000 & 0.122 & 0.150 & 0.000 & 0.000 \\
        MV      & \textbf{0.016} & 0.244 & 0.405 & 0.000 & \textbf{0.028} & 0.244 & 0.405 & 0.000 & \textbf{0.040} & 0.245 & 0.405 & 0.000 \\
        MV-SH   & \textbf{0.014} & 0.280 & 0.232 & 0.000 & \textbf{0.024} & 0.280 & 0.232 & 0.000 & \textbf{0.035} & 0.282 & 0.232 & 0.000 \\
        GMV     & 0.036 & 0.154 & 0.655 & 0.000 & 0.063 & 0.153 & 0.655 & 0.000 & 0.089 & 0.154 & 0.655 & 0.000 \\
        GMV-SH  & 0.031 & 0.187 & 0.270 & 0.000 & 0.054 & 0.188 & 0.270 & 0.000 & 0.078 & 0.189 & 0.270 & 0.000 \\
        MMV     & -- & -- & -- & -- & 0.052 & 0.197 & 4.146 & 0.000 & 0.153 & 0.172 & 19.631 & 0.000 \\
        MMV-SH  & -- & -- & -- & -- & 0.033 & 0.282 & 1.585 & 0.000 & 0.062 & 0.326 & 4.037 & 0.000 \\

        \hline
        \multicolumn{13}{l}{$\rho=10^{-2}$} \\
        RRMV-EW        & 0.019 & 0.244 & 0.122 & 0.000 & 0.035 & 0.252 & 0.325 & 0.000 & 0.052 & 0.269 & 0.385 & 0.000 \\
        RRMV-GMV-SH    & 0.029 & 0.211 & 0.299 & 0.000 & 0.050 & 0.222 & 0.554 & 0.000 & 0.071 & 0.238 & 0.608 & 0.000 \\
        RRMV-$\ell_2$  & 0.018 & 0.239 & 0.133 & 0.000 & 0.032 & 0.248 & 0.313 & 0.000 & 0.048 & 0.264 & 0.368 & 0.000 \\

        \hline
        \multicolumn{13}{l}{$\rho=10^{-3}$} \\
        RRMV-EW        & \textbf{0.014} & \textbf{0.287} & 0.183 & -- & \textbf{0.028} & \textbf{0.302} & 0.771 & -- & \textbf{0.045} & \textbf{0.339} & 0.903 & -- \\
        RRMV-GMV-SH    & 0.021 & 0.239 & 0.299 & 0.000 & 0.039 & 0.256 & 1.126 & 0.000 & 0.059 & 0.288 & 1.299 & 0.000 \\
        RRMV-$\ell_2$  & \textbf{0.014} & \textbf{0.283} & 0.186 & 0.000 & \textbf{0.028} & \textbf{0.298} & 0.763 & 0.000 & \textbf{0.044} & \textbf{0.334} & 0.892 & 0.000 \\
        \hline
    \end{tabularx}

    \captionsetup{justification=raggedright,singlelinecheck=false}
    \caption*{\footnotesize
    Notes. The data-generating process follows a multivariate $t$ distribution using the sample mean vector and covariance matrix estimated from the 48-industry portfolio returns. Reported $p$-values are based on paired tests of Sharpe-ratio differences relative to the RRMV-EW portfolio with $\rho=10^{-3}$; the reference row therefore has no corresponding $p$-value. Static portfolios follow a buy-and-hold strategy. MMV and MMV-SH are identical to MV and MV-SH, respectively, when $T=1$. The two best Sharpe values and four best Risk values are highlighted.}
\end{table}

\newpage

\section{Proofs for Section 2} 
\subsection{Proof of Lemma~\ref{lemma:param_a_nonneg}}
\begin{proof}

Recall that we define the parameters recursively, for $k=T-1,\ldots,0$
\begin{equation*}
\begin{aligned}
    \D_k &= a_{k+1} \E[\P_k \P_k^\top] + \Q_k,\\
    a_k  &= a_{k+1} r_k^2  + \w_k^\top \Q_k \w_k - a_{k+1}^2 r_k^2 \bmu_k^\top \D_k^{-1} \bmu_k - \w_k^\top \Q_k \D_k^{-1} \Q_k \w_k + 2 r_k a_{k+1} \w_k^\top \Q_k \D_k^{-1} \bmu_k,
\end{aligned} 
\end{equation*}
where $a_T=1$. 
By induction, we can easily verify that $a_k >0$ when $k=T$. 
Then, we assume that $a_i >0$ for $i=k+1,\ldots,T$ and we would like to show that $a_k>0$. 
By definition, we have
\begin{equation*}
\begin{aligned}
    a_k 
    &= a_{k+1} r_k^2 \left( 1 - \bmu_k^{\top} \left( \bsigma_k + \bmu_k \bmu_k^\top + a_{k+1}^{-1} \Q_k  \right)^{-1} \bmu_k \right) + \w_k^\top \Q_k \left( \Q_k^{-1} - \D_k^{-1} \right) \Q_k \w_k 
    \\
    &~ + 2 r_k \w_k^\top \Q_k \left( \bsigma_k + \bmu_k \bmu_k^\top + a_{k+1}^{-1} \Q_k  \right)^{-1} \bmu_k
\end{aligned} 
\end{equation*}
We can further show that 
\begin{equation*}
\begin{aligned}
    \bmu_k^{\top} \left( \bsigma_k + \bmu_k \bmu_k^\top + a_{k+1}^{-1} \Q_k  \right)^{-1} \bmu_k 
    % &= \bmu_k^{\top} \left( \left( \bsigma_k + a_{k+1}^{-1} \Q_k  \right)^{-1} - \frac{\left( \bsigma_k + a_{k+1}^{-1} \Q_k  \right)^{-1} \bmu_k \bmu_k^\top \left( \bsigma_k + a_{k+1}^{-1} \Q_k  \right)^{-1}}{1+\bmu_k^\top \left( \bsigma_k + a_{k+1}^{-1} \Q_k  \right)^{-1} \bmu_k} \right) \bmu_k 
    % \\
    &= \frac{\bmu_k^\top \left( \bsigma_k + a_{k+1}^{-1} \Q_k  \right)^{-1} \bmu_k}{1+\bmu_k^\top \left( \bsigma_k + a_{k+1}^{-1} \Q_k  \right)^{-1} \bmu_k}
    \\
    \w_k^\top \Q_k \left( \bsigma_k + \bmu_k \bmu_k^\top + a_{k+1}^{-1} \Q_k  \right)^{-1} \bmu_k
    &= 
    \frac{\w_k^\top \Q_k \left( \bsigma_k + a_{k+1}^{-1} \Q_k  \right)^{-1} \bmu_k}{1+\bmu_k^\top \left( \bsigma_k + a_{k+1}^{-1} \Q_k  \right)^{-1} \bmu_k}
\end{aligned} 
\end{equation*}
We can show that 
\begin{align*}
    a_k
    &= 
    \frac{a_{k+1} r_k^2 + 2 r_k \w_k^\top \Q_k \left( \bsigma_k + a_{k+1}^{-1} \Q_k  \right)^{-1} \bmu_k + a_{k+1}^{-1} (\w_k^\top \Q_k \left( \bsigma_k + a_{k+1}^{-1} \Q_k  \right)^{-1} \bmu_k)^2}{1+\bmu_k^\top \left( \bsigma_k + a_{k+1}^{-1} \Q_k  \right)^{-1} \bmu_k} 
    \\
    & ~~~~ + \w_k^\top \Q_k \w_k - a_{k+1}^{-1} \w_k^\top \Q_k \left( \bsigma_k + a_{k+1}^{-1} \Q_k  \right)^{-1} \Q_k \w_k
    \\
    &= 
    a_{k+1}^{-1} \frac{(a_{k+1} r_k + \w_k^\top \Q_k \left( \bsigma_k + a_{k+1}^{-1} \Q_k  \right)^{-1} \bmu_k)^2 }{1+\bmu_k^\top \left( \bsigma_k + a_{k+1}^{-1} \Q_k  \right)^{-1} \bmu_k} + \w_k^\top \Q_k \left( \Q_k^{-1} - (a_{k+1} \bsigma_k + \Q_k)^{-1} \right) \Q_k \w_k
\end{align*}
And we can apply the Woodbury identity to the second term, and show that $\Q_k^{-1} - (a_{k+1} \bsigma_k + \Q_k)^{-1}$ is positive definite. 
Therefore, we can show that $a_k > 0$.  
\end{proof}

\subsection{Proof of Theorem~\ref{theorem:opt_aux_prob}}
\begin{proof}
    Define the value function of the auxiliary problem $\cA(\omega, \lambda)$ as
    \begin{equation*}
        \begin{aligned}
            J_t(X_t) = & \min_{ \u_{\tau, \tau \geq t}} \E_t [\omega X_T^2 - \lambda X_T] + \omega \sum_{i=t}^{T-1} E_t[\| \u_i - X_i \w_i \|_{\Q}^2] \\
        \end{aligned}
    \end{equation*}
    By the smooth property of the conditional expectation, we have that
    \begin{equation*}
        \begin{aligned}
            J_t(X_t) 
            = & \min_{\e^{\top} \u_t = X_t} \{ \min_{ \u_{\tau, \tau \geq t+1}} \E_t [\omega X_T^2 - \lambda X_T] 
            + \omega \sum_{i=t}^{T-1}  E_t[\| \u_i - X_i \w_i \|_{\Q}^2] \}\\
            = & \min_{\e^{\top} \u_t = X_t}E_t[J_{t+1}(X_{t+1})] + \omega E_t[ \| \u_t - X_t \w_t \|_{\Q}^2 ]\\
            = & \min_{\e^{\top} \u_t = X_t}E_t[J_{t+1}(X_{t+1})] + \omega \| \u_t - X_t \w_t \|_{\Q}^2
        \end{aligned}
    \end{equation*}
    We claim that the value function is in a quadratic form of 
    \begin{equation*}
        \begin{aligned}
            J_k(X_k) &= \omega a_k X_k^2 - \lambda b_k X_k + \frac{\lambda^2}{4 \omega}c_k \\
        \end{aligned}
    \end{equation*}
    where the parameters $a_k$, $b_k$ and $c_k$ are defined in Eq. (\ref{equation:rrmv_rf_para}). And the claim can be proved by the 
    induction. At time $T$, $J_T(X_T) = \omega X_T^2 - \lambda X_T$ 
    is true since we have $a_T=1$, $b_T=1$ and $c_T=0$ at time $T$.
    Assume that the claim is true at period $k+1$, which is 
    $J_{k+1}(X_{k+1}) = \omega a_{k+1}X_{k+1}^2 - \lambda b_{k+1} X_{k+1} + \frac{\lambda^2}{4\omega} c_{k+1}$. 
    Recall that we have defined the following parameters:
    \begin{align*}
        \begin{dcases}
        \D_k  = a_{k+1} \E[\P_k \P_k^\top] + \Q_k,\\ 
         a_k  = a_{k+1} r_k^2  + \w_k^\top \Q_k \w_k -  a_{k+1}^2 r_k^2 \bmu_k^\top \D_k^{-1} \bmu_k - \w_k^\top \Q_k \D_k^{-1} \Q_k \w_k + 2 r_k a_{k+1} \w_k^\top \Q_k \D_k^{-1} \bmu_k,\\
         b_k   = r_k b_{k+1} - r_k b_{k+1} a_{k+1}  \bmu_k^\top \D_k^{-1} \bmu_k + b_{k+1} \w_k^\top \Q_k \D_k^{-1} \bmu_k,\\
         c_k = c_{k+1} - b_{k+1}^2 \bmu_k^\top \D_k^{-1} \bmu_k 
        \end{dcases} 
    \end{align*}
    We can show that 
    \begin{equation*}
        \begin{aligned}
            J_k(X_k) 
            & = 
            \min_{X_{k+1}= r_k X_k + \P_k^\top \u_k} \E_k \left[ J_{k+1}(X_{k+1}) \right] + \omega \| \u_k - X_k \w_k \|_{\Q_k}^2
            \\
            & = 
            \min_{X_{k+1}= r_k X_k + \P_k^\top \u_k} \E_k \left[ \omega a_{k+1}X_{k+1}^2 - \lambda b_{k+1} X_{k+1} + \frac{\lambda^2}{4\omega} c_{k+1} \right] + \omega \| \u_k-X_k \w_k \|_{\Q_k}^2
            \\
            & = 
            \min ~~~ \omega a_{k+1} r_k^2 X_k^2 + \omega a_{k+1} \u_k^\top \E [\P_k \P_k^\top] \u_k + 2 \omega r_k X_k a_{k+1} \bmu_k^\top \u_k - \lambda r_k X_k b_{k+1} - \lambda b_{k+1} \bmu_k^\top \u_k 
            \\
            & ~~~~~~~~~~~~~ + \frac{\lambda^2}{4\omega } c_{k+1}  + \omega \u_k^\top \Q_k \u_k - 2 \omega X_k \w_k^\top \Q_k \u_k + \omega X_k^2 \w_k^\top \Q_k \w_k 
            \\
            & = 
            \min ~~~ \omega a_{k+1} r_k^2 X_k^2 + \omega \u_k^\top \D_k \u_k + 2 \omega r_k X_k a_{k+1} \bmu_k^\top \u_k - \lambda r_k X_k b_{k+1} - \lambda b_{k+1} \bmu_k^\top \u_k + \frac{\lambda^2}{4\omega } c_{k+1} 
            \\
            & ~~~~~~~~~~~~~ - 2 \omega X_k \w_k^\top \Q_k \u_k + \omega X_k^2 \w_k^\top \Q_k \w_k
        \end{aligned}
    \end{equation*}
    Then the first order optimal condition is:
    \begin{equation*}
        \begin{aligned}
            2 \omega \D_k \u_k + 2 \omega r_k a_{k+1} X_k \bmu_k - \lambda b_{k+1} \bmu_k - 2 \omega X_k \Q_k \w_k= 0
        \end{aligned}
    \end{equation*}
    which yields the optimal policy:
    \begin{equation*}
        \begin{aligned}
            \u_k^\ast = \left( \frac{\lambda}{2\omega } b_{k+1} - r_k a_{k+1} X_k \right) \D_k^{-1} \bmu_k + X_k \D_k^{-1} \Q_k \w_k
        \end{aligned}
    \end{equation*}
    Then we can substitute the policy to  value function and verify that
    \begin{equation*}
        \begin{aligned}
            J_k(X_k) 
            &= 
            \omega a_{k+1} r_k^2 X_k^2 + \omega \u_k^\top \D_k \u_k - \lambda r_k X_k b_{k+1} - 2 \omega \left( \frac{\lambda}{2\omega } b_{k+1} - r_k a_{k+1} X_k \right) \bmu_k^\top \u_k 
            \\
            & + \frac{\lambda^2}{4\omega } c_{k+1} - 2 \omega X_k \w_k^\top \Q_k \u_k + \omega X_k^2 \w_k^\top \Q_k \w_k
            \\ 
            &= 
            \omega a_{k+1} r_k^2 X_k^2 + \frac{\lambda^2}{4\omega } c_{k+1} - \lambda r_k X_k b_{k+1} + \omega X_k^2 \w_k^\top \Q_k \w_k
            - \omega \left( \frac{\lambda}{2\omega } b_{k+1} - r_k a_{k+1} X_k \right)^2 \bmu_k^\top \D_k^{-1} \bmu_k \\
            &~~~~ - 2 \omega X_k \left( \frac{\lambda}{2\omega } b_{k+1} - r_k a_{k+1} X_k \right) \w_k^\top \Q_k \D_k^{-1} \bmu_k 
            - \omega X_k^2 \w_k^\top \Q_k \D_k^{-1} \Q_k \w_k  
            \\
            &= 
            \omega X_k^2 (a_{k+1} r_k^2  + \w_k^\top \Q_k \w_k - r_k^2 a_{k+1}^2 \bmu_k^\top \D_k^{-1} \bmu_k - \w_k^\top \Q_k \D_k^{-1} \Q_k \w_k + 2 r_k a_{k+1} \w_k^\top \Q_k \D_k^{-1} \bmu_k) \\
            & ~~ - \lambda X_k (r_k b_{k+1} - r_k b_{k+1} a_{k+1}  \bmu_k^\top \D_k^{-1} \bmu_k + b_{k+1} \w_k^\top \Q_k \D_k^{-1} \bmu_k ) + \frac{\lambda^2}{4\omega } c_{k+1} - \frac{\lambda^2}{4\omega } b_{k+1}^2 \bmu_k^\top \D_k^{-1} \bmu_k
            \\
            &= \omega \a_k X_k^2 - \lambda \b_k X_k + \frac{\lambda^2}{4\omega } \c_k
        \end{aligned}
    \end{equation*}
    We can verify the expectation at time $T$, is apparently true since $\b_T=1$ and $\c_T=0$. And we suppose that
    \begin{equation*}
        \E_{k+1}[X_T] = \b_{k+1} X_{k+1} - \frac{\lambda}{2\omega } \c_{k+1}
    \end{equation*}
    Then we have 
    \begin{equation*}
        \begin{aligned}
            \E_k[X_T] &= \E_k[\E_{k+1}[X_T]] = \E_k[\b_{k+1} X_{k+1} - \frac{\lambda}{2\omega } \c_{k+1}]\\
            &= \E_k[\b_{k+1} (r_k X_k+\P_k^\top \u_k^\ast) - \frac{\lambda}{2\omega } \c_{k+1}] = \b_{k+1} (r_k X_k+\P_k^\top \E_k[\u_k^\ast]) - \frac{\lambda}{2\omega } \c_{k+1}\\
            &= r_k \b_{k+1} X_k + \b_{k+1} \bmu_k^\top \left[ \left( \frac{\lambda}{2\omega } \b_{k+1} - r_k \a_{k+1} X_k \right) \D_k^{-1} \bmu_k + X_k \D_k^{-1} \Q_k \w_k \right] - \frac{\lambda}{2\omega } \c_{k+1}
            \\
            &= 
            (r_k \b_{k+1} + \b_{k+1} \bmu_k^\top \D_k^{-1} \Q_k \w_k - r_k \b_{k+1} \a_{k+1} \bmu_k^\top \D_k^{-1} \bmu_k ) X_k - \frac{\lambda}{2\omega } (\c_{k+1} - \b_{k+1}^2 \bmu_k^\top \D_k^{-1} \bmu_k)
            \\
            &= \b_k X_k -\frac{\lambda}{2\omega }\c_k
        \end{aligned}
    \end{equation*} 
\end{proof}

\subsection{Proof of Theorem~\ref{theorem:2.2}} 
\begin{proof}
    The objective function of $\cP_{\RRMV}(\omega,T)$ is equivalent to 
    \begin{align*}
    \omega \cdot \E[X_T^2] - \omega^2 \cdot \E[X_T]^2 - \E[X_T] + \omega \cdot \sum_{t=0}^{T-1} \E\big[\|\u_t - X_t \w_t \|_{\Q_t}^2 \big]
    \end{align*}
    And for any fixed value of $\omega$ and $\lambda$, the objective of auxiliary problem is 
    \begin{align*}
        \E[ \omega X_T^2 - \lambda X_T] + \omega \cdot \sum_{t=0}^{T-1} E[\| \u_t - X_t \w_t\|_{\Q_t}^2 ]
    \end{align*}
    Let $\Pi_\cA(\omega, \lambda)$ and $\Pi_\RRMV(\omega,T)$ be the optimal solution set of these two problems. 
    Following Theorem 1 in \cite{LiNg:2000MF}, it is not hard to show that $\Pi_\RRMV(\omega,T) \subseteq \cup_\lambda \Pi_\cA(\omega, \lambda)$ for fixed $\omega$. 
    So, solving the problem $\cP_{\RRMV}(\omega,T)$ can be reduced to optimizing over $\lambda$. 
    Assume $\pi^* \in \Pi_\cA(\omega, \lambda^*)$, the necessary condition of $\pi^* \in \Pi_\RRMV(\omega,T)$ can be derived from the first-order necessary optimality condition for both problems. 
    \begin{align*}
        & \omega
        \frac{\partial}{\partial\lambda}
        \Bigg(
        \E[X_T^2(\lambda^*,\omega)]
        + \sum_{t=0}^{T-1} \E\left[\|\u_t(\lambda^*)-X_t \w_t\|_{\Q_t}^2\right]
        \Bigg)
        -
        \Big[1+2\omega\,\mathbb E[X_T]\big|_{\pi^*}\Big]
        \frac{\partial}{\partial\lambda}
        \E[X_T(\lambda^*,\omega)]
        =0 
        \\
        & \omega
        \frac{\partial}{\partial\lambda}
        \Bigg(
        \E[X_T^2(\lambda^*,\omega)]
        + \sum_{t=0}^{T-1} \E \left[\|u_t(\lambda^*)-X_t \w_t\|_{\Q_t}^2\right]
        \Bigg)
        -
        \lambda^*
        \frac{\partial}{\partial\lambda}
        \E[X_T(\lambda^*,\omega)]
        =0
    \end{align*}
    which implies that $1+2\omega \E[X_T]\big|_{\pi^*} = \lambda^*$ must hold. 

    From Theorem~\ref{theorem:opt_aux_prob}, we can show that $\lambda^\ast = 2 \left( \b_0 X_0 - \frac{\lambda^\ast}{2\omega } \c_0 \right) \omega + 1$ for fixed $\omega$. 
    It implies that when $\lambda^* = (2\omega b_0 X_0 +1)/(1+ c_0)$, the policy $\u^{\cA}(X_k, \lambda^*)$ defined in (\ref{def_A_opt_u}) solves $\cP_{\RRMV}(\omega, T)$, yielding
    \begin{align*}
        \u^*_k(X_k)
        = \Bigg( \frac{(1 + 2 \omega b_0 X_0) b_{k+1} }{2\omega (1 + c_0)} - r a_{k+1} X_k \Bigg)
        \D_k^{-1} \bmu + X_k \D_k^{-1} \Q_k \w_k,
        \quad k=0,\ldots, T-1
    \end{align*}
     where $b_0,c_0$ is from the recursion in \eqref{equation:rrmv_rf_para}. 

     Furthermore, when the weighting parameter $\omega$ is solved by using the constraint $\E[X_T] = X_\tg$, which implies that
    \begin{align*}
        \omega^* = \frac{ c_0}{2\big( b_0 X_0 - X_{\tg} \cdot(1+ c_0) \big)}.
    \end{align*}
    Then, the policy in (\ref{RRMVF_omega_uk}) also solves $\tilde{\cP}_{\RRMV}(X_{\tg}, T)$ when $\omega = \omega^*$. 
    It completes the proof of Theorem~\ref{theorem:2.2}. 
\end{proof}

\subsection{Proof of Proposition~\ref{prop:sr_compare}}
\begin{proof}
Following the results in Corollary~\ref{coro:os_SR_w=0}, let $\Q_k =\Q =0$, $\w_k =\w =0$ and assume that we have the true parameters $\bmu$ and $\bsigma$. 
Define $ \eta := r(1- {a}_{k+1} \bmu^\top \D_k^{-1} \bmu) = \frac{r}{1 + \bmu^\top \bsigma^{-1} \bmu}$, we have
\begin{align*}
    {\D}_k & = {a}_{k+1}({\bsigma}+  {\bmu} {\bmu}^\top)  \,,\\
    {a}_k  & =  {a}_{k+1} r  \eta  \,,\\
    {b}_k  & = {b}_{k+1}  \eta \,,             \\
    {c}_k  & = {c}_{k+1} - {b}_{k+1}^2  {\bmu}^\top {\D}_k^{-1}  {\bmu} \,.
\end{align*}
And we can get the explicit formula of these parameters: 
\begin{align*}
    {a}_k  & =  (r  \eta)^{T-k} \,, \\
    {b}_k  & =  \eta^{T-k} = r^{-(T-k)} {a}_k \,,\\
    {c}_k  & = ( \eta/r)^{T-k} - 1 \,.
\end{align*}
It is straightforward to have the following parameters
\begin{equation*}     \begin{aligned}
    \eta_k 
    &= r(1- a_{k+1} \bmu^\top \bsigma^{-1} \bmu) 
    = r(1-\frac{\bmu^\top\bsigma^{-1}\bmu}{1+\bmu^\top \bsigma^{-1} \bmu}) = \eta\\
    \lambda_k 
    &= \frac{ b_0 - X_\tg}{ c_0}  b_{k+1} \bmu^\top \D_k^{-1} \bmu 
    = \frac{ b_0 - X_\tg}{ c_0}  b_{k+1}  a_{k+1}^{-1} (1 - \eta/r) \\
    &= \frac{ \eta^T - X_\tg}{ \eta^T - r^T} r^{k+1} (1 - \eta/r) 
    = r^{k+1} \frac{ \eta^T - X_\tg}{ \eta^T - r^T} \frac{\bmu^\top \bsigma^{-1} \bmu}{1+\bmu^\top \bsigma^{-1} \bmu}
\end{aligned} \end{equation*}

For the expectation part, 
\begin{equation*}     \begin{aligned}
    \E[X_{k}] 
    &= \prod_{i=0}^{k-1} \eta_i + \sum_{i=0}^{k-1}(\lambda_{i}\prod_{j=i+1}^{k-1}\eta_{j})\\
    &= \eta^k + \sum_{i=0}^{k-1} \eta^{k-i-1} r^{i+1} \frac{ \eta^T - X_\tg}{ \eta^T - r^T} \frac{\bmu^\top \bsigma^{-1} \bmu}{1+\bmu^\top \bsigma^{-1} \bmu} \\
    &= \eta^k + r^k \frac{ \eta^T - X_\tg}{ \eta^T - r^T} \frac{\bmu^\top \bsigma^{-1} \bmu}{1+\bmu^\top \bsigma^{-1} \bmu} \sum_{i=0}^{k-1} (1-\frac{\bmu^\top\bsigma^{-1}\bmu}{1+\bmu^\top \bsigma^{-1} \bmu})^{k-i-1} \\
    &= \eta^k \bigg[ 1 - \frac{ \eta^T - X_\tg}{ \eta^T - r^T} (1-(1-\frac{\bmu^\top\bsigma^{-1}\bmu}{1+\bmu^\top \bsigma^{-1} \bmu})^{-k}) \bigg]\\
    &= \eta^k \frac{X_\tg - r^T}{ \eta^T -r^T} + r^k \frac{ \eta^T - X_\tg}{ \eta^T - r^T} 
\end{aligned} \end{equation*}
We can see that $\E[X_T] = X_\tg$, and the results are valid even when we use $\hat\bsigma \neq \bsigma$ for the portfolio policy.
This is because all the $\bsigma$ above are from the optimal policy, not the out-of-sample distribution of return. 

To derive the variance, we need the following parameters again, from equation(\ref{para:variance}): 
\begin{equation*}     \begin{aligned}
        \nu_k &:= \bigg[(\frac{ b_0 -X_\tg}{ c_0})  b_{k+1}- r  a_{k+1} \E[X_k] \bigg]^2 \bmu^\top \D_k^{-1} \bsigma \D_k^{-1} \bmu\\
        &= \bigg[(\frac{ b_0 -X_\tg}{ c_0}) r^{-T+k} -  \E[X_k] \bigg]^2 r^2  a_{k+1}^2 \bmu^\top \D_k^{-1} \bsigma \D_k^{-1} \bmu\\
        &=  r^2 \bigg[(\frac{ b_0 -X_\tg}{ c_0}) r^{-T+k} - \eta^k \frac{X_\tg - r^T}{ \eta^T -r^T} - r^k \frac{ \eta^T - X_\tg}{ \eta^T - r^T}\bigg]^2 \frac{\bmu^\top ({\bsigma} + \rho \Q)^{-1} \bsigma ({\bsigma} + \rho \Q)^{-1} \bmu}{(1+\bmu^\top ({\bsigma} + \rho \Q)^{-1} \bmu)^2} \\
        &= r^2 \frac{\bmu^\top ({\bsigma} + \rho \Q)^{-1} \bsigma ({\bsigma} + \rho \Q)^{-1} \bmu}{(1+\bmu^\top ({\bsigma} + \rho \Q)^{-1} \bmu)^2} \bigg[\frac{X_\tg - r^T}{ \eta^T -r^T}\bigg]^2 \eta^{2k} 
        = \Gamma \bigg[\frac{X_\tg - r^T}{ \eta^T -r^T}\bigg]^2 \eta^{2k}\\
         g_k &:=r^2 [(1- a_{k+1} \bmu^\top \D_k^{-1} \bmu)^2 +  a_{k+1}^2 \bmu^\top \D_k^{-1} \bsigma \D_k^{-1} \bmu] = \Gamma + \eta^2 :=  g
\end{aligned} \end{equation*}
where $\Gamma = r^2 \frac{\bmu^\top \bsigma^{-1} \bmu}{(1+\bmu^\top \bsigma^{-1} \bmu)^2}$. 

The variance can be written as: 
\begin{equation*}     \begin{aligned}
    \Var(X_k) 
    &= \sum_{i=0}^{k-1} \nu_i \prod_{j=i+1}^{k-1}  g_j 
    = \sum_{i=0}^{k-1} \Gamma \bigg[\frac{X_\tg - r^T}{ \eta^T -r^T}\bigg]^2 \eta^{2i}(\Gamma + \eta^2)^{k-i-1}\\
    &= \Gamma (\Gamma + \eta^2)^{k-1} \bigg[\frac{X_\tg - r^T}{ \eta^T -r^T}\bigg]^2 \sum_{i=0}^{k-1}  (\frac{\eta^2}{\Gamma + \eta^2})^i\\
    \Var(X_T) &= \Gamma (\Gamma + \eta^2)^{T-1} \bigg[\frac{X_\tg - r^T}{ \eta^T -r^T}\bigg]^2 \sum_{i=0}^{T-1}  (\frac{\eta^2}{\Gamma + \eta^2})^i 
    = \bigg[\frac{X_\tg - r^T}{ \eta^T -r^T}\bigg]^2 ((\Gamma + \eta^2)^T - \eta^{2T})
\end{aligned} \end{equation*}
The efficient frontier becomes 
\begin{equation*}     \begin{aligned}
    \Var(X_T) &= \bigg[\frac{\E[X_T] - r^T}{ \eta^T -r^T}\bigg]^2 ((\Gamma + \eta^2)^T - \eta^{2T}),
\end{aligned} \end{equation*}
where $\eta = \frac{r}{1 + \bmu^\top \bsigma^{-1} \bmu}$ and $\Gamma = r^2 \frac{\bmu^\top \bsigma^{-1} \bmu}{(1+\bmu^\top \bsigma^{-1} \bmu)^2}$. 
We can show that $\eta^2 + \Gamma = \frac{r^2}{1 + \bmu^\top \bsigma^{-1} \bmu} = r \eta$, then the efficient frontier can be written as
\begin{equation*}     \begin{aligned}
    \Var(X_T) &= \left[ \E[X_T] - r^T \right]^2 \frac{\eta^T}{r^T - \eta^T}
\end{aligned} \end{equation*}

Since we have explicit constraint on the terminal expected return, we can easily get the standardized Sharpe ratio over the whole $T$ periods:
\begin{equation*}     \begin{aligned}
    \SR^2(X_T) 
    &= \frac{1}{T} \frac{(X_\tg - r^T)^2}{\Var(X_T)} 
    = \frac{1}{T} \frac{(X_\tg - r^T)^2}{\bigg[\frac{X_\tg - r^T}{ \eta^T -r^T}\bigg]^2 ((\Gamma + \eta^2)^T - \eta^{2T})}
    = \frac{1}{T} \frac{( \eta^T -r^T)^2}{((\Gamma + \eta^2)^T - \eta^{2T})}
    % \\
    % &= \frac{1}{T} \frac{((\frac{r}{1+\bmu^\top \bsigma^{-1} \bmu})^T -r^T)^2}{((r^2 \frac{\bmu^\top ({\bsigma} + \rho \Q)^{-1} \bsigma ({\bsigma} + \rho \Q)^{-1} \bmu}{(1+\bmu^\top ({\bsigma} + \rho \Q)^{-1} \bmu)^2} + r^2(1-\frac{\bmu^\top\bsigma^{-1}\bmu}{1+\bmu^\top \bsigma^{-1} \bmu})^2)^T - (r(1-\frac{\bmu^\top\bsigma^{-1}\bmu}{1+\bmu^\top \bsigma^{-1} \bmu}))^{2T})} \\
    % &= \frac{1}{T} \frac{(1 - (1+\bmu^\top \bsigma^{-1} \bmu)^T )^2}{(\bmu^\top ({\bsigma} + \rho \Q)^{-1} \bsigma ({\bsigma} + \rho \Q)^{-1} \bmu + (1+\varepsilon^\top\bsigma^{-1}\bmu)^2)^T - (1+\varepsilon^\top\bsigma^{-1}\bmu)^{2T}}
\end{aligned} \end{equation*}
Substituting the parameters, we can derive the result of the one-period Sharpe ratio achieved by the MMV policy, 
\begin{align*}
    \SR_{\max}
    &=
    \frac{1}{\sqrt{T}}\sqrt{(1+\bmu^\top\bsigma^{-1}\bmu)^T-1} \,.
\end{align*}

Next, we consider holding the static MV portfolio over $T$ periods. 
The corresponding terminal wealth will be $X_T = X_0 \cdot \prod_{t=0}^{T-1} (r + \frac{1}{2 \omega} \P_t^\top \bsigma^{-1} \bmu)$. 
It is straightforward that 
\begin{align*}
    \Var(\prod_{t=0}^{T-1} (r + \frac{1}{2 \omega} \P_t^\top \bsigma^{-1} \bmu))
    & = \E[\prod_{t=0}^{T-1} (r + \frac{1}{2 \omega} \P_t^\top \bsigma^{-1} \bmu)^2] - \E[\prod_{t=0}^{T-1} (r + \frac{1}{2 \omega} \P_t^\top \bsigma^{-1} \bmu)]^2 \,,
    \\
    \E[(r + \frac{1}{2 \omega} \P_t^\top \bsigma^{-1} \bmu)^2]
    & = (r + \frac{1}{2 \omega} \bmu^\top \bsigma^{-1} \bmu)^2 + (\frac{1}{2 \omega})^2 \bmu^\top \bsigma^{-1} \bmu \,.
\end{align*}
Then, the one-period Sharpe ratio obtained is 
\[
\SR_{\text{hold}} = \frac{1}{\sqrt{T}} \frac{(r + \frac{1}{2 \omega} \bmu^\top \bsigma^{-1} \bmu)^T - r^T}{\sqrt{((r + \frac{1}{2 \omega} \bmu^\top \bsigma^{-1} \bmu)^2 + (\frac{1}{2 \omega})^2 \bmu^\top \bsigma^{-1} \bmu )^T - (r + \frac{1}{2 \omega} \bmu^\top \bsigma^{-1} \bmu)^{2T}}} \,.
\]
Let $M=\frac{\bmu^\top\bsigma^{-1}\bmu}{\,2\omega r +  \bmu^\top\bsigma^{-1}\bmu\,}$, we have 
\[
\SR_{\mathrm{hold}} = \frac{1}{\sqrt{T}} \frac{1-(1-M)^T}{\sqrt{\left(1+\frac{M(1-M)}{2\omega r}\right)^T-1}} \,.
\]

Finally, we compare $\SR_{\max}$ with $\SR_{\mathrm{hold}}$. 
When $T=1$, $\SR_{\max} = \SR_{\mathrm{hold}} = \sqrt{\bmu^\top\bsigma^{-1}\bmu}$. 
However, for $T>1$, we can show that
$M \in (0,1)$, since $\omega>0$. 
Then, by Bernoulli's inequality, we have $(1-M)^T > 1-TM$, leading to $1-(1-M)^T < TM$. 
Similarly, we have $\sqrt{\left(1+\frac{M(1-M)}{2\omega r}\right)^T-1} > \sqrt{\frac{TM(1-M)}{2\omega r}}$. 
Therefore, we can show that 
\[
\SR_{\mathrm{hold}} 
< 
\sqrt{\frac{2 \omega r M}{1-M}}
= 
\sqrt{\bmu^\top\bsigma^{-1}\bmu} 
< 
\SR_{\max} \,,
\]
which completes the proof.
\end{proof}

\section{Proofs for Section 3}
\subsection{Efficient Frontier in Proposition \ref{prop:eff_frontier}}
\begin{proof}
Since $\E[\hat X_0] = 1$, we can derive the expression of $\E[\hat X_k]$ by
\begin{equation*} 
    \begin{aligned}
        \hat X_{k+1}
        & = r \hat X_k + \P_k^\top \hat \u_k = r \hat X_k + \P_k^\top \hat\D_k^{-1} \hat\bmu (\frac{\hat{b}_0 - X_\tg}{\hat{c}_0} \hat{b}_{k+1} - r \hat X_k \hat{a}_{k+1}) + \hat X_k \P_k^\top \hat\D_k^{-1} \Q_k \w_k  \\
        & = \hat X_k(r- r \hat{a}_{k+1} P_k^\top \hat\D_k^{-1}  \hat\bmu + \P_k^\top \hat\D_k^{-1} \Q_k \w_k) + \frac{\hat{b}_0 - X_\tg}{\hat{c}_0} \hat{b}_{k+1} \P_k^\top \hat\D_k^{-1} \hat\bmu \\
        \E[\hat X_{k+1}|\hat X_k] 
        & = \hat X_k (r- r \hat{a}_{k+1} \bmu^\top \hat\D_k^{-1}  \hat\bmu + \bmu^\top \hat\D_k^{-1} \Q_k \w_k) + \frac{\hat{b}_0 - X_\tg}{\hat{c}_0} \hat{b}_{k+1} \bmu^\top \hat\D_k^{-1} \hat\bmu \\
        \E[\hat X_{k+1}]
        & = \E[\hat X_k](r- r \hat{a}_{k+1} \bmu^\top \hat\D_k^{-1}  \hat\bmu + \bmu^\top \hat\D_k^{-1} \Q_k \w_k) + \frac{\hat{b}_0 - X_\tg}{\hat{c}_0} \hat{b}_{k+1} \bmu^\top \hat\D_k^{-1} \hat\bmu := \hat \eta_k \E[\hat X_k] + \hat \lambda_k \\
        \E[\hat X_{k+1}]
        & = \prod_{i=0}^{k} \hat\eta_i + \sum_{i=0}^k(\hat\lambda_{i}\prod_{j=i+1}^{k}\hat\eta_{j}) \\
        \E[\hat X_{T}]
        & = \prod_{i=0}^{T-1} \hat\eta_i + \sum_{i=0}^{T-1}(\hat\lambda_{i}\prod_{j=i+1}^{T-1}\eta_{j})
    \end{aligned} 
\end{equation*}
where $\hat\lambda_k = \frac{\hat{b}_0 - X_\tg}{\hat{c}_0} \hat{b}_{k+1} \bmu^\top \hat\D_k^{-1} \hat\bmu$ and $\hat\eta_k = r- r \hat{a}_{k+1} \bmu^\top \hat\D_k^{-1}  \hat\bmu +  \bmu^\top \hat\D_k^{-1} \Q_k \w_k$ for $k=0,\ldots,T-1$. 

For the expression of $\Var(\hat X_k)$, we follow the conditional variance of $\Var(\hat X_{k+1}) = \E[\Var(\hat X_{k+1}|\hat X_k)] + \Var(\E[\hat X_{k+1}|\hat X_k])$, where  
\begin{align*}
    \Var(\hat X_{k+1}|\hat X_k)
    & = (\frac{\hat{b}_0 - X_\tg}{\hat{c}_0} \hat{b}_{k+1} - \hat X_k r  \hat{a}_{k+1})^2 \hat\bmu^\top \hat\D_k^{-1} \bsigma \hat\D_k^{-1}  \hat\bmu +  \hat X_k^2 \w_k^\top \Q_k \hat\D_k^{-1}\bsigma\hat\D_k^{-1} \Q_k \w_k \\
    & + 2 X_k (\frac{\hat{b}_0 - X_\tg}{\hat{c}_0} \hat{b}_{k+1} - \hat X_k r  \hat{a}_{k+1})  \hat\bmu^\top \hat\D_k^{-1} \bsigma \hat\D_k^{-1} \Q_k \w_k
\end{align*}
and it is straightforward to have 
\begin{align*}
    &\E[\Var(\hat X_{k+1}|\hat X_k)] 
    \\
    = &(\Var(\hat X_k) + \E[\hat X_k]^2) \bigg[ r^2 \hat{a}_{k+1}^2  \hat\bmu^\top \hat\D_k^{-1} \bsigma \hat\D_k^{-1}  \hat\bmu + \w_k^\top \Q_k \hat\D_k^{-1}\bsigma\hat\D_k^{-1} \Q_k \w_k - 2 r \hat{a}_{k+1}  \hat\bmu^\top \hat\D_k^{-1} \bsigma \hat\D_k^{-1} \Q_k \w_k \bigg]\\
    & - \E[\hat X_k] \bigg[ -2 \frac{\hat{b}_0 - X_\tg}{\hat{c}_0}\hat{b}_{k+1}  \hat\bmu^\top \hat\D_k^{-1} \bsigma \hat\D_k^{-1} \Q_k \w_k + 2 \frac{\hat{b}_0 -X_\tg}{\hat{c}_0}\hat{b}_{k+1} r \hat{a}_{k+1}  \hat\bmu^\top \hat\D_k^{-1} \bsigma \hat\D_k^{-1}  \hat\bmu \bigg]\\
    & + (\frac{\hat{b}_0 -X_\tg}{\hat{c}_0} \hat{b}_{k+1})^2  \hat\bmu^\top \hat\D_k^{-1} \bsigma \hat\D_k^{-1}  \hat\bmu \\
    & \Var[\E(\hat X_{k+1}|\hat X_k)]
    = \Var(X_k) (r- r \hat{a}_{k+1} \bmu^\top \hat\D_k^{-1}  \hat\bmu +  \bmu^\top \hat\D_k^{-1} \Q_k \w_k)^2
\end{align*}
Combine the two terms, we have
\begin{align*}
    \Var(\hat X_{k+1})
    & = \Var(\hat X_k) \bigg[ r^2 \hat{a}_{k+1}^2  \hat\bmu^\top \hat\D_k^{-1} \bsigma \hat\D_k^{-1}  \hat\bmu + \w_k^\top \Q_k \hat\D_k^{-1} \bsigma \hat\D_k^{-1} \Q_k \w_k - 2 r \hat{a}_{k+1}  \hat\bmu^\top \hat\D_k^{-1} \bsigma \hat\D_k^{-1} \Q_k \w_k  \\
    & + (r- r \hat{a}_{k+1} \bmu^\top \hat\D_k^{-1}  \hat\bmu + \bmu^\top \hat\D_k^{-1} \Q_k \w_k)^2 \bigg] \\
    & + \E[\hat X_k]^2 \bigg[ r^2 \hat{a}_{k+1}^2  \hat\bmu^\top \hat\D_k^{-1} \bsigma \hat\D_k^{-1}  \hat\bmu + \w_k^\top \Q_k \hat\D_k^{-1} \bsigma \hat\D_k^{-1} \Q_k \w_k - 2 r \hat{a}_{k+1}  \hat\bmu^\top \hat\D_k^{-1} \bsigma \hat\D_k^{-1} \Q_k \w_k \bigg]\\
    & + \E[\hat X_k] \bigg[ 2 \frac{\hat{b}_0 - X_\tg}{\hat{c}_0}\hat{b}_{k+1}  \hat\bmu^\top \hat\D_k^{-1} \bsigma \hat\D_k^{-1} \Q_k \w_k - 2 \frac{\hat{b}_0 -X_\tg}{\hat{c}_0}\hat{b}_{k+1} r \hat{a}_{k+1}  \hat\bmu^\top \hat\D_k^{-1} \bsigma \hat\D_k^{-1}  \hat\bmu \bigg]\\
    & + (\frac{\hat{b}_0 -X_\tg}{\hat{c}_0} \hat{b}_{k+1})^2  \hat\bmu^\top \hat\D_k^{-1} \bsigma \hat\D_k^{-1}  \hat\bmu
\end{align*}
Define the following parameters, 
\begin{equation}
    \begin{aligned}
        \hat\nu_k &:= 
        \E[\hat X_k]^2 \bigg[ r^2 \hat{a}_{k+1}^2  \hat\bmu^\top \hat\D_k^{-1} \bsigma \hat\D_k^{-1}  \hat\bmu + \w_k^\top \Q_k \hat\D_k^{-1} \bsigma \hat\D_k^{-1} \Q_k \w_k - 2 r \hat{a}_{k+1}  \hat\bmu^\top \hat\D_k^{-1} \bsigma \hat\D_k^{-1} \Q_k \w_k \bigg]\\
        & ~~~~~ + \E[\hat X_k] \bigg[ 2 \frac{\hat{b}_0 - X_\tg}{\hat{c}_0}\hat{b}_{k+1}  \hat\bmu^\top \hat\D_k^{-1} \bsigma \hat\D_k^{-1} \Q_k \w_k - 2 \frac{\hat{b}_0 - X_\tg}{\hat{c}_0}\hat{b}_{k+1} r \hat{a}_{k+1}  \hat\bmu^\top \hat\D_k^{-1} \bsigma \hat\D_k^{-1}  \hat\bmu \bigg]\\
        & ~~~~~ + (\frac{\hat{b}_0 - X_\tg}{\hat{c}_0} \hat{b}_{k+1})^2  \hat\bmu^\top \hat\D_k^{-1} \bsigma \hat\D_k^{-1}  \hat\bmu\\
        \hat{g}_k &:= 
        r^2 \hat{a}_{k+1}^2  \hat\bmu^\top \hat\D_k^{-1} \bsigma \hat\D_k^{-1}  \hat\bmu + \w_k^\top \Q_k \hat\D_k^{-1} \bsigma \hat\D_k^{-1} \Q_k \w_k - 2 r \hat{a}_{k+1}  \hat\bmu^\top \hat\D_k^{-1} \bsigma \hat\D_k^{-1} \Q_k \w_k\\
        & ~~~~ + (r- r \hat{a}_{k+1} \bmu^\top \hat\D_k^{-1}  \hat\bmu + \bmu^\top \hat\D_k^{-1} \Q_k \w_k)^2 
    \end{aligned}
    \label{para:variance}
\end{equation}
then $\Var(\hat X_{k+1})$ can be written as
\begin{equation*}
\begin{aligned}
    \Var(\hat X_{k+1}) & = \Var(\hat X_k) \hat{g}_k + \hat\nu_k
\end{aligned} 
\end{equation*}
where, starting from $\Var(X_0) = 0$, $\Var(\hat X_T) = \sum_{i=0}^{T-1} \hat\nu_i \prod_{j=i+1}^{T-1}\hat{g}_j$.
Generally the out-of-sample expectation $\E[\hat X_T] \neq X_\tg$ because of the estimation error, the expression of Sharpe ratio is
\begin{align*}
    \SR^2(\hat{X}_T) 
    &= \frac{1}{T} \frac{(\E[X_T] - r^T)^2}{\Var(X_T)} 
    = \frac{(\prod_{i=0}^{T-1} \hat\eta_i + \sum_{i=0}^{T-1}(\hat\lambda_{i}\prod_{j=i+1}^{T-1}\eta_{j})-r^T)^2}{T \sum_{i=0}^{T-1} \hat\nu_i \prod_{j=i+1}^{T-1}\hat{g}_j} 
\end{align*}
Moreover, if $\hat\bmu = \bmu$ with no estimation error, the out-of-sample expectation of $\E[\hat X_T]$ will equal to the in-sample evaluation, therefore $\E[\hat X_T] = X_\tg$. 
\end{proof}

\subsection{Proof of Corollary \ref{coro:os_SR_w=0}} \label{proof:B2}
\begin{proof}
Under our framework in this paper, we allow the regularization to have time varying effect on each period, i.e. $\Q_k$.  
Consider the special case of $\Q_k = \hat a_{k+1} \rho \Q$. 
% \begin{align*}
%     \hat{a}_k  
%     =  \hat{a}_{k+1} r^2 - \hat{a}_{k+1}^2 r^2 \hat\bmu^{\top} \hat{\D}_k^{-1}\hat\bmu
%     =  \hat{a}_{k+1} r^2 (1-\hat\bmu^{\top} (\hat{\bsigma}+\hat\bmu\hat\bmu^\top + \rho\Q)^{-1}\hat\bmu) = \frac{\hat{a}_{k+1} r^2}{1 + \hat\bmu^\top (\hat{\bsigma} + \rho\Q)^{-1}\hat\bmu}
% \end{align*}
% \begin{equation}     \begin{aligned}

% \end{aligned} \end{equation}
% which means that $a_k \to 0$ as $T \to \infty$. 
Define $\hat C := r(1- \hat{a}_{k+1} \hat\bmu^\top \hat\D_k^{-1} \hat\bmu)$, then we have the new parameter dynamics when $\Q_k = \hat{a}_{k+1} \rho \Q$: 
\begin{align*}
\begin{dcases}
    \hat{\D}_k & = \hat{a}_{k+1}(\hat{\bsigma}+  \hat{\bmu} \hat{\bmu}^\top + \rho \Q ),\\
    \hat{a}_k  & =  \hat{a}_{k+1} r \hat C, \\
    \hat{b}_k  & = \hat{b}_{k+1} \hat C,              \\
    \hat{c}_k  & = \hat{c}_{k+1} - \hat{b}_{k+1}^2  \hat{\bmu}^\top \hat{\D}_k^{-1}  \hat{\bmu}
\end{dcases}
\end{align*}
And we can get the explicit formula of these parameters: 
\begin{align*}
    \begin{dcases}
    \hat{a}_k  & =  (r \hat C)^{T-k}, \\
    \hat{b}_k  & = \hat C^{T-k} = r^{-(T-k)} \hat{a}_k, \\
    \hat{c}_k  & = (\hat C/r)^{T-k} - 1
    \end{dcases}
\end{align*}
It is straightforward to have the following parameters
\begin{equation*}     \begin{aligned}
    \hat\eta_k 
    &= r(1-\hat a_{k+1} \bmu^\top \bsigma^{-1} \hat\bmu) 
    = r(1-\frac{\bmu^\top(\hat\bsigma + \rho \Q)^{-1}\hat\bmu}{1+\hat\bmu^\top (\hat\bsigma + \rho \Q)^{-1} \hat\bmu}) := \hat\eta\\
    \hat\lambda_k 
    &= \frac{\hat b_0 - X_\tg}{\hat c_0} \hat b_{k+1} \bmu^\top \D_k^{-1} \hat\bmu 
    = \frac{\hat b_0 - X_\tg}{\hat c_0} \hat b_{k+1} \hat a_{k+1}^{-1} (1 - \hat\eta/r) 
    % \\
    % &= \frac{\hat C_\rho^T - X_\tg}{\hat C_\rho^T - r^T} r^{k+1} (1 - \hat\eta/r) 
    = r^{k+1} \frac{\hat C^T - X_\tg}{\hat C^T - r^T} \frac{\bmu^\top (\hat\bsigma + \rho \Q)^{-1} \hat\bmu}{1+\hat\bmu^\top (\hat\bsigma + \rho \Q)^{-1} \hat\bmu}
\end{aligned} \end{equation*}

For the expectation part, 
\begin{equation*}     \begin{aligned}
    \E[X_{k}] 
    &= \prod_{i=0}^{k-1} \hat\eta_i + \sum_{i=0}^{k-1}(\hat\lambda_{i}\prod_{j=i+1}^{k-1}\hat\eta_{j})\\
    &= \hat\eta^k + \sum_{i=0}^{k-1} \hat\eta^{k-i-1} r^{i+1} \frac{\hat C^T - X_\tg}{\hat C^T - r^T} \frac{\bmu^\top (\hat\bsigma + \rho \Q)^{-1} \hat\bmu}{1+\hat\bmu^\top (\hat\bsigma + \rho \Q)^{-1} \hat\bmu} \\
    &= \hat\eta^k + r^k \frac{\hat C^T - X_\tg}{\hat C^T - r^T} \frac{\bmu^\top (\hat\bsigma + \rho \Q)^{-1} \hat\bmu}{1+\hat\bmu^\top (\hat\bsigma + \rho \Q)^{-1} \hat\bmu} \sum_{i=0}^{k-1} (1-\frac{\bmu^\top(\hat\bsigma + \rho \Q)^{-1}\hat\bmu}{1+\hat\bmu^\top (\hat\bsigma + \rho \Q)^{-1} \hat\bmu})^{k-i-1} \\
    &= \hat\eta^k \bigg[ 1 - \frac{\hat C^T - X_\tg}{\hat C^T - r^T} (1-(1-\frac{\bmu^\top(\hat\bsigma + \rho \Q)^{-1}\hat\bmu}{1+\hat\bmu^\top (\hat\bsigma + \rho \Q)^{-1} \hat\bmu})^{-k}) \bigg]\\
    &= \hat\eta^k \frac{X_\tg - r^T}{\hat C^T -r^T} + r^k \frac{\hat C^T - X_\tg}{\hat C^T - r^T} 
\end{aligned} \end{equation*}
To derive the variance, we need the following parameters again, from equation(\ref{para:variance}): 
\begin{equation*}     \begin{aligned}
        \hat\nu_k &:= \bigg[(\frac{\hat b_0 - X_\tg}{\hat c_0}) \hat b_{k+1}- r \hat a_{k+1} \E[X_k] \bigg]^2 \hat\bmu^\top \hat\D_k^{-1} \bsigma \hat\D_k^{-1} \hat\bmu\\
        &= \bigg[(\frac{\hat b_0 -X_\tg}{\hat c_0}) r^{-T+k} -  \E[X_k] \bigg]^2 r^2 \hat a_{k+1}^2 \hat\bmu^\top \hat\D_k^{-1} \bsigma \hat\D_k^{-1} \hat\bmu\\
        &=  r^2 \bigg[(\frac{\hat b_0 -X_\tg}{\hat c_0}) r^{-T+k} - \hat\eta^k \frac{X_\tg - r^T}{\hat C^T -r^T} - r^k \frac{\hat C^T - X_\tg}{\hat C^T - r^T}\bigg]^2 \frac{\hat\bmu^\top (\hat{\bsigma} + \rho \Q)^{-1} \bsigma (\hat{\bsigma} + \rho \Q)^{-1} \hat\bmu}{(1+\hat\bmu^\top (\hat{\bsigma} + \rho \Q)^{-1} \hat\bmu)^2} \\
        &= r^2 \frac{\hat\bmu^\top (\hat{\bsigma} + \rho \Q)^{-1} \bsigma (\hat{\bsigma} + \rho \Q)^{-1} \hat\bmu}{(1+\hat\bmu^\top (\hat{\bsigma} + \rho \Q)^{-1} \hat\bmu)^2} \bigg[\frac{X_\tg - r^T}{\hat C^T -r^T}\bigg]^2 \hat\eta^{2k} 
        = \hat\Gamma \bigg[\frac{X_\tg - r^T}{\hat C^T -r^T}\bigg]^2 \hat\eta^{2k}\\
        \hat g_k &:=r^2 [(1-\hat a_{k+1} \bmu^\top \hat\D_k^{-1} \hat\bmu)^2 + \hat a_{k+1}^2 \hat\bmu^\top \hat\D_k^{-1} \bsigma \hat\D_k^{-1} \hat\bmu] = \hat\Gamma + \hat\eta^2 := \hat g
\end{aligned} \end{equation*}
where $\hat\Gamma = r^2 \frac{\hat\bmu^\top (\hat{\bsigma} + \rho \Q)^{-1} \bsigma (\hat{\bsigma} + \rho \Q)^{-1} \hat\bmu}{(1+\hat\bmu^\top (\hat{\bsigma} + \rho \Q)^{-1} \hat\bmu)^2}$. 
The variance can be written as: 
\begin{equation*}     \begin{aligned}
    \Var(X_k) 
    &= \sum_{i=0}^{k-1} \hat\nu_i \prod_{j=i+1}^{k-1} \hat g_j 
    = \sum_{i=0}^{k-1} \hat\Gamma \bigg[\frac{X_\tg - r^T}{\hat C^T -r^T}\bigg]^2 \hat\eta^{2i}(\hat\Gamma + \hat\eta^2)^{k-i-1}\\
    &= \hat\Gamma (\hat\Gamma + \hat\eta^2)^{k-1} \bigg[\frac{X_\tg - r^T}{\hat C^T -r^T}\bigg]^2 \sum_{i=0}^{k-1}  (\frac{\hat\eta^2}{\hat\Gamma + \hat\eta^2})^i\\
    \Var(X_T) &= \hat\Gamma (\hat\Gamma + \hat\eta^2)^{T-1} \bigg[\frac{X_\tg - r^T}{\hat C^T -r^T}\bigg]^2 \sum_{i=0}^{T-1}  (\frac{\hat\eta^2}{\hat\Gamma + \hat\eta^2})^i 
    = \bigg[\frac{X_\tg - r^T}{\hat C^T -r^T}\bigg]^2 ((\hat\Gamma + \hat\eta^2)^T - \hat\eta^{2T})
\end{aligned} \end{equation*}
Since we have explicit constraint on the terminal expected return, we can easily get the standardized Sharpe ratio over the whole $T$ periods:
\begin{align*}
    \SR^2(\hat{X}_T) 
    &= \frac{1}{T} \frac{(\E[X_T] - r^T)^2}{\Var(X_T)} 
    = \frac{1}{T} \frac{(\hat \eta^T -r^T)^2}{((\hat\Gamma + \hat\eta^2)^T - \hat\eta^{2T})}
\end{align*}
\end{proof}  

\subsection{An Important Lemma}
\begin{lemma}\label{lemma:keylemma}
Given $\X \in{\mathbb R}^{n\times p}$ with $\X = \Z \bsigma^{\frac12}$ where the elements of $\Z$ are iid with zero mean, variance 1 and finite eighth order moment, $\bsigma$ is a  positive semi-definite matrix. For any deterministic $\a, \b \in{\mathbb R}^p$ with bounded norm and positive definite matrix $\S \in{\mathbb R}^{p\times p}$, define 
\begin{align*}
    m_n(z)=\a^\top \bigg(\frac{\X^\top \X}{n}+\S+z\I_p\bigg)^{-1}\b,
\end{align*}
and assume that $p,n$ tends to infinity with proportion $p/n= c>0$. Then for $z > 0$, 
\begin{align}
\label{eq:convergence_mz}
    m_n(z)-\a^\top \bigg( \frac{\bsigma}{1+s_n(z)}+\S+z\I_p \bigg)^{-1}\b ~\to~ 0 \quad \text{almost surely}.
\end{align}
Here, $s_n(z)$ uniquely solves the following equation:
\begin{align*}
    s_n(z)=\frac{c}{p}~\tr~\bsigma\bigg(\frac{\bsigma}{1+s_n(z)}+\S+z\I_p\bigg)^{-1}.
\end{align*}
The almost sure convergence also holds after we take derivative. 
\begin{align}
\label{eq:convergence_mz_deriv}
    m_n'(z)-\frac{ \rm{d} \bigg[ \a^\top \Big(\frac{\bsigma}{1+s_n(z)}+\S+z\I_p \Big)^{-1} \b \bigg]}{\rm{d} z}~\to~ 0 \quad \text{almost surely}.
\end{align}
\end{lemma}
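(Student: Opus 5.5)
The plan is to prove Lemma~\ref{lemma:keylemma} with the standard resolvent/deterministic-equivalent argument of random matrix theory, in the spirit of \cite{rubio2011spectral}. Two quantities appear throughout:
\begin{align*}
\Q(z)&=\Big(\tfrac{1}{n}\X^\top\X+\S+z\I_p\Big)^{-1},\\
\bar\Q(z)&=\Big(\tfrac{\bsigma}{1+s_n(z)}+\S+z\I_p\Big)^{-1}.
\end{align*}
For $z>0$ both matrices have operator norm at most $1/z$, because $\S\succ 0$. The goal is to show $\a^\top(\Q-\bar\Q)\b\to 0$ almost surely. I split this into a martingale concentration step and a deterministic-equivalent step for the mean.

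\textbf{Concentration.} Write $\frac1n\X^\top\X=\frac1n\sum_{i=1}^n \x_i\x_i^\top$ with $\x_i=\bsigma^{1/2}\z_i$. Decompose $\a^\top\Q\b-\E[\a^\top\Q\b]$ into martingale differences $(\E_i-\E_{i-1})[\a^\top(\Q-\Q_{-i})\b]$, where $\Q_{-i}$ drops the $i$-th sample. By Sherman--Morrison each difference is bounded by $C/n$ times quadratic forms in $\z_i$ with bounded moments. Burkholder's inequality with the eighth-moment assumption then gives $\E|\cdot|^4=O(n^{-2})$. Borel--Cantelli yields almost sure convergence to the mean.

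\textbf{Deterministic equivalent.} Use the resolvent identity
\[
\Q-\bar\Q=\Q\Big(\tfrac{\bsigma}{1+s_n}-\tfrac1n\sum_i\x_i\x_i^\top\Big)\bar\Q .
\]
Apply Sherman--Morrison to each term:
\[
\Q\x_i=\frac{\Q_{-i}\x_i}{1+\frac1n\x_i^\top\Q_{-i}\x_i}.
\]
Next, replace $\frac1n\x_i^\top\Q_{-i}\x_i$ by $\frac1n\tr(\bsigma\Q_{-i})$ using a quadratic-form trace lemma, which is valid under the finite eighth moment. Then replace this trace by $\frac1n\tr(\bsigma\Q)$ via a rank-one perturbation bound of order $O(1/n)$. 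Taking expectations shows that
\[
\E[\a^\top(\Q-\bar\Q)\b]\to 0
\]
provided $\frac1n\tr(\bsigma\Q)$ is asymptotically close to $s_n(z)$. To obtain this, apply the same argument with $\a^\top(\cdot)\b$ replaced by $\frac1n\tr(\bsigma\,\cdot)$. This produces an approximate fixed-point equation for $\tilde s_n=\frac1n\tr(\bsigma\Q)$. Note that $\frac{c}{p}\tr(\cdot)=\frac1n\tr(\cdot)$, so this equation is exactly the one defining $s_n$.

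\textbf{Main obstacle.} I expect the hardest step to be showing that the fixed-point equation $s=f(s)$, with $f(s)=\frac1n\tr\,\bsigma(\frac{\bsigma}{1+s}+\S+z\I)^{-1}$, has a unique solution and is stable, so that an approximate solution lies close to the exact one. My first attempt: $f$ is increasing and concave in $s\ge0$ and bounded by $c\|\bsigma\|/z$. Its derivative equals
\[
f'(s)=\frac{1}{n(1+s)^2}\tr\,\bsigma\bar\Q\bsigma\bar\Q .
\]
I would bound this by $f(s)/(1+s)\cdot\theta$ with $\theta<1$ uniformly, using $\S+z\I\succeq z\I$. That gives $f'(s)<1$ at the fixed point, hence uniqueness and Lipschitz stability of the inverse.

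\textbf{Derivative statement.} For \eqref{eq:convergence_mz_deriv}, both $m_n(z)$ and its deterministic equivalent extend analytically to a complex neighbourhood of $(0,\infty)$, namely $\mathrm{Re}\,z>z_0/2$, and are uniformly bounded there. The argument above also goes through for complex $z$ in this region. Vitali's (Montel's) theorem then upgrades pointwise almost sure convergence on a countable dense set to locally uniform convergence of analytic functions. This in turn implies almost sure convergence of the derivatives.
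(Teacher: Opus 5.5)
Your plan is essentially the paper's argument. The paper simply invokes Theorem~1 of \cite{rubio2011spectral}, whose proof is the standard resolvent/martingale/approximate-fixed-point scheme you sketch, and then gets the derivative statement from analyticity plus locally uniform convergence (via \cite{widder1938stieltjes}), which is your Vitali--Montel step. As in the cited theorem, your uniform bounds (e.g.\ $f\le c\|\bsigma\|/z$ and $\theta<1$) tacitly require $\sup_n\|\bsigma\|<\infty$; the paper imposes this only where the lemma is applied, not in its statement.
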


\begin{proof}
The first almost sure convergence has been proved in Theorem 1 of \cite{rubio2011spectral}. Corollary 1.11 in \cite{widder1938stieltjes} showed that the convergence is indeed internally closed uniform convergence. Since the functions are all analytic, the second almost sure convergence directly follow. 
\end{proof}

\subsection{Proofs for Section 3.1}

\subsubsection{Proof of Theorem~\ref{theorem:single_mu_SR}}
\label{prof:theorem:single_mu_SR}
\begin{proof}
We consider general regularization $\Q$, the optimal portfolio using $\bsigma$ and $\hat \bmu$ is 
\begin{equation*}     
\begin{aligned}
    \u^\ast &= (\bsigma + \Q)^{-1} (\frac{1}{2\omega} \hat\bmu + \alpha \Q \bar\w) \,, 
    \\
    \frac{1}{2\omega} &= \frac{X_\tg - r - \hat\bmu^\top(\bsigma + \Q)^{-1} (\alpha \Q \bar\w)}{\hat\bmu^\top(\bsigma + \Q)^{-1} \hat\bmu} = \frac{X_\tg - r - \alpha (\varepsilon + \bmu)^\top (\bsigma + \Q)^{-1} \Q \bar\w}{(\varepsilon + \bmu)^\top(\bsigma + \Q)^{-1} (\varepsilon + \bmu)} \,.
\end{aligned} 
\end{equation*}
We would like to show the almost surely convergence of the terms that involve $\varepsilon$. 
By normality of $\varepsilon$, we can see that $\E[\varepsilon^\top (\bsigma + \Q)^{-1} \Q \bar\w] = 0$ and $\E[\varepsilon^\top (\bsigma + \Q)^{-1} \varepsilon] = \tr((\bsigma + \Q)^{-1} \bsigma/n)$. 
We start with the first order terms. 
Since $\varepsilon^\top (\bsigma + \Q)^{-1} \Q \bar\w$ is normal distribution, for any $\epsilon >0$, 
\begin{equation*}     
\begin{aligned}
    \mathbb{P}\left(|\varepsilon^\top (\bsigma + \Q)^{-1} \Q \bar\w | \geq \epsilon \right) 
    \leq 
    2 \exp \left( -\frac{2 n \epsilon^2}{\bar\w^\top \Q (\bsigma + \Q)^{-1} \bsigma (\bsigma + \Q)^{-1} \Q \bar\w} \right)
\end{aligned} 
\end{equation*}
By assumption, $\Q$ and $\bsigma$ are well-conditioned with bounded eigenvalues, and $\| \w \|_2 \leq \infty$. 
Then we can show that for some constant $C$, 
\begin{equation*}     
\begin{aligned}
    \sum_n^\infty \mathbb{P}\left(|\varepsilon^\top (\bsigma + \Q)^{-1} \Q \bar\w| \geq \epsilon \right) 
    \leq \sum_n^\infty 2 \exp \left( - C n\right) < \infty
\end{aligned} 
\end{equation*}
Therefore, by Borel-Cantelli Lemma we have the following almost surely convergence,  
\begin{equation*}     \begin{aligned}
    \varepsilon^\top (\bsigma + \Q)^{-1} \Q \bar\w \xrightarrow{a.s} 0, ~~~~ 
    \varepsilon^\top (\bsigma + \Q)^{-1} \bmu \xrightarrow{a.s} 0
\end{aligned} \end{equation*}
Now we focus on the second order term of $\varepsilon$, we can rewrite it as a weighted sum of chi-squared distribution
\begin{equation*}     
\begin{aligned}
    \varepsilon^\top (\bsigma + \Q)^{-1} \varepsilon = \frac{1}{n} \z^\top \bsigma^{1/2} (\bsigma + \Q)^{-1} \bsigma^{1/2} \z := \frac{1}{n} \z^\top \bm{B} \z \,,
\end{aligned} 
\end{equation*}
where $z_i$ follows the standard normal distribution and the eigenvalues of matrix $\bm{B} = \bsigma^{1/2} (\bsigma + \Q)^{-1} \bsigma^{1/2}$ are in constant order by assumption. 
Then by Hanson-Wright Inequalities, we have for some constant $C>0$,
\begin{equation*}     
\begin{aligned}
    \mathbb{P}\left(|\frac{1}{n} \z^\top \bm{B} \z - \E[\frac{1}{n} \z^\top \bm{B} \z]| \geq \epsilon \right) \leq 2 \exp \left( - C \min \left( \frac{n^2 \epsilon^2}{\| \bm{B} \|_F^2}, \frac{n \epsilon}{\| \bm{B} \|_2} \right) \right) \,,
\end{aligned} 
\end{equation*}
where $\| \bm{B} \|_F^2 \leq p \| \bm{B} \|_2^2$ and $\| \bm{B} \|_2 = O(1)$. 
Recall that we assume $p/n = c$ is a constant, so we can show that 
\begin{equation*} 
\begin{aligned}
    \sum_n^\infty \mathbb{P}\left(|\frac{1}{n} \z^\top \bm{B} \z - \E[\frac{1}{n} \z^\top \bm{B} \z]| \geq \epsilon \right) 
    \leq 2 \sum_n^\infty \exp \left( - C \min \left( \frac{n^2 \epsilon^2}{p \| \bm{B} \|_2^2}, \frac{n \epsilon}{\| \bm{B} \|_2} \right) \right) < \infty \,.
\end{aligned} 
\end{equation*}
Then by Borel-Cantelli Lemma, we can show the almost surely convergence,
\begin{equation*}
\begin{aligned}
    \varepsilon^\top (\bsigma + \Q)^{-1} \varepsilon \xrightarrow{a.s} \tr((\bsigma + \Q)^{-1} \bsigma/n) \,.
\end{aligned} 
\end{equation*}
Therefore, the following convergence holds
\begin{equation*}
\begin{aligned}
    \frac{1}{2\omega} \xrightarrow{a.s} \frac{1}{2\omega^\ast}
    &= \frac{X_\tg - r - \alpha \bmu^\top(\bsigma + \Q)^{-1} \Q \bar\w}{\bmu^\top(\bsigma + \Q)^{-1} \bmu + \tr((\bsigma + \Q)^{-1} \bsigma/n)} \,. 
\end{aligned} 
\end{equation*}
Then, we can show that the Sharpe ratio of terminal wealth almost surely converges. 
And its limiting value is 
\begin{align*}
    \SR_\infty(\Q,\alpha\bar\w)
    = 
    \frac{ \bmu^\top (\bsigma + \Q)^{-1} (\frac{\bmu}{2\omega^\ast}  + \alpha \Q \bar\w)}{\sqrt{(\frac{\bmu}{2\omega^\ast}  + \alpha \Q \bar\w)^T (\bsigma + \Q)^{-1} \bsigma (\bsigma + \Q)^{-1} (\frac{\bmu}{2\omega^\ast}  + \alpha \Q \bar\w) + \frac{1} {(2\omega^\ast)^2} \tr((\bsigma + \Q)^{-1}  \bsigma (\bsigma + \Q)^{-1} \bsigma / n)}}\,.
\end{align*}
By letting $\A = \bsigma + \Q$, $\d_\alpha = \frac{\bmu}{2\omega^\ast}  + \alpha \Q \bar\w$, $e_\bmu = \tr( \A^{-1}  \bsigma \A^{-1} \bsigma / n)$, we have 
\begin{align*}
    \SR_\infty(\Q,\alpha\bar\w)
    = 
    \frac{ \bmu^\top \A^{-1} \d_\alpha }{\sqrt{ \| \A^{-1} \d_\alpha \|_\bsigma + \frac{e_\bmu} {(2\omega^\ast)^2} }} \,,
\end{align*}
which completes the proof. 
\end{proof}

\subsubsection{Proof of Corollary \ref{coro:3.2}}
\begin{proof}
Consider the case where $\Q = \rho \bsigma$, for any $\rho \geq 0$, we denote the limiting values of expected terminal wealth and its variance as  
\begin{equation*}
\begin{aligned}
    \E[X_T]_\infty
    & = 
    \frac{1}{2\omega^\ast} (\rho + 1)^{-1} \bmu^\top \bsigma^{-1} \bmu + \alpha (\rho + 1)^{-1} \bmu^\top \bsigma^{-1} \Q \bar\w
    \\
    & = 
    \frac{1}{\rho + 1} \Big[ \frac{\bmu}{2\omega^\ast} ^\top \bsigma^{-1} \bmu + \alpha \bmu^\top \bsigma^{-1} \Q \bar\w \Big] \,,
    \\
    \Var{X_T}_\infty
    & = 
    (\frac{\bmu}{2\omega^\ast}  + \alpha \Q \bar\w)^\top (\bsigma + \Q)^{-1}  \bsigma (\bsigma + \Q)^{-1} (\frac{\bmu}{2\omega^\ast}  + \alpha \Q \bar\w) + (\frac{1}{2\omega^\ast})^2 \tr((\bsigma + \Q)^{-1} \bsigma (\bsigma + \Q)^{-1} \bsigma/n) \\
    & = 
    \frac{1}{(\rho + 1)^2} \Big[ (\frac{1}{2\omega^\ast})^2 ( \bmu^\top \bsigma^{-1} \bmu + p/n) + \alpha^2 \bar\w^\top \Q \bsigma^{-1} \Q \bar\w + 2 \alpha \frac{\bmu}{2\omega^\ast} ^\top \bsigma^{-1} \Q \bar\w \Big] \,,
\end{aligned}
\end{equation*}
where $\frac{1}{2\omega^\ast} = \frac{(\rho + 1) (X_\tg - r) - \alpha \bmu^\top \bsigma^{-1} \Q \bar\w}{ \bmu^\top \bsigma^{-1} \bmu + p/n }$. 
For any given $\alpha$ and $\rho$, the derivatives respect to $\w$ of both $\E[X_T]_\infty$ and $\Var{X_T}_\infty$ can be written as linear combinations of $\bsigma^{-1}\bmu$ and $\bsigma^{-1} \Q \bar\w$. 
It implies that the optimal reference weight should be $\Q \bar\w^\ast \propto \bmu$. 
Since we can adjust the scale of reference weight $\Q \bar\w$, we only need to find the optimal $\alpha$ when $\Q \bar\w^\ast = (\rho + 1) \bmu$. 
Notice that the limiting Sharpe ratio depends on $\alpha$, where 
\begin{align*}
    \SR_\infty^2(\Q, \alpha \bar\w^\ast)
    = 
    \frac{(\frac{1}{2\omega^\ast} + \alpha)^2 (\bmu^\top \bsigma^{-1} \bmu)^2}{(\frac{1}{2\omega^\ast})^2 ( \bmu^\top \bsigma^{-1} \bmu + p/n) + \alpha^2 \bmu^\top \bsigma^{-1} \bmu + 2 \alpha \frac{\bmu}{2\omega^\ast} ^\top \bsigma^{-1} \bmu}
\end{align*}
We can see that for any $\rho>0$, 
\begin{align*}
\SR_\infty^2(\rho \bsigma, \mathbf{0})
= 
\frac{(\bmu^\top \bsigma^{-1} \bmu)^2}{ \bmu^\top \bsigma^{-1} \bmu + p/n} \,.
\end{align*}
Then we check the FOC of squared limiting Sharpe ratio by letting $\frac{\partial \SR_\infty(\rho \bsigma, \alpha \bmu)}{\partial \alpha} = 0$. 
It implies that the optimal scale $\alpha^\ast = \frac{ X_\tg - r }{\bmu^\top \bsigma^{-1} \bmu}$. 
Plugging in this optimal scale, we have 
\begin{align*}
\SR_\infty^2(\Q, \alpha^\ast \bar\w^\ast) = \bmu^\top \bsigma^{-1} \bmu \,.
\end{align*} 
\end{proof}

\subsubsection{Proof of Theorem~\ref{theorem:SR_converge_sig_single}}
\begin{proof}
Recall that we assume full knowledge of the expected return, i.e. $\hat\bmu = \bmu$ and set the regularization matrix as $\Q = \rho \bar\Q$ in this section. 

Let $\a = \b = \bsigma^{-1/2}\bmu$ and $\S = \rho \bsigma^{-1/2} \bar\Q \bsigma^{-1/2} - z \I$, we have 
\begin{align*}
    m(z) 
    = \a^\top \bigg( \frac{\X^\top \X}{n}+\S+z\I \bigg)^{-1}\b 
    = \bmu^\top \bigg(\frac{ \bsigma^{1/2}\X^\top \X \bsigma^{1/2}}{n}+\rho \bar\Q \bigg)^{-1} \bmu \,,
\end{align*}
where the elements of $\X$ are iid with zero mean, variance 1 and finite eighth order moment. 
By Lemma \ref{lemma:keylemma}, we can show that 
\begin{align*}
    \bmu^\top (\hat\bsigma + \rho \bar\Q)^{-1} \bmu 
    \xrightarrow{a.s} 
    \bmu^\top \bigg( \frac{\bsigma}{1+s(\rho)} + \rho \bar\Q \bigg)^{-1} \bmu \,.
\end{align*}
Similarly we can show that 
\begin{align*}
    \frac{1}{2 \omega} \xrightarrow{a.s}
    \frac{X_\tg - r - \rho \bmu^\top \bigg( \frac{\bsigma}{1+s(\rho)} + \rho \bar\Q \bigg)^{-1} \bar\Q \w}{\bmu^\top \bigg( \frac{\bsigma}{1+s(\rho)} + \rho \bar\Q \bigg)^{-1} \bmu}
    := \frac{1}{2 \omega^\ast} \,.
\end{align*}

Let $\a = \bsigma^{-1/2} \bmu, \b = \bsigma^{-1/2} (\frac{1}{2 \omega} \bmu + \rho \w)$ and $\S = \rho \bsigma^{-1/2} \bar\Q \bsigma^{-1/2} - z \I$, we have 
\begin{equation*}
    \begin{aligned}
        m(z) 
        &= \a^\top \bigg(\frac{\X^\top \X}{n}+\S+z\I\bigg)^{-1}\b
        = \bmu^\top \bsigma^{-1/2} \bigg(\frac{\X^\top \X}{n}+\S+z\I\bigg)^{-1} \bsigma^{-1/2} (\frac{1}{2 \omega} \bmu + \rho \w) 
        \\
        &= \bmu^\top \bigg(\frac{ \bsigma^{1/2}\X^\top \X  \bsigma^{1/2}}{n}+\rho \bar\Q \bigg)^{-1} (\frac{1}{2 \omega} \bmu + \rho \w) \,,
    \end{aligned}
\end{equation*}
where the elements of $\X$ are iid with zero mean, variance 1 and finite eighth order moment, then we have the convergence for any value of $\frac{1}{2 \omega}$.
\begin{align*}
    m(z) = \langle \bsigma^{-1}\bmu, \u^* \rangle_{\bsigma}
    \xrightarrow{a.s}  \bmu^\top \bsigma^{-1/2} \bigg( \frac{\I}{1+s(\rho)} + \rho\bsigma^{-1/2} \bar\Q \bsigma^{-1/2} \bigg)^{-1} \bsigma^{-1/2} (\frac{1}{2 \omega} \bmu + \rho \w) \,.
\end{align*}
And by continuous mapping theorem, we have 
\begin{align*}
    \langle \bsigma^{-1}\bmu, \u^* \rangle_{\bsigma} 
    \xrightarrow{a.s} 
    \bmu^\top \bigg( \frac{\bsigma}{1+s(\rho)}+\rho\I \bigg)^{-1} (\frac{1}{2 \omega^\ast} \bmu + \rho \w)\,.
\end{align*}

Next, by letting $\a = \b = \bsigma^{-1/2} (\frac{1}{2 \omega} \bmu + \rho \w)$, we have 
\begin{equation*}
    \begin{aligned}
        m_n'(z) &= -\a^\top (\frac{\X^\top \X}{n}+\S+z\I)^{-2}\b 
        \\
        &= - (\frac{1}{2 \omega} \bmu + \rho \w)^\top \bsigma^{-1/2}(\frac{\X^\top \X}{n}+\S+z\I)^{-2} \bsigma^{-1/2} (\frac{1}{2 \omega} \bmu + \rho \w) 
        \\
        &= - \|(\frac{1}{2 \omega} \bmu + \rho \w)^\top \bsigma^{-1/2}(\frac{\X^\top \X}{n}+\S+z\I)^{-1}\|_2^2 
        \\
        &= - \|(\frac{1}{2 \omega} \bmu + \rho \w)^\top (\frac{\X^\top \X \bsigma^{1/2}}{n} + \rho \bsigma^{-1/2} \bar\Q)^{-1}\|_2^2 
        \\
        &= - \|(\frac{1}{2 \omega} \bmu + \rho \w)^\top (\frac{\bsigma^{-1/2}\bsigma^{1/2}\X^\top \X \bsigma^{1/2}}{n} + \rho \bsigma^{-1/2} \bar\Q)^{-1}\|_2^2 
        \\
        &= - \|(\frac{1}{2 \omega} \bmu + \rho \w)^\top (\frac{\bsigma^{1/2}\X^\top \X \bsigma^{1/2}}{n}+\rho \bar\Q )^{-1}\|_{\bsigma}^2 \,,
    \end{aligned}
\end{equation*}
and we have 
\begin{equation*}
    \begin{aligned}
        & \frac{\rm{d} [\a^\top (\frac{\I}{1+s_n(z)}+\S+z\I )^{-1}\b]}{\rm{d} z} 
        = 
        -\a^\top (\frac{\I}{1+s_n(z)}+\S+z\I )^{-1} (\I - \frac{s'(z)}{(1+s(z))^2}\I) (\frac{\I}{1+s_n(z)}+\S+z\I )^{-1}\b \,.
    \end{aligned}
\end{equation*}

Here $\tilde s(\rho)$ is given as the solution of $\tilde s(\rho) = (\frac{\tilde s(\rho)}{(1+s(\rho))^2} - 1) \frac{c}{p} \tr [\bsigma^2 (\frac{\bsigma}{1+s(\rho)}+\rho \bar\Q)^{-2}]$. 
In order for the above convergence to hold, we need to assume $\|\bsigma\|$ is bounded. 
Once we have the almost sure convergence of the numerator and denominator, we conclude
\begin{align*}
    \SR_\infty(\rho\bar\Q, \w) = \frac{ \Big(1 - \frac{\tilde s(\rho)}{(1+s(\rho))^2} \Big)^{-1/2} \bmu^\top \Big(\frac{\bsigma}{1+s(\rho)} + \rho \bar\Q \Big)^{-1} (\frac{\bmu}{2\omega^\ast}  + \rho \bar\Q \w)}{\sqrt{(\frac{\bmu}{2\omega^\ast}  + \rho \bar\Q \w)^\top \Big(\frac{\bsigma}{1+s(\rho)}+\rho\bar\Q \Big)^{-1} \bsigma \Big(\frac{\bsigma}{1+s(\rho)} + \rho \bar\Q \Big)^{-1} (\frac{\bmu}{2\omega^\ast}  + \rho \bar\Q \w)}} \,.
\end{align*}
Let $\kappa_\rho = 1 - \frac{\tilde s(\rho)}{(1+s(\rho))^2}$, $\A_\rho = \frac{\bsigma}{1+s(\rho)} + \rho \bar\Q$, $\d_\rho = \frac{\bmu}{2\omega^\ast}  + \rho \bar\Q \w$, we have 
\begin{align*}
    \SR_\infty(\rho\bar\Q, \w) = \kappa_\rho^{-1/2} \cdot \frac{\bmu^\top \A_\rho^{-1} \d_\rho}{\sqrt{\d_\rho^\top \A_\rho^{-1} \bsigma \A_\rho^{-1} \d_\rho}}
    \,,
\end{align*}
where $s(\rho)$ is the solution of $s(\rho) = \frac{c}{p} \tr(\bsigma \A_\rho^{-1})$, 
$\tilde s(\rho)$ is the solution of $\tilde s(\rho) = - \kappa_\rho \frac{c}{p} \tr(\bsigma^2 \A_\rho^{-2})$ and $\frac{1}{2\omega^\ast} = \frac{X_\tg - r - \rho \bmu^\top \A_\rho)^{-1} \bar\Q \w}{\bmu^\top \A_\rho^{-1} \bmu}$. 
\end{proof}

\subsection{Proofs for Section 3.2}
\subsubsection{Proof of Lemma~\ref{lemma:param_mu}}
\begin{proof}
For $k=T-1$, we have 
\begin{equation*}     \begin{aligned}
    \hat{{\D}}_{T-1}^{-1} &= (\bsigma + \Q_{T-1} )^{-1} - \frac{(\bsigma + \Q_{T-1} )^{-1} \hat\bmu \hat\bmu^\top (\bsigma + \Q_{T-1} )^{-1}}{1+\hat\bmu^\top (\bsigma + \Q_{T-1} )^{-1} \hat\bmu}\\
    \hat{{a}}_{T-1} &= \frac{r^2 + (\hat\bmu^\top (\bsigma + \Q_{T-1} )^{-1} \Q_{T-1} \w_{T-1} )^2}{1+\hat\bmu^\top (\bsigma + \Q_{T-1} )^{-1} \hat\bmu} + \w_{T-1}^\top \Q_{T-1} \w_{T-1} - \w_{T-1}^\top \Q_{T-1} (\bsigma + \Q_{T-1} )^{-1} \Q_{T-1} \w_{T-1}\\
    \hat{{b}}_{T-1} &= \frac{r + \hat\bmu^\top (\bsigma + \Q_{T-1} )^{-1} \Q_{T-1} \w_{T-1}}{1+\hat\bmu^\top (\bsigma + \Q_{T-1} )^{-1} \hat\bmu}\\ 
    \hat{{c}}_{T-1} &= - \frac{\hat\bmu^\top (\bsigma + \Q_{T-1} )^{-1} \hat\bmu}{1+\hat\bmu^\top (\bsigma + \Q_{T-1} )^{-1} \hat\bmu}
\end{aligned} \end{equation*}
Similar to the proof in Appendix \ref{prof:theorem:single_mu_SR}, by using Borel-Cantelli Lemma and continuous mapping theorem, we can show that the following almost surely convergence holds for the parameters at $T-1$.  
\begin{equation*}     \begin{aligned}
    \hat{{a}}_{T-1} 
    \xrightarrow{a.s} a_{T-1}^\ast = & \frac{( r+ \bmu^\top (\bsigma + \Q_{T-1} )^{-1} \Q_{T-1} \w_{T-1} )^2}{ 1 + \bmu^\top (\bsigma + \Q_{T-1})^{-1} \bmu + \tr((\bsigma + \Q_{T-1})^{-1} \bsigma/n)} + \w_{T-1}^\top \Q_{T-1} \w_{T-1} 
    \\
    &~ - \w_{T-1}^\top \Q_{T-1}  (\bsigma + \Q_{T-1})^{-1} \Q_{T-1} \w_{T-1}\\
    \hat{{b}}_{T-1} 
    \xrightarrow{a.s} b_{T-1}^\ast = & \frac{r + \bmu^\top (\bsigma + \Q_{T-1})^{-1} \Q_{T-1}\w_{T-1}}{1 + \bmu^\top (\bsigma + \Q_{T-1})^{-1} \bmu + \tr((\bsigma + \Q_{T-1})^{-1} \bsigma/n)}\\ 
    \hat{{c}}_{T-1} 
    \xrightarrow{a.s} c_{T-1}^\ast = & \frac{1}{1 + \bmu^\top (\bsigma + \Q_{T-1})^{-1} \bmu + \tr((\bsigma + \Q_{T-1})^{-1} \bsigma/n)} - 1
\end{aligned} \end{equation*} 
Following the induction, we assume that $\hat{a}_{k+1}$ will converge to $a_{k+1}^\ast$, then for $\hat{a}_k$ we have 
\begin{equation*}     \begin{aligned}
    \hat{{a}}_k 
    &= \hat{{a}}_{k+1} r^2 - \hat{{a}}_{k+1}^2 r^2 \hat\bmu^{\top} \hat{{\D}}_k^{-1} \hat\bmu + \w_k^\top \Q_k \w_k - \w_k^\top \Q_k \hat{{\D}}_k^{-1} \Q_k \w_k + 2 r_k \hat a_{k+1} \w_k^\top \Q_k \hat\D_k^{-1} \hat\bmu_k \\
    &= \hat a_{k+1}^{-1} \frac{(\hat a_{k+1} r + \w_k^\top \Q_k \left( \bsigma + \hat a_{k+1}^{-1} \Q_k \right)^{-1} \hat\bmu )^2 }{1+\hat\bmu_k^\top \left( \bsigma + \hat a_{k+1}^{-1} \Q_k  \right)^{-1} \hat\bmu} + \w_k^\top \Q_k \w_k - \hat a_{k+1}^{-1} \w_k^\top \Q_k \left( \bsigma + \hat a_{k+1}^{-1} \Q_k \right)^{-1} \Q_k \w_k 
    \\
    & \xrightarrow{a.s} a_k^\ast \,,   
\end{aligned} \end{equation*}
where 
\begin{equation*}
\begin{aligned}
    a_k^\ast 
    & := (a_{k+1}^\ast)^{-1} \frac{(a_{k+1}^\ast r + \w_k^\top \Q_k \left( \bsigma + (a_{k+1}^\ast)^{-1} \Q_k \right)^{-1} \bmu )^2 }{1+\bmu_k^\top \left( \bsigma + (a_{k+1}^\ast)^{-1} \Q_k  \right)^{-1} \bmu + \tr((\bsigma + (a_{k+1}^\ast)^{-1} \Q_k )^{-1} \bsigma/n)} 
    \\
    & ~ + \w_k^\top \Q_k \w_k - (a_{k+1}^\ast)^{-1} \w_k^\top \Q_k \left( \bsigma + (a_{k+1}^\ast)^{-1} \Q_k \right)^{-1} \Q_k \w_k \,.
\end{aligned}
\end{equation*}

The convergence holds because for any given value of $\hat a_{k+1}$, the term $\hat\bmu^\top (\bsigma + \hat{a}_{k+1}^{-1} \Q_k)^{-1} \hat\bmu$ will converge according to the proof in Appendix \ref{prof:theorem:single_mu_SR}. 
Then we can apply continuous mapping theorem since we have $\hat a_{k+1} \xrightarrow{a.s} a_{k+1}^\ast$. 
Similarly, the following almost surely convergence holds at $k$.  
\begin{equation*}
\begin{aligned}
    \hat{b}_k \xrightarrow{a.s} b_k^\ast 
    &= b_{k+1}^\ast \frac{r + (a_{k+1}^\ast)^{-1} \bmu^\top (\bsigma + (a_{k+1}^\ast)^{-1} \Q_k )^{-1} \Q_k \w_k}{1 + \bmu^\top (\bsigma + (a_{k+1}^\ast)^{-1} \Q_k )^{-1} \bmu + \tr((\bsigma + (a_{k+1}^\ast)^{-1} \Q_k )^{-1} \bsigma/n)} \,,
    \\
    \hat{c}_k \xrightarrow{a.s} c_k^\ast 
    &= c_{k+1}^\ast + \frac{(b_{k+1}^\ast)^2}{a_{k+1}^{\ast}}   (\frac{1}{1 + \bmu^\top (\bsigma + (a_{k+1}^\ast)^{-1} \Q_k )^{-1} \bmu + \tr((\bsigma + (a_{k+1}^\ast)^{-1} \Q_k )^{-1} \bsigma/n)} - 1) \,.
\end{aligned} 
\end{equation*}
We can define $\bdelta_k^{\ast} = \bsigma + \Q_k /a_{k+1}^{\ast} $, and show that 
\begin{align*}
        \hat{a}_k \xrightarrow{a.s} 
        & a_k^\ast
        = \frac{  (a_{k+1}^{\ast})^{-1}   (a_{k+1}^\ast r + \w_k^\top \Q_k ( \bdelta_k^{\ast} )^{-1} \bmu )^2 }{1+\bmu^\top ( \bdelta_k^{\ast} )^{-1} \bmu + \tr(( \bdelta_k^{\ast} )^{-1} \bsigma/n)} + ( a_{k+1}^{\ast} )^{-1} \w_k^\top \Q_k \Big(a_{k+1}^{\ast} \Q_k^{-1} - ( \bdelta_k^{\ast} )^{-1} \Big) \Q_k \w_k \,,
        \\
        \hat{b}_k \xrightarrow{a.s} 
        & b_k^\ast 
        = {{b}}_{k+1}^\ast \frac{r + (a_{k+1}^{\ast})^{-1} \bmu^\top ( \bdelta_k^{\ast} )^{-1} \Q_k\w_k}{1 + \bmu^\top ( \bdelta_k^{\ast} )^{-1} \bmu + \tr( (\bdelta_k^{\ast}  )^{-1}  \bsigma/n)} \,,
        \\
        \hat{c}_k \xrightarrow{a.s} 
        & c_k^\ast 
        = c_{k+1}^\ast + \frac{(b_{k+1}^{\ast})^2}{a_{k+1}^{\ast}} \frac{\bmu^\top ( \bdelta_k^{\ast} )^{-1} \bmu + \tr( (\bdelta_k^{\ast}  )^{-1}  \bsigma/n)}{1 + \bmu^\top ( \bdelta_k^{\ast} )^{-1} \bmu + \tr( (\bdelta_k^{\ast}  )^{-1}  \bsigma/n)} 
    \end{align*}
\end{proof}

\subsubsection{Proof of Theorem~\ref{theorem:multi_SR_mu}}
\begin{proof}
Recall the parameters we defined in Proposition \ref{prop:eff_frontier}, we can show that 
\begin{equation*}
    \begin{aligned}
    \bmu^\top \hat{\D}_k^{-1} \hat\bmu 
    & \xrightarrow{a.s} (a_{k+1}^\ast)^{-1} \frac{\bmu^\top (\bsigma + (a_{k+1}^\ast)^{-1} \Q_k)^{-1} \bmu}{1 + \bmu^\top (\bsigma + (a_{k+1}^\ast)^{-1} \Q_k)^{-1} \bmu+ \tr((\bsigma + (a_{k+1}^\ast)^{-1} \Q_k )^{-1} \bsigma/n)}\\
    \bmu^\top \hat{\D}_k^{-1} \Q_k \w_k
    & \xrightarrow{a.s} (a_{k+1}^\ast)^{-1} \frac{\bmu^\top (\bsigma + (a_{k+1}^\ast)^{-1} \Q_k)^{-1} \Q_k \w_k}{1 + \bmu^\top (\bsigma + (a_{k+1}^\ast)^{-1} \Q_k)^{-1} \bmu+ \tr((\bsigma + (a_{k+1}^\ast)^{-1} \Q_k )^{-1} \bsigma/n)}\\
    \hat\bmu^\top \hat{\D}_k^{-1} \bsigma \hat{\D}_k^{-1} \hat\bmu
    & = \hat a_{k+1}^{-2} \hat\bmu^\top (\bsigma + \hat a_{k+1}^{-1} \Q_k  + \hat\bmu \hat\bmu^\top)^{-1} \bsigma (\bsigma + \hat a_{k+1}^{-1} \Q_k  + \hat\bmu \hat\bmu^\top)^{-1} \hat\bmu \\
    &= \hat a_{k+1}^{-2} \frac{\hat\bmu^\top (\bsigma + \hat a_{k+1}^{-1} \Q_k  )^{-1} \bsigma (\bsigma + \hat a_{k+1}^{-1} \Q_k  )^{-1} \hat\bmu}{(1 + \hat\bmu^\top (\bsigma + \hat a_{k+1}^{-1} \Q_k  )^{-1} \hat\bmu)^2} \\
    & \xrightarrow{a.s} (a_{k+1}^\ast)^{-2} \frac{\bmu^\top (\bsigma + (a_{k+1}^\ast)^{-1} \Q_k  )^{-1} \bsigma (\bsigma + (a_{k+1}^\ast)^{-1} \Q_k  )^{-1} \bmu + \tr((\bsigma + (a_{k+1}^\ast)^{-1} \Q_k  )^{-2} \bsigma^2/n)}{(1 + \bmu^\top (\bsigma + (a_{k+1}^\ast)^{-1} \Q_k  )^{-1} \bmu + \tr((\bsigma + (a_{k+1}^\ast)^{-1} \Q_k  )^{-1} \bsigma/n))^2}
\end{aligned}
\end{equation*}
where $C$ is a constant depending on $\Q_k$, $\bsigma$ and $\bmu$. 
Then according to the definition in Proposition \ref{prop:eff_frontier}, we have the following almost surely convergence and the limiting values can be achieved by plugging in each component. 
\begin{equation*}
\begin{aligned}
    \hat\lambda_k 
    &= \frac{\hat b_0 - X_\tg}{\hat c_0} \hat b_{k+1} \bmu^\top \hat{\D}_k^{-1} \hat\bmu \xrightarrow{a.s} \lambda_k^\ast\\
    \hat\eta_k 
    &= r- r \hat a_{k+1} \bmu^\top \hat{\D}_k^{-1}  \hat\bmu + \bmu^\top \hat{\D}_k^{-1} \Q_k\w_k \xrightarrow{a.s} \eta_k^\ast\\
    \hat\nu_k 
    &= \E[X_k]^2 \bigg[ r^2 \hat{a}_{k+1}^2  \hat\bmu^\top \hat{\D}_k^{-1} \bsigma \hat{\D}_k^{-1}  \hat\bmu + \w_k^\top \Q_k \hat{\D}_k^{-1} \bsigma \hat{\D}_k^{-1} \Q_k \w_k - 2 r \hat{a}_{k+1}  \hat\bmu^\top \hat{\D}_k^{-1} \bsigma \hat{\D}_k^{-1} \Q_k \w_k \bigg]\\
    & ~~~~~ + \E[X_k] \bigg[ 2 \frac{\hat{b}_0 - X_\tg}{\hat{c}_0} \hat{b}_{k+1}  \hat\bmu^\top \hat{\D}_k^{-1} \bsigma \hat{\D}_k^{-1} \Q_k \w_k - 2 \frac{\hat{b}_0 - X_\tg}{\hat{c}_0}\hat{b}_{k+1} r \hat{a}_{k+1} \hat\bmu^\top \hat{\D}_k^{-1} \bsigma \hat{\D}_k^{-1}  \hat\bmu \bigg]\\
    & ~~~~~ + (\frac{\hat{b}_0 - X_\tg}{\hat{c}_0} \hat{b}_{k+1})^2  \hat\bmu^\top \hat{\D}_k^{-1} \bsigma \hat{\D}_k^{-1}  \hat\bmu 
    \xrightarrow{a.s} \nu_k^\ast\\
    \hat{g}_k 
    &= r^2 \hat{a}_{k+1}^2  \hat\bmu^\top \hat{\D}_k^{-1} \bsigma \hat{\D}_k^{-1}  \hat\bmu + \w_k^\top \Q_k \hat{\D}_k^{-1} \bsigma \hat{\D}_k^{-1} \Q_k \w_k - 2 r \hat{a}_{k+1}  \hat\bmu^\top \hat{\D}_k^{-1} \bsigma \hat{\D}_k^{-1} \Q_k \w_k\\
    & ~~~~ + (r- r \hat{a}_{k+1} \bmu^\top \hat{\D}_k^{-1}  \hat\bmu + \bmu^\top \hat{\D}_k^{-1} \Q_k \w_k)^2 
    \xrightarrow{a.s} g_k^\ast,
\end{aligned} 
\end{equation*}
which leads to the convergence of average Sharpe ratio. 
\begin{align*}
        \SR_\infty(T,\Q_k,\w_k) 
        = \frac{\E[\hat X_T]_\infty - r^T}{\sqrt{T \cdot \Var(\hat X_T)_\infty}} \,,
    \end{align*}
where $\E[\hat X_T]_\infty = \prod_{i=0}^{T-1} \eta^\ast_i + \sum_{i=0}^{T-1} (\lambda^\ast_{i} \prod_{j=i+1}^{T-1} \eta^\ast_{j})$ and $\Var(\hat X_T)_\infty = \sum_{i=0}^{T-1} \nu_i^\ast \prod_{j=i+1}^{T-1}{g}_j^\ast$. 
\end{proof}

\subsubsection{Proof of Corollary~\ref{coro:multi_SR_mu_w_0}}
\label{proof:multi_SR_mu}
\begin{proof}
Recall that in Corollary \ref{coro:os_SR_w=0}, we have derived the out-of-sample Sharpe ratio when we set $\Q_k = \rho \hat a_{k+1} \Q$ and $\w_k = 0$, which is 
\begin{align*}
    \SR^2(\hat{X}_T) = \frac{1}{T} \frac{(\hat\eta^T -r^T)^2}{((\hat{\Gamma} + \hat\eta^2)^T - \hat\eta^{2T})}, 
\end{align*}
$\hat\eta = r\Big(1-\frac{\bmu^\top(\hat\bsigma + \rho \bar\Q)^{-1}\hat\bmu}{1+\hat\bmu^\top (\hat\bsigma + \rho \bar\Q)^{-1} \hat\bmu} \Big)$ is the same $\hat\eta_k$ in Proposition~\ref{prop:eff_frontier} and $\hat{\Gamma} = r^2 \frac{\hat\bmu^\top (\hat{\bsigma} + \rho \bar\Q)^{-1} \bsigma (\hat{\bsigma} + \rho \bar\Q)^{-1} \hat\bmu}{(1+\hat\bmu^\top (\hat{\bsigma} + \rho \bar\Q)^{-1} \hat\bmu)^2}$.
In this Corollary \ref{coro:multi_SR_mu_w_0}, we assume that we know the exact value of $\bsigma$ and use sample estimator $\hat\bmu$. 
Then we can write the out-of-sample Sharpe ratio as  
\begin{equation*}     \begin{aligned}
    \SR^2(\hat{X}_T)
    &= \frac{1}{T} \frac{ \left( (r (1-\frac{\bmu^\top(\hat\bsigma + \rho \Q)^{-1}\hat\bmu}{1+\hat\bmu^\top (\hat\bsigma + \rho \Q)^{-1} \hat\bmu}))^T - r^T \right)^2}{\left( r^2 \frac{\hat\bmu^\top ({\bsigma} + \rho \Q)^{-1} \bsigma ({\bsigma} + \rho \Q)^{-1} \hat\bmu}{(1+\hat\bmu^\top ({\bsigma} + \rho \Q)^{-1} \hat\bmu)^2} + r^2 (1-\frac{\bmu^\top(\bsigma + \rho \Q)^{-1}\hat\bmu}{1+\hat\bmu^\top (\bsigma + \rho \Q)^{-1} \hat\bmu})^2 \right)^T - r^{2T} (1-\frac{\bmu^\top(\bsigma + \rho \Q)^{-1}\hat\bmu}{1+\hat\bmu^\top (\bsigma + \rho \Q)^{-1} \hat\bmu})^{2T}} \\
    &= \frac{1}{T} \frac{((1+\hat\bmu^\top (\hat\bsigma + \rho \Q)^{-1} \hat\bmu - \bmu^\top (\hat\bsigma + \rho \Q)^{-1} \hat\bmu)^T - (1+\hat\bmu^\top (\hat\bsigma + \rho \Q)^{-1} \hat\bmu)^T )^2}{(\hat\bmu^\top (\hat{\bsigma} + \rho \Q)^{-1} \bsigma (\hat{\bsigma} + \rho \Q)^{-1} \hat\bmu + (1+\varepsilon^\top(\hat\bsigma + \rho \Q)^{-1}\hat\bmu)^2)^T - (1+\varepsilon^\top(\hat\bsigma + \rho \Q)^{-1}\hat\bmu)^{2T}} 
\end{aligned} \end{equation*}
And we set $\Q$ as the true covariance matrix $\bsigma$, we can simplify the expression of Sharpe ratio. 
\begin{equation*}     \begin{aligned}
    \SR^2(\hat{X}_T) 
    &= \frac{1}{T} \frac{[(\rho + 1 + \hat\bmu^\top \bsigma^{-1} \hat\bmu)^T - (\rho + 1 + \varepsilon^\top \bsigma^{-1} \hat\bmu)^T)]^2}{( \hat\bmu^\top \bsigma^{-1} \hat\bmu + (\rho + 1 + \varepsilon^\top \bsigma^{-1} \hat\bmu)^2)^T - (\rho + 1 + \varepsilon^\top \bsigma^{-1} \hat\bmu)^{2T}} 
\end{aligned} \end{equation*}
Use the same techniques in Appendix \ref{prof:theorem:single_mu_SR}, we can show that $\hat\bmu^\top \bsigma^{-1} \hat\bmu \xrightarrow{a.s} \bmu^\top \bsigma^{-1} \bmu + p/n$ and $\varepsilon^\top \bsigma^{-1} \hat\bmu \xrightarrow{a.s} p/n$. 
Then we can derive the limiting value of average Sharpe ratio, 
\begin{equation*}     \begin{aligned}
    \SR^2(\hat{X}_T) 
    \xrightarrow{a.s} \frac{1}{T} \frac{[(\bmu^\top \bsigma^{-1} \bmu + \rho + 1 + p/n)^T - (\rho + 1 + p/n)^T)]^2}{( \bmu^\top \bsigma^{-1} \bmu + p/n + (\rho + 1 + p/n)^2)^T - (\rho + 1 + p/n)^{2T}} 
    := \SR_\infty^2(T, \rho, \BFzero)
\end{aligned} \end{equation*}
Following Theorem~\ref{theorem:single_mu_SR}, let $e_{\rho,\bmu} = e_\bmu + \rho + 1$ and $e_\bmu = p/n$, we have 
\begin{align*}
    \SR_\infty(T, \rho, \BFzero) = \frac{T^{-1/2} ((\bmu^\top \bsigma^{-1} \bmu + e_{\rho,\bmu})^T - e_{\rho,\bmu}^T)}{\sqrt{ ( \bmu^\top \bsigma^{-1} \bmu + p/n + e_{\rho,\bmu}^2)^T - e_{\rho,\bmu}^{2T} }}\,.
\end{align*}
Now, we would like to check how the Sharpe ratio change as the parameter $\rho$ differs. 
We take the derivative of Sharpe ratio respect to $\rho$, which has the same sign of the following term
\begin{equation*}     \begin{aligned} 
    & [(\rho + 1 + \bmu^\top \bsigma^{-1} \bmu + p/n)^{T-1} - (\rho + 1 + p/n)^{T-1})] 
    \\& \times [( \bmu^\top \bsigma^{-1} \bmu + p/n + (\rho + 1 + p/n)^2)^T - (\rho + 1 + p/n)^{2T}] 
    \\
    - & [(\rho + 1 + \bmu^\top \bsigma^{-1} \bmu + p/n)^T - (\rho + 1 + p/n)^T)] 
    \\
    & \times [( \bmu^\top \bsigma^{-1} \bmu + p/n + (\rho + 1 + p/n)^2)^{T-1} - ((\rho + 1 + p/n)^2)^{T-1}] (\rho + 1 + p/n)
\end{aligned} \end{equation*}
If $p/n=0$ and $\rho =  0$, this term will be equals to $0$. 
Then, $\frac{\partial (\text{SR}_T^\ast)^2}{\partial \rho}|_{\rho=0}  =0$, which implies for low-dimensional portfolio, we can achieve optimal asymptotic Sharpe ratio without regularization. 
Next, we would like to check the sign of $\frac{\partial \text{SR}_T^\ast}{\partial \rho}|_{\rho=0}$ when $p/n >0$. 
It is equivalent to check the sign of the following term 
\begin{align*}
& (1+p/n)^{3T-1)} [(\frac{\bmu^\top \bsigma^{-1} \bmu}{1 + p/n} + 1)^{T-1}-1] [(\frac{\bmu^\top \bsigma^{-1} \bmu + p/n}{(1 + p/n)^2} + 1)^{T-1}-1] 
\\
& \times \left[ \frac{(\frac{\bmu^\top \bsigma^{-1} \bmu + p/n}{(1 + p/n)^2} + 1 )^{T}-1}{(\frac{\bmu^\top \bsigma^{-1} \bmu + p/n}{(1 + p/n)^2} + 1)^{T-1}-1} - \frac{(\frac{\bmu^\top \bsigma^{-1} \bmu}{1 + p/n} + 1)^{T}-1}{(\frac{\bmu^\top \bsigma^{-1} \bmu}{1 + p/n} + 1)^{T-1}-1} \right] \,.
\end{align*}
We can show that, for $p/n > 0$, the derivative $\frac{\partial \text{SR}_T^\ast}{\partial \rho}\big|_{\rho=0}$ is positive when $\bmu^\top \bsigma^{-1} \bmu < 1$, and non-positive otherwise. 
\end{proof}

\subsubsection{Special Case in Section \ref{sec3.2.1}}
\label{ssse:est_mu_Q_0}
~

If we assume $\Q \to \mathbf{0}$ and $\Q\w \to \tilde\w$, the limiting values of the parameters will be 
\begin{align*}
    a_k^\ast
    &= 
    (a_{k+1}^\ast)^{-1} \left[ \frac{(a_{k+1}^\ast r + \tilde\w^\top \bsigma^{-1} \bmu )^2 }{1+\bmu_k^\top \bsigma^{-1} \bmu + p/n} - \tilde\w^\top \bsigma^{-1} \tilde\w \right]
    \\
    b_k^\ast 
    &= b_{k+1}^\ast \frac{r + (a_{k+1}^\ast)^{-1} \bmu^\top \bsigma^{-1} \tilde\w}{1 + \bmu^\top \bsigma^{-1} \bmu + p/n } 
    \\
    c_k^\ast 
    &= c_{k+1}^\ast + \frac{(b_{k+1}^\ast)^2}{a_{k+1}^{\ast}}   (\frac{1}{1 + \bmu^\top \bsigma^{-1} \bmu + p/n} - 1) 
\end{align*}
Recall the parameters we defined in Proposition \ref{prop:eff_frontier}, we can show that 
\begin{equation*}
    \begin{aligned}
    \bmu^\top \hat{\D}_k^{-1} \hat\bmu 
    & \xrightarrow{a.s} (a_{k+1}^\ast)^{-1} \frac{\bmu^\top \bsigma^{-1} \bmu}{1 + \bmu^\top \bsigma^{-1} \bmu + p/n}
    \\
    \bmu^\top \hat{\D}_k^{-1} \tilde\w
    & \xrightarrow{a.s} (a_{k+1}^\ast)^{-1} \frac{\bmu^\top \bsigma^{-1} \tilde\w}{1 + \bmu^\top \bsigma^{-1} \bmu + p/n}
    \\
    \hat\bmu^\top \hat{\D}_k^{-1} \bsigma \hat{\D}_k^{-1} \hat\bmu
    & \xrightarrow{a.s} (a_{k+1}^\ast)^{-2} \frac{\bmu^\top \bsigma^{-1} \bmu + p/n}{(1 + \bmu^\top \bsigma^{-1} \bmu + p/n)^2}
    \\
    \bmu^\top \hat{\D}_k^{-1} \bsigma \hat{\D}_k^{-1} \hat\bmu
    &= \hat a_{k+1}^{-2} \bmu^\top (\bsigma^{-1} - \frac{\bsigma^{-1} \hat\bmu \hat\bmu^\top \bsigma^{-1}}{1 + \hat\bmu^\top \bsigma^{-1} \hat\bmu}) \bsigma (\bsigma^{-1} - \frac{\bsigma^{-1} \hat\bmu \hat\bmu^\top \bsigma^{-1}}{1 + \hat\bmu^\top \bsigma^{-1} \hat\bmu }) \hat\bmu
    \xrightarrow{a.s} (a_{k+1}^\ast)^{-2} \frac{\bmu^\top \bsigma^{-1} \bmu }{(1 + \bmu^\top \bsigma^{-1} \bmu + p/n)^2}
    \\
    \bmu^\top \hat{\D}_k^{-1} \bsigma \hat{\D}_k^{-1} \bmu
    &= \hat a_{k+1}^{-2} \bmu^\top (\bsigma^{-1} - \frac{\bsigma^{-1} \hat\bmu \hat\bmu^\top \bsigma^{-1}}{1 + \hat\bmu^\top \bsigma^{-1} \hat\bmu}) \bsigma (\bsigma^{-1} - \frac{\bsigma^{-1} \hat\bmu \hat\bmu^\top \bsigma^{-1}}{1 + \hat\bmu^\top \bsigma^{-1} \hat\bmu }) \bmu
    \\
    &\xrightarrow{a.s} (a_{k+1}^\ast)^{-2} 
    \left[ \bmu^\top \bsigma^{-1} \bmu - 2 \frac{(\bmu^\top \bsigma^{-1} \bmu)^2}{1 + \bmu^\top \bsigma^{-1} \bmu + p/n}  + \frac{(\bmu^\top \bsigma^{-1} \bmu)^2 (\bmu^\top \bsigma^{-1} \bmu + p/n)}{(1 + \bmu^\top \bsigma^{-1} \bmu + p/n)^2}\right]
\end{aligned}
\end{equation*}

Then according to the definition in Proposition \ref{prop:eff_frontier}, we have the following almost surely convergence and the limiting values can be achieved by plugging in the components defined in \ref{prop:eff_frontier}. 
By checking the FOC respective to $\tilde\w$, we can show that $\tilde\w^\ast \propto \bmu$. 
So, we let $\tilde\w = \alpha \bmu$, and check how $\alpha$ affect the Sharpe ratio. 

\subsubsection{Proof of Lemma~\ref{lemma:sig_converge}} \label{proof:lemma3.2}
\begin{proof}
    Note that We assume no estimation error in the expected return return, i.e. $\hat\bmu = \bmu$.
    Then, under the same assumptions in Proposition \ref{prop:eff_frontier}, for deterministic $\a, \b \in \mathbb{R}^p$ with bounded norm, we want to show the convergence of $\x^\top \hat\D_k^{-1} \y$ and $\x^\top \hat\D_k^{-1} \bsigma \hat\D_k^{-1} \y$. 
    Since these convergence will depend on $\hat a_k$, we start with $\hat a_T=1$, and verify the almost sure convergence of $\hat a_{T-1}$. 
    By definition, we have 
    \begin{equation*}     
        \begin{aligned}
            \hat a_{T-1} = r^2  + \w_{T-1}^\top \Q_{T-1} \w_{T-1} - r^2 \bmu^\top \hat\D_{T-1}^{-1} \bmu - \w_{T-1}^\top \Q_{T-1} \hat\D_{T-1}^{-1} \Q_{T-1} \w_{T-1} + 2 r \w_{T-1}^\top \Q_{T-1} \hat\D_{T-1}^{-1} \bmu 
        \end{aligned} 
    \end{equation*}
    Let $\S_{T-1} = \bsigma^{-1/2} (\bmu \bmu^\top + \Q_{T-1}) \bsigma^{-1/2} - z_{T-1}\I$, $a=\bsigma^{-1/2} \x$ and $a=\bsigma^{-1/2} \y$, where $z_{T-1}$ should be small enough to guarantee the positive definiteness of $\S_{T-1}$. 
    According to lemma \ref{lemma:keylemma}, we have 
    \begin{equation*}    
        \begin{aligned}
        \x^\top \hat\D_{T-1}^{-1} \y
        & = \x^\top (\hat{\bsigma} + \S_{T-1} + z_{T-1} \I)^{-1} \y 
        \xrightarrow{a.s} 
        \x^\top \big( \frac{\bsigma}{1+s_{T-1}}+\bmu\bmu^\top + \Q_{T-1} \big)^{-1} \y 
        := C_{\x,\y,T-1}^\ast
        \end{aligned} 
    \end{equation*}
    where $s_{T-1}$ is the solution of $s_{T-1}=\frac{c}{p}\tr \bsigma  [( \frac{\bsigma}{1+s_{T-1}} +\bmu \bmu^\top + \Q_{T-1} )^{-1} ]$. 
    By taking values of $\x$ and $\y$ from $\bmu$ and $\Q_{T-1}\w_{T-1}$, we can show that 
    \[
    \hat{a}_{T-1} \xrightarrow{a.s} a_{T-1}^\ast 
    = r^2 + \w_{T-1}^\top \Q_{T-1} \w_{T-1} - r^2 C_{\bmu,\bmu,T-1}^\ast + 2 r C_{\bmu,\Q_{T-1}\w_{T-1},T-1}^\ast -  C_{\Q_{T-1}\w_{T-1},\Q_{T-1}\w_{T-1},T-1}^\ast
    \]
    Following the induction, we suppose that $\hat{a}_t$ will almost surely converge to $a_t^\ast$ for $t=k+1, \ldots, T$. 
    For the remaining problems, we start with the convergence of $\x^\top \hat\D_k^{-1} \y$. 
    First, we care about the difference
    \begin{equation*}     \begin{aligned}
          & (\hat{\bsigma}+\bmu\bmu^\top + \frac{1}{{a}_{k+1}^\ast} \Q_k)^{-1} - (\hat{\bsigma}+\bmu\bmu^\top + \frac{1}{\hat{a}_{k+1}} \Q_k)^{-1} \\
        = & (\hat{\bsigma}+\bmu\bmu^\top + \frac{1}{{a}_{k+1}^\ast} \Q_k)^{-1} (\frac{1}{a_{k+1}} \Q_k - \frac{1}{a_{k+1}^\ast} \Q_k)  (\hat{\bsigma}+\bmu\bmu^\top + \frac{1}{\hat{a}_{k+1}} \Q_k)^{-1} \\
        = & (\frac{1}{\hat{a}_{k+1}}- \frac{1}{a_{k+1}^\ast}) (\hat{\bsigma}+\bmu\bmu^\top + \frac{1}{a_{k+1}^\ast} \Q_k)^{-1} \Q_k (\hat{\bsigma}+\bmu\bmu^\top + \frac{1}{\hat{a}_{k+1}} \Q_k)^{-1}
    \end{aligned} \end{equation*}
    Then,
    \begin{equation*}     \begin{aligned}
    	 & \Big| \x^\top [(\hat{\bsigma}+\bmu\bmu^\top + \frac{1}{{a}_{k+1}^\ast} \Q_k)^{-1} - (\hat{\bsigma}+\bmu\bmu^\top + \frac{1}{\hat{a}_{k+1}} \Q_k)^{-1}] \y \Big| 
         \\
    	 = & \Big| (\frac{1}{\hat{a}_{k+1}} - \frac{1}{a_{k+1}^\ast}) \x^\top (\hat{\bsigma}+\bmu\bmu^\top + \frac{1}{a_{k+1}^\ast} \Q_k)^{-1} \Q_k (\hat{\bsigma}+\bmu\bmu^\top + \frac{1}{\hat{a}_{k+1}} \Q_k)^{-1} \y \Big| 
         \\
    	 \leq & \Big| \frac{1}{\hat{a}_{k+1}}- \frac{1}{a_{k+1}^\ast} \Big| \cdot \Big\| \x^\top (\hat{\bsigma}+\bmu\bmu^\top + \frac{1}{a_{k+1}^\ast} \Q_k)^{-1} \Q_k (\hat{\bsigma} + \bmu\bmu^\top + \frac{1}{\hat{a}_{k+1}} \Q_k)^{-1} \y \Big\| 
         \\
    	 \leq & \lambda_{\max}(\Q_k) \cdot \Big| \frac{1}{\hat{a}_{k+1}}- \frac{1}{a_{k+1}^\ast} \Big| \cdot \Big| \bmu^\top (\hat{\bsigma}+\bmu\bmu^\top + \frac{1}{a_{k+1}^\ast} \Q_k)^{-1} (\hat{\bsigma}+\bmu\bmu^\top + \frac{1}{\hat{a}_{k+1}} \Q_k)^{-1} \bmu \Big|, 
    \end{aligned} \end{equation*}
    where $| \frac{1}{\hat{a}_{k+1}}- \frac{1}{a_{k+1}^\ast} | \xrightarrow{a.s} 0$. 
    By Lemma \ref{lemma:param_a_nonneg}, we have $\hat a_k, a_k^\ast >0$ and $\Big| \x^\top (\hat{\bsigma} + \bmu\bmu^\top + \frac{1}{a_{k+1}^\ast} \Q_k)^{-1} (\hat{\bsigma} + \bmu\bmu^\top + \frac{1}{\hat{a}_{k+1}} \Q_k)^{-1} \y \Big|$ is uniformly bounded. 
    Therefore, $\Big| \x^\top [(\hat{\bsigma}+\bmu\bmu^\top + \frac{1}{{a}_{k+1}^\ast} \Q_k)^{-1} - (\hat{\bsigma}+\bmu\bmu^\top + \frac{1}{\hat{a}_{k+1}} \Q_k)^{-1}] \y \Big|$ will also converge to $0$ almost surely.
    
    Next, we focus on the convergence of $\x^\top (\hat{\bsigma} + \bmu\bmu^\top + \frac{\Q_k}{a_{k+1}^\ast} )^{-1} \y $, where we can still find a small enough $z_{k}$ such that $ \S_{k} = \bsigma^{-1/2} (\bmu\bmu^\top + \frac{1}{a_{k+1}^\ast}\Q_k) \bsigma^{-1/2} - z_{k} \I$ is positive definite.
    \begin{equation*}     \begin{aligned}
        \x^\top (\hat{\bsigma}+\bmu\bmu^\top + \frac{1}{a_{k+1}^\ast}\Q_k)^{-1} \y
        \xrightarrow{a.s} 
        \x^\top \big( \frac{\bsigma}{1+s_{k}} + \S_k + z_k \I \big)^{-1} \y \,,
    \end{aligned} \end{equation*}
    where $s_{k}$ is the solution of $s_{k} = \frac{c}{p} \tr \bsigma [(\frac{\bsigma}{1+s_{k} } + \frac{1}{a_{k+1}^\ast}\Q_k + \bmu \bmu^\top)^{-1}]$.
    
    Finally, we have the convergence of $\x^\top \hat\D_k^{-1} \y$:
    \begin{equation*}     \begin{aligned}
        \x^\top \hat\D_{k}^{-1} \y
        & = \hat{a}_{k+1}^{-1} \x^\top (\hat{\bsigma}+\bmu\bmu^\top + \frac{1}{\hat{a}_{k+1}} \Q_k)^{-1} \y\\
        & = \hat{a}_{k+1}^{-1} \x^\top [(\hat{\bsigma}+\bmu\bmu^\top + \frac{1}{\hat{a}_{k+1}} \Q_k)^{-1} - (\hat{\bsigma}+\bmu\bmu^\top + \frac{1}{a_{k+1}^\ast} \Q_k)^{-1} + (\hat{\bsigma}+\bmu\bmu^\top + \frac{1}{a_{k+1}^\ast} \Q_k)^{-1}] \y \\
        & = \hat{a}_{k+1}^{-1} \x^\top [(\hat{\bsigma}+\bmu\bmu^\top + \frac{1}{\hat{a}_{k+1}} \Q_k)^{-1} - (\hat{\bsigma}+\bmu\bmu^\top + \frac{1}{a_{k+1}^\ast} \Q_k)^{-1}] \y 
        + \hat{a}_{k+1}^{-1} \x^\top [(\hat{\bsigma}+\bmu\bmu^\top + \frac{1}{a_{k+1}^\ast} \Q
        _k)^{-1}] \y \\
        & \xrightarrow{a.s} (a_{k+1}^\ast)^{-1} \x^\top \big( \frac{\bsigma}{1+s_{k}}+\bmu\bmu^\top + \frac{1}{a_{k+1}^\ast} \Q_k \big)^{-1} \y := C_{\x,\y,k}^\ast \,.
    \end{aligned} \end{equation*}
    Then, by induction, we can show that for every $k = 0, 1, \ldots, T-1$, 
    \begin{equation*}
    \hat{a}_k 
    \xrightarrow{a.s} a_k^\ast 
    = a_{k+1}^{\ast} r^2 - (a_{k+1}^\ast)^2 r^2 C_{\bmu,\bmu,k}^\ast + 2 r a_{k+1}^\ast C_{\bmu,\Q_k\w_k,k}^\ast + \w_k^\top \Q_k \w_k -  C_{\Q_k\w_k,\Q_k\w_k,k}^\ast \,.
    \end{equation*}
    Therefore, the following convergence holds for every $k$.
    \begin{equation*}
    \x^\top \hat\D_k^{-1} \y
    \xrightarrow{a.s} (a_{k+1}^\ast)^{-1} \x^\top \Big( \frac{\bsigma}{1+s_k}+\bmu\bmu^\top + \frac{\Q_k}{a_{k+1}^\ast} \Big)^{-1} \y 
    := C_{\x,\y,k}^\ast \,.
    \end{equation*}

    The remaining part is the convergence of $\x^\top \hat\D_k^{-1} \bsigma \hat\D_k^{-1} \y$. 
    According to Lemma \ref{lemma:keylemma}, we can set $\a = \bsigma^{-1/2} \x$ and $\b = \bsigma^{-1/2} \y$ and show that 
    \begin{align*}
        m_n'(z) 
        & = - \a^\top (\frac{\X^\top \X}{n}+\S+z\I_p)^{-2}\b
        = - \x^\top \bsigma^{-1/2}(\frac{\X^\top \X}{n}+\S+z\I_p)^{-2} \bsigma^{-1/2} \y 
        = - (a_{k+1}^\ast)^2 \x^\top \hat\D_k^{-1} \bsigma \hat\D_k^{-1} \y \,,
    \end{align*}
    where the elements of $\X$ are iid with zero mean, variance 1 and finite eighth order moment. 
    Similarly to previous proofs, we have 
    \begin{align*}
        & \frac{\rm{d} [\a^\top (\frac{\I}{1+s_n(z)}+\S+z\I_p )^{-1}\b]}{\rm{d} z}
        = - \a^\top (\frac{\I}{1+s_n(z)}+\S+z\I_p )^{-1} (\I_p - \frac{s'(z)}{(1+s(z))^2}\I) (\frac{\I}{1+s_n(z)}+\S+z\I_p )^{-1}\b \\
        = & (\frac{s'(z)}{(1+s(z))^2}-1)  \Big(1-\frac{\tilde s_k}{(1+s_k)^2}\Big) \x^\top \Big(\frac{\bsigma}{1+s_k}+ \bmu \bmu^\top + \frac{\Q_k}{a_{k+1}^\ast}\Big)^{-1} \bsigma \Big(\frac{\bsigma}{1+s_k}+ \bmu \bmu^\top + \frac{\Q_k}{a_{k+1}^\ast} \Big)^{-1} \y \,,
    \end{align*}
    which completes the proof of 
    \begin{equation*}
    \x^\top \hat\D_k^{-1} \bsigma \hat\D_k^{-1} \y
    \xrightarrow{a.s} 
    (a_{k+1}^\ast)^{-2} \Big(1-\frac{\tilde s_k}{(1+s_k)^2}\Big) \x^\top \Big(\frac{\bsigma}{1+s_k}+ \bmu \bmu^\top + \frac{\Q_k}{a_{k+1}^\ast}\Big)^{-1} \bsigma \Big(\frac{\bsigma}{1+s_k}+ \bmu \bmu^\top + \frac{\Q_k}{a_{k+1}^\ast} \Big)^{-1} \y := B_{\x,\y,k}^\ast \,.
    \end{equation*}
\end{proof}

\subsubsection{Proof of Theorem~\ref{theorem:pram_converge_sig}}
\begin{proof}
Theorem~\ref{theorem:pram_converge_sig} contains two part: the convergence of the parameters and the limiting Sharpe ratio.  

We first prove the asymptotic properties of parameters. 
Since the convergence of $\hat a_k$s has been proved in \ref{proof:lemma3.2}, it is straightforward to apply Lemma \ref{lemma:sig_converge} to the parameters $\hat b_k$ and $\hat c_k$, 
which completes the proof of 
\begin{equation*}
\begin{aligned}
    \hat{a}_k & \xrightarrow{a.s} a_k^\ast 
    = a_{k+1}^{\ast} r^2 - (a_{k+1}^\ast)^2 r^2 C_{\bmu,\bmu,k}^\ast + 2 r a_{k+1}^\ast C_{\bmu,\Q_k\w_k,k}^\ast + \w_k^\top \Q_k \w_k -  C_{\Q_k\w_k,\Q_k\w_k,k}^\ast \,,
    \\
    \hat{b}_k & \xrightarrow{a.s} b_k^\ast 
    = b_{k+1}^\ast r - a_{k+1}^\ast b_{k+1}^\ast r C_{\bmu,\bmu,k}^\ast + b_{k+1}^\ast C_{\bmu,\Q_k\w_k,k}^\ast \,,
    \\
    \hat{{c}}_k & \xrightarrow{a.s} {c}_k^\ast 
    = {c}_{k+1}^\ast - (b_{k+1}^\ast)^2 C_{\bmu,\bmu,k}^\ast \,.
\end{aligned}
\end{equation*}

Now we focus on the convergence of Sharpe ratio. 
We start with the parameters in Proposition \ref{prop:eff_frontier}
\begin{equation*}
\begin{aligned}
    \hat{\lambda}_k 
    &= \frac{\hat{b}_0 - X_\tg}{\hat{c}_0} \hat{b}_{k+1} \bmu^\top \hat{\D}_k^{-1} \bmu \,,
    \\
    \hat\eta_k
    &= r- r \hat{a}_{k+1} \bmu^\top \hat{\D}_k^{-1}  \bmu + \bmu^\top \hat{\D}_k^{-1} \Q_k\w_k \,,
    \\
    \hat\nu_k 
    &=\Big\| \E[\hat X_k]\big( r \hat{a}_{k+1} \bmu -\Q_k \w_k \big) 
    -\frac{\hat{b}_0 - X_{\tg} }{ 2\hat{c}_0 } \hat{b}_k \bmu  \Big \|^2_{ \hat\D_k^{-1} \bsigma \hat\D_k^{-1} } \,,
    \\ 
    \hat{g}_k 
    & = \Big\|  ra_{k+1} \bmu - \Q_k \w_k  \Big\|^2_{ \hat\D_k^{-1} \bsigma \hat\D_k^{-1} } + \hat{\eta}_k^2 \,,
\end{aligned}
\end{equation*}
where $\E[\hat{X}_{k+1}] = \prod_{i=0}^{k} \hat{\eta}_i + \sum_{i=0}^k(\hat{\lambda}_{i}\prod_{j=i+1}^{k}\hat{\eta}_{j})$ and we assume $\hat\bmu = \bmu$ in this Proposition. 
By Lemma \ref{lemma:sig_converge}, we can show that
\begin{equation*}
\begin{aligned}
    \hat\lambda_k
    & \xrightarrow{a.s} \lambda_k^\ast 
    := \frac{b_0^\ast - X_\tg}{c_0^\ast} b_{k+1}^\ast C_{\bmu,\bmu,k}^\ast  \,,
    \\
    \hat\eta_k
    & \xrightarrow{a.s} \eta_k^\ast 
    :=r - r a_{k+1}^\ast C_{\bmu,\bmu,k}^\ast + C_{\bmu,\Q_k\w_k,k}^\ast  \,,
    \\
    \E[\hat{X}_{k+1}]
    & \xrightarrow{a.s} \E[\hat{X}_{k+1}]_\infty
    = \prod_{i=0}^{k} \eta_i^\ast + \sum_{i=0}^k( \lambda_i^\ast \prod_{j=i+1}^{k} \eta_j^\ast) \,.
\end{aligned}
\end{equation*}
Let $\phi_k^\ast = r a_{k+1}^\ast E[\hat X_k]_\infty + \frac{b_0^\ast - X_\tg}{{c}_0^\ast} b_{k+1}^\ast$, following Lemma \ref{lemma:sig_converge}, we have 
\begin{equation*}
\begin{aligned}
    \hat\nu_k \xrightarrow{a.s} \nu_k^\ast 
    = & \E[\hat X_{k}]_\infty^2 [ r^2 (a_{k+1}^\ast)^2 B_{\bmu,\bmu,k}^\ast + B_{\Q_k\w_k,\Q_k\w_k,k}^\ast - 2 r a_{k+1}^\ast B_{\bmu,\Q_k\w_k,k}^\ast ] \\
    & ~~~~~~ + 2 E[\hat X_k]_\infty \frac{b_0^\ast - X_\tg}{{c}_0^\ast} b_{k+1}^\ast [r a_{k+1}^\ast B_{\bmu,\bmu,k}^\ast - B_{\bmu,\Q_k\w_k,k}^\ast ] 
    + (\frac{b_0^\ast - X_\tg}{{c}_0^\ast} b_{k+1}^\ast)^2 B_{\bmu,\bmu,k}^\ast
    \\
    = & ( \phi_k^\ast B_{\bmu,\bmu,k}^\ast )^2 + \E[\hat X_{k}]_\infty^2 B_{\Q_k\w_k,\Q_k\w_k,k}^\ast - 2 \phi_k^\ast E[\hat X_k]_\infty B_{\bmu,\Q_k\w_k,k}^\ast \,,
    \\
    \hat g_k \xrightarrow{a.s} g_k^\ast 
    = &r^2 (a_{k+1}^\ast)^2 B_{\bmu,\bmu,k}^\ast + B_{\Q_k\w_k,\Q_k\w_k,k}^\ast - 2 r a_{k+1}^\ast B_{\bmu,\Q_k\w_k,k}^\ast + (\eta_k^\ast)^2 \,.
\end{aligned}
\end{equation*}
The asymptotic Sharpe ratio of the $\RRMV$ portfolio, given $\Q_k$ and $\w_k$, will almost surely converge to
\begin{align*} 
    \SR_\infty(T, \Q_k, \w_k)
    = 
    \frac{X_\tg - r^T}{\sqrt{ T \sum_{i=0}^{T-1} \nu_i^\ast \prod_{j=i+1}^{T-1}{g}_j^\ast} } \,,
\end{align*}
where $\E[\hat X_T] = X_\tg$ is from our assumption of $\hat\bmu = \bmu$.
\end{proof}

\subsubsection{Special Case in Section \ref{sec3.2.2}}
\label{ssse:est_sig_Q_0}
~\\
Similar to Section~\ref{sec3.2.1}, we consider $\Q_k \to \mathbf{0}$ and $\w_k = \mathbf{0}$ separately. 
The limiting values in Lemma \ref{lemma:sig_converge} will be 
\begin{equation*}
\begin{aligned}
    C_{\x,\y,k}^\ast 
    &= (a_{k+1}^\ast)^{-1} \x^\top \big( \frac{\bsigma}{1+s}+\bmu\bmu^\top \big)^{-1} \y \,,
    \\
    B_{\x,\y,k}^\ast 
    &= (a_{k+1}^\ast)^{-2} (1-\frac{\tilde s}{(1+s)^2}) \x^\top (\frac{\bsigma}{1+s}+ \bmu \bmu^\top)^{-1} \bsigma (\frac{\bsigma}{1+s}+ \bmu \bmu^\top)^{-1} \y \,,
\end{aligned}    
\end{equation*}
where $s$ is the solution of $s = \frac{c}{p} \tr \bsigma [(\frac{\bsigma}{1+s} + \bmu \bmu^\top)^{-1}]$ 
and 
$\tilde s$ is the solution to $\tilde s = (\frac{\tilde s}{(1+s)^2} - 1) \frac{c}{p} \tr [\bsigma^2 (\frac{\bsigma}{1+s} + \bmu \bmu^\top)^{-2}]$.
And the parameter will be 
\begin{equation*}
\begin{aligned}
    a_k^\ast 
    &= r^2 a_{k+1}^{\ast} - r^2 (a_{k+1}^\ast)^2 C_{\bmu,\bmu,k}^\ast + 2 r a_{k+1}^\ast C_{\bmu,\Q_k\w}^\ast -  C_{\Q_k\w,\Q_k\w,k}^\ast \,,
    \\
    b_k^\ast 
    &= b_{k+1}^\ast r - r a_{k+1}^\ast b_{k+1}^\ast C_{\bmu,\bmu,k}^\ast + b_{k+1}^\ast C_{\bmu,\Q_k\w,k}^\ast \,,
    \\ 
    c_k^\ast 
    &= c_{k+1}^\ast -(b_{k+1}^\ast)^2 C_{\bmu,\bmu,k}^\ast \,.
\end{aligned}    
\end{equation*}
By plugging in the limiting values, we have the asymptotic Sharpe ratio. 
To illustrate the results numerically, we use the same simulation as in previous section. 
Similar to the single-period case, a necessary condition for optimal reference weight is $\w \propto \bmu$. 
So, we let $\w = \alpha \bmu$ and check how $\alpha$ will affect the limiting Sharpe ratio. 

\subsubsection{Proof of Corollary \ref{corr:multi_SR_sig}}
\begin{proof}
In this Corollary, we set $\w = \mathbf{0}$ and  $\Q_k = a_{k+1} \rho \bar\Q$. 

By Corollary \ref{coro:os_SR_w=0}, we have
\begin{align*}
    \SR^2(\hat{X}_T) 
    &= \frac{1}{T} \frac{(\E[X_T] - r^T)^2}{\Var(X_T)} 
    = \frac{1}{T} \frac{(\hat \eta^T -r^T)^2}{((\hat\Gamma + \hat\eta^2)^T - \hat\eta^{2T})}
\end{align*}
where we have the following parameters when $\hat\bmu = \bmu$,
\begin{equation*}
\begin{aligned}
    \hat\eta 
    &= r(1-\frac{\bmu^\top(\hat\bsigma + \rho \Q)^{-1} \bmu}{1 + \bmu^\top (\hat\bsigma + \rho \Q)^{-1} \bmu}) \,,
    \\
    \hat\Gamma 
    &= r^2 \frac{\bmu^\top (\hat{\bsigma} + \rho \Q)^{-1} \bsigma (\hat{\bsigma} + \rho \Q)^{-1} \bmu}{(1+\bmu^\top (\hat{\bsigma} + \rho \Q)^{-1} \bmu)^2} \,.
\end{aligned}    
\end{equation*}
We can show the following almost surely convergence
\begin{equation*}
\begin{aligned}
    \bmu^\top (\hat\bsigma + \rho \Q)^{-1} \bmu 
    & \xrightarrow{a.s} \bmu^\top \big(\frac{\bsigma}{1+s(\rho)} + \rho\Q \big)^{-1} \bmu,\\
    \|\bsigma^{1/2} (\hat{\bsigma} + \rho \Q)^{-1} \bmu \|_2^2 \,,
    & \xrightarrow{a.s} \Big(1 - \frac{\tilde s(\rho)}{(1+s(\rho))^2} \Big) \bmu^\top \Big(\frac{\bsigma}{1+s(\rho)} + \rho\Q \Big)^{-1} \bsigma \Big(\frac{\bsigma}{1+s(\rho)} + \rho\Q \Big)^{-1}\bmu \,.
\end{aligned} 
\end{equation*}
Therefore, the Sharpe ratio will almost surely converge to $\SR_\infty(T, \rho, \mathbf{0})$,
\begin{align*}
   \SR_\infty(T, \rho, \mathbf{0}) 
   = 
   \Big(1 - \frac{\tilde s(\rho)}{(1+s(\rho))^2} \Big)^{-T/2} \left( \frac{1}{T} \frac{((\bmu^\top \big(\frac{\bsigma}{1+s(\rho)} + \rho\bar\Q \big)^{-1} \bmu + 1)^T - 1)^2}{ (\bmu^\top \Big(\frac{\bsigma}{1+s(\rho)} + \rho\bar\Q \Big)^{-1} \bsigma \Big(\frac{\bsigma}{1+s(\rho)} + \rho\bar\Q \Big)^{-1}\bmu + 1)^T - 1} \right)^{1/2} \,,
\end{align*}
where $s(\rho)$ is the solution of $s(\rho) = \frac{c}{p} \tr \bsigma [(\frac{\bsigma}{1+s(\rho)} + \rho \bar\Q + \bmu \bmu^\top)^{-1}]$, and $\tilde s(\rho)$ is given as the solution of $\tilde s(\rho) = (\frac{\tilde s(\rho)}{(1+s(\rho))^2} - 1) \frac{c}{p} \tr [\bsigma^2 (\frac{\bsigma}{1+s(\rho)} + \rho \bar\Q + \bmu \bmu^\top)^{-2}]$. 

Let $\kappa_\rho = 1 - \frac{\tilde s(\rho)}{(1+s(\rho))^2}$, $\A_\rho = \frac{\bsigma}{1+s(\rho)} + \rho \bar\Q$, we complete the proof of 
\begin{align*}
   \SR_\infty(T, \rho, \BFzero) 
   = \kappa_\rho^{-T/2} \cdot 
   \frac{ T^{-1/2}  ((\bmu^\top \A_\rho^{-1} \bmu + 1)^T - 1) }{\sqrt{(\bmu^\top \A_\rho^{-1} \bsigma \A_\rho^{-1}\bmu + 1)^T - 1}}\,,
\end{align*}
where $s(\rho)$ is the solution of $s(\rho) = \frac{c}{p} \tr \bsigma ((\A_\rho + \bmu \bmu^\top)^{-1})$, 
and $\tilde s(\rho)$ is given as the solution of $\tilde s(\rho) = -\kappa_\rho \frac{c}{p} \tr(\bsigma^2 (\A_\rho + \bmu \bmu^\top)^{-2})$. 
\end{proof}

\section{Time-Consistent Policy for RRMV Problem}\label{se:tc_policy} 
In this section, we develop the time-consistent policy for the problems $\cP_{\RRMV}$ and $\cP^{\dag}_{\RRMV}(\omega)$. 
To distinguish the TC policies with the PC policies, we denote the optimal TC policies of these two problems as  $\pi_{tc}^*(\cP^{\dagger}_{\RRMV}(\omega))$ and $\pi_{tc}^*(\cP_{\RRMV})$. 

When the portfolio includes the risk-free asset, we introduce the following parameters. 
Recall that for the excessive return $\P_k$, we have $\bsigma_k = \textrm{Cov}[\P_k]$ and $\bmu_k = \E[\P_k]$,
\begin{equation} \label{def_tc_rf_VMG}
    \begin{aligned}
        & {\V}_{k} = \Var\bigg(\prod_{i={k+1}}^{T-1}(r_i + \P_i^\top {\h}_i)\bigg) \\
        & {\M}_k = \bigg(\prod_{i={k+1}}^{T-1}(r_i + \bmu_i^\top {\h}_i)^2\bigg) \bsigma_{k} + {\V}_k (\bsigma_{k}+ \bmu_k \bmu_k^\top) + \Q_k \\
        & {\G}_k = {\M}_{k}^{-1} \bmu_{k} [\prod_{i={k+1}}^{T-1}(r_i + \bmu_i^\top {h}_i)] \\
        &{\h}_k = {\M}_{k}^{-1} \Q_k  \w_k - r_k {V}_k {\M}_k^{-1} \bmu_k
    \end{aligned}
\end{equation}
for $k=T-2,\ldots,1,0$, and ${\V}_{T-1} = 1$, ${\M}_{T-1} = \bsigma_{T-1} +  \Q_{T-1} $ and ${\G}_{T-1} = {\M}_{T-1}^{-1} {\mu}_{T-1}$.

\begin{theorem}\label{theorem:sol_RRMV_dag_tc_rf}
The time-consistent policy for the problem ${\cP}^{\dag}_{\RRMV}(\omega)$ is
\begin{align}
{\u}_k^* = \frac{1}{2\omega} {\G}_k + {\h}_k X_k \label{def_RRMV_dag_tc_rf_uk}
\end{align}
for $k=0,\ldots,T-1$, where the parameters $\G_k$ and $\h_k$ are defined in (\ref{def_tc_rf_VMG}). Furthermore, under the optimal policy ${\u}_k^\ast$, the expectation of the terminal wealth $X_T$ can be expressed as 
\begin{align}
\E[X_T] = \frac{1}{2\omega} \sum_{i=0}^{T-1} \Big((\bmu_i^\top {\G}_i)\prod_{j=i+1}^{T-1} (r_j + \bmu_j^\top {\h}_j)\Big)  + \prod_{i=0}^{T-1} ( r_i + \bmu_i^\top {\h}_i). \label{def_tc_rf_ExT}
\end{align}
When $\omega$ is set at
\begin{align}
\omega = \omega^* = \frac{\sum_{i=0}^{T-1} [(\prod_{j=i+1}^{T-1}(r_j + \bmu_j^\top {\h}_j))(\bmu_i^\top {\G}_i)]}{2(X_\tg - \prod_{i=0}^{T-1} ( r_i + \bmu_i^\top {\h}_i)},\label{def_omega_start_tc_rf}
\end{align}
the policy (\ref{def_RRMV_dag_tc_rf_uk}) is also the time-consistent policy for ${\cP}_{\RRMV}$.
\end{theorem}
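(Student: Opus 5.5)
We prove Theorem~\ref{theorem:sol_RRMV_dag_tc_rf} by backward induction on $k$, following the equilibrium approach of \cite{basak2010dynamic,bjork2014mean}. The time-consistent policy is defined as follows: at each $k$, $\u_k$ is chosen to minimize
\[
\omega\Var_k[X_T]-\E_k[X_T]+\omega\sum_{t\ge k}\E_k\|\u_t-X_t\w_t\|_{\Q_t}^2,
\]
taking the already-fixed later policies $\u_t=\frac{1}{2\omega}\G_t+\h_tX_t$, $t>k$, as given. We conjecture this affine feedback form throughout. The first step is a representation lemma: under the conjectured policies for $t>k$,
\[
X_{t+1}=(r_t+\P_t^\top\h_t)X_t+\tfrac{1}{2\omega}\P_t^\top\G_t .
\]
Iterating this recursion gives
\[
X_T=\Pi_{k}\,X_{k+1}+\tfrac{1}{2\omega}Y_k,
\qquad
\Pi_k=\prod_{i=k+1}^{T-1}(r_i+\P_i^\top\h_i),
\]
where $Y_k$ collects the $\G$-terms. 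Both $\Pi_k$ and $Y_k$ depend only on $\P_{k+1},\dots,\P_{T-1}$, so they are independent of $X_{k+1}$. By independence across time, $\E[\Pi_k]=\prod_{i>k}(r_i+\bmu_i^\top\h_i)$ and $\Var(\Pi_k)=\V_k$, which is exactly the definition in \eqref{def_tc_rf_VMG}.

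The second step computes the time-$k$ objective as a function of $\u_k$ only. Substitute $X_{k+1}=r_kX_k+\P_k^\top\u_k$ and use the law of total variance, conditioning on $\P_k$. The variance term then splits into three pieces:
\begin{itemize}
\item $\E[\Pi_k]^2\,\u_k^\top\bsigma_k\u_k$;
\item $\V_k\,\E[(r_kX_k+\P_k^\top\u_k)^2]$;
\item a cross term involving $\Cov(\Pi_k,Y_k)$ times $X_{k+1}$.
\end{itemize}
The cross term is the delicate part. In the equilibrium formulation it contributes a term linear in $\u_k$ that is proportional to $\frac{1}{\omega}$ and to $\bmu_k$. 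I expect it to cancel against, or be absorbed into, the stated $\G_k$; this has to be checked against the definitions. The penalty term for $t=k$ is $\|\u_k-X_k\w_k\|_{\Q_k}^2$. The penalties for $t>k$ depend on $\u_k$ through $X_{k+1}$, but in the equilibrium formulation used in the paper I will treat them as not entering the time-$k$ first-order condition, consistent with the recursion for $\M_k$. The resulting quadratic in $\u_k$ has Hessian $2\omega\M_k$. Here
\[
\M_k=\E[\Pi_k]^2\bsigma_k+\V_k(\bsigma_k+\bmu_k\bmu_k^\top)+\Q_k
\]
is positive definite because $\Q_k\succ0$.

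Setting the gradient to zero gives
\[
2\omega\M_k\u_k=\E[\Pi_k]\bmu_k-2\omega r_k\V_kX_k\bmu_k+2\omega X_k\Q_k\w_k,
\]
so $\u_k=\frac{1}{2\omega}\G_k+\h_kX_k$, with $\G_k$ and $\h_k$ exactly as in \eqref{def_tc_rf_VMG}. This closes the induction. The base case $k=T-1$ has $\Pi=1$ and $\V_{T-1}$ effectively zero, so $\M_{T-1}=\bsigma_{T-1}+\Q_{T-1}$, matching the given initialization. (The stated $\V_{T-1}=1$ should be read as an empty-product convention, since it does not enter $\M_{T-1}$.)

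Next, take expectations of the affine recursion:
\[
\E X_{t+1}=(r_t+\bmu_t^\top\h_t)\E X_t+\tfrac{1}{2\omega}\bmu_t^\top\G_t .
\]
Unrolling from $X_0=1$ yields \eqref{def_tc_rf_ExT}.

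Finally, for $\cP_{\RRMV}$ with target $X_\tg$, impose $\E[X_T]=X_\tg$ and solve the resulting linear equation in $\frac{1}{2\omega}$. This gives \eqref{def_omega_start_tc_rf}. Since $\G_k$ and $\h_k$ do not depend on $\omega$, this choice of $\omega$ turns the same policy into the equilibrium for the constrained/dual problem, by the standard Lagrangian correspondence already used in Theorem~\ref{theorem:2.2}.

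The main obstacle is the second step: handling the future-dependence cross terms $\Cov(\Pi_k,Y_k)$, and the future penalties' dependence on $X_{k+1}$. These must be shown either to vanish or to be independent of $\u_k$ under the paper's equilibrium definition, which is what allows $\M_k$ to take the clean form above.
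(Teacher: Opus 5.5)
Your route is the paper's: backward induction on the affine ansatz $\u_t=\frac{1}{2\omega}\G_t+\h_tX_t$, a conditional variance decomposition, a first-order condition with Hessian $2\omega\M_k$, then unrolling $\E X_t$ and solving linearly for $\omega$. The last two steps are fine. The two items you defer as ``the main obstacle'' are not bookkeeping, however: for $\w_k\neq\BFzero$ they are exactly where the argument fails, and matching the paper will not close the gap.

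Take the cross term first. Write $X_T=\Pi_kX_{k+1}+\frac{1}{2\omega}Y_k$, with $X_{k+1}$ independent of $(\Pi_k,Y_k)$. Then the cross term in $\omega\Var_k(X_T)$ is $\Cov(\Pi_k,Y_k)\,(r_kX_k+\bmu_k^\top\u_k)$. It is free of $\omega$ and enters the gradient as $\Cov(\Pi_k,Y_k)\bmu_k$, next to $-\E[\Pi_k]\bmu_k$ from $-\E_k[X_T]$, so there is nothing for it to cancel against. $\M_k$ and $\h_k$ are unchanged, but the first-order condition gives $\G_k=\big(\E[\Pi_k]-\Cov(\Pi_k,Y_k)\big)\M_k^{-1}\bmu_k$, not the stated $\G_k$. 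Already at $k=T-2$, $\Cov(\Pi_{T-2},Y_{T-2})=\h_{T-1}^\top\bsigma_{T-1}\G_{T-1}=\w_{T-1}^\top\Q_{T-1}\M_{T-1}^{-1}\bsigma_{T-1}\M_{T-1}^{-1}\bmu_{T-1}$, which is generically nonzero. For a concrete case, take $p=1$, $T=2$ and $r=\bmu=\bsigma=\Q_k=\w_k=X_0=1$. The stated formulas give $\u_0=\frac15+\frac{1}{5\omega}$. Minimizing the time-$0$ objective directly (current penalty only, $\u_1$ fixed at its equilibrium form) gives $\u_0=\frac15+\frac{1}{6\omega}$.

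The future penalties are a separate problem. With the objective you actually wrote ($\sum_{t\ge k}$), the terms $\omega\E_k\|\frac{1}{2\omega}\G_t+(\h_t-\w_t)X_t\|_{\Q_t}^2$ for $t>k$ depend on $\u_k$ through $\E_kX_t$ and $\E_kX_t^2$. They change $\M_k$ and $\h_k$ too; in the example, $\u_0$ becomes $\frac{2}{17}+\frac{3}{17\omega}$. You cannot simply treat them as not entering. Instead, define the time-$k$ criterion with only $\omega\|\u_k-X_k\w_k\|_{\Q_k}^2$, which is what the paper does.

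The paper reaches the stated $\G_k$ only by writing $\Var(X_T\mid X_k)=\V_{k-1}X_k^2$. That silently drops the term $\frac{1}{\omega}X_k\Cov\big(K_k,\prod_{i=k}^{T-1}(r_i+\P_i^\top\h_i)\big)$, which is the same cross term you flagged. A correct argument therefore has two options:
\begin{itemize}
\item Restrict to a case where these covariances vanish, e.g.\ $\w_k\equiv\BFzero$. There $\V_{T-1}=0$ forces $\h_k\equiv\BFzero$, $\V_k\equiv0$ and deterministic $\Pi_k$, so both issues disappear.
\item Replace $\G_k$ by the corrected recursion. Your unrolling of $\E X_t$ and the solve for $\omega^*$ then go through verbatim with the new $\G_k$.
\end{itemize}
Relatedly, $\V_{T-1}$ must be read as $0$, not $1$. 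If $\h_{T-1}$ is computed from the general $\h_k$ formula, $\V_{T-1}=1$ would give the wrong last-period policy.
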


\subsection{Proof of Theorem \ref{theorem:sol_RRMV_dag_tc_rf}}
\begin{proof}
    We first derive the time-consistent policy at $t=T-1$. 
    We can easily verify that an investor follows the optimal policy ${\u}_t^\ast = \frac{1}{2\omega} {\G}_t + {\h}_t X_t$ when $t = T-1$.
    By induction, we assume investor follows the same optimal policy for all $t = k, \ldots, T-1$. 
    Then at $t=k$, we have that:
    \begin{equation*}
        \begin{aligned}
            {\u}_k^\ast &= \frac{1}{2\omega} {\G}_k + {\h}_k X_k\\
            X_{k+1} &= r_k X_k + \P_k^\top {\u}_k = \frac{1}{2\omega} \P_k^\top {\G}_k + (r_k + \P_k^\top {\h}_k) X_k\\
        \end{aligned}
    \end{equation*}
    And at time $t=k+1$, we have:
    \begin{equation*}
        \begin{aligned}
            {\u}_{k+1}^\ast &= \frac{1}{2\omega} {\G}_{k+1} + {\h}_{k+1} X_{k+1}\\
            X_{k+2} &= r_{k+1} X_{k+1} + \P_{k+1}^\top (\frac{1}{2\omega} {\G}_{k+1} + {\h}_{k+1} X_{k+1}) \\
            &= \frac{1}{2\omega} \P_{l+1}^\top {\G}_{k+1} + \frac{1}{2\omega} \P_k^\top {\G}_k (r_{k+1} + \P_{k+1}^\top {\h}_{k+1}) + (r_k +\P_k^\top {\h}_k) (r_{k+1} + \P_{k+1}^\top {\h}_{k+1}) X_k\\
        \end{aligned}
    \end{equation*}
    Similarly, the terminal wealth can be expressed by $X_k$ in:
    \begin{equation*}
        \begin{aligned}
            X_T &= \frac{1}{2\omega} {K}_k + \prod_{i=k}^{T-1}(r_i + \P_i^\top {\h}_i)X_k\\
        \end{aligned}
    \end{equation*}
    where ${K}_k$ includes the terms which are irrelevant to $X_k$ 
    Therefore, the variance can be computed by conditioning on the $X_k$:
    \begin{equation*}
        \begin{aligned}
            \E[X_T | X_k] 
            &= \frac{1}{2\omega} E[{K}_k | X_k] + [\prod_{i=k}^{T-1}(r_i + \bmu_i^\top {h}_i)]X_k \\
            \E[X_T] 
            &= \frac{1}{2\omega} E[{K}_k | X_k] + [\prod_{i=k}^{T-1}(r_i + \bmu_i^\top {h}_i)] (r_{k-1}X_{k-1} + \bmu_{k-1}^\top {\u}_{k-1}) \\
            \Var(X_T | X_k) 
            &= \Var(\prod_{i=k}^{T-1}(r_i + {\P}_i^\top {h}_i)) X_k^2 = {V}_{k-1} X_k^2 \\
            \Var(X_T) 
            &= [\prod_{i=k}^{T-1}(r_i + \bmu_i^\top {h}_i)]^2 \Var(X_k) + {V}_{k-1} \E[X_k^2]\\
            &= [\prod_{i=k}^{T-1}(r_i + \bmu_i^\top {h}_i)^2] {\u}_{k-1}^\top \hat{\bsigma} {\u}_{k-1} + {V}_{k-1} (r_{k-1}^2 X_{k-1}^2 + 2 r_{k-1} X_{k-1} \bmu^\top_{k-1} {\u}_{k-1} + {\u}_{k-1}^\top \E[\P_{k-1} \P_{k-1}^\top] {\u}_{k-1})\\
        \end{aligned}
    \end{equation*}
    The optimization problem at time $k-1$ is:
    \begin{equation*}
        \begin{aligned}
            \min ~ & \omega \Var(X_T) - \E[X_T] + \omega \| {\u}_{k-1} - X_{k-1} \w_{k-1}  \|_{\Q_{k-1}}^2\\
            s.t.
            ~&~     X_{t+1} = r_t X_t + \P_t ^\top {\u}_t, ~~ t =k-1,\dots,T\\
        \end{aligned}
    \end{equation*}
    After substituting the optimal policy $\u_i^\ast$ at time $i=k-1,\dots,T$, we can derive the first order optimal condition: 
    \begin{equation*}
        \begin{aligned}
            2 \omega {\M}_{k-1} {\u}_{k-1} + 2\omega r_{k-1} {V}_{k-1} X_{k-1} \bmu_{k-1} - [\prod_{i=k}^{T-1}(r_i + \bmu_i^\top {h}_i)] \bmu_{k-1} - 2 \omega X_{k-1} \Q_{k-1} \w_{k-1}   = 0
        \end{aligned}
    \end{equation*}
    Then the optimal policy at time $k-1$ is:
    \begin{equation*}
        \begin{aligned}
            {\u}_{k-1}^\ast 
            &= X_{k-1} {\M}_{k-1}^{-1} \Q_{k-1}  \w_{k-1} - X_{k-1} r_{k-1} {V}_{k-1} {\M}_{k-1}^{-1} \bmu_{k-1} + \frac{1}{2\omega} {\M}_{k-1}^{-1} \bmu_{k-1} [\prod_{i=k}^{T-1}(r_i + \bmu_i^\top {h}_i)] \\
            &= \frac{1}{2 \omega} {\G}_{k-1} + {\h}_{k-1} X_{k-1}\\
        \end{aligned}
    \end{equation*}
    
    Now we prove the results of the expected return $\E[X_T]$.
    Since the optimal policy is that ${\u}_t^\ast = \frac{1}{2\omega} {\G}_t + {\h}_t X_t$, we can easily find out that 
    $\E[X_1] = \frac{1}{2\omega}\bmu_0^\top {\G}_0 + r_0 + \bmu_0^\top {\h}_0$ 
    and $\E[X_2] = \frac{1}{2\omega}\bmu_1^\top {\G}_1 + \frac{1}{2\omega} (r_1 + \bmu_1^\top {\h}_1) \bmu_0^\top {\G}_0 + (r_1 + \bmu_1^\top {\h}_1) (r_0 + \bmu_0^\top {\h}_0)$
    , so the expression is valid when $t=2$. 
    
    Then by induction, we assume that it is true when $t=k$, which is:
    \begin{equation*}
        \begin{aligned}
            \E[X_k] &= \frac{1}{2\omega} \sum_{i=0}^{k-1} [(\bmu_i^\top {\G}_i)\prod_{j=i+1}^{k-1} (r_j+\bmu_j^\top {\h}_j)] + \prod_{i=0}^{k-1} (r_i + \bmu_i^\top {\h}_i)\\
        \end{aligned}
    \end{equation*}
    Therefore, at time $t=k+1$:
    \begin{equation*}
        \begin{aligned}
            \E[X_{k+1}] 
            &=\frac{1}{2\omega} \bmu_k^\top {\G}_k + (r_k + \bmu_k^\top {\h}_k) \E[X_k]\\
            &=\frac{1}{2\omega} \bmu_k^\top {\G}_k + (r_k + \bmu_k^\top {\h}_k) \{ \frac{1}{2\omega} \sum_{i=0}^{k-1} [(\bmu_i^\top {\G}_i)\prod_{j=i+1}^{k-1} (r_j+\bmu_j^\top {\h}_j)] + \prod_{i=0}^{k-1} (r_i + \bmu_i^\top {\h}_i)\}\\
            &=\frac{1}{2\omega} \sum_{i=0}^{k} [(\bmu_i^\top {\G}_i)\prod_{j=i+1}^{k} (r_j+\bmu_j^\top {\h}_j)]  + \prod_{i=0}^{k} (r_i + \bmu_i^\top {\h}_i)
        \end{aligned}
    \end{equation*}
\end{proof}  

\end{document}